\theoremstyle{plain}
\newtheorem{theorem}{Theorem}
\newtheorem{corollary}{Corollary}
\newtheorem{lemma}{Lemma}
\newtheorem{claim}{Claim}
\newtheorem{proposition}{Proposition}
\theoremstyle{definition}
\newtheorem{definition}{Definition}
\newtheorem{problem}{Open Question}
\newcommand{\pname}{\textsc}
\newcommand{\polyn}{n^{\Oh(1)}}
\newenvironment{claimproof}{\medskip\noindent \emph{Proof of Claim~\theclaim.}  }{\hfill\claimqed\medskip}
\newlength{\RoundedBoxWidth}
\newsavebox{\GrayRoundedBox}
\newenvironment{GrayBox}[1]%
   {\setlength{\RoundedBoxWidth}{.93\textwidth}
    \def\boxheading{#1}
    \begin{lrbox}{\GrayRoundedBox}
       \begin{minipage}{\RoundedBoxWidth}}%
   {   \end{minipage}
    \end{lrbox}
    \begin{center}
    \begin{tikzpicture}%
       \node(Text)[draw=black!20,fill=white,rounded corners,%
             inner sep=2ex,text width=\RoundedBoxWidth]%
             {\usebox{\GrayRoundedBox}};
        \coordinate(x) at (current bounding box.north west);
        \node [draw=white,rectangle,inner sep=3pt,anchor=north west,fill=white] 
        at ($(x)+(6pt,.75em)$) {\boxheading};
    \end{tikzpicture}
    \end{center}}     
\newenvironment{defproblemx}[2][]{\noindent\ignorespaces%
                                \FrameSep=6pt%
                                \parindent=0pt%
                \vspace*{-1.5em}
                \ifthenelse{\isempty{#1}}{%
                  \begin{GrayBox}{#2}%
                }{%
                  \begin{GrayBox}{#2 parameterized by~{#1}}%
                }
                \begin{tabular*}{\textwidth}{@{\hspace{.1em}} >{\itshape} p{1.8cm} p{0.8\textwidth} @{}}%
            }{
                \end{tabular*}%
                \end{GrayBox}%
                \ignorespacesafterend
            }  
\newcommand{\defparproblema}[4]{
  \begin{defproblemx}[#3]{#1}
    Input:  & #2 \\
    Task: & #4
  \end{defproblemx}
}%
\newcommand{\defparproblemanl}{\defparproblema}
\newcommand{\Oh}{\mathcal{O}}
\newcommand{\cl}{{\sf cl}}
\newcommand{\probDC}{\pname{Long Dirac  Cycle}\xspace}
\newcommand{\probDP}{\pname{Long Dirac  Path}\xspace}
 \newcommand{\probstP}{\pname{Long Erd{\H{o}}s-Gallai  $(s,t)$-Path}\xspace}
\newcommand{\probKPath}{\pname{Longest Path}\xspace}
\newcommand{\probKCycle}{\pname{Longest Cycle}\xspace}
\newcommand{\probTLDP}{\pname{Long $(s,t)$-Cycle}\xspace}
	\newcommand{\gbref}[2]{R_{#1}(#2)}
	\newcommand{\bananadec}{Erd{\H {o}}s-Gallai decomposition\xspace}
\newcommand{\banana}{Erd{\H {o}}s-Gallai component\xspace}
\newcommand{\bananas}{Erd{\H {o}}s-Gallai components\xspace}
	\newcommand{\Bananadec}{Erd{\H {o}}s-Gallai decomposition\xspace}
\newcommand{\cyclebananadec}{Dirac decomposition\xspace}
\newcommand{\cyclebanana}{Dirac component\xspace}
\newcommand{\Cyclebananadec}{Dirac decomposition\xspace}
\DeclareMathOperator{\operatorClassP}{{\sf P}}
\newcommand{\classP}{\ensuremath{\operatorClassP}}
\DeclareMathOperator{\operatorClassNP}{{\sf NP}}
\newcommand{\classNP}{\ensuremath{\operatorClassNP}}
 \title{Algorithmic Extensions of Dirac's Theorem\thanks{An extended abstract of this paper is published in the proceedings of SODA 2022 \cite{Fomin2022}. This research was supported by the Research Council of Norway via the project  MULTIVAL (grant no. 263317) and BWCA (grant no. 314528). Kirill Simonov acknowledges support by the Austrian Science Fund (FWF) via project Y1329 (Parameterized Analysis in Artificial Intelligence).}}
\author{
Fedor V. Fomin\thanks{
Department of Informatics, University of Bergen, Norway.}\\fomin@ii.uib.no
\and
Petr A. Golovach\addtocounter{footnote}{-1}\footnotemark{}\\petr.golovach@ii.uib.no
\and
Danil Sagunov\thanks{
    St.\ Petersburg Department of V.A.\ Steklov Institute of Mathematics, Russia.
}\\danilka.pro@gmail.com
\and 
Kirill Simonov\thanks{Algorithms and Complexity Group, TU Wien, Austria}\\kirillsimonov@gmail.com
}
\date{}
\begin{document}

\maketitle	

	\begin{abstract} In 1952, Dirac proved the following theorem about long cycles in graphs with large minimum vertex degrees: 
Every $n$-vertex $2$-connected graph $G$ with minimum vertex degree $\delta\geq 2$ contains a cycle with at least $\min\{2\delta,n\}$ vertices. In particular, if $\delta\geq n/2$, then $G$ is Hamiltonian. The proof of Dirac's theorem is constructive, and it yields an algorithm computing the corresponding cycle in polynomial time. The combinatorial bound of Dirac's theorem is tight in the following sense. There are 2-connected graphs that do not contain cycles of length more than $2\delta+1$. Also, there are non-Hamiltonian graphs with all vertices but one of degree at least $n/2$.  This prompts naturally to the following algorithmic questions. For $k\geq 1$, 
\begin{itemize}
\item[(A)] How difficult is to decide whether a 2-connected graph contains a cycle of length at least $\min\{2\delta+k,n\}$? 
\item[(B)] How difficult is to decide whether a  graph $G$ is Hamiltonian, when 
 at least $n - k$ vertices of $G$ are of degrees at least $n/2-k$?
 \end{itemize}
 The first question  was asked by Fomin,   Golovach,  Lokshtanov,  Panolan,  Saurabh, and
  Zehavi. The second question is due to Jansen,  Kozma, and  Nederlof.
Even for a very special case of $k=1$, the existence of a polynomial-time algorithm deciding whether $G$ contains a cycle of length at least $\min\{2\delta+1,n\}$ was open.
We resolve both questions by proving 
 the following algorithmic generalization of Dirac's theorem: If all but $k$ vertices of a $2$-connected graph $G$ are of degree at least $\delta$, then deciding whether $G$ has a cycle of length at least $\min\{2\delta +k, n\}$ can be done in time $2^{\Oh(k)}\cdot n^{\Oh(1)}$.  
 
The proof of the algorithmic generalization of Dirac's theorem builds on new graph-theoretical results that are interesting on their own. 

\medskip
\noindent
{\bf Keywords:} longest path, longest cycle, fixed-parameter tractability, above guarantee parameterization, Dirac's theorem
\end{abstract}

	\newpage 
\tableofcontents
\newpage


\section{Introduction}\label{sec:intro}



%

The fundamental theorem of Dirac from 1952 guarantees the existence of a Hamiltonian cycle in a graph with a large minimum vertex degree. 

 \begin{theorem}[Dirac~{\cite[Theorem~3]{Dirac52}}] \label{thm:diracs}
 If every vertex of an $n$-vertex graph $G$ is of degree at least $n/2$, then  $G$  is Hamiltonian, that is, contains a Hamiltonian cycle. 
 \end{theorem}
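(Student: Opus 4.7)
The plan is to prove Dirac's theorem by the classical extremal approach based on a longest path. First, I would observe that $G$ must be connected: if $G$ had two components, then the smaller one would contain at most $n/2$ vertices, forcing some vertex there to have degree strictly less than $n/2$ and contradicting the hypothesis.

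Next, I would fix a longest path $P = v_1 v_2 \cdots v_k$ in $G$. By maximality of $P$, every neighbor of $v_1$ and every neighbor of $v_k$ must already lie on $P$; otherwise $P$ could be prepended or appended to. The heart of the argument is to produce an index $i \in \{1, \ldots, k-1\}$ for which both $v_1 v_{i+1} \in E(G)$ and $v_i v_k \in E(G)$. Given such an $i$, the vertices of $P$ assemble into the cycle $v_1 v_2 \cdots v_i v_k v_{k-1} \cdots v_{i+1} v_1$ of length $k$. To find $i$, I would use a pigeonhole argument on the two index sets $A = \{i : v_1 v_{i+1} \in E(G)\}$ and $B = \{i : v_i v_k \in E(G)\}$, both subsets of $\{1, \ldots, k-1\}$. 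The degree hypothesis and the fact that all neighbors of $v_1$ and $v_k$ sit on $P$ give $|A|, |B| \geq n/2$, hence $|A| + |B| \geq n > k - 1$, forcing $A \cap B \neq \emptyset$.

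Having produced a cycle $C$ through all $k$ vertices of $P$, it remains to show $k = n$. Assume for contradiction that $C$ misses some vertex $w$. Since $G$ is connected, there is an edge from some vertex outside $C$ to some vertex $v_j \in V(C)$. Then deleting an edge of $C$ incident to $v_j$ and concatenating with this new edge yields a path in $G$ that is strictly longer than $P$, contradicting the choice of $P$. Therefore $k = n$ and $C$ is Hamiltonian.

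The \emph{main obstacle} is the pigeonhole step that produces the index $i$: this is the only place where the bound $\delta \geq n/2$ enters in an essential way, and the proof is designed so that the hypothesis is used exactly there. The connectivity observation and the path-extension argument are routine consequences of the choice of $P$ as a longest path. A secondary subtlety is being careful with the indexing of $A$ and $B$ so that the inequality $|A| + |B| > k - 1$ is tight enough to force a common element; shifting the neighborhood of $v_1$ by one to align with that of $v_k$ is the standard trick that makes the counting work.
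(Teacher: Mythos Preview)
Your proof is correct and is the standard longest-path/pigeonhole argument for Dirac's theorem. The paper does not give its own proof of this statement: it simply cites Dirac's original 1952 paper and remarks that \Cref{thm:diracs} follows from the more general \Cref{thm:circum} (which is likewise quoted without proof). So there is nothing to compare against beyond noting that your direct argument is self-contained, whereas the paper treats the result as classical background.
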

 
\Cref{thm:diracs} follows from a more general statement of Dirac about long cycles in a graph. 

\begin{theorem}[Dirac~{\cite[Theorem~4]{Dirac52}}]\label{thm:circum} 
Every $n$-vertex $2$-connected graph $G$ with  minimum vertex degree $\delta(G)\geq 2$, contains  a cycle with at least $\min\{2\delta(G),n\}$ vertices.
\end{theorem}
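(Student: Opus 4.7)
My plan is to proceed by contradiction on a longest cycle. Let $C$ be a longest cycle in $G$, and suppose toward a contradiction that $|V(C)| < \min\{2\delta(G), n\}$. Since $|V(C)| < n$, the graph $G - V(C)$ has at least one component $H$, and $2$-connectivity of $G$ forces the attachment set $S := N(H) \cap V(C)$ to have $|S| \geq 2$. Hence there is an \emph{ear}: a path $Q$ from some $x \in S$ to some $y \in S \setminus \{x\}$ whose internal vertices all lie in $V(H)$. Writing the two arcs of $C$ between $x$ and $y$ as having $\alpha \leq \beta$ edges (so $\alpha + \beta = |V(C)|$), the cycle formed by $Q$ together with the longer arc has $|E(Q)| + \beta$ edges; by the maximality of $C$ this is at most $|V(C)|$, which forces $|E(Q)| \leq \alpha \leq |V(C)|/2$.

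The heart of the argument is a \emph{gap lemma} applied to vertices of $V(H)$: for any $u \in V(H)$, no two neighbors of $u$ on $V(C)$ can be adjacent along $C$, because otherwise one could delete that edge of $C$ and insert a two-edge detour through $u$, producing a cycle of length $|V(C)|+1$. Consequently the neighbors of $u$ on $V(C)$ are pairwise at cyclic distance at least $2$, so $|N(u) \cap V(C)| \leq \lfloor |V(C)|/2 \rfloor$, and combined with $d(u) \geq \delta > |V(C)|/2$ this forces $u$ to have at least $\delta - |V(C)|/2$ neighbors inside $V(H)$. I would then take a longest path $P_H$ in the induced subgraph $G[V(H)]$. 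By the usual maximality argument, both endpoints of $P_H$ have all their $V(H)$-neighbors on $P_H$ itself, and applying the gap bound to each endpoint yields a lower bound on $|E(P_H)|$. Splicing $P_H$ to distinct elements of $S$ via the attachment edges at its two endpoints produces an ear strictly longer than allowed by the bound $|E(Q)| \leq |V(C)|/2$, which is the desired contradiction.

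The main obstacle is sharpness: the straightforward combination of the gap lower bound on the longest path in $H$ with the ear upper bound yields only $|V(C)| \geq \delta + O(1)$ rather than the tight $|V(C)| \geq 2\delta$. Closing this factor-of-two gap requires using the degree hypothesis on \emph{both} endpoints of $P_H$ simultaneously: the stronger statement is that for any $V(C)$-neighbor $a$ of one endpoint and any $V(C)$-neighbor $b$ of the other, the shorter arc of $C$ between $a$ and $b$ must have at least $|E(P_H)|+2$ edges, since otherwise combining that ear with the longer arc would overrun $|V(C)|$. Aggregating these distance constraints across all pairs $(a,b)$, in the spirit of Erd\H{o}s--Gallai-style counting, forces $|V(C)| \geq 2\delta$. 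Carrying out this counting cleanly --- and in particular choosing the component $H$, the path $P_H$, and the attachment endpoints so as to extract the full strength of both endpoint degrees --- is the delicate step where the proof really earns the sharp constant.
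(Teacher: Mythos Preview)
The paper does not give its own proof of this statement: Theorem~\ref{thm:circum} is quoted as a classical result of Dirac (cited to \cite{Dirac52}) and used as a black box, e.g.\ inside the proof of Theorem~\ref{thm:relaxed_long_cycle}. So there is nothing in the paper to compare your argument against; I can only assess the proposal on its own merits.

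Your overall architecture --- take a longest cycle $C$, pick a component $H$ of $G-V(C)$, use the gap lemma and the ear-replacement bound --- is exactly the classical route, and your diagnosis that the naive combination only yields $|V(C)|\gtrsim\delta$ is correct. But the proposal stops short of a proof at the two places that matter.

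First, the splicing step tacitly assumes that both endpoints of the longest path $P_H$ in $H$ have neighbours on $V(C)$, and that these neighbours can be chosen to be distinct. Neither is guaranteed. If $|E(P_H)|\ge\delta$, an endpoint of $P_H$ may have all of its $\delta$ neighbours on $P_H$ and none on $C$; and even when both endpoints do reach $C$, they might see only a single common vertex there. In those cases you must invoke $2$-connectivity separately (e.g.\ via two disjoint $H$--$C$ edges) and argue differently.

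Second, the ``Erd\H{o}s--Gallai-style counting'' you gesture at does not obviously close the gap. Writing $p=|V(P_H)|$, $A=N(u_0)\cap V(C)$ and $B=N(u_{p-1})\cap V(C)$, the distance constraints you state (same-type consecutive gaps $\ge 2$, cross-type gaps $\ge p+1$) yield, after summing around $C$, only $\ell\ge 2|A|+2|B|+2p-2\ge 4\delta-2p+2$. This gives $\ell\ge 2\delta$ only when $p\le\delta+1$; for larger $p$ the bounds $|A|,|B|\ge\delta-p+1$ become vacuous and you are back to the situation in the previous paragraph. One also has to handle $A\cap B\neq\emptyset$ carefully, since then the ``two type changes'' picture breaks down. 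All of this is fixable --- it is, after all, how the classical proof goes --- but it is precisely the work you label ``the delicate step'' and do not carry out.
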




Both Dirac's theorems were the first instances of results that developed into one of the core areas in Extremal Graph Theory. One of the  main questions in this research domain is to establish vertex degree characterization of Hamiltonian graphs and conditions enforcing long paths or cycles in graphs. The (very) incomplete list of results in this area includes the classical theorems of Erd{\H{o}}s  and Gallai \cite{ErdosG59},
Ore~\cite{ore60}, Bondy and Chv\'{a}tal~\cite{bondyC76}, P\'{o}sa \cite{posa62},
 Meyniel \cite{Meyniel73}, and Bollob\'{a}s and Brightwell \cite{bollobasB93}, see also the  Wikipedia entry on the Hamiltonian path.\footnote{\url{https://en.wikipedia.org/wiki/Hamiltonian_path}} The chapters of Bondy \cite{MR1373656} and Bollob\'{a}s \cite{MR1373679} in the Handbook of Combinatorics, as well as Chapter~3 in the Extremal Graph Theory book \cite{MR506522} provide excellent introduction to this important part of graph theory. The survey of Li  \cite{Li13survey} is a comprehensive (but a bit outdated) overview of the area.   After almost 70 years, the field remains active, see for example the very recent proof of the Woodall's conjecture by Li and Nung \cite{MR4140611}.
 
Computing long cycles and paths  is also an important topic  in parameterized complexity. 
  It served as a test-bed for developing several fundamental algorithmic techniques including the color coding  of  Alon, Yuster and Zwick \cite{AlonYZ95}, the algebraic approaches of Koutis and Williams \cite{Koutis08,Williams09}, matroids-based methods 
\cite{FominLS14},  and the determinants-sum technique of Bj{\"{o}}rklund from his FOCS 2010 Test of Time Award paper
\cite{DBLP:journals/siamcomp/Bjorklund14}. We refer to   \cite{FominK13},  \cite{KoutisW16}, and 
\cite[Chapter~10]{cygan2015parameterized} for an overview of algorithmic  ideas and techniques developed for  computing long paths and cycles in   graphs.

Despite the tremendous progress in graph-theoretical and algorithmic studies of longest cycles, all the developed tools do not answer the following natural and ``innocent'' question. 
By \Cref{thm:circum}, deciding whether a $2$-connected graph $G$ contains a cycle of length at least $\min\{2\delta(G),n\}$ can be trivially done in polynomial time by checking degrees of all vertices in $G$. 
  \begin{tcolorbox}[colback=green!5!white,colframe=blue!40!black]
 \textbf{Question~1:} Is there a polynomial time algorithm to decide whether a $2$-connected graph $G$ contains a cycle of length at least 
 $\min\{2\delta(G) +1,n\}$?
 \end{tcolorbox}
The methods developed in the extremal Hamiltonian graph theory do not answer this question.  
The combinatorial bound in \Cref{thm:circum} is known to be sharp; that is, there exist graphs that have no cycles of length at least $\min\{2\delta(G)+1,n\}$. Since the extremal graph theory studies the existence of a cycle under certain conditions, such type of questions are beyond its applicability. The   techniques of  parameterized algorithms do not seem to be much of use here either. Such algorithms compute a cycle of length at least $k$ in time $2^{\Oh(k)} \cdot n^{\Oh(1)}$, which in our case is $2^{\Oh(\delta(G))}\cdot n^{\Oh(1)}$. Hence  when  $\delta(G)$ is, for example, at least $n^{1/100}$, these algorithms  do not run in polynomial time.

Similarly,  the existing methods do not answer the question about another ``tiny algorithmic step'' from Dirac's theorem, what happens when all vertices of $G$ but  one are of large degree?

  \begin{tcolorbox}[colback=green!5!white,colframe=blue!40!black]  \textbf{Question~2:} Let $v$ be a vertex of the minimum degree of  a $2$-connected graph $G$.
 Is there a polynomial time algorithm to decide whether $G$ contains a cycle of length at least 
 $\min\{2\delta(G-v),n\}$?
 \end{tcolorbox}
(We denote by $G-v$  the induced subgraph of $G$ obtained by removing vertex  $v$.)  Note that  graph  $G-v$ is not necessarily   2-connected and we cannot apply \Cref{thm:circum}   to it. 

The incapability of existing techniques to answer Questions 1 and 2 was the primary motivation for our work.
We answer both questions affirmatively and in a much more general way. Our result implies that in polynomial time one can decide whether $G$ contains a cycle of length at least $2\delta(G-B)+k$ for $B\subseteq V(G)$ and $k\geq 0$ as long as $k+ |B| \in \Oh(\log{n})$. (We denote by $G-B$ the induced subgraph of $G$ obtained by removing vertices of $B$.) To state our result more precisely, we define the following problem.

%
%
%

\defparproblema{\probDC}%
{A graph $G$ with vertex set $B\subseteq V(G)$ and an integer $k\ge 0$.}%
{$k+|B|$}
{Decide whether $G$ contains a cycle of length at least $\min\{2\delta(G-B), |V(G)|-|B|\}+k$.
}

In the definition of \probDC we use the minimum of two values for the following reason. The question whether an $n$-vertex graph 
 $G$ contains a cycle of length at least $ 2\delta(G-B)+k$ is meaningful only for $\delta(G-B)\leq n/2$. Indeed,  for   $\delta(G-B)>n/2$,    $G$ does not  contain a cycle of length at least  $2\delta(G-B)+k>n$. However, even when  $\delta(G-B)>n/2$, deciding whether $G$ is Hamiltonian, is still very intriguing.  By taking the minimum of the two values, we  capture both  interesting situations. 
%

The   main result of the paper is the following theorem providing an algorithmic generalization of Dirac's theorem.

\begin{theorem}[\textbf{Main Theorem}]\label{theorem:main}On an $n$-vertex 2-connected graph $G$,  
\probDC is solvable in time $2^{\Oh (k+|B|)} \cdot n^{\Oh (1)}$.
\end{theorem}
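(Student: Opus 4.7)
The plan is to establish the theorem by algorithmically lifting Dirac's classical proof in an ``above guarantee'' fashion: start with a cycle of length $\min\{2\delta(G-B), |V(G)|-|B|\}$ obtainable by a polynomial-time Dirac-style algorithm on $G-B$ (via \Cref{thm:circum}), and then either extend it by $k$ more vertices while inserting the required vertices of $B$ on it, or certify that no such longer cycle exists. The parameter $k+|B|$ drives branching at both the $B$-insertion step and the cycle-extension step.

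First I would reduce the role of $B$. By branching over all $2^{|B|}$ subsets $B'\subseteq B$ representing the $B$-vertices that should lie on the target cycle, and deleting $B\setminus B'$, I reduce to an instance with a residual set $B'$ of forced vertices of size $\Oh(|B|)$. Each vertex of $B'$ enters any cycle through exactly two of its incident edges, so I would preprocess by branching over a small set of ``attachment templates'' per $B'$-vertex, replacing each such vertex by an edge (or short gadget) in an auxiliary graph $G'$. This produces, at total cost $2^{\Oh(|B|)}$, an equivalent core problem with $B=\emptyset$ in which the required cycle length becomes $\min\{2\delta(G'), n'\}+k'$ for some $k'=k+\Oh(|B|)$.

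The core task is therefore to decide, on a $2$-connected graph $H$ with minimum degree $\delta$, whether $H$ has a cycle of length at least $\min\{2\delta, |V(H)|\}+k'$, in time $2^{\Oh(k')}\cdot |V(H)|^{\Oh(1)}$. For this I would prove a \emph{Dirac decomposition} theorem in the spirit suggested by the paper's terminology: any $2$-connected graph $H$ either (a) contains a cycle of length $\min\{2\delta, |V(H)|\}+k'$ that can be produced by $\Oh(k')$ local enlargement operations (ear extensions, rotations of a Dirac cycle, or detours through vertices outside the cycle), recovered by $2^{\Oh(k')}$-branching or by color-coding on the $k'$ ``extra'' slots, or (b) decomposes canonically into a controlled family of \Cyclebanana{}s glued along small separators, whose rigid structure determines the length of every long cycle and thus answers the decision question in polynomial time.

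The main obstacle is proving the structural dichotomy of case (b). The classical Dirac argument is tight at one step and does not immediately yield $k'$ simultaneous extensions; one needs to strengthen the ``neighbors on the cycle'' counting argument so that the failure of $k'$ successive extensions forces a rigid product-like decomposition of $H$. Moreover, this decomposition must be algorithmically recoverable, must interact cleanly with the $B'$-attachments produced in the first step (so that inserting $B'$-vertices does not silently enable or destroy long cycles), and must capture \emph{all} long cycles and not merely the longest one. The combinatorial heart of the proof is therefore showing that obstructions to $k'$ extensions concentrate in a bounded-size ``core'' whose cases can be enumerated, while the rest of the graph behaves as predictably as in the extremal examples of \Cref{thm:circum}.
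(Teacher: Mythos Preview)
Your $B$-elimination step does not work. A vertex $v\in B'$ may have degree $\Theta(n)$, so ``branching over attachment templates'' (i.e.\ guessing the two cycle-edges at $v$) costs $\binom{\deg(v)}{2}$ per vertex and $n^{\Theta(|B|)}$ overall; this is \classXP, not $2^{\Oh(|B|)}$. There is no known trick to shrink these choices, and the paper does \emph{not} reduce to $B=\emptyset$: it keeps $B$ throughout, works with $\delta(G-B)$, and the $B$-vertices interact with the structural decomposition (e.g.\ the $B$-refinements in the definition of \cyclebananadec) in a way that cannot be simulated by local gadget replacement.

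Even granting $B=\emptyset$, your core dichotomy is too coarse. Your case (a)---``$\Oh(k')$ local enlargements'' or ``color-coding on the $k'$ extra slots''---does not work: the $k'$ extra vertices need not be localized, and there is no $2\delta$-cycle sitting still while you color $k'$ slots around it. The paper's engine (Lemma~\ref{lemma:main_cycle_lemma}) has a \emph{trichotomy}, not a dichotomy: from a cycle $C$ shorter than the target, one obtains either (i) a longer cycle, or (ii) a vertex cover of $G-B$ of size $\delta(G-B)+\Oh(k)$, or (iii) a \cyclebananadec. Outcome (ii) is handled by a separate nontrivial algorithm (Theorem~\ref{thmVCad}), and there is also an extremal regime $\delta(G-B)\ge n/2-\Oh(k)$ requiring its own algorithm (Theorem~\ref{theorem:hamiltonian}). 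Your plan has no analogue of either. Finally, exploiting outcome (iii) is not ``polynomial-time rigidity'': one must find a long $(s',t')$-path inside a \cyclebanana, which is itself an above-guarantee problem (\probstP, Theorem~\ref{thmEG}) solved by a second, recursive decomposition and by the \probTLDP subroutine (Theorem~\ref{thmTLDP}). None of this machinery is replaceable by a single rotation/color-coding argument.
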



In other words, \probDC is fixed-parameter tractable parameterized by $k+|B|$ and the dependence on the parameters is single-exponential. 
This dependence is asymptotically optimal up to the Exponential Time Hypothesis (ETH) of Impagliazzo, Paturi,  and Zane~\cite{ImpagliazzoPZ01}. Solving \probDC in time $2^{o (k)} \cdot n^{\Oh (1)}$ even with $B = \emptyset$ yields  recognizing  in time $2^{o(n)}$ whether a graph is Hamiltonian. A subexponential algorithm deciding Hamiltonicity would fail ETH. In \Cref{thm:hard_on_B} we show that solving \probDC in time $2^{o (|B|)} \cdot n^{\Oh (1)}$ even for $k = 1$ would contradict ETH as well.
It is also \classNP-complete to decide whether a $2$-connected graph $G$ has a cycle of length at least $(2+\varepsilon)\delta(G)$ for any $\varepsilon> 0$ (\Cref{thm:tightness}).

The 2-connectivity requirement in the statement of the theorem is  important---without it \probDC is already \classNP-complete for $k=|B|=0$. Indeed, for an $n$-vertex graph $G$ construct a graph $H$ by attaching to each vertex of $G$  a clique of size $n/2$. Then $H$ has a cycle of length at least $2\delta (H)\geq n$ if and  only if 
$G$ is Hamiltonian.
However, when instead of a cycle we are looking for a long path, the 2-connectivity requirement could be omitted. More precisely, consider the following problem. 

\defparproblema{\probDP}%
{A graph $G$ with vertex set $B\subseteq V(G)$ and an integer $k\ge 0$.}%
{$k+|B|$}
{Decide whether $G$ contains a path of length at least $\min\{2\delta(G-B), |V(G)|-|B|-1\}+k$.
}
 
 Theorem~\ref{theorem:main} yields the following. 
 \begin{corollary}\label{theorem:main_path}
 On a connected $n$-vertex graph $G$,  
\probDP is solvable in time $2^{\Oh (k+|B|)} \cdot n^{\Oh (1)}$.
\end{corollary}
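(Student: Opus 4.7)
The plan is to reduce \probDP to \probDC by the standard trick of attaching a universal vertex. Given a connected graph $G$ with distinguished set $B$ and parameter $k$, I would construct $G'$ by adding a single new vertex $u$ to $G$ and making it adjacent to every vertex of $V(G)$. The connectivity of $G$ combined with the fact that $u$ is adjacent to everything implies that $G'$ is $2$-connected: removing $u$ leaves $G$ itself, which is connected, and removing any $v \in V(G)$ leaves a graph that is still connected through $u$. Hence Theorem~\ref{theorem:main} is applicable to $G'$.

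The key combinatorial observation is the length correspondence: $G$ contains a path of length at least $L$ if and only if $G'$ contains a cycle of length at least $L+2$. The forward direction closes a path $v_0, v_1, \dots, v_L$ of $G$ into the cycle $v_0, v_1, \dots, v_L, u, v_0$ of $G'$. Conversely, any cycle in $G'$ either passes through $u$, in which case deleting $u$ from the cycle yields a path in $G$ shorter by exactly $2$, or it avoids $u$ entirely, in which case it already lies in $G$ and removing one edge gives a path that is shorter by only $1$---more than enough.

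I would then set $B' = B \cup \{u\}$, which gives $G' - B' = G - B$ and hence $\delta(G'-B') = \delta(G-B)$ and $|V(G')|-|B'| = |V(G)|-|B|$. Writing $L = \min\{2\delta(G-B),\, |V(G)|-|B|-1\} + k$ for the path threshold, I choose $k' = (L+2) - \min\{2\delta(G'-B'),\, |V(G')|-|B'|\}$, so that the cycle threshold of \probDC on the instance $(G', B', k')$ is exactly $L+2$. A short case analysis, splitting on whether $2\delta(G-B) \leq |V(G)|-|B|-1$ or $2\delta(G-B) \geq |V(G)|-|B|$, shows that $k' \in \{k+1, k+2\}$ in either regime (the two minima differ by at most one), so that $k' + |B'| = \Oh(k+|B|)$.

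The corollary then follows by invoking Theorem~\ref{theorem:main} on $(G', B', k')$: its running time is $2^{\Oh(k'+|B'|)} \cdot n^{\Oh(1)} = 2^{\Oh(k+|B|)} \cdot n^{\Oh(1)}$, and by the length correspondence it returns the correct answer for \probDP on $(G, B, k)$. I do not expect a substantial obstacle; the only delicate point is the off-by-one bookkeeping between the path and cycle thresholds (the $-1$ discrepancy inside the two minima), which is handled by the case analysis above, together with trivial edge cases such as $L \leq 0$. All the genuine combinatorial and algorithmic weight is carried by Theorem~\ref{theorem:main}, and this corollary merely records that the $2$-connectivity hypothesis there is not a real restriction for the path variant.
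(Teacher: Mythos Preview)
Your proposal is correct and uses the same universal-vertex reduction as the paper. The only cosmetic difference is that the paper leaves the new vertex outside $B$, which makes $\delta(G'-B)=\delta(G-B)+1$ and $|V(G')|-|B|=n-|B|+1$, so the cycle threshold becomes exactly the path threshold plus $2$ and one can take $k'=k$ without any case split; your choice $B'=B\cup\{u\}$ is equally valid and costs only the small off-by-one analysis you already carried out.
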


Indeed, when $G$ is connected, the graph $G+v$, obtained by adding a vertex $v$ and making it adjacent to all vertices of the graph, is $2$-connected. The minimum vertex degree of $G+v$ is equal to $\delta(G)+1$, and $G$ has a path of length at least $t$ if and only if $G+v$ has a cycle of length at least $t+2$.

\Cref{theorem:main}  answers  several open questions from the literature. Fomin, Golovach, Lokshtanov, Panolan, Saurabh and Zehavi in~\cite{fomin_et_al:LIPIcs:2019:11168} asked about 
 the parameterized complexity of   problems (with parameter $k$) where for a given   (2-connected) graph $G$ and $k\geq 1$, the task is to check whether $G$ has a path (cycle) with at least $2\delta(G)+k$ vertices. By \Cref{theorem:main} and Corollary~\ref{theorem:main_path}  (the case $B=\emptyset$), both problems are fixed-parameter tractable.

 Jansen, Kozma, and Nederlof  in \cite{DBLP:conf/wg/Jansen0N19} 
  conjectured that if 
 at least $n - k$ vertices of graph $G$ are of  degree at least $n/2-k$, then deciding whether $G$ contains a Hamiltonian cycle 
 can be done in time  $2^{\Oh (k)} \cdot n^{\Oh (1)}$.  \Cref{theorem:main} resolves this conjecture. Indeed, if $G$ is Hamiltonian, it is $2$-connected. Then let $B$, $|B|\leq k$,  be the set of vertices such that every vertex from $V(G)\setminus B$ is of degree (in $G$) at least $n/2-k$. Then $\delta(G-B)\geq n/2 -k - |B|\geq n/2 -2k$.
$n-|B|\leq n-2\delta(G-B)$, 
we put $k'=|B|$,  otherwise  we put $k'=n-2\delta(G-B)$. Note that because  $2\delta(G-B)\geq n -4k$, in both cases we have that $k'\leq 4k$.  Also by the choice of $k'$, 
 $\min\{2\delta(G-B), n-|B|\}+k'=n$ and hence 
 $G$ has a cycle of length at least  $\min\{2\delta(G-B), n-|B|\}+k'$
 if and only if $G$ is Hamiltonian. 
 By \Cref{theorem:main}, deciding whether $G$ has a cycle of length at least 
 $\min\{2\delta(G-B),n-|B|\}+k'$  can be done in time  $2^{\Oh (k'+|B|)} \cdot n^{\Oh (1)}=2^{\Oh (k)} \cdot n^{\Oh (1)}$.
Interestingly, while the conjecture of Jansen, Kozma and Nederlof follows from the statement of \Cref{theorem:main}, to prove the theorem, we need to resolve this conjecture directly.

 
We state  \Cref{theorem:main}  for the decision variant  of the problem. However, the proof  is constructive and the corresponding cycle can be found within the same running time.  Note that standard self-reduction arguments are not applicable here because deleting or contracting edges could change the minimum vertex degree.
 
 
 \medskip\noindent\textbf{Related work.}
Until very recently,  graph-theoretical and algorithmic studies  of  the longest paths and cycles  coexisted in parallel universes without almost any visible interaction. 
In 1992,  H\"{a}ggkvist \cite{Hagvist92}, as a corollary of his structural theorem, provided an algorithm that decides in time $n^{\Oh(k)}$ whether a graph with the minimum vertex degree at least $n/2-k$ is Hamiltonian. 
 In 2019,  Jansen, Kozma, and Nederlof  in \cite{DBLP:conf/wg/Jansen0N19}    gave two algorithms of running times  $2^{\Oh (k)} \cdot n^{\Oh (1)}$ that decide whether the input graph $G$ is Hamiltonian when either the minimum vertex degree of $G$ is at least $n/2-k$ or  at least $n-k$ vertices of $G$ are of degree at least $n/2$.  The first result of Jansen, Kozma, and Nederlof strongly improves the algorithm of H\"{a}ggkvist.  
  However, the methods they use, like the structural theorem of  H\"{a}ggkvist \cite{Hagvist92},  are specific for Hamiltonicity and are not applicable for the more general 
 problem of computing the longest cycle. Second, their parameterized algorithms   work only in one of the scenarios: either when all vertices are of degree at least   $n/2-k$ or  when at least $n-k$ vertices are of degree at least $n/2$. Whether  both scenarios could be combined, that is, the existence of a parameterized algorithm deciding Hamiltonicity when $n-k$ vertices are of degree at least $n/2-k$, was left open. 

   Fomin, Golovach, Lokshtanov, Panolan, Saurabh and Zehavi in~\cite{fomin_et_al:LIPIcs:2019:11168}  gave an algorithm that in time $2^{\Oh (k)} \cdot n^{\Oh (1)}$ decides whether a $2$-connected graph $G$ contains a cycle of length at least $d+k$, where $d$ is the degeneracy of $G$.  Since the minimum vertex degree $\delta(G)$ does not exceed the degeneracy of $G$, this result also implies an algorithm for finding a cycle of length at least  $\delta(G)+k$ in $2$-connected graphs. 

None of the works  \cite{DBLP:conf/wg/Jansen0N19} and  \cite{fomin_et_al:LIPIcs:2019:11168}  could be used to address Questions~1 and~2, the very special cases of \Cref{theorem:main}. 

%
%
 More generally,   \Cref{theorem:main}  fits into a popular trend in parameterized complexity called 
 ``above guarantee'' parameterization. The general idea of this paradigm is that the natural parameterization of, say,  a maximization problem by the solution size is not satisfactory if there is a lower bound for the solution size that is sufficiently large. For example, there always exists a satisfying assignment that satisfies half of the clauses or there is always a max-cut  containing at least half the edges. Thus nontrivial solutions occur only for the values of the parameter that are above the lower bound. This indicates that for such cases, it is more natural to parameterize the problem by the difference of the solution size and the bound. Since the work of 
Mahajan and Raman~\cite{MahajanR99} on  \textsc{Max Sat} and \textsc{Max Cut}, the above guarantee approach was 
 successfully applied to various problems, see e.g.~\cite{AlonGKSY10,CrowstonJMPRS13,GargP16,DBLP:journals/mst/GutinKLM11,GutinIMY12,GutinP16,GutinRSY07,LokshtanovNRRS14,MahajanRS09}. 
In particular, \cite{BezakovaCDF17}  and \cite{fomin_et_al:LIPIcs:2020:11724} study the longest path above the shortest $s,t$-path and the girth of a graph.

%

\section{Overview of the proof}\label{section:overview}
The original proof of Dirac is not constructive because it does not provide any procedure for constructing a cycle of length at least $2\delta(G)$. There are algorithmic proofs of Dirac's theorem; see, e.g., the thesis of Locke \cite{locke1983extremal}. The idea of Locke's proof that also provides a polynomial-time algorithm for constructing a cycle of length at least $2\delta(G)$ is to start from some cycle and to grow by inserting new vertices and short paths. Thanks to the conditions on the graph's degrees, such a procedure always constructs a cycle of the required length. 
On a very general level, our proof of Theorem~\ref{theorem:main} uses the same strategy. For an instance $(G,B,k)$ of \probDC, we try to grow a cycle iteratively. However, enlarging the cycle by ``elementary'' improvements could get stuck with a cycle of length significantly smaller than
$\min\{2\delta(G-B), |V(G)|-|B|\}+k$. It appears that the cycles that cannot be improved by ``elementary'' operations induce a very particular structure in a graph.
These structural theorems play a crucial role in our algorithm. 

The main technical contribution is the new graph decomposition that we call \cyclebananadec. The formal definition is given in Section~\ref{sec:bananas}.   \cyclebananadec
  is defined for a cycle $C$ in $G$. Let $C$ be a cycle of length less than $2\delta(G-B) +k$. 
Informally, the components of Dirac decomposition are connected components in $G-V(C)$. (For an intuitive description of the decomposition, we will assume that $B=\emptyset$. Handling  vertices of $B$ requires more technicalities---we have to refine the graph and work with its refinement.)
Since $G$ is $2$-connected, we can reach $C$ by a path starting in such a component in $G$.
One of the essential properties of \cyclebananadec is a limited number of vertices in $V(C)$ that have neighbors outside of $C$.
In fact, we can choose two short paths $P_1$ and $P_2$ in $C$ (and short means that their total length is of order $k$) such that all connections between 
connected components of $G-V(C)$ and $C$ go through $V(P_1)\cup V(P_2)$. The second important property is that each connected component of $G-(V(P_1)\cup V(P_2))$ is connected with $P_i$ in $G$ in a very restricted way: The maximum matching size between its vertex set and the vertex set of $P_i$ is at most one.

\cyclebananadec appears to be very useful for algorithmic purposes. For a cycle $C$ and a vertex set $B$, given a \cyclebananadec for $C$ and $B$, in time $2^{\Oh (k+|B|)} \cdot n^{\Oh (1)}$ we either solve the problem or succeed in enlarging $C$
 (Theorem~\ref{thm:cyclebanana}). We also provide an algorithm that either constructs a \cyclebananadec in polynomial time or obtains additional structural information that again can be used either to solve the problem or to enlarge the cycle. More precisely,  
first, we need to eliminate the ``extremal'' cases. When $\delta(G-B)\in \Oh(k)$, the classical result of Alon, Yuster, and Zwick \cite{AlonYZ95} solves the problem in time
$2^{\Oh (k+|B|)} \cdot n^{\Oh (1)}$. Another extremal case is when $|B|\le k$ and $\delta(G - B) \ge \frac{n}{2}-k$. In that case, for solving \probDC, we have to decide in time $2^{\mathcal{\Oh}(k)}\cdot n^{\Oh(1)}$ whether $G$ is almost Hamiltonian, i.e., a cycle in $G$ that   cover all but $\Oh(k)$ vertices. The existence of such an algorithm for Hamiltonian cycles was conjectured in 
  \cite{DBLP:conf/wg/Jansen0N19}  and Theorem~\ref{theorem:hamiltonian} settles this conjecture. We give an overview of the proof of  Theorem~\ref{theorem:hamiltonian} later in this section. If we are in none of the extremal cases, then  (Lemma~\ref{lemma:main_cycle_lemma}) in polynomial time we can either (a) enlarge the cycle $C$, or (b) compute a vertex cover of $G-B$ of size at most $\delta(G-B)+2k$, or (c) compute a \cyclebananadec. In cases (a) and (c), we can proceed iteratively. For the case (b) we give an algorithm that solves the problem in time $2^{\Oh(k+|B|)}\cdot\polyn$ (Theorem~\ref{thmVCad}). 

 The most critical and challenging component of the proof is Theorem~\ref{thm:cyclebanana} about algorithmic properties of \cyclebananadec.  
 We use the properties of \cyclebananadec to show that an enlargement of a cycle $C$ of length at most $2\delta(G-B)+k-1$ can be done in a very particular way. By an extension of Dirac's existential theorem, Theorem~\ref{thm:relaxed_long_cycle}, we can assume that $C$ is of length at least $2\delta(G-B)$.  The most interesting and not-trivial situation that could occur is that for some vertices $x\in V(P_1)$ and $y\in V(P_2)$, we replace the shortest $(x,y)$-path in $C$ by a detour with a particular property. This detour leaves $x$, moves to a vertex $s$ of some $2$-connected component of $G-V(C)$, visits some vertices in this component, leaves it from a vertex $t$, and goes to vertex $y$. Since the length of the longest $(x,y)$-path in $C$ is at least $\delta(G-B)$, to decide whether such a detour exists, it is sufficient to solve the following problem.  
 For vertices $s,t$ of a $2$-connected graph $G$, decide whether $G$ contains an $(s,t)$-path of length at least $\delta(G-B)+k$. We give an algorithm solving this problem in time $2^{\mathcal{O}(k+|B|)}\cdot n^{\mathcal{O}(1)}$ (Theorem~\ref{thmEG}). The combinatorial bound that an $(s,t)$-path of length $\delta(G)$ always exists if $G$ is $2$-connected, is the classical theorem of Erd{\H{o}}s and Gallai \cite[Theorem~1.16]{ErdosG59}. Because of that, we name the problem of computing an $(s,t)$-path of length at least $\delta(G-B)+k$ by the  
\probstP problem. \probstP is an interesting problem on its own, and to prove Theorem~\ref{thmEG}, we use another structural result which we call \bananadec. Similar to \cyclebananadec, this decomposition is very useful from the algorithmic perspective. In Section~\ref{sec:erdos-gallaiPath}, we define this decomposition, provide efficient algorithms for constructing it, and use it to solve \probstP. Another interesting component of the solution to  \probstP is the algorithm for computing the longest cycle passing through two specified vertices (Theorem~\ref{thmTLDP}). 
We are not aware of the previous work in parameterized algorithms on this natural problem. 
  
Figure~\ref{fig:proofsketch} displays the most important steps of the proof and the dependencies between them. In the remaining part of this section, we highlight the  ideas behind each of the auxiliary steps  (Theorems~\ref{thmTLDP}, \ref{thmEG}, \ref{thmVCad}, and~\ref{theorem:hamiltonian})  in the proofs of  Theorem~\ref{theorem:main} and Theorem~\ref{thm:cyclebanana}. 

\medskip
		
%
%
%
%
%

%
%


\begin{figure}
	\centering
	\begin{tikzpicture}[every text node part/.style={align=center}]
		\tikzstyle{problem}=[rectangle, draw, rounded corners]
		\tikzstyle{problemM}=[rectangle, fill=black!10,draw=red, rounded corners]
		
		\tikzstyle{implies}=[-latex']
		\node [problem] (two-paths) at (-3, 0) {\probTLDP\\ (Theorem~\ref{thmTLDP})};
		
		\ifdefined\STOC
		\coordinate (st-path-C) at (-2.1,-1.5);
		\else
		\coordinate (st-path-C) at (-3, -1.5);
		\fi		
		
		\node [problem] (st-path) at (st-path-C) {\probstP \\ (Theorem~\ref{thmEG})};
		
		\ifdefined\STOC
		\coordinate (diracdec-C) at (-2.7,-3);
		\else
		\coordinate (diracdec-C) at (-3, -3);
		\fi		

		\node [problem] (diracdec) at (diracdec-C) {\cyclebananadec \\ (Theorem~\ref{thm:cyclebanana})};

		\ifdefined\STOC
		\coordinate (small-vc-C) at (1,0);
		\else
		\coordinate (small-vc-C) at (6.2,0);
		\fi
		
		\node [problem] (small-vc) at (small-vc-C)	{\textsc{\probDC\ / Vertex Cover} \\\textsc {Above Degree}\\ (Theorem~\ref{thmVCad})};
		

		\ifdefined\STOC
		\coordinate (hamiltonian-C) at (1.7,-3);
		\else
		\coordinate (hamiltonian-C) at (6.2,-3);
		\fi

		\node [problem] (hamiltonian) at (hamiltonian-C) {\textsc{Almost Hamiltonian} \\ \textsc{Dirac Cycle} \\(Theorem~\ref{theorem:hamiltonian})};
		
		\ifdefined\STOC
		\coordinate (dirac-cycle-C) at (-1,-4.5);
		\else
		\coordinate (dirac-cycle-C) at (2,-4.5);
		\fi
		
		\node [problemM] (dirac-cycle) at (dirac-cycle-C) {\probDC \\(Theorem~\ref{theorem:main})};

		\draw [implies] (two-paths) -- (st-path);
		\draw [implies] (small-vc) -- (hamiltonian);
		\draw [implies] (hamiltonian) -- (dirac-cycle);
		\draw [implies] (st-path) -- (diracdec);
		\draw [implies] (small-vc) -- (dirac-cycle);
	 	\draw [implies] (diracdec) -- (dirac-cycle);
	\end{tikzpicture}
	\caption{The main steps  and connections  in  the proof of Theorem~\ref{theorem:main}.}\label{fig:proofsketch}
	\end{figure}
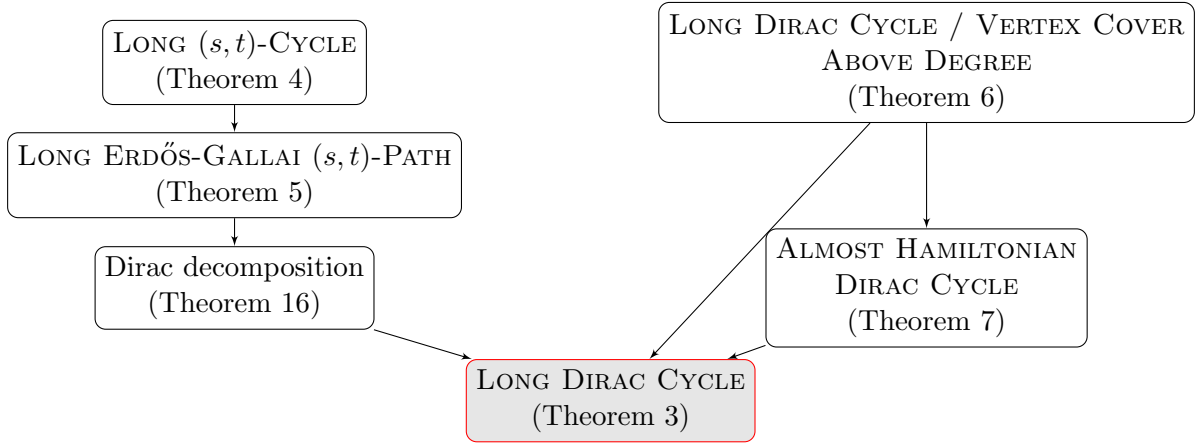

The first auxiliary problem whose solution we use in  the proof of Theorem~\ref{thmEG} is the following. 



 \defparproblema{\probTLDP}%
{A graph $G$ with two vertices $s,t\in V(G)$ and an integer $k \ge 0$.}%
{$k$}
{Decide whether there is a cycle in $G$ of length at least $k$ that passes through $s$ and $t$.
}

When $s\neq t$, an equivalent formulation is to decide whether $G$ contains two internally disjoint $(s,t)$-paths of total length at least $k$.
%
%
%
 In Section~\ref{sec:tldp} we prove the following theorem.  
\begin{theorem}\label{thmTLDP}   \probTLDP   is solvable in time 
	 $2^{\mathcal{O}(k)}\cdot n^{\mathcal{O}(1)}$.
\end{theorem}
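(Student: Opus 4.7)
The plan is to reduce \probTLDP to a long-path problem in an auxiliary graph and to solve the latter by combining color coding with the representative-families technique. If $s = t$, the problem asks for a cycle of length at least $k$ through the single prescribed vertex $s$, which is a classical parameterization of Longest Cycle that I would dispatch in time $2^{\Oh(k)} \cdot n^{\Oh(1)}$ using color coding or one of the algebraic approaches as a black box.

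For $s \neq t$, I would form $G^{*}$ from $G$ by adding a new vertex $s'$ whose neighborhood equals $N_G(s)$, with no edge between $s$ and $s'$. A direct enumeration shows that cycles of length $\ell$ through both $s$ and $t$ in $G$ are in bijection with simple $(s, s')$-paths of length $\ell$ in $G^{*}$ that pass through $t$: a cycle $s, x_1, \ldots, x_{\ell-1}, s$ is sent to the path $s, x_1, \ldots, x_{\ell-1}, s'$ (using that $x_{\ell-1} \in N_G(s) = N_{G^{*}}(s')$), and the inverse direction is immediate. So it suffices to solve the following problem in time $2^{\Oh(k)} \cdot n^{\Oh(1)}$: does $G^{*}$ contain a simple $(s, s')$-path of length at least $k$ that visits $t$?

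To solve this long-path-with-a-mandatory-vertex problem, I would color $V(G^{*})$ uniformly at random with $k+1$ colors (to be derandomized via perfect hash families) and hunt for a colorful prefix of the sought path: a simple $(s, v)$-path $P$ with exactly $k+1$ distinctly-colored vertices, completed by a vertex-disjoint path from $v$ to $s'$ that additionally passes through $t$ whenever $t \notin V(P)$. The dynamic program has states $(v, b, S)$, where $b \in \{0, 1\}$ records whether $t$ has already been visited and $S \subseteq [k+1]$ is the set of colors used so far; for each state I store a representative family of vertex sets of colorful $(s, v)$-paths, pruned via the Fomin--Lokshtanov--Panolan--Saurabh representative-sets machinery so that each family stays of size $2^{\Oh(k)}$. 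At the end, for each state with $|S| = k+1$ and each stored vertex set $X$, I test the completion: if $b = 1$, a plain $(v, s')$-connectivity check in $G^{*} - (X \setminus \{v\})$ suffices; if $b = 0$, I enumerate a pair of neighbors of $t$ on the two sides of the passage and invoke a polynomial-time $2$-vertex-disjoint-paths algorithm in $G^{*} - X - \{t\}$.

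The main obstacle, mirroring the usual subtlety of the Long Cycle algorithm, is that the target path in $G^{*}$ may be strictly longer than $k+1$ vertices, so color coding alone cannot make the whole path colorful. The representative families are what let me decouple a short colorful backbone from a potentially long uncolored completion while keeping the entire state space singly exponential in $k$; the most delicate technical point will be verifying that the representative-set computations interact correctly with the extra bit $b$ that tracks the mandatory passage through $t$ and with the $2$-disjoint-paths check performed at closure.
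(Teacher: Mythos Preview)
Your reduction to a long $(s,s')$-path through $t$ in $G^{*}$ via a false twin of $s$ is correct. The gap is in the algorithm you propose for the reduced problem. Representative sets do solve plain \textsc{Long $(s,t)$-Path}, but the correctness rests on a minimality trick: take the shortest solution $P^{*}=c_0c_1\cdots c_L$, inductively obtain a prefix $X'\in\hat{\mathcal{P}}_k^{c_k}$ disjoint from the last $k$ vertices of $P^{*}$, and argue that if $X'$ still meets $P^{*}$ at some $c_i$ with $k<i\le L-k$, then splicing the $(s,c_i)$-subpath of $X'$ with the $(c_i,c_L)$-suffix of $P^{*}$ yields a strictly shorter $(s,c_L)$-path of length $\ge k$ --- a contradiction. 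Once you require the path to pass through a designated interior vertex $t$, this spliced path need not contain $t$: if $t$ sits at a position strictly between $k$ and $i$ on $P^{*}$, or on the discarded $(c_i,c_k)$-portion of $X'$, it is lost, and there is no contradiction with $P^{*}$ being shortest \emph{among paths through $t$}. So nothing forces the pruned family to retain a completable prefix. The bit $b$ only records whether $t$ lies in the current prefix; it cannot control where $t$ falls relative to the splice point, and the $2$-disjoint-paths closure check is downstream of this failure. (As a side remark, mixing colour coding with representative sets here is redundant: the colours are not doing any work once you store vertex sets.)

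The paper takes a different route: it works directly with pairs of internally disjoint $(s,t)$-paths and a two-stage scheme. Colour coding handles solutions of total length at most $3k$; for longer solutions, random separation into three colour classes isolates a short prefix region from the last $k$ internal vertices of each of the two paths, and a structural claim about the minimum-length solution (Claim~\ref{cl:oneoftwo}) guarantees that at least one of two shortest-path completions inside its colour class is already internally disjoint from the other $(s,t)$-path. A single shortest-path plus flow computation then finishes. This mechanism is orthogonal to representative sets, and it is not clear how to complete your approach without importing a comparable structural lemma tailored to the ``through $t$'' constraint.
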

While the first idea to design an algorithm claimed in Theorem~\ref{thmTLDP} would be to use the color coding technique of Alon, Yuster and Zwick \cite{AlonYZ95}, this idea does not work directly. The reason is that color coding can be used only to  find in the claimed running time the cycle whose length is of order of $k$. However, it is quite possible that  the lengths of all solutions are much larger than $k$; in such situation color coding cannot be applied directly. Our approach in proving  Theorem~\ref{thmTLDP} 
builds on  ideas from~\cite{DBLP:journals/ipl/FominLPSZ18,Zehavi16}, where a parameterized algorithms for finding a directed  $(s,t)$-path and a directed cycle of length at least $k$ were developed.   
The main idea of the proof is the following. First, we use color coding to verify whether the considered instance has a solution composed by two  $(s,t)$-paths of total length at most $3k$. If the instance has a solution, we return it and stop. Otherwise, we conclude that the total length of the paths of every solution is at least $3k+1$. This allows to use structural properties of paths. Let $P_1$ and $P_2$ be the $(s,t)$-paths of a solution of minimum total length. Then there are vertices $x_1$ and $x_2$ on $P_1$ and $P_2$, respectively, such that (i) the total length of the $(s,x_1)$-subpath $P_1'$ of $P_1$ and the $(s,x_2)$-subpath $P_2'$ of $P_2$ is exactly $k$, (ii) either $x_1=s$ or the length of the $(x_1,t)$-subpath $P_1''$ of $P_1$ is at least $k$, and, symmetrically, (iii) either $x_2=s$ or  the length of the $(x_2,t)$-subpath $P_2''$ of $P_2$ is at least $k$. Then $P_1''$  and  $P_2''$  are internally disjoint paths that are shortest disjoint paths avoiding $V(P_1')\cup V(P_2')\setminus\{x_1,x_2\}$. We use the method of random separation  to distinguish the following three sets: $V(P_1')\cup V(P_2)\setminus\{x_1,x_2\}$, the last $\min\{k,|V(P_1)|-2\}$ internal vertices of $P_1''$, and the last $\min\{k,|V(P_2)|-2\}$ internal vertices of $P_2''$. This allows to highlight the crucial parts of the shortest solution and then find a solution.


The second problem whose solution we use in the proof of  Theorem ~\ref{theorem:main},  comes from another classical theorem due to Erd{\H{o}}s and Gallai from \cite[Theorem~1.16]{ErdosG59}, see also \cite{locke1985generalization}.
%
 	For every pair of vertices $s,t$ of a 2-connected graph $G$,  there is a path of length at least $\min_{v\in V(G)\setminus \{s,t\}}\deg v$. The proof of this result is constructive, and it implies a polynomial time algorithm that finds such a path.     We define \probstP as follows.
	
	\medskip
 \defparproblema{\probstP}%
 {A graph $G$ with a vertex set $B\subseteq V(G)$, two vertices $s,t\in V(G)$ and an integer $k \ge 0$.}%
 {$k+|B|$}
 {Decide whether $G$ contains an $(s,t)$-path of length at least $\delta(G-B)+k$.
 }
	
In Section~\ref{sec:erdos-gallaiPath}, we prove the following theorem.  This theorem plays an important role in the proof of Theorem~\ref{thm:cyclebanana}. 
\begin{theorem}\label{thmEG} 
 \probstP     is solvable in time 
  $2^{\mathcal{O}(k+|B|)}\cdot n^{\mathcal{O}(1)}$
on $2$-connected graphs.
\end{theorem}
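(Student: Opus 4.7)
The plan is to combine color coding with iterative path improvement driven by a new structural decomposition. In the regime $\delta(G-B) = O(k+|B|)$, the target length $\delta(G-B)+k$ is itself $O(k+|B|)$, so the color-coding algorithm of Alon, Yuster, and Zwick~\cite{AlonYZ95} produces the desired $(s,t)$-path in time $2^{O(k+|B|)} \cdot n^{O(1)}$. Hence I may assume $\delta(G-B)$ is much larger than $k+|B|$; in this regime the existence of an $(s,t)$-path of length at least $\delta(G-B)$ is guaranteed by the classical Erd{\H{o}}s-Gallai theorem, and its constructive proof yields such a path in polynomial time.

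Starting from this initial path $P$ of length at least $\delta(G-B)$, the algorithm iteratively attempts to extend $P$ toward length $\delta(G-B)+k$. The central tool is a decomposition of $G$ with respect to the current $(s,t)$-path $P$ and the set $B$ — the \bananadec — in which $V(G) \setminus V(P)$ is partitioned into components (\emph{bananas}) whose attachments to $P$ are highly restricted: each banana connects to $P$ only through a small number of designated vertices, and with matching number at most one into the relevant segment of $P$. An improvement step is then a \emph{detour}: two vertices $x, y \in V(P)$ together with an internally vertex-disjoint $(x,y)$-path $Q$ inside $G - (V(P) \setminus \{x,y\})$ whose length exceeds that of the $(x,y)$-subpath of $P$. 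Any such detour combined with the complementary arc of $P$ forms a cycle through $x$ and $y$, so the longest available detour for a given pair $(x,y)$ equals the longest cycle through $x$ and $y$ in the auxiliary graph obtained by removing the intermediate vertices of $P$; this is exactly the problem solved by Theorem~\ref{thmTLDP} in time $2^{O(k)} \cdot n^{O(1)}$. I would invoke it on polynomially many pairs of vertices on $P$ (adjusted to account for each banana) and swap subpaths whenever the detour strictly increases $|P|$.

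The principal difficulty lies in establishing a structural dichotomy: whenever $P$ has length less than $\delta(G-B)+k$ and no detour of the above form yields an improvement, the \bananadec must be so restrictive that $P$ is certified optimal, or the residual search space reduces to an instance of size $2^{O(k+|B|)}$ solvable by brute force. Proving that in each iteration either the decomposition can be computed or a local improvement to $P$ is available in polynomial time, while bounding the outer iteration count polynomially, will require a potential-function argument exploiting the large minimum degree outside $B$ together with 2-connectivity, in spirit analogous to but technically distinct from the \cyclebananadec argument sketched for the cycle case. The final running time follows from combining a polynomial number of outer iterations with the subroutine cost of Theorem~\ref{thmTLDP} and the color-coding base case, all multiplied by the $2^{O(|B|)}$ overhead implicit in incorporating $B$ into the decomposition.
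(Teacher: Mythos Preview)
Your high-level plan (handle small $\delta(G-B)$ by color coding, then build a decomposition around a current path and try to improve it) is in the right spirit, but there is a genuine gap in how you propose to make progress, and it differs from the paper's mechanism in a crucial way.

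First, your use of Theorem~\ref{thmTLDP} for detours does not do what you claim. A detour replacing the $(x,y)$-subpath of $P$ must have length exceeding that subpath, which can be $\Theta(\delta(G-B))$. Theorem~\ref{thmTLDP} finds a cycle of length at least its parameter through two given vertices, so to detect such a detour you would need to call it with parameter $\Theta(\delta(G-B))$, not $k$; that costs $2^{\Theta(\delta(G-B))}$, not $2^{O(k)}$. (Also, the reduction you sketch is not quite right: a single detour is one $(x,y)$-path in the auxiliary graph, not a cycle through $x$ and $y$.) So your improvement step, as stated, cannot find the long detours that are needed.

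Second, the ``structural dichotomy'' you defer---that when no detour exists the path is optimal or the residual instance is tiny---is precisely the hard part, and it is false in the form you state. The paper does \emph{not} certify optimality when local improvement stalls. Instead it shows (Lemma~\ref{lemma:st_path_edge_of_banana} and Lemma~\ref{lemma:st_path_banana_to_2_connected}) that if a long $(s,t)$-path exists at all, then one exists that enters some \banana\ $M$, and moreover its restriction to $M$ is an $(s',t')$-path for two vertices $s',t'$ in a strictly smaller $2$-connected graph $K$ with $\delta(K-(B\cup\{s',t'\}))\ge\delta(G-B)-O(1)$. This is again an instance of \probstP, so the algorithm \emph{recurses} into each banana. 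Theorem~\ref{thmTLDP} is used only for the short outer piece---the two disjoint paths from $\{s,t\}$ to $\{s',t'\}$ of total length at most $k$---which is exactly what it can do in $2^{O(k)}$ time. The polynomial running time then comes from a size argument: the bananas are vertex-disjoint and each strictly smaller, so the recursion tree has total work $n^{O(1)}$. Your iterative-improvement framework has no analogue of this recursion, and without it there is no way to locate the long portion of the solution inside a single \banana.
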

Similar to \probDP, the requirement that the input graph is $2$-connected is important. It is easy to prove that  \probstP is NP-complete for $k=|B|=0$ when the input graph is not $2$-connected.

To prove Theorem~\ref{thmEG}, we apply the following strategy. We take an $(s,t)$-path $P$ and try to extend it as much as possible. The principal tool in enlarging the path 
 $P$ is Corollary~\ref{thm:relaxed_st_path}, which is an extension of the theorem of Erd{\H{o}}s and Gallai that takes into account the vertices of $B$. In the extremal case, when we cannot extend the path anymore, we obtain a graph decomposition whose properties become useful from the algorithmic perspective. We call this decomposition by the name of \bananadec and prove that, in that case, the graph can be decomposed in a very particular way. Roughly speaking, after a certain refinement of the graph, the $(s,t)$-path $P$ consists of a prefix $P_1$ and a suffix $P_2$ with the following properties. These parts of the path are sufficiently far from each other in $P$. Moreover, all components of the graph $G-V(P_1\cup P_2)$, we call them \banana, are connected to $P_1$ and $P_2$ in a very restricted way. Such a graph-theoretical insight helps us to characterize how a long $(s,t)$-path traverses through an \banana. This property allows us to design the recursive algorithm that proves the theorem. 



\medskip
The next auxiliary result required for the proof of Theorem~\ref{theorem:main}, concerns \probDC parameterized by the vertex cover of a graph. It is well-known, see e.g., \cite{CyganFKLMPPS15}, that a longest path in a graph $G$ could be found in time $2^{\Oh(t)}n^{\Oh(1)}$, where $t$ is the size of the minimum vertex cover of $G$. However, we need a much more refined result for  the proof of the main theorem, where the parameter is not just the size of the vertex cover, but the difference between that size and $\delta(G - B)$. We define the following parameterized problem.

\medskip
\defparproblemanl{\textsc{\probDC\ / Vertex Cover Above Degree}}%
{A graph $G$ with a vertex set $B\subseteq V(G)$, a vertex cover $S$ of $G$ of size $\delta(G-B)+p$ and an integer $k\ge 0$.}%
{$p+|B|$}
{Decide whether $G$ contains a cycle of length at least $2\delta(G-B)+k$.
}

Section~\ref{sec:vcalgo} is devoted to the proof of the following theorem, which we need for both  Theorem~\ref{theorem:hamiltonian} and Theorem~\ref{theorem:main}.

\begin{theorem}\label{thmVCad}
	\textsc{\probDC\ / Vertex Cover Above Degree} is solvable in $2^{\Oh(p+|B|)}\cdot n^{\Oh(1)}$ running time.
 \end{theorem}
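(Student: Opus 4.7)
}

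The crux is a structural bound on the cycle length: because $I := V(G) \setminus S$ is an independent set of $G$, no two consecutive vertices of any cycle $C$ both lie in $I$, so $|V(C) \cap I| \le |V(C)|/2$. Hence $|V(C)| \le 2|S| = 2(\delta+p)$ where $\delta := \delta(G-B)$, and if $k > 2p$ we may immediately return NO. Henceforth assume $k \le 2p$, so that $k$ is itself bounded by the parameter. The same counting shows that for any candidate cycle $C$ of length at least $2\delta+k$, the ``missed'' cover set $R := S \setminus V(C)$ satisfies $|R| \le p - \lceil k/2 \rceil \le p$. Thus the unknown interaction of $C$ with $S \cup B$ is captured by the pair $(R, B')$, where $B' := V(C) \cap B$, a pair of total size at most $p + |B|$.

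The first step of the algorithm is to enumerate all candidate pairs $(R, B')$ with $R \subseteq S$, $|R| \le p$, and $B' \subseteq B$. Using an $(|S|, p)$-universal family of Naor--Schulman--Srinivasan (of size $2^{\Oh(p)}\log|S|$) for the set $R$, together with $2^{|B|}$ direct guesses for $B'$, this costs $2^{\Oh(p+|B|)} n^{\Oh(1)}$ in total and is guaranteed to hit the correct pair for at least one guess.

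The second step is the reduced subproblem: for each candidate $(R, B')$, decide whether $G - R$ contains a cycle of length at least $2\delta+k$ whose intersection with $S \cup B$ is exactly $T := (S \setminus R) \cup B'$. Equivalently, we build the auxiliary multigraph $H$ on vertex set $T$, whose edges are either \emph{direct} edges (one for each $uv \in E(G-R)$ with $u,v \in T$) or \emph{bridges} labeled by $w \in I \setminus B$ (one for each $w$ with $uw, vw \in E(G-R)$), and look for a Hamiltonian cycle of $H$ in which (i) all bridge-labels are pairwise distinct, and (ii) the number of bridge edges is at least $2\delta+k-|T|$.

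The main obstacle is solving this subproblem efficiently: since $|T|$ may be as large as $\delta + p + |B|$, a naive $2^{|T|}$-time Hamiltonian-cycle DP is prohibitive. My plan is to exploit the near-completeness of $H$: because every $w \in I \setminus B$ is non-adjacent to at most $p$ vertices of $S \setminus B$, between two ``typical'' vertices of $T$ the number of eligible bridge-labels is at least $|I \setminus B| - 2p$. I would then identify the $\Oh(p+|B|)$ ``exceptional'' vertices of $T$ (those with few bridges to the rest of $T$), guess their positions on the sought cycle with a further $2^{\Oh(p+|B|)}$ enumeration, and close the dense remainder of the cycle by a Dirac-style rotation argument, enforcing the distinct-label constraint via a bipartite matching (equivalently, a transversal-matroid intersection) between the remaining cycle-gaps and the available $I$-labels. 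Combining these $2^{\Oh(p+|B|)}$ guesses with the polynomial-time matching computation gives the claimed $2^{\Oh(p+|B|)} n^{\Oh(1)}$ bound.
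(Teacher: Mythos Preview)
Your high-level structure matches the paper's: bound $k\le 2p$, observe that the cycle misses at most $p$ vertices of $S$, isolate an $\Oh(p+|B|)$-sized ``exceptional'' subset of the cover (the paper calls it $Z$, the vertices of $S$ with at most $p$ neighbours in a fixed set $X\subseteq I$ of size $\delta-3p$), and handle the dense remainder by a closure-type argument. Two steps in your plan, however, are not yet correct.

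\medskip
\noindent\textbf{Guessing $R$.} An $(|S|,p)$-universal family does not let you enumerate all $R\subseteq S$ with $|R|\le p$: for a fixed $R$ it only guarantees some $U$ in the family with $U\cap Q=R$ for a chosen $p$-set $Q\supseteq R$, and $U$ may contain many other cover vertices which you cannot afford to delete. The paper never guesses $R$ explicitly; instead it folds the choice of $R$ into the path-cover characterisation (the $p-q$ trivial one-vertex paths in Lemma~\ref{thm:path_cover} are precisely the missed cover vertices, which must lie in the dense part $A$ anyway).

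\medskip
\noindent\textbf{The Hamiltonian step on $T$.} This is the real gap. You say you will ``guess the positions of the $\Oh(p+|B|)$ exceptional vertices on the sought cycle with a further $2^{\Oh(p+|B|)}$ enumeration'', but $|T|\approx\delta$ and the number of ways to place $\Oh(p)$ marked vertices on a cycle of that length is $\delta^{\Oh(p)}$, not $2^{\Oh(p)}$. The subsequent matching step for bridge labels presupposes that the cyclic order on $T$ is already fixed, which is exactly what you have not bounded. The paper's key technical step (Lemma~\ref{thm:path_cover}) is precisely what replaces this: it shows that a long cycle through $X\cup Z$ exists iff there is a path cover of $G[S\cup Y]$ (for some $Y\subseteq I$, $|Y|\le 2|Z|$) with all endpoints in $A$ and total length of nontrivial paths $2|Y|+2q-k=\Oh(p)$. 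Since the total length is $\Oh(p)$, this path cover can be found by standard colour coding (Lemma~\ref{lemma:path_cover_dp}); the cycle is then reconstructed by a Bondy--Chv\'atal closure argument on the remaining $A$- and $I'$-vertices, which plays the role of your ``Dirac-style rotation''. Without this bounded-total-length reduction, your enumeration bound does not hold.
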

 To prove Theorem~\ref{thmVCad}, we establish the new structural result, Lemma~\ref{thm:path_cover}. The lemma reduces the crucial case of the problem about the long cycle to a particular path cover problem. This equivalence becomes very handy because  we can use color-coding to compute the particular path cover, and thus by the lemma, to compute a long path. In spirit,  Lemma~\ref{thm:path_cover} is close to  the classical theorem of Nash-Williams~\cite{NashWil71}, stating that  a 2-connected graph $G$ with $\delta(G) \ge (n + 2)/3$ is either Hamiltonian or contains an independent set of size $\delta(G) + 1$. An extension of this theorem is due to H\"{a}ggkvist \cite{Hagvist92}, which was used by Jansen, Kozma and Nederlof \cite{DBLP:conf/wg/Jansen0N19} in their algorithm for Hamiltonicity below Dirac's condition. In our case, we cannot use the structural theorem of H\"{a}ggkvist as a black box, and build on the new graph-theoretic lemma instead.   
  
\medskip
The last ingredient we need to prove Theorem~\ref{theorem:main}, is its special case when the minimum degree of $\delta(G - B)$ is nearly $\frac{n}{2}$.
Specifically, the problem is defined as follows.

\defparproblema{\textsc{Almost Hamiltonian Dirac Cycle}}%
{A graph $G$, integer $k\ge 0$ and a vertex set $B \subset V(G)$, such that $|B| \le k$ and $\delta(G - B) \ge \frac{n}{2}-k$.}%
{$k$}
{Find the longest cycle in $G$.
}
Observe that for a 2-connected graph, Theorem~\ref{thm:relaxed_long_cycle} always gives a cycle of length $2\delta(G - B) \ge n - 2k$. Thus it is more natural to state the problem in the form above, as the length of the longest cycle is necessarily between $n - 2k$ and $n$, which is at most $2 \delta(G - B) + 2k$. In other words, we look for an almost Hamiltonian cycle, in a sense that it does not cover only $\Oh(k)$ vertices. Now we state our result for \textsc{Almost Hamiltonian Cycle} that we prove in Section~\ref{sec:HamCycles}.

\begin{restatable}{theorem}{theoremhamiltonian}
		\label{theorem:hamiltonian}
		Let $G$ be a given 2-connected graph on $n$ vertices and let $k$ be a given integer.
		Let $B\subseteq V(G)$ be such that $|B|\le k$ and $\delta(G - B) \ge \frac{n}{2}-k$.
		There is a $2^{\mathcal{O}(k)}\cdot n^{\mathcal{O}(1)}$ running time algorithm that finds the longest cycle in $G$.
	\end{restatable}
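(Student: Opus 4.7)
The plan is to bootstrap from an initial long cycle, iterate rotation-extension moves, and when these get stuck use a structural dichotomy to reduce the remaining work to a single application of Theorem~\ref{thmVCad}. First I would invoke Theorem~\ref{thm:relaxed_long_cycle} on $(G,B,0)$ to obtain a cycle $C_0$ of length at least $2\delta(G-B) \ge n - 2k$, so the set $M_0 := V(G)\setminus V(C_0)$ of uncovered vertices has $|M_0|\le 2k$.

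Maintaining a current cycle $C$ with complement $M = V(G)\setminus V(C)$, I would iteratively attempt to enlarge $C$. The simplest move is direct insertion: whenever some $u \in M$ has two consecutive neighbors $x,y$ on $C$, replace the edge $xy$ by the detour $x\,u\,y$. When this is blocked, I would apply P\'{o}sa-style rotations around $C$ to reroute a short piece of $C$ so that some vertex of $M$ gains two consecutive neighbors on the new cycle, and also try to absorb pairs $u,v \in M$ simultaneously whenever $uv \in E(G)$ and a suitable arc on $C$ is available. Each successful move strictly decreases $|M|$, so the loop terminates after at most $2k$ iterations.

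The central step is the structural dichotomy at termination. I would prove that if no extension move succeeds on the final cycle $C$, then $G$ possesses a vertex cover $S$ of size at most $\delta(G-B)+\mathcal{O}(k)$, constructed explicitly from the termination configuration: failure of every insertion forces each $u \in M \setminus B$ to have its $\ge n/2-k$ neighbors on $C$ pairwise non-adjacent on $C$, and failure of every rotation pins down the set of rotation-reachable endpoints; a counting argument in the spirit of the Nash-Williams/H\"{a}ggkvist theorem mentioned in the excerpt then produces an independent set $I \subseteq V(G)$ of size roughly $n/2$, whose complement serves as the desired vertex cover. Once such an $S$ is in hand, I would feed $(G,B,S,k')$ into Theorem~\ref{thmVCad} for each target $k' \in \{0,1,\ldots,2k\}$; the largest $k'$ that is certified positive yields the longest cycle, all within $2^{\mathcal{O}(k)} \cdot n^{\mathcal{O}(1)}$ total time.

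The main obstacle will be making the structural dichotomy precise: the rotation-extension argument has to be sufficiently robust that whenever it gets stuck it really produces an independent set of the required size, and the whole procedure must remain algorithmic so that both the augmentation moves and the final vertex cover are computed in polynomial time per iteration. Everything else amounts to bookkeeping across at most $2k$ iterations and a single invocation of Theorem~\ref{thmVCad}.
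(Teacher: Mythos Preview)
Your central dichotomy---``stuck in rotation-extension implies a vertex cover of size $\delta(G-B)+\mathcal{O}(k)$''---has a real gap. The $101$-neighbor / P\'osa-rotation argument you allude to needs a \emph{high-degree} vertex outside the current cycle: it is the $\ge n/2-\mathcal{O}(k)$ neighbors of some $u\in M\setminus B$ on $C$ that force a large independent set among the $101$-positions. But your loop can (and typically will) terminate with $M\subseteq B$: every vertex of $V(G)\setminus B$ has been absorbed, and what remains is up to $k$ low-degree vertices of $B$. At that point the rotation argument yields nothing, and in general no vertex cover of size $\delta(G-B)+\mathcal{O}(k)\approx n/2$ exists---think of $G-B$ being close to a clique, where the minimum vertex cover has size $n-\mathcal{O}(k)$. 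So Theorem~\ref{thmVCad} cannot be invoked, and you have no mechanism left to decide which subset of the remaining $B$-vertices can be threaded into a longest cycle; local rotation moves do not solve that global routing problem.

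The paper's proof is organized precisely around this obstacle. It first computes, by color coding and dynamic programming in $2^{\mathcal{O}(k)}\cdot n^{\mathcal{O}(1)}$ time, an optimal \emph{good path cover} of $B$: a maximum-coverage family of short vertex-disjoint paths through $B$ whose endpoints lie in $V(G)\setminus B$. Only then does it apply the Nash--Williams theorem to $G-B'$ (with $B'$ the path-cover vertices together with the uncovered part of $B$), obtaining either a large independent set (terminal state: invoke Theorem~\ref{thmVCad}) or a Hamiltonian cycle in $G-B'$. In the latter case it inserts the cover paths one by one; each path has two high-degree endpoints $s,t$, and the insertion step is an intricate case analysis on ``internal $s$-vertices'' versus ``internal $t$-vertices'' that either succeeds, produces a large independent set, or produces a separator of size $\mathcal{O}(k)$ (handled by a separate small-separator lemma). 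Your proposal skips the path-cover preprocessing and the separator branch, and tries to insert individual vertices rather than paths; that is exactly where it breaks once only $B$-vertices are left outside the cycle.
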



    The key obstacle for proving the theorem is the low-degree set $B$, since for empty $B$, we could simply apply the Nash-Williams theorem~\cite{NashWil71} and obtain either a Hamiltonian cycle or an independent set of size $\delta(G) + 1$, and in the latter case use our result for \textsc{\probDC\ / Vertex Cover Above Degree}. Assume there exists a Hamiltonian cycle in $G$ (for almost Hamiltonian cycles the algorithm is similar), it induces a certain path cover of the vertices of $B$, where the endpoints of paths belong to $V(G)\setminus B$, and their total length is $\Oh(k)$. Such a path cover can be found by color-coding and dynamic programming in time $2^{\Oh(k)} n ^{\Oh(1)}$. Now either the rest of the graph is not 2-connected, and we have a $\Oh(k)$-sized separator, or we can apply the Nash-Williams theorem and obtain a cycle covering everything except the path cover, or a large independent set. The latter case is dealt with by Theorem~\ref{thmVCad}, and for the case of the small separator we design a special algorithm that leverages the fact that the resulting components are very dense. So the main case is when the graph splits into a long cycle and the path cover. Now we crucially use that the paths in the path cover start and end outside of $B$, thus the endpoints of a path have high degree, each of them sees roughly half of the vertices of the long cycle. This makes it ``hard'' to not be able to insert the path somewhere in the cycle and make it longer. However, this last intuitive idea is achieved by a very intricate case analysis that constitutes the most of technical difficulty of the proof. 
    Also, in some of the cases, we cannot make the cycle longer nor conclude that it is impossible, but instead we are able to find either a small separator or a large independent set. Again, we settle these cases by using the respective specialized algorithms.

\section{Preliminaries and classical theorems}\label{section:prelim}

\medskip\noindent\textbf{Graph notation.} Most of the  graph notation that we use  here are standard and are compatible  with the notation used in the textbook of Diestel~\cite{Diestel}.
Graphs in this paper are finite and undirected.
The vertex set of a graph $G$ is denoted by $V(G)$ and the edge set of $G$ is denoted by $E(G)$.  We use shorthands $n=|V(G)|$ and $m=|E(G)|$. An edge of an undirected graph with endpoints $u$ and $v$ is denoted by $uv$.

Graph $H$ is a {\em{subgraph}} of graph $G$ if $V(H)\subseteq V(G)$ and $E(H)\subseteq E(G)$.  
For a subset $S\subseteq V(G)$, the subgraph of $G$ {\em{induced}} by $S$ is denoted by $G[S]$; its vertex set is $S$ and its edge set consists of all the edges of $E(G)$ that have both endpoints in $S$. 
For $B\subseteq V(G)$, we use $G-B$ to denote the graph $G[V\setminus B]$, and for $F\subseteq E(G)$ by $G-F$ we denote the graph $(V(G),E(G)\setminus F)$. We also write $G- v$ instead of $G- \{v\}$.

For graph $G$ and edge $uv\in E(G)$, by {\em{contracting}}  edge $uv$ we mean the following operation. We remove $u$ and $v$ from the graph, introduce a new vertex $w_{uv}$, and connect it to all the vertices $u$ or $v$ were adjacent to.  
The \emph{neighborhood}  of a vertex $v$ in $G$ is
$N_G(v)=\{u\in V~|~  uv\in E(G)\}$ and the \emph{closed neighborhood} of $v$ is $N_G[v]=N_G(v)\cup \{v\}$.
For a vertex set $S\subseteq V$,  we define $N_G[S]=\bigcup_{v \in S} N[v]$ and $N_G(S)=N_G[S]\setminus S$. We denote by $\deg_G(v)$ the \emph{degree}  of a vertex $v$ in graph $G$, which is just the number of edges incident with $v$. We may omit indices if the graph under consideration is clear from the context.  
We use $\delta(G)$ for minimum vertex degree of graph $G$.

A {\em{path}} $P$ in a graph  is a nonempty sequence of vertices $v_0,\ldots,v_{k}$ such that for every $i=0,\ldots,k-1$ we have $v_iv_{i+1}\in E(G)$ and $v_i\neq v_j$ for all $i\neq j$. Vertices $v_0$ and $v_k$ are the \emph{endpoints} of path $P$ and $v_1, \dots, v_{k-1}$ are \emph{internal}.
If $P=v_0v_1\dots v_k$ is a path, then the graph obtained from $P$ by adding edge $x_k x_0$ is a cycle. The {\em{length}} of a path or cycle is equal to the cardinality of its edge set. The \emph{distance}  between vertices $u$ and $v$ in a graph $G$  is the shortest length of a path between $u$ and $v$. For vertices $s,t\in V(G)$, an $(s,t)$-path is a path with the first vertex $s$ and the last vertex $t$. 
 A \emph{Hamiltonian path (cycle)} in a graph $G$ is a path (cycle) passing through all the vertices of $G$. Two paths $P$ and $Q$ are internally disjoint if 
 every internal vertex of one path is not a vertex of the other path, that is, $P$ and $Q$ may only share their endpoints. 
 The \emph{concatenation} of internally vertex-disjoint paths $P= v_0,\ldots,v_{k}$  and $Q=v_{k}, v_{k+1}, \ldots, v_\ell$ is $PQ= v_0,\ldots,v_{k}, v_{k+1}, \ldots, v_\ell$. Note that $PQ$ is a path if $v_0\neq v_\ell$ and is a cycle if $v_0= v_\ell$. 
 An \emph{arc}  in a cycle $C$ is a path whose all edges belong to $C$. A \emph{chord} of a cycle $C$ is a path connecting two non-adjacent vertices of $C$ that is internally vertex-disjoint with $C$.

 An undirected graph $G$ is connected if for every pair $u,v$ of its vertices there is a path between $u$ and $v$. 
A vertex set $X\subseteq V(G)$ is \emph{connected}  if the subgraph $G[X]$ is connected. A connected component of $G$ is the subgraph induced by a maximal connected vertex subset of $G$.  A connected graph $G$ with at least three vertices is \emph{2-connected} if for every $v\in V(G)$, $G-v$ is connected. Similarly, a  vertex set $X\subseteq V(G)$ is \emph{2-connected}  if the subgraph $G[X]$ is 2-connected.  A \emph{block} of $G$ is the subgraph induced by a maximal  \emph{2-connected} subset. A vertex $v$ is a \emph{cut-vertex} if it belongs to at least two blocks.  
All other vertices of a block are \emph{inner} vertices. Blocks in a graph form a forest  structure (viewing each block as a vertex of the forest and two blocks are adjacent if they share a cut-vertex). The blocks corresponding to the leaves of the block-forest, are referred as \emph{leaf-blocks}. A connected component is \emph{separable} if it contains a cut-vertex, or equivalently, if it is not 2-connected.

A \emph{vertex cover}  $X$ of a graph $G$ is a subset of the vertex set $V(G)$
such that $X$ covers the edge set $E(G)$, i.e., every edge of $G$ has at least one endpoint
in $X$. An \emph{independent set} $I$ in a graph $G$ is a subset of the vertex set $V(G)$
such that the vertices of $I$ are pairwise nonadjacent.  
A \emph{path cover} of a graph $G$ is a family of disjoint paths in $G$ such that every vertex of $G$ belongs to some of these paths. 

\medskip\noindent\textbf{Classical results.}
Besides  Dirac's theorem from \cite{Dirac52}, already stated  as \Cref{thm:diracs}, we use the result that guarantees a long path between two fixed vertices of a $2$-connected graph.
Its different versions can be found throughout the  works of Locke \cite{locke1983extremal,locke1985generalization}. The version below is from the paper of Egawa, Glas,  and Locke \cite[Lemma 5]{Egawa1991}.

\begin{lemma}[Egawa, Glas,  and Locke \cite{Egawa1991}]\label{lemma:path_by_large_degree}
	Let $G$ be a $2$-connected graph with at least $4$ vertices, and let $s,t \in V(G)$ be a pair of vertices in $G$, and let $d$ be an integer.
	If all vertices in $G$, except $s$, $t$ and one other vertex,  have degree at least $d$, then there exists an $(s,t)$-path of length at least $d$ in $G$.
\end{lemma}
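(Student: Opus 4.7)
The plan is to prove the lemma by an extremal exchange argument. Take $P$ to be a longest $(s,t)$-path in $G$, write $\ell = |E(P)|$, and assume for contradiction that $\ell < d$. Let $v^\ast \in V(G) \setminus \{s,t\}$ denote the single vertex whose degree may be below $d$, and set $H := V(G) \setminus V(P)$. The aim is to use $2$-connectivity together with the degree hypothesis to produce an $(s,t)$-path strictly longer than $P$.

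The first observation is that $|V(P)| = \ell + 1 \le d$, so every vertex $w \in V(P) \setminus \{s,t,v^\ast\}$ must have at least one neighbor in $H$: otherwise its $\ge d$ neighbors would be forced inside $V(P) \setminus \{w\}$, a set of strictly fewer than $d$ vertices. Symmetrically, every vertex of $H \setminus \{v^\ast\}$ has degree at least $d$ in $G$. If $H = \emptyset$, then $V(P) = V(G)$ and I would argue directly: the high-degree internal vertices of $P$ must admit suitable chords, and a P\'osa-type rotation along these chords produces an $(s,t)$-path strictly longer than $P$, contradicting its choice.

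Assume now that $H \ne \emptyset$, and let $K$ be a connected component of $G[H]$. By $2$-connectivity, together with $|V(G)| \ge 4$, the component $K$ has at least two distinct attachment vertices $x_i, x_j \in V(P)$ with $i < j$; by Menger's theorem, applied to the auxiliary graph on $V(K) \cup \{x_i, x_j\}$, I may choose the pair so that this induced subgraph is itself $2$-connected. The heart of the proof is then to exhibit an $(x_i,x_j)$-path $Q$ whose internal vertices lie in $K$ and whose length satisfies $|E(Q)| \ge j - i + 1$; splicing $Q$ into $P$ in place of the arc of $P$ from $x_i$ to $x_j$ yields a strictly longer $(s,t)$-path and the desired contradiction. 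I would produce such a $Q$ by applying the lemma recursively inside $G[V(K) \cup \{x_i,x_j\}]$ with target length $d' := j - i + 1$; the degree hypothesis holds there for every vertex except $x_i$, $x_j$, and possibly $v^\ast$, which matches exactly the ``three exceptional vertices'' budget of the statement.

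The main obstacle is coordinating this recursion with the single exceptional vertex $v^\ast$: in pathological configurations $v^\ast$ may lie in $K$ and be precisely the bottleneck that prevents the detour $Q$ from reaching the required length $j - i + 1$, or it may sit on $P$ in a position where no valid attachment pair $(x_i,x_j)$ can avoid interfering with it. I foresee handling this by a case split on the location of $v^\ast$ (inside $K$, on $V(P)$, or in another component of $H$) and, in the worst case, performing a preliminary absorption step that either inserts $v^\ast$ into $P$ or removes it from the relevant sub-instance before recursing, so that the recursion is invoked on a strictly smaller graph in which the degree hypothesis is fully restored on all non-endpoint vertices.
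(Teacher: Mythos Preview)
The paper does not prove this lemma at all: it is quoted verbatim from Egawa, Glas and Locke~\cite{Egawa1991} and used as a black box, so there is no in-paper argument to compare against. Nonetheless, your proposed proof has a genuine gap.

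The central issue is the recursion. You assert that in $G' := G[V(K)\cup\{x_i,x_j\}]$ ``the degree hypothesis holds there for every vertex except $x_i$, $x_j$, and possibly $v^\ast$'' with target $d' := j-i+1$. This is not justified. A vertex $w\in V(K)\setminus\{v^\ast\}$ has $\deg_G(w)\ge d$, but since $K$ is a component of $G-V(P)$, all of $w$'s neighbours lie in $V(K)\cup V(P)$; passing to $G'$ you may lose up to $|V(P)\setminus\{x_i,x_j\}|=\ell-1$ of them. So you only get $\deg_{G'}(w)\ge d-\ell+1$, which need not be $\ge j-i+1$ (take for instance $j-i$ close to $\ell$ and a vertex of $K$ adjacent to many internal vertices of $P$). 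Thus the inductive invocation of the lemma on $G'$ is illegitimate. The Menger step is also not enough: two vertex-disjoint $K$--$P$ paths give two attachment points, but they do not force $G[V(K)\cup\{x_i,x_j\}]$ to be $2$-connected (think of $K$ having its own cut vertex), so even the connectivity hypothesis of the recursion may fail.

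Separately, the $H=\emptyset$ case is mis-argued. If $H=\emptyset$ then $P$ is a Hamiltonian $(s,t)$-path and no rotation can make it longer; the contradiction is simply that $|V(G)|\ge 4$ forces at least one vertex of degree $\ge d$, hence $|V(G)|\ge d+1$ and $\ell=|V(G)|-1\ge d$. The actual proof in~\cite{Egawa1991} proceeds by a direct extremal/counting argument on a longest $(s,t)$-path rather than by recursion into components; getting the ``three exceptional vertices'' budget to survive such a recursion is exactly the obstacle you identify at the end, and it is not overcome by the steps you sketch.
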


In addition, we rely on several other classical theorems. 
In some parts of the proof we  use one more result from Dirac's work~\cite{Dirac52}.

\begin{theorem}[Dirac \cite{Dirac52}]\label{prop:cycle_delta}
	 Every graph $G$ contains a cycle of length at least $\delta(G)+1$.
\end{theorem}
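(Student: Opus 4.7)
The plan is to follow the classical longest-path argument. First I would take a longest path $P = v_0 v_1 \cdots v_\ell$ in $G$; such a path exists as long as $G$ has at least one vertex, and note that if $\delta(G) \le 1$ then the statement is either trivial or vacuous, so we may assume $\delta(G) \ge 2$ and in particular $\ell \ge 1$.

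Next I would exploit the maximality of $P$ to locate all neighbors of the endpoint $v_0$. Since $P$ cannot be extended beyond $v_0$ (otherwise we would obtain a longer path, contradicting the choice of $P$), every neighbor of $v_0$ in $G$ must already lie on $P$. Hence $v_0$ has at least $\deg_G(v_0) \ge \delta(G)$ neighbors among $v_1, v_2, \ldots, v_\ell$.

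Let $v_i$ be the neighbor of $v_0$ on $P$ with the largest index $i$. Because $v_0$ has at least $\delta(G)$ neighbors on $P$, each with index between $1$ and $\ell$, pigeonhole gives $i \ge \delta(G)$. Then the closed walk $v_0 v_1 v_2 \cdots v_i v_0$ is a cycle in $G$ of length $i + 1 \ge \delta(G) + 1$, which is the desired conclusion.

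There is essentially no obstacle here: the argument is a textbook use of the extremality of a longest path. The only minor care needed is to observe that the bound is vacuous (or trivially handled) when $\delta(G) \le 1$, so that we may safely assume $\delta(G) \ge 2$ and hence $\ell \ge \delta(G) \ge 2$, guaranteeing that $v_0 v_1 \cdots v_i v_0$ is an honest cycle rather than a degenerate closed walk.
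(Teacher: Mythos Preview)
Your proof is correct and is exactly the classical longest-path argument due to Dirac. The paper does not actually supply its own proof of this statement---it is quoted as a classical result from~\cite{Dirac52} and merely remarked to be constructive---so your write-up is precisely the standard argument one would expect here.
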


We remark that \Cref{lemma:path_by_large_degree} and  \Cref{prop:cycle_delta}  are constructive in the following sense. Their proofs can be turned into polynomial time algorithms producing an $(s,t)$-path of length at least $d$ and a cycle of length at least $\delta(G)+1$.

In 1976, Bondy  and Chv\'{a}tal~\cite{bondyC76} proved the following generalization of  \Cref{thm:diracs} that we use in Section~\ref{sec:vcalgo}. Let $G$ be an $n$-vertex graph. The \emph{closure} $\cl(G)$ of $G$ is the graph obtained from $G$ by iteratively making two distinct vertices $u$ and $v$ adjacent whenever the sum of their degrees is at least $n$. Note that if $\deg_G(u)+\deg_G(v)\geq n$ for all pairs of vertices $u$ and $v$, then $\cl(G)=K_n$. In particular,  the closure of a $n$-vertex graph satisfying the conditions of \Cref{thm:diracs} is $K_n$. 

\begin{theorem}[Bondy  and Chv\'{a}tal \cite{bondyC76}]\label{thm:bh}
A graph $G$ has a Hamiltonian cycle if and only if $\cl(G)$ has a Hamiltonian cycle. 
\end{theorem}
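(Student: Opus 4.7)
The plan is to reduce the theorem to a single-edge statement and iterate. The ``if'' direction is immediate: since $\cl(G)$ is obtained from $G$ by only \emph{adding} edges, any Hamiltonian cycle in $G$ is also a Hamiltonian cycle in $\cl(G)$. For the ``only if'' direction, the closure is built by a finite sequence of edge additions $G = G_0 \subseteq G_1 \subseteq \cdots \subseteq G_r = \cl(G)$, where at each step we add some edge $uv$ with $u,v$ nonadjacent in $G_{i-1}$ and $\deg_{G_{i-1}}(u) + \deg_{G_{i-1}}(v) \ge n$. Thus it suffices to establish the following local claim and peel the added edges back one by one (in reverse order of addition): \emph{if $G'$ is a graph on $n$ vertices, $u,v \in V(G')$ are nonadjacent with $\deg_{G'}(u)+\deg_{G'}(v)\ge n$, and $G'+uv$ is Hamiltonian, then $G'$ is Hamiltonian.}

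To prove this local claim, I would argue by contradiction. Fix a Hamiltonian cycle $C$ of $G'+uv$. If $uv \notin E(C)$, then $C$ already lies in $G'$ and we are done. Otherwise, deleting $uv$ from $C$ yields a Hamiltonian path of $G'$ from $u$ to $v$, which I write as $v_1 v_2 \cdots v_n$ with $v_1 = u$ and $v_n = v$. The idea is the standard ``rotation'' trick: I will find some index $i$ with $2 \le i \le n-1$ such that $v_1 v_{i+1} \in E(G')$ and $v_i v_n \in E(G')$ simultaneously. Given such an $i$, the cycle
\[
v_1, v_{i+1}, v_{i+2}, \ldots, v_n, v_i, v_{i-1}, \ldots, v_2, v_1
\]
is a Hamiltonian cycle of $G'$, contradicting non-Hamiltonicity.

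The existence of such $i$ is a pigeonhole argument using the degree condition. Set $S = \{\,i : 2 \le i \le n-1,\ v_1 v_{i+1} \in E(G')\,\}$ and $T = \{\,i : 2 \le i \le n-1,\ v_i v_n \in E(G')\,\}$. Every neighbor of $v_1$ in $G'$ is some $v_j$ with $2 \le j \le n-1$ (it cannot be $v_n$ since $uv \notin E(G')$), contributing $j-1$ to $S$; hence $|S| = \deg_{G'}(u)$. Similarly $|T| = \deg_{G'}(v)$. Both $S$ and $T$ are subsets of the $(n-2)$-element set $\{2,\dots,n-1\}$, and
\[
|S| + |T| \;=\; \deg_{G'}(u) + \deg_{G'}(v) \;\ge\; n \;>\; n-2,
\]
so $S \cap T \ne \emptyset$, producing the desired index $i$.

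There is no serious obstacle here; the only point that requires a bit of care is book-keeping the index shift so that $S$ and $T$ live in the same ground set of size $n-2$, which is essential for the strict inequality $|S|+|T| > |S \cup T|$ to force an intersection. Once the local claim is in place, applying it $r$ times to the closure sequence in reverse yields a Hamiltonian cycle in $G$ from one in $\cl(G)$, completing the proof.
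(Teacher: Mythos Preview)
The paper does not prove this theorem; it is quoted as a classical result of Bondy and Chv\'{a}tal and used as a black box (with the remark afterward that the proof is constructive and yields a polynomial-time procedure). Your argument is exactly the standard textbook proof via the single-edge lemma and the rotation/crossover trick, and it is essentially correct.

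One small bookkeeping slip: with your index range $2 \le i \le n-1$ for $S$, the neighbor $v_2$ of $v_1$ would contribute $i=1 \notin S$, and since $v_1v_2$ is always a path edge you actually get $|S| = \deg_{G'}(u)-1$, not $\deg_{G'}(u)$. This does no damage---you still have $|S|+|T| \ge n-1 > n-2 = |\{2,\dots,n-1\}|$, so $S\cap T \ne \emptyset$---but the cleaner bookkeeping is to let both $S$ and $T$ live in $\{1,\dots,n-1\}$ (size $n-1$), where $|S|=\deg_{G'}(u)$, $|T|=\deg_{G'}(v)$, and $|S|+|T|\ge n > n-1$ forces an intersection; any $i\in S\cap T$ then automatically satisfies $2\le i\le n-2$.
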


We remark that the proof of   \Cref{thm:bh}  
yields a polynomial time algorithm constructing from a cycle with an added edge a cycle in a graph without the new edge. Thus by repeating this argument for every added edge, we obtain a polynomial time algorithm for constructing a Hamiltonian cycle in $G$ from   a Hamiltonian cycle in $\cl(G)$.

Another classical result that we require is the well-known  Menger's theorem.

\begin{theorem}[Menger's theorem, \cite{menger1927allgemeinen,goring2000short}]
	Let $G$ be a graph and $A, B\subseteq V(G)$ be two subsets of its vertices.
	Let $s$ be the minimum number of vertices separating $A$ and $B$ in $G$.
	There are $s$ vertex-disjoint paths going from $A$ to $B$.
\end{theorem}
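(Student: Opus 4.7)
The plan is to prove Menger's theorem by induction on $|E(G)|$, a classical approach. The trivial direction is that given $s$ pairwise internally vertex-disjoint $A$--$B$ paths $P_1, \dots, P_s$, any $A$--$B$ separator must meet each $P_i$, and since the paths are disjoint these hits use $s$ distinct vertices, so $\max \text{(paths)} \le \min \text{(separator)}$. For the reverse direction, let $s$ denote the minimum separator size and write $\kappa_{G}(A,B) = s$; we show $G$ contains $s$ disjoint $A$--$B$ paths. The base case $E(G) = \emptyset$ is immediate: every path has length zero, $A \cap B$ is the unique minimum separator, and the vertices of $A \cap B$ themselves form $|A \cap B|$ trivial paths.

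For the inductive step, pick any edge $e = xy \in E(G)$ and compare $G$ with $G - e$. If $\kappa_{G-e}(A,B) \ge s$, then by induction applied to $G-e$ (which has fewer edges) we obtain $s$ disjoint $A$--$B$ paths in $G - e$, and these remain valid in $G$. Otherwise $\kappa_{G-e}(A,B) = s-1$ (removing an edge drops the minimum by at most one), so any minimum separator $S'$ of $G-e$ of size $s-1$ yields two minimum separators $S_x = S' \cup \{x\}$ and $S_y = S' \cup \{y\}$ of $G$ of size $s$. In this second case, I would select a minimum $A$--$B$ separator $S$ of $G$ — chosen among $\{S_x, S_y\}$ or otherwise so as to be ``closest to $B$'' in a suitable sense — and split $G$ into the $A$-side subgraph $G_A$ (induced on $S$ together with all vertices reachable from $A$ without crossing $S$) and the $B$-side subgraph $G_B$ (defined symmetrically). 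Inductively, $G_A$ has $s$ disjoint paths from $A$ to $S$ and $G_B$ has $s$ disjoint paths from $S$ to $B$; concatenating at the vertices of $S$ gives the desired $s$ disjoint $A$--$B$ paths in $G$.

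The main obstacle is making the splitting step genuinely drop the induction measure. If $S = A$ or $S = B$, then $G_A$ or $G_B$ could equal $G$ and the recursion stalls. The standard fix is to show that in the nontrivial case one can always choose a minimum separator $S$ disjoint from both $A$ and $B$ (using $A \cap B = \emptyset$ without loss of generality, after handling the easy reduction of removing $A \cap B$), which guarantees that both $V(G_A)$ and $V(G_B)$ strictly contain $S$ yet are strictly smaller than $V(G)$, so $|E(G_A)| + |E(G_B)| \le |E(G)|$ with the inequality strict on at least one side whenever $G$ has any edge between $V(G_A) \setminus S$ and $V(G_B) \setminus S$ — precisely the configuration we are in after the edge-removal analysis above. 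Careful bookkeeping of the separator choice is the only subtle ingredient; beyond this, the concatenation step is routine because distinct paths land on distinct vertices of $S$.
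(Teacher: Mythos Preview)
The paper does not prove Menger's theorem at all: it is stated in the preliminaries as a classical result and simply cited to \cite{menger1927allgemeinen,goring2000short}. So there is no in-paper proof to compare against; the relevant question is only whether your sketch stands on its own.

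Your edge-induction framework is standard and the analysis up to the point where you obtain the size-$(s-1)$ separator $S'$ in $G-e$ with $x,y\notin S'$ on opposite sides is fine. The gap is in the fix you propose for the termination obstacle. You claim that ``in the nontrivial case one can always choose a minimum separator $S$ disjoint from both $A$ and $B$'', but this is false. Take $A=\{a_1,a_2\}$, $B=\{b_1,b_2\}$ and $E(G)=\{a_1b_1,a_2b_2\}$: here $s=2$, removing $e=a_1b_1$ drops $\kappa$ to $1$, yet every minimum $A$--$B$ separator of $G$ is contained in $A\cup B=V(G)$, so no choice of $S$ is disjoint from $A\cup B$. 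With a single separator $S$ used for both sides, one of $G_A$, $G_B$ can coincide with $G$ (e.g.\ when $S=A$), and the recursion stalls.

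The correct repair is not to look for a separator avoiding $A\cup B$, but to use the two separators $S_x=S'\cup\{x\}$ and $S_y=S'\cup\{y\}$ \emph{asymmetrically}: set $G_1=G[(A\text{-side of }S')\cup S']$ and $G_2=G[(B\text{-side of }S')\cup S']$, show $\kappa_{G_1}(A,S_x)=\kappa_{G_2}(S_y,B)=s$, and observe that $y\notin V(G_1)$ and $x\notin V(G_2)$ force $|E(G_1)|,|E(G_2)|<|E(G)|$ since the edge $xy$ is absent from each. Induction then gives $s$ disjoint $A$--$S_x$ paths in $G_1$ and $s$ disjoint $S_y$--$B$ paths in $G_2$; concatenating the paths through each $v\in S'$ and linking the path ending at $x$ to the path starting at $y$ via the edge $xy$ yields the $s$ disjoint $A$--$B$ paths in $G$. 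This is essentially G\"oring's argument that the paper cites.
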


Throughout the paper we are mostly working with $2$-connected graphs, so we just need the following corollary of the Menger's theorem.

\begin{corollary}
	Let $G$ be a $2$-connected graphs and $A,B\subseteq V(G)$ be two subsets of its vertices such that $|A|,|B|\ge 2$.
	There exist two vertex-disjoint paths going from $A$ to $B$ in $G$.
\end{corollary}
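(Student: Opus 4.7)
The plan is to derive this corollary as a direct application of the full statement of Menger's theorem: it suffices to show that every $A$-$B$ vertex-separator in $G$ has size at least $2$, for then Menger's theorem produces the two required vertex-disjoint $A$-$B$ paths.

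To establish this separator lower bound, I would argue by contradiction and suppose some single vertex $v \in V(G)$ separates $A$ from $B$; that is, every path that starts in $A$ and ends in $B$ meets $v$. This would also intercept the degenerate length-$0$ path consisting of a single vertex of $A \cap B$, so in particular $A \cap B \subseteq \{v\}$. I would then split into two cases. First, if $v \notin A \cap B$, the hypothesis $|A|,|B|\ge 2$ together with $v$ belonging to at most one of $A$ and $B$ guarantees vertices $a \in A \setminus \{v\}$ and $b \in B \setminus \{v\}$; since $G$ is $2$-connected, $G-v$ is connected and thus contains an $(a,b)$-path, which is an $A$-$B$ path avoiding $v$---a contradiction. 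Second, if $v$ is the unique vertex of $A \cap B$, I would again pick $a \in A \setminus \{v\}$ and $b \in B \setminus \{v\}$ (these are necessarily distinct, as $A \cap B = \{v\}$), and invoke the connectivity of $G-v$ to produce an $(a,b)$-path in $G$ that avoids $v$, once more contradicting the separator assumption. Hence no such $v$ exists, the minimum $A$-$B$ separator has size at least $2$, and Menger's theorem concludes the argument.

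The only real subtlety---more a careful bookkeeping issue than a genuine obstacle---is handling $A$-$B$ paths of length zero: each vertex of $A \cap B$ is by itself an $A$-$B$ path, so any $A$-$B$ separator is forced to contain $A \cap B$ in its entirety. This is precisely why the hypothesis $|A|,|B| \ge 2$ (rather than mere nonemptiness) is essential: when $A \cap B$ happens to be a single vertex, that vertex automatically lies in every separator, and without the size-$2$ condition one could not guarantee a second \emph{escape route} in $G - v$ with which to contradict separability.
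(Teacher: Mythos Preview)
Your argument is correct and is exactly the standard derivation the paper has in mind: the paper states this corollary immediately after Menger's theorem without proof, treating it as an immediate consequence, and your separator-size argument is precisely how one fills in that step. The case split on whether $v \in A \cap B$ is slightly redundant (in both cases $A \cap B \subseteq \{v\}$ and $|A|,|B|\ge 2$ already give $a \in A\setminus\{v\}$, $b \in B\setminus\{v\}$ with $a \neq b$), but the reasoning is sound.
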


Finally, we will make use of a strengthening of Dirac's theorem due to Nash-Williams \cite{NashWil71}. We state it in the form following Jansen, Kozma and Nederlof \cite{DBLP:conf/wg/Jansen0N19}, where the following algorithmic statement is proven.
\begin{theorem}[Nash-Williams \cite{NashWil71, DBLP:conf/wg/Jansen0N19}]
    \label{proposition:cycle_or_is}
    Let $G$ be a $2$-connected graph with $n$ vertices, with $\delta(G) \ge (n + 2)/3$. Then, we can find in $G$, in time $\Oh(n^3)$, either a Hamiltonian cycle, or an independent set of size $\delta(G) + 1$.
\end{theorem}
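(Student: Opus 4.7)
The plan is to make the classical proof of Nash-Williams~\cite{NashWil71} algorithmic by combining the constructive form of Dirac's theorem (Theorem~\ref{thm:circum}) with an iterative cycle-enlargement loop. We maintain a cycle $C$ in $G$ and in each iteration either (a) enlarge $C$ by at least one vertex in $\Oh(n^2)$ time, or (b) output an independent set of size $\delta(G)+1$ extracted from $C$ together with some vertex $v \notin V(C)$. Since $|V(C)| \leq n$ and strictly increases in case (a), the loop terminates within $\Oh(n)$ iterations, yielding the $\Oh(n^3)$ running time.

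For initialization, apply the constructive form of Theorem~\ref{thm:circum} to obtain a cycle $C$ of length at least $\min\{2\delta(G), n\} \geq \tfrac{2(n+2)}{3}$; in particular $|V(G) \setminus V(C)| \leq (n-4)/3$. In each iteration we fix a vertex $v \notin V(C)$ (outputting $C$ as Hamiltonian otherwise) and write $C = v_1 v_2 \cdots v_\ell v_1$. Two cheap enlargements are tested: first, if $v_i, v_{i+1} \in N(v)$ for some $i$, insert $v$ between them to obtain a cycle of length $\ell+1$; second, if $v_i, v_j \in N(v) \cap V(C)$ with $v_{i+1} v_{j+1} \in E(G)$, then the cycle $v\, v_i v_{i-1} \cdots v_{j+1} v_{i+1} v_{i+2} \cdots v_j v$ has length $\ell + 1$ and replaces $C$. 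Both tests are executed in $\Oh(n^2)$ time per choice of $v$.

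If neither enlargement applies for any $v \notin V(C)$, then for every such $v$ the set $I_v := \{v\} \cup \{v_{i+1} : v_i \in N(v) \cap V(C)\}$ is independent in $G$, of size $|N(v) \cap V(C)| + 1$. If some $v \notin V(C)$ has $N(v) \subseteq V(C)$, then $|I_v| \geq \delta(G) + 1$ and $I_v$ is output. Otherwise every $v \notin V(C)$ has at least one neighbor inside its component $K$ of $G - V(C)$; by 2-connectivity, $K$ attaches to $C$ at two distinct vertices $x, y$, and applying Lemma~\ref{lemma:path_by_large_degree} to $G[K \cup \{x, y\}]$ produces an $(x, y)$-path through $K$ whose length (using the bounds $\delta(G) \geq (n+2)/3$ and $|V(G) \setminus V(C)| \leq (n-4)/3$) exceeds the shorter $(x, y)$-arc of $C$. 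Splicing this path in place of the shorter arc enlarges $C$ and we restart.

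The main obstacle is certifying that the enlargement-or-independent-set dichotomy is exhaustive and each step is implementable in $\Oh(n^2)$ time. The delicate case is when every $v \notin V(C)$ has a neighbor outside $V(C)$, so the naive successor argument alone yields an independent set of size below $\delta(G) + 1$: here the density hypothesis $\delta(G) \geq (n+2)/3$ must be used to force the P\'osa-style splicing through $K$ via Lemma~\ref{lemma:path_by_large_degree} to succeed, i.e., to exhibit a detour through $K$ strictly longer than the arc it replaces. This tight calibration between the minimum-degree hypothesis and the structure of $G - V(C)$ is the heart of the Nash-Williams argument, and turning it into an effective $\Oh(n^2)$-per-iteration procedure is the crux of the proof.
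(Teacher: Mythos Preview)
The paper does not prove this statement; it is cited as a black box, with the combinatorial content attributed to Nash-Williams and the $\Oh(n^3)$ algorithmic form to Jansen, Kozma and Nederlof~\cite{DBLP:conf/wg/Jansen0N19}. There is therefore no in-paper argument to compare your proposal to.

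Evaluated on its own, the P\'osa-rotation step and the successor set $I_v$ are correct, but your proposed handling of the residual case (every $v\notin V(C)$ has a neighbour in $G-V(C)$) does not work. Lemma~\ref{lemma:path_by_large_degree} cannot be applied to $G[K\cup\{x,y\}]$: that lemma requires the host graph to be $2$-connected and to have all but three vertices of degree at least the target $d$, and neither is guaranteed here---a vertex of $K$ may have almost all of its $G$-neighbours on $V(C)\setminus\{x,y\}$, leaving it with degree $1$ in $G[K\cup\{x,y\}]$, and the subgraph need not be $2$-connected at all. Even granting a Hamiltonian $(x,y)$-path through $K$, its length is at most $|K|+1\le n-|V(C)|+1\le (n-1)/3$, while the shorter $(x,y)$-arc of $C$ can be as long as $|V(C)|/2\ge\delta(G)\ge(n+2)/3$, so the asserted enlargement need not exist. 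The genuine Nash-Williams argument does not try to enlarge $C$ in this situation; with $C$ a longest cycle it extracts an independent set from the successors of the \emph{full} attachment set $A=N_G(V(K))\cap V(C)$ rather than from $N_G(v)\cap V(C)$, and then uses the hypothesis $\delta(G)\ge(n+2)/3$ in a counting step to certify the size bound. You correctly flag this step as the crux, but the splicing move you substitute for it does not go through.
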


%

\medskip\noindent\textbf{Parameterized algorithms.}
We will use several results from parameterized complexity as black boxes. 
Let us recall that \probKCycle (\probKPath) are the problems where for given graph $G$ and integer $k$, the task is to decide whether $G$ has a cycle (a path) of length at least $k$. In \textsc{Long $(s,t)$-Path}, for $s,t\in V(G)$,  the task is to decide whether an $(s,t)$-path of length at leas $k$ exists. The first algorithms for \probKCycle  and \probKPath of running time  $2^{\Oh(k)}\cdot\polyn$ are due to 
Alon, Yuster and Zwick \cite{AlonYZ95}.
The fastest known randomized
algorithm for \probKPath\ on undirected graph is due to Bj{\"{o}}rklund, Husfeldt, Kaski and Koivisto~\cite{BjHuKK10} and runs
in time $1.657^k \cdot n^{\Oh(1)} $.   Tsur  gave the fastest known deterministic algorithm for the problem  running in time $2.554^k \cdot n^{\Oh(1)}$~\cite{Tsur19b}. For \probKCycle, the current fastest randomized algorithm  running in time $4^k\cdot n^{\Oh(1)}$ is due to  Zehavi~\cite{Zehavi16} and the best deterministic algorithm 
runs in time $4.884^k\cdot n^{\Oh(1)}$~\cite{DBLP:journals/ipl/FominLPSZ18}. For  
\textsc{Long $(s,t)$-Path} the best known running time is $4.884^k\cdot n^{\Oh(1)}$~\cite{DBLP:journals/ipl/FominLPSZ18}.

\begin{theorem}[\cite{AlonYZ95},\cite{DBLP:journals/ipl/FominLPSZ18}]\label{prop:longest_cycle}
	\probKPath, \probKCycle,  and \textsc{Long $(s,t)$-Path} admit algorithms with running time $2^{\Oh(k)}\cdot\polyn$.
\end{theorem}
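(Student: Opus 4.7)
The plan is to prove the theorem via the color coding technique of Alon, Yuster, and Zwick. Given an instance $(G,k)$ of \probKPath, I would first give a randomized algorithm and then derandomize. Assign each vertex of $G$ an independent uniformly random color from $[k]$. Any fixed path on $k$ vertices becomes \emph{colorful} (its $k$ vertices receive distinct colors) with probability $k!/k^{k} \ge e^{-k}$; hence if a $k$-path exists, then with probability at least $e^{-k}$ some such path is colorful under the random coloring, and we amplify by $e^{k}$ independent repetitions.

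The next step is to detect a colorful path of length $k$ by dynamic programming over color subsets. Define $T[v,S]$ for $v \in V(G)$ and $S \subseteq [k]$ to indicate whether there is a path ending at $v$ whose vertex colors form exactly the set $S$. The base case $T[v,\{c(v)\}] = \mathrm{true}$ together with the recurrence
$T[v,S] = \bigvee_{u \in N(v)} T[u,\, S \setminus \{c(v)\}]$, evaluated only for $S$ with $c(v) \in S$, fills the table in time $2^{k} \cdot n^{\Oh(1)}$. A colorful $k$-path exists iff $T[v,[k]]$ holds for some $v$. For \textsc{Long $(s,t)$-Path}, I would force $s$ to be the start (initialize only $T[s,\{c(s)\}]$) and query $T[t,[k]]$. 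For \probKCycle, I would guess one cycle vertex $v_{0}$, run the path DP from $v_{0}$, and check whether some $u \in N(v_{0})\setminus\{v_{0}\}$ has $T[u,[k]] = \mathrm{true}$; the guess costs only a factor $n$.

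To derandomize, I would replace the random coloring by an explicit $(n,k)$-perfect hash family, i.e.\ a family $\mathcal{F}$ of functions $V(G) \to [k]$ such that every $k$-subset of $V(G)$ is colored injectively by at least one $f \in \mathcal{F}$. The Naor--Schulman--Srinivasan construction provides such $\mathcal{F}$ of size $2^{\Oh(k)} \log n$ within the same time budget; running the DP for each $f \in \mathcal{F}$ and taking the disjunction yields the claimed deterministic $2^{\Oh(k)} \cdot n^{\Oh(1)}$ algorithm. The main obstacle is essentially the derandomization plus minor bookkeeping to ensure that the \probKCycle DP really closes a cycle rather than merely revisiting a color class. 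The sharper bases $4^{k}$ and $1.657^{k}$ quoted in the theorem require algebraic techniques, namely polynomial identity testing and Bj\"orklund's determinant-sum method, which are orthogonal to color coding and unnecessary for the $2^{\Oh(k)}$ bound stated here.
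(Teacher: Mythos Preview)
The paper does not prove this statement; it is quoted as a black box from~\cite{AlonYZ95} and~\cite{DBLP:journals/ipl/FominLPSZ18}. Your color-coding sketch is correct for \probKPath, since a path of length at least $k$ contains a subpath of length exactly $k$, so detecting a colorful $(k{+}1)$-vertex path suffices.

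For \probKCycle and \textsc{Long $(s,t)$-Path}, however, your argument has a genuine gap. Both problems ask for an object of length \emph{at least} $k$, and there is no reason a solution of length exactly $k$ should exist. Concretely, if $G$ is a single cycle on $2k$ vertices, your cycle routine (color with $k$ colors, look for a colorful path from $v_0$ to a neighbour of $v_0$) will never succeed, because every cycle in $G$ has length $2k$ and no $k$-vertex arc closes with a single edge. Likewise, if $G$ is a single $(s,t)$-path on $2k$ vertices, your \textsc{Long $(s,t)$-Path} routine (initialize at $s$, query $T[t,[k]]$) fails: the unique $(s,t)$-path has $2k$ vertices, so any $k$-vertex colorful path starting at $s$ cannot end at $t$.

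The fix is not ``minor bookkeeping.'' One has to allow the colorful segment to be only a prefix of the solution and then certify that the remainder can be completed outside the colored part---which is exactly the difficulty, since the DP records only the colour set, not the actual vertex set, of the prefix. The cited reference~\cite{DBLP:journals/ipl/FominLPSZ18} handles this via a minimality argument on a shortest solution of length at least $k$ combined with random separation (the same two-stage scheme the present paper reuses in Section~\ref{sec:tldp} for \probTLDP): first dispose of solutions of length $\Oh(k)$ by colour coding, and for longer solutions colour-separate the first $k$ vertices from the last $k$ vertices and use shortest-path computations in between. Your write-up needs this second stage for both \probKCycle and \textsc{Long $(s,t)$-Path}.
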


We also use the following algorithms of Jansen, Kozma, and Nederlof   \cite{DBLP:conf/wg/Jansen0N19}. 
\begin{theorem}[\cite{DBLP:conf/wg/Jansen0N19}]\label{theorem:JansenKN}
	If a graph $G$ has at least $n-k$ vertices of degree at least $\frac{n}{2}$ or if a graph $G$ has $\delta(G)\ge\frac{n}{2}-k$, a Hamiltonian cycle in $G$ can be found in time $2^{\Oh(k)}\cdot\polyn$.
\end{theorem}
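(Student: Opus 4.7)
The plan is to address both conditions by combining the Bondy--Chv\'atal closure (Theorem~\ref{thm:bh}) with color coding. Since the proof of Theorem~\ref{thm:bh} is constructive and lets us recover a Hamiltonian cycle in $G$ from a Hamiltonian cycle in $\cl(G)$ in polynomial time, the task reduces to finding such a cycle in the closure, which in both scenarios will be ``almost complete''.

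For the case $\delta(G) \geq n/2 - k$, the key observation is that any pair $u,v$ with $\deg_G(u) + \deg_G(v) \geq n$ becomes adjacent in $\cl(G)$; as every vertex has degree at least $n/2-k$, any pair whose combined deficiency from $n/2$ is at most $2k$ is joined. I would argue by an iterated closure analysis that the set $D$ of vertices that fail to be universal in $\cl(G)$ has size $\Oh(k)$: once most edges are added, the degrees of all but a few vertices exceed $n - \Oh(k)$, which in turn triggers more closure steps, collapsing the exceptional set. A Hamiltonian cycle in $\cl(G)$ can then be enumerated by guessing, in $2^{\Oh(k)}$ ways, the cyclic order of $D$ along the cycle and the cycle-neighbors of $D$ chosen from $V(G)\setminus D$; the remainder of the cycle runs through $V(G) \setminus D$, which induces a clique in $\cl(G)$, so any ordering of those vertices between the prescribed anchors is admissible.

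For the case that at least $n-k$ vertices have degree $\geq n/2$, let $B$ be the ``bad'' set, $|B| \leq k$. In any Hamiltonian cycle, $B$ splits the cycle into at most $k$ arcs with endpoints in $N_G(B)$. I would apply color coding to isolate $B$ together with its cycle-neighbors in a random partition, enumerate the cyclic interleaving of the vertices of $B$, and guess their anchors in $V(G) \setminus B$; there are $2^{\Oh(k)}$ such configurations. What remains is to complete the cycle by a vertex-disjoint system of paths through $G - B$ with prescribed endpoints that together cover $V(G)\setminus B$. Since $G-B$ has minimum degree at least $n/2 - |B| \geq n/2 - k$, this path-cover task reduces to a fixed-endpoint variant of the first case, solvable by the same closure-plus-enumeration strategy (or by contracting each prescribed endpoint pair into a gadget vertex and invoking the first case directly).

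The main obstacle I expect is the structural closure analysis in case (2): quantitatively controlling how iterated closure eliminates non-edges, so that $\cl(G)$ differs from $K_n$ only on an $\Oh(k)$-vertex ``defect set'' whose cycle behavior is cheap to enumerate. A secondary difficulty is the fixed-endpoint version of Bondy--Chv\'atal needed to glue the color-coded partial cycle in case (1), since the classical statement produces a free Hamiltonian cycle rather than one that must use prescribed transitions at $\Oh(k)$ anchor vertices; I would prove this variant by showing that adding a dummy vertex joined to all required anchors and re-applying closure preserves the property of being Hamiltonian while exposing the prescribed anchor structure.
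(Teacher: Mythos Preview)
The paper does not prove Theorem~\ref{theorem:JansenKN}; it is quoted verbatim from Jansen, Kozma, and Nederlof and used as a black box. So there is no ``paper's own proof'' to compare against. That said, your sketch has a real gap that would prevent it from going through.

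The problem is the closure argument for the case $\delta(G)\ge n/2-k$. You claim that after iterating Bondy--Chv\'atal, the set of vertices that fail to be universal in $\cl(G)$ has size $\Oh(k)$. This is false. Take the circulant graph on $v_0,\dots,v_{n-1}$ where each $v_i$ is joined to the $n/2-k$ vertices nearest to it in cyclic distance. Every vertex has degree exactly $n-2k$, every non-adjacent pair has degree sum $n-2k<n$, so the closure adds no edges at all; yet the graph is Hamiltonian (it contains $C_n$). Here every vertex is non-universal and the ``defect set'' is all of $V(G)$. The same phenomenon appears in $K_{n/2-k,\,n/2+k}$: after closure the small side becomes a clique, but the large side of size $n/2+k$ stays independent, so again the defect set is $\Theta(n)$. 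The whole downstream plan (enumerate the cyclic order of $D$ and anchors in $2^{\Oh(k)}$ ways, fill the rest through a clique) collapses once $|D|$ is linear in $n$.

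The actual Jansen--Kozma--Nederlof algorithms do not go through closure-to-near-$K_n$. As the present paper notes in the introduction, they rely on structural dichotomies (H\"aggkvist's theorem for the $\delta(G)\ge n/2-k$ case; a Nash-Williams-type argument for the few-low-degree-vertices case) that output either a Hamiltonian cycle or a small separator/large independent set, and then handle the structured outcome separately. Your case-(1) reduction to case~(2) via color coding and a fixed-endpoint path cover is a reasonable shape, but since it bottoms out in the broken closure step it inherits the same gap. If you want to repair the plan, the missing ingredient is precisely a replacement for the closure step: a structural lemma that, under $\delta(G)\ge n/2-k$, either exhibits a Hamiltonian cycle directly or produces an $\Oh(k)$-size object (separator or vertex set whose complement is near-complete) that makes the remaining enumeration cheap.
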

\section{Generalized theorems}\label{section:generaltheorems}
The classical theorems of Dirac  and Erd{\H{o}}s-Gallai provide bounds on the length of cycles and paths in terms of vertex degrees in graph $G$. 
In our algorithmic extension of Dirac's theorem, we deal with a more general problem when the cycle's length is bounded by vertex degrees of graph $G-B$. In our algorithm we use  generalizations of 
 these classical results stated for vertex degrees in graph $G-B$. These generalizations are simple and most likely they are known as a folklore. However we could not find them in the literature and  prove them here for completeness. 


	
The first theorem is the generalization of Dirac's theorem (Theorem~\ref{thm:circum}):   Theorem~\ref{thm:circum}	 is its special case with $B=\emptyset$. 
	\begin{theorem}\label{thm:relaxed_long_cycle}
		Let $G$ be a $2$-connected $n$-vertex graph.
		For any $B \subseteq V(G)$ there exists a simple cycle in $G$ of length at least $\min\{n-|B|,2\delta(G- B)\}$. Moreover, there is a polynomial time algorithm constructing a cycle of such length. 
	\end{theorem}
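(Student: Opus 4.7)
The plan is to follow the classical constructive proof of Dirac's circumference theorem (Theorem~\ref{thm:circum}), ensuring that every application of the minimum-degree hypothesis is made at a vertex outside $B$. Set $\delta = \delta(G-B)$ and $L = \min\{n-|B|,\,2\delta\}$; the claim is trivial when $L \le 2$, so I assume $\delta \ge 2$.

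The algorithm maintains a cycle $C$ in $G$ and, as long as $|C| < L$, produces a strictly longer cycle in polynomial time. An initial cycle exists because $G$ is $2$-connected (extract one from an ear decomposition). Since $|C| < L \le n - |B|$, there is $y \in V(G) \setminus (B \cup V(C))$, and because $y \notin B$ we have $\deg_G(y) \ge \deg_{G-B}(y) \ge \delta$. Let $H$ be the connected component of $G - V(C)$ containing $y$ and set $U = N_G(V(H)) \cap V(C)$; by $2$-connectivity $|U| \ge 2$. Orient $C$ and let $U^+$ be the set of immediate successors on $C$ of the vertices of $U$.

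The extension step uses the classical rerouting arguments, which are purely structural and invoke no degree hypothesis: if $U \cap U^+ \neq \emptyset$, or $U^+$ spans an edge of $G$, or some $u^+ \in U^+$ has a neighbor in $V(H)$, then splicing a $V(H)$-path of length at least $2$ into $C$ in place of a short arc produces a strictly longer cycle, and the iteration proceeds. Otherwise $\{y\} \cup U^+$ is independent in $G$, $|U^+| = |U|$ with $U \cap U^+ = \emptyset$, and every neighbor of $y$ lies in $V(H) \cup U$, giving $|V(H)| - 1 + |U| \ge \delta$ together with $|C| \ge |U| + |U^+| = 2|U|$. When $|V(H)| = 1$ these bounds already force $|C| \ge 2\delta$, contradicting $|C| < L \le 2\delta$.

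The main technical obstacle will be the case $|V(H)| \ge 2$, where the one-sided degree bound at $y$ only gives $|U| \ge \delta + 1 - |V(H)|$, possibly short of $\delta$. My plan for this case is to repeat the analysis at a second vertex $y' \in V(H) \setminus B$ reached by a longest $y$-rooted path inside $H$: the same rerouting arguments, now applied inside $G[V(H) \cup V(C)]$, guarantee that $y$ and $y'$ have almost-disjoint $U$-neighborhoods, so that combining the two degree inequalities forces $|U| \ge \delta$ and hence $|C| \ge 2|U| \ge 2\delta$, the desired contradiction. All detection steps (the fan from $y$ to $V(C)$, the three rerouting triggers, and the rotations inside $H$) are implementable in polynomial time using block decomposition and Menger's theorem, and since $|C|$ strictly increases at each iteration, at most $n$ rounds suffice, yielding both the existential statement and the claimed polynomial-time algorithm.
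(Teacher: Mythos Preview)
Your approach is genuinely different from the paper's. The paper never re-runs a P\'osa-style extension--rotation argument; instead it analyses the block structure of $G-B$ and reduces to known black boxes. If some component of $G-B$ is $2$-connected with at least $2\delta(G-B)$ vertices, Theorem~\ref{thm:circum} applied inside it already yields the cycle. Otherwise the paper picks two leaf-blocks (or two components) $S_1,S_2$ of $G-B$, links them by two vertex-disjoint paths in $G$ via Menger, and uses Lemma~\ref{lemma:path_by_large_degree} inside each $S_i$ to produce a cycle of length at least $2\delta(G-B)+4$. The single remaining case (one component of $G-B$, small and $2$-connected) is dispatched by Theorem~\ref{thm:diracs}. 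This completely avoids the difficulty of having $B$-vertices sprinkled along the cycle you are trying to grow.

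Your outline has a real gap in the case $|V(H)|\ge 2$. First, the second vertex $y'\in V(H)\setminus B$ that your plan relies on need not exist: nothing prevents $V(H)\setminus B=\{y\}$, and the endpoint of a longest $y$-rooted path in $H$ can certainly lie in $B$, so you have no degree inequality available there. Second, even when such a $y'$ is present, the sentence ``the rerouting arguments guarantee that $y$ and $y'$ have almost-disjoint $U$-neighbourhoods, so combining the two degree inequalities forces $|U|\ge\delta$'' is not how the classical proof works and does not follow from the triggers you listed. In the textbook argument one takes a longest path $P$ in $G-V(C)$ with endpoints $a,b$ and shows that every arc of $C$ between an $N(a)$-vertex and an $N(b)$-vertex has length at least $|V(P)|+1$; summing these arc bounds is what yields $|C|\ge 2\delta$, not a lower bound on $|U|$ followed by $|C|\ge 2|U|$. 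That counting step is the heart of the matter, it is absent from your plan, and adapting it so that \emph{both} endpoints of $P$ avoid $B$ is exactly the obstacle the paper's decomposition approach was designed to sidestep.
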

	\begin{proof}
		We assume that $0 < |B| \le n-1$ and $\delta(G-B)>1$, other cases are trivial.
		
		Consider   graph $G-B$.
		It consists of one or more connected components.
		If $G$ has at least one connected component, say $H$, that is $2$-connected and is of size at least $2\delta(G-B)$, then a cycle of length at least $\min(|V(H)|, 2\delta(H))\ge 2\delta(G-B)$ can be found inside $H$ by Theorem~\ref{thm:circum}.
		
		Now assume that each connected component in $G-B$ either contains a cut-vertex or consists of less than $2\delta(G-B)$ vertices.
		Assume that there are at least two connected components in $G-B$, say $H_1$ and $H_2$.
		Since $\delta(G-B)>1$, both $H_1$ and $H_2$ consist of at least three vertices.
		
		If $H_1$ contains a cut vertex, take one of its leaf-blocks, say $L_1$, and put $S_1=L_1$.
		Note that all vertices but one are of degree at least $\delta(G- B)$ in $L_1$.
		As $\delta(G-B)>1$, $L_1$ consists of at least three vertices and is $2$-connected.
		If $H_1$ is $2$-connected, put $S_1=H_1$.
		Find $S_2$ in the same way for $H_2$.
		By Menger's theorem, there are two vertex-disjoint paths from $V(S_1)$ to $V(S_2)$.
		Thus, there are two distinct vertices $u_1, v_1 \in V(S_1)$ that are connected correspondingly with $u_2, v_2 \in V(S_2)$ with two vertex-disjoint paths.
		Note that the total length of these paths is at least four, since the are no edges $u_1 u_2$ and $v_1 v_2$ in $G$.
		By Lemma~\ref{lemma:path_by_large_degree}, there is a path of length at least $\delta(G- B)$ between $u_1$ and $v_1$ in $S_1$ if $|V(S_1)|\ge 4$.
		If $|V(S_1)|<4$, then $S_1$ is a cycle on three vertices, so there also exists a path of length at least $2\ge |V(S_1)|-1\ge \delta(G-B)$ between $u_1$ and $v_1$ in $S_1$.
		Analogously, there is a path of length at least $\delta(G-B)$ between $u_2$ and $v_2$ in $S_2$.
		Combine these two paths and the two paths outside and obtain a cycle of length at least $2\delta(G- B)+4$ in $G$.
		
		Now assume that there is exactly one connected component in $G-B$ of size $n-|B|$.
		Note that it consists of at least three vertices as $\delta(G-B)>1$.
		If it is $2$-connected, then its size is less than $2\delta(G-B)$.
		Hence, the desired cycle is obtained automatically by Theorem~\ref{thm:diracs}.
		If it is not $2$-connected, take any of its cut vertices and add it to $B$ to obtain $B'$.
		Note that $\delta(G- B')\ge \delta(G- B)-1$.
		Now 
		$G- B'$ consists of at least two connected components, so apply the discussion above for this case and obtain a cycle of length at least $2\delta(G- B')+4\ge 2\delta(G- B)+2$.
		
The proof is constructive and all its steps (computing 2-connected components, finding a cut-vertex, computing two vertex-disjoint paths, etc.) are implementable in polynomial time.		
	\end{proof}

	The similar theorem for paths can be now derived.

	\begin{theorem}\label{thm:relaxed_long_path}
	Let $G$ be a connected $n$-vertex graph.
	For any $B\subseteq V(G)$ there exists a simple path in $G$ of length at least $\min\{n-|B|-1, 2\delta(G-B)\}$. Moreover, there is a polynomial time algorithm constructing a path  of such length. 
	\end{theorem}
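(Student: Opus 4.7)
The plan is to reduce to Theorem~\ref{thm:relaxed_long_cycle} via the same apex-vertex construction already invoked in the derivation of Corollary~\ref{theorem:main_path}. Specifically, I would form the graph $G+v$ by adding a new vertex $v$ and making it adjacent to every vertex of $G$. The trivial corner cases ($n\le 1$, or $B=V(G)$, or $\delta(G-B)\le 0$) make the claimed bound $\min\{n-|B|-1,2\delta(G-B)\}$ non-positive, so they are settled by taking any single vertex as a length-$0$ path. Otherwise $G$ has at least two vertices and is connected, so $G+v$ is 2-connected: removing $v$ leaves the connected graph $G$, and removing any $u\neq v$ leaves a graph still joined through $v$.

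Next I would apply Theorem~\ref{thm:relaxed_long_cycle} to $G+v$ with the set $B$ (note $v\notin B$). The graph $(G+v)-B$ is obtained from $G-B$ by adjoining $v$ and joining it to all $n-|B|$ vertices of $V(G)\setminus B$; every $u\in V(G)\setminus B$ gains exactly one neighbor, while $\deg_{(G+v)-B}(v)=n-|B|$. Since $\delta(G-B)\le n-|B|-1$, we obtain
\[
\delta\bigl((G+v)-B\bigr)=\min\{\delta(G-B)+1,\ n-|B|\}=\delta(G-B)+1.
\]
Theorem~\ref{thm:relaxed_long_cycle} then produces, in polynomial time, a simple cycle $C$ in $G+v$ of length at least $\min\{(n+1)-|B|,\ 2\delta(G-B)+2\}$.

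Finally, I would extract a path of $G$ from $C$ by a case split on whether $v\in V(C)$. If $v$ lies on $C$, deleting $v$ from $C$ yields a simple path in $G$ of length $|V(C)|-2\ge \min\{n-|B|-1,\ 2\delta(G-B)\}$, which is exactly the required bound. If $v$ is not on $C$, then $C$ itself is a cycle in $G$, and deleting any edge gives a path in $G$ of length $|V(C)|-1$, strictly stronger than needed. Since every step—building $G+v$, invoking the constructive algorithm of Theorem~\ref{thm:relaxed_long_cycle}, and removing $v$ or one edge—runs in polynomial time, this also delivers the promised algorithm. No real obstacle arises; the only points requiring care are the minimum-degree computation for $(G+v)-B$ and the handling of the degenerate inputs where the stated bound is trivial.
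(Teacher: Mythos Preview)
Your argument is correct and is essentially identical to the paper's own proof: add a universal vertex to obtain a $2$-connected graph, apply Theorem~\ref{thm:relaxed_long_cycle} with the same set $B$, and delete the universal vertex (or a single edge, if the cycle avoids it) to recover the desired path. Your treatment is in fact slightly more careful than the paper's in justifying $\delta((G+v)-B)=\delta(G-B)+1$ and in handling the degenerate cases.
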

	\begin{proof}
	Construct graph $G'$ from $G$ by adding to it  a universal vertex, that is the vertex adjacent to all vertices of $G$.
		Note that $\delta(G'-B)=\delta(G-B)+1$ and $G'$ consists of $n+1$ vertices.
		Also, $G'$ is $2$-connected, since $G$ is connected.
		Thus, by Theorem~\ref{thm:relaxed_long_cycle},  $G'$ contains a simple cycle of length at least  $\min\{n+1-|B|, 2\delta(G-B)+2\}$.
		
		If this cycle does not contain the universal vertex, this cycle is contained in $G$ as well, and we automatically obtain a path of length at least $\min\{n+1-|B|, 2\delta(G-B)+2\}-1$ in $G$.
		
		If the cycle contains the universal vertex, remove this vertex from the cycle.
		Since this vertex is incident with two edges of the cycle, we obtain a path of length at least $\min\{n+1-|B|, 2\delta(G-B)+2\}-2$ in $G$. Again, the construction can be easily turned into a polynomial time algorithm.
	\end{proof}
The following Corollary generalizes the theorem of  Erd{\H{o}}s and Gallai from \cite[Theorem~1.16]{ErdosG59}.
	\begin{corollary}\label{thm:relaxed_st_path}
		Let $G$ be a $2$-connected graph and let $s, t$ be a pair of distinct vertices in $G$.
		For any $B \subseteq V(G)$ there exists a path of length at least $\delta(G- B)$ between $s$ and $t$ in $G$. Moreover, there is a polynomial time algorithm constructing a cycle of such length. 
	\end{corollary}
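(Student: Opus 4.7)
The strategy is a reduction to Theorem~\ref{thm:relaxed_long_cycle}. Form $G'$ from $G$ by adding a new vertex $w$ with $N_{G'}(w) = \{s,t\}$. Since $G$ is $2$-connected and $s \ne t$, $G'$ is still $2$-connected: deleting $w$ returns $G$, while deleting any other vertex $u$ leaves $G-u$ connected (by $2$-connectivity of $G$) with $w$ attached to it through whichever of $s,t$ survives. Setting $B' = B \cup \{w\}$, the induced subgraph $G'-B'$ coincides with $G-B$, so $\delta(G'-B') = \delta(G-B) =: d$ and $|V(G')|-|B'| = n - |B|$.

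The key observation is that $\deg_{G'}(w)=2$, so any cycle in $G'$ that contains $w$ must use both edges $ws$ and $wt$; deleting $w$ from such a cycle yields an $(s,t)$-path in $G$ whose length is exactly two less than that of the cycle. Hence it suffices to exhibit a cycle in $G'$ through $w$ of length at least $d+2$. Applying Theorem~\ref{thm:relaxed_long_cycle} directly to $(G',B')$ returns only some cycle of length at least $\min\{n-|B|,2d\}$, with no guarantee that it passes through $w$. To force $w$ onto the cycle, I would retrace the case analysis inside the proof of Theorem~\ref{thm:relaxed_long_cycle} applied to $(G',B')$, and at each step arrange that the constructed cycle goes through $w$. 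Concretely, in the step that invokes Menger's theorem to join two $2$-connected pieces $S_1,S_2$ with two vertex-disjoint connectors and then fills in long internal paths from Lemma~\ref{lemma:path_by_large_degree}, I would pick $S_1,S_2$ to contain $s$ and $t$ respectively and use the length-$2$ path $s - w - t$ as one of the two Menger connectors, which is always available since $w$ is adjacent to both $s$ and $t$.

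The main obstacle is the boundary case in which $G-B$ has a single $2$-connected component $H$ that contains both $s$ and $t$: here the Menger-across-two-components argument breaks down. I would dispatch this case by directly invoking Lemma~\ref{lemma:path_by_large_degree} on $H$ with the designated endpoints $s,t$, since $H$ is $2$-connected with $\delta(H)\ge d$ except possibly at one cut-vertex of $G-B$ lying in $H$, which is precisely the single low-degree exception allowed by the lemma; this yields an $(s,t)$-path in $H \subseteq G$ of length at least $d$ without ever using $w$. A symmetric argument handles the subcase where $G-B$ has one separable component: pick a cut-vertex, add it to $B'$, and recurse, exactly as in the proof of Theorem~\ref{thm:relaxed_long_cycle}. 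Finally, since block decomposition, Menger path extraction, and the constructive parts of Lemma~\ref{lemma:path_by_large_degree} and Theorem~\ref{thm:relaxed_long_cycle} all run in polynomial time, the resulting algorithm is polynomial, giving the constructive part of the corollary.
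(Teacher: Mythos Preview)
Your reduction to a cycle through $w$ is a natural idea, but the retracing you propose has real gaps, and the paper's route avoids them entirely.

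\textbf{The paper's argument.} The paper never tries to force the cycle through any prescribed vertex. In the main case $n-|B|\ge 2\delta(G-B)$ it simply invokes Theorem~\ref{thm:relaxed_long_cycle} as a black box to obtain \emph{some} cycle $C$ of length at least $2\delta(G-B)$, then applies Menger's theorem in $G$ to get two vertex-disjoint paths from $\{s,t\}$ to $V(C)$, and concatenates these with the longer arc of $C$. This immediately gives an $(s,t)$-path of length at least $\delta(G-B)$, with no case analysis on where $s,t$ sit relative to $B$ or to the components of $G-B$. The remaining case $n-|B|<2\delta(G-B)$ is dispatched separately: then $G-B$ is so dense that it is Hamiltonian (hence $2$-connected), one uses Menger to reach vertices $s',t'\in V(G-B)$ from $s,t$, and Lemma~\ref{lemma:path_by_large_degree} supplies an $(s',t')$-path of length $\ge\delta(G-B)$ inside $G-B$.

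\textbf{Gaps in your proposal.} Your plan to ``pick $S_1,S_2$ to contain $s$ and $t$ respectively'' and use $s\!-\!w\!-\!t$ as one Menger connector breaks in several places. First, nothing rules out $s\in B$ or $t\in B$; in that case $s$ (or $t$) lies in no component of $G-B$ at all, so no such $S_1$ exists. Second, even when $s,t\notin B$, they may lie in the \emph{same} component of $G-B$ while other components also exist; you then cannot choose $S_1,S_2$ from different components with $s\in S_1$ and $t\in S_2$. Third, the first branch of the proof of Theorem~\ref{thm:relaxed_long_cycle} --- where a single $2$-connected component $H$ of size $\ge 2\delta(G-B)$ is found and Dirac's theorem is applied inside it --- is not covered by your ``boundary case'': $H$ need not contain both $s$ and $t$ (or either), and your invocation of Lemma~\ref{lemma:path_by_large_degree} on $H$ with endpoints $s,t$ then does not apply. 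Finally, the claim that $s\!-\!w\!-\!t$ ``is always available'' as one of two vertex-disjoint connectors is not justified: you still need a second $V(S_1)$--$V(S_2)$ path in $G'$ avoiding $\{s,w,t\}$, which is not a consequence of $2$-connectivity of $G$ alone.

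All of these can likely be patched with further casework, but the paper's ``long cycle first, then Menger to $\{s,t\}$, then longer arc'' argument is both shorter and uniformly handles $s,t\in B$ and all placements of $s,t$ relative to the components of $G-B$.
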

	\begin{proof}
		Suppose that $n-|B|\ge2\delta(G- B)$.
		Use Theorem~\ref{thm:relaxed_long_cycle} to find a cycle of length at least $2\delta(G-B)$.
		By Menger's theorem, there are two vertex-disjoint paths from $\{s,t\}$ to this cycle in $G$.
		Take these paths and the longer arc of the cycle and obtain a path of length at least $\delta(G- B)$ between $s$ and $t$.
		
		Consider the case when $n-|B|< 2\delta(G- B)$, so $\delta(G- B)\ge (|V(G- B)|+1)/2$.
		If $n-|B|\le 3$, then $\delta(G- B)\le 2$, so it is enough to find any path of length two between $s$ and $t$.
		If $n-|B|\ge 4$, then $G- B$ is $2$-connected, as it contains a Hamiltonian cycle by classical Dirac's theorem.
		Apply Menger's theorem to $G$ and find two vertex-disjoint paths from $\{s,t\}$ to $V(G)\setminus B$.
		Let these paths be a path going from $s$ to $s'\in V(G-B)$ and from $t$ to $t'\in V(G-B)$.
		By Lemma~\ref{lemma:path_by_large_degree}, there is a path of length at least $\delta(G- B)$ between $s'$ and $t'$ in $G- B$. 
		Combine this path with the paths from $s$ to $s'$ and from $t$ to $t'$.
		This yields a path of length at least $\delta(G- B)$ between $s$ and $t$ in $G$.
	\end{proof}

\newcommand\claimqed{\hfill$\lrcorner$}
\section{Long $(s,t)$-Cycle}\label{sec:tldp}
In this section we give an  FPT algorithm that finds a cycle of length at least $k$ passing through designated terminal vertices $s$ and $t$. When the length of such cycle is of order  $\Oh(k)$, then the classical methods like color-coding solve the problem. The difficulty is that the length of the cycle can be arbitrarily bigger than $k$. For that  case we  build on the approach from ~\cite{DBLP:journals/ipl/FominLPSZ18} that was used to design an algorithm for a longest $(s,t)$-path. 


Now we are ready to prove Theorem~\ref{thmTLDP}. We restate it here.

\medskip
\noindent
{\bf Theorem~\ref{thmTLDP}.}
{\it The \probTLDP problem  is solvable in $\Oh((2e)^{3k}\cdot mn)$ time by a randomized Monte Carlo algorithm and in $(2e)^{3k}k^{\Oh(\log k)}\cdot mn\log n$ deterministic time.
}


\begin{proof}
Let $(G,s,t,k)$ be an instance of \probTLDP. Clearly, we can assume that $G$ is connected, because if $s$ and $t$ are in distinct connected components, then we have a trivial no-instance, and if $s$ and $t$ are in the same connected component of a disconnected graph, then we can consider the problem on the component containing $s$ and $t$ instead of $G$.
To avoid additional case analysis, we assume that $s\neq t$. Otherwise, if $s=t$, we can do the following. If $k\leq 3$, then 
to solve the problem, it is sufficient to check whether $G$ has a cycle containing $s$ and this easily can be done in linear time.
If $k\geq 4$, then we apply the algorithm from ~\cite{DBLP:journals/ipl/FominLPSZ18}. To be able to do it formally, we create a new vertex $t'$ that is a false tween of $s$ and then check whether the obtained graph has an $(s,t')$-path of length at least $k$. Fomin, Lokshtanov, Panolan, Saurabh and Zehavi~\cite{DBLP:journals/ipl/FominLPSZ18} do not state explicitly the dependency of their algorithm on the graph size. However, it can be seen that the running times of the randomized and deterministic variants of their algorithm are dominated by $\Oh((2e)^{3k}\cdot mn)$ and $(2e)^{3k}k^{\Oh(\log k)}\cdot mn\log n$, respectively. 
We also assume that $k\geq 4$. If $k\leq 3$, then to solve the problem, it is sufficient to find any two internally disjoint  $(s,t)$-paths, and this can be done by the standard  flow techniques (see, e.g., the recent textbook~\cite{Williamson19}) in time $\Oh(n+m)$, because we are looking for a flow of volume 2.

The algorithm works in two stages. First we  try to find two internally vertex-disjoint $(s,t)$-paths   of total length $\ell$ for $\ell\in\{k,\ldots,3k\}$. If such paths are found, they form the required cycle, so we stop. Otherwise, we proceed to Stage 2, where we assume that the long $(s,t)$-cycle,  if it exists, is longer that $3k$. 

\medskip
\noindent\textbf{Stage~1.}
First, we check whether there are two internally disjoint $(s,t)$-paths of total length $\ell$ for some $\ell\in\{k,\ldots,3k\}$. For this, we apply the classical color-coding technique of  Alon, Yuster, and Zwick~\cite{AlonYZ95}.   Here the arguments are standard and 
  we only sketch how to solve the decision version of the problem. The algorithm may be easily modified to construct the paths. We describe a randomized Monte Carlo algorithm and explain how to derandomize it in the concluding part of the theorem proof.

We color the vertices of $G$ uniformly at random by $3k$ colors $\{1,\ldots,3k\}$.  We say that two $(s,t)$-paths $P_1$ and $P_2$ form a \emph{colorful solution} if the vertices of each of the paths have distinct colors and the colors of the internal vertices of $P_1$ are distinct from the colors of the internal vertices of $P_2$.  (Clearly, the colors of $s$ and $t$ are the same in both paths.) In other words,  in the $(s,t)$-cycle formed by $P_1$ and $P_2$ all vertices are colored in different colors. 

We find a colorful solution by dynamic programming. Denote by $c(x)$ the color of a vertex $x\in V(G)$,  and let $p=c(s)$ and $q=c(t)$.
If $p=q$, then there is no colorful solution. Suppose that $p\neq q$.
For a vertex $x\in V(G)$ and a nonempty set of colors $X\subseteq\{1,\ldots,3k\}$, define 
$\alpha(x,X)={\tt true}$ if there is an $(s,x)$-path $P$ with $|X|$ vertices that are colored by distinct colors from $X$, and we set $\alpha(x,X)={\tt false}$ otherwise. 
The values of $\alpha(x,X)$ are computed for all $x\in V(G)$ and all $X\subseteq \{1,\ldots,3k\}$ starting from   sets of size one.

For every $x\in V(G)$ and every $i\in\{1,\ldots,3k\}$, we define
\begin{equation}\label{eq:dp-cc-one}
\alpha(x,\{i\})=
\begin{cases}
{\tt true}&\mbox{if }x=s\text{ and }i=p\\
{\tt false}&\mbox{otherwise}.
\end{cases}
\end{equation}
Assume that $|X|\geq 2$ and the table of values of $\alpha(x,Y)$ is constructed for all $x\in V(G)$ and all $Y\subseteq\{1,\ldots,3k\}$ such that $|Y|<|X|$. 
Then for $x\in V(V)$, we set
\begin{equation}\label{eq:dp-cc-two}
\alpha(x,X)=
\begin{cases}
\bigvee\limits_{y\in N_G(x)}\alpha(y,X\setminus\{c(x)\})&\mbox{if }c(x)\in X\\
{\tt false}&\mbox{if }c(x)\notin X. 
\end{cases}
\end{equation}
By exactly the same arguments as for the color-coding algorithm for \probKPath (see, e.g.,~\cite[Chapter~5]{CyganFKLMPPS15}), we obtain that (\ref{eq:dp-cc-one}) and (\ref{eq:dp-cc-two}) allow to compute the table of values of $\alpha(x,X)$ for all $x\in V(G)$ and all nonempty $X\subseteq\{1,\ldots,3k\}$ in time $\Oh(2^{3k}\cdot mn)$, because we compute $\alpha(x,X)$ for $n$ vertices $x$ and $2^{3k}-1$ sets $X$, and to compute $\alpha(x,X)$ using  (\ref{eq:dp-cc-two}), we consider the neighbors of $x$.
 
To complete the description of the algorithm that verifies the existence of a colorful solution, we observe that such a solution exists if and only if there are $X,Y\subseteq \{1,\ldots,3k\}$ such that $X\cap Y=\{p,q\}$, $|X|+|Y|\geq k+2$, and 
$\alpha(t,X)=\alpha(t,Y)={\tt true}$.  

Hence, it takes time $\Oh(2^{3k}\cdot mn)$  to decide whether there is a colorful solution. If there is a colorful solution, $(G,s,t,k)$ is a yes-instance of \probTLDP. However, the absence of a colorful solution does not imply that we have a no-instance.  

Assume that there are two internally disjoint $(s,t)$-paths $P_1$ and $P_2$ in $G$ whose total length is between  $k$ and $3k$.  That is,  $k\leq |V(P_1)\cup V(P_2)|\leq 3k$. Then the probability that all  vertices of $V(P_1)\cup V(P_2)$ are colored by distinct colors is at least $\frac{(3k)!}{(3k)^{3k}}\geq e^{-3k}$. The probability that there is no colorful solution is at most $1-e^{3k}$. Therefore, by trying to find a colorful solution for $N=\lceil e^{3k}\rceil$ random colorings, we either conclude that we have a yes-instance, or return  no-answer with the mistake probability at most $(1-e^{3k})^N\leq e^{-1}$. This gives us a Monte Carlo algorithm with running time
$\Oh((2e)^{3k}\cdot mn)$.

\medskip
\noindent\textbf{Stage~2.}
From now, we assume that we failed to solve the problem at Stage~1. This means that   each solution is an $(s,t)$-cycle of length $3k+1$. As in Stage~1, we find
 two disjoint $(s,t)$-paths of total length  at least $3k+1$. This is done by generalizing the technique of  Fomin, Lokshtanov, Panolan, Saurabh and Zehavi from~\cite{DBLP:journals/ipl/FominLPSZ18} for finding an $(s,t)$-path of length at least $k$.  
Now instead of color-coding, we use the technique of random separation~\cite{CaiCC06}. 
 The main step of our procedure for Stage~2 is given in Algorithm~\ref{alg:step}.
 \medskip
 
%

\begin{algorithm}[h]
Color the vertices of $V(G)\setminus\{s,t\}$ uniformly at random by three colors $1, 2,$ and $3$, and denote by $X_1,X_2,X_3$ the vertices colored by the corresponding colors.\\
\For{$i=1,2,3$}
{
\ForEach{$v\in X_i$ at distance $k$ from $t$ in $G_i=G[X_i\cup\{t\}]$}
{
Find a shortest $(v,t)$-path $P$ in $G_i$\;
Find  an $(s,v)$-path $P_1$ and an $(s,t)$-path $P_2$, such that $P_1$ and $P_2$ are
 internally disjoint  and both these paths  avoid internal vertices of $P$\;
\tcc*[h]{that is,  $P_1$ and  $P_2$ are 
paths in $G-(V(P)\setminus\{v,t\})$}\\
\If{ such paths $P_1$ and $P_2$ exist}
{\Return{the paths  $P_1'=P_1P$ and $P_2$}\; 
{\bf quit}.
}
}
}
\caption{Main step of Stage~2.}\label{alg:step}
\end{algorithm}

\medskip
Due to the conditions that 
  $P_1$ does not contain internal vertices of $P$, avoids $t$, and  is internally disjoint with $P_2$, 
  we have that 
  the concatenation $P_1'$ of $P_1$ and $P$ is a path. Moreover, $P_1'$ and $P_2$ are internally vertex disjoint $(s,t)$-paths. 
Since the length of $P$ is $k$, the length of $P_1'$ is at least $k$.  We conclude that if the algorithm returns $P_1'$ and $P_2$, then these paths form a required $(s,t)$-cycle of length at least $k$. The algorithm runs in $\Oh(n+m)$ time, as $P_1$ and $P_2$ can be found (if they exist) by the standard flow algorithm, see e.g., ~\cite{Williamson19}. 

However, the proof that the algorithm finds a solution in a yes-instance with a reasonable probability is non-trivial.  It follows from the following lemma.

\begin{lemma}\label{cl:probability}
If $(G,s,t,k)$ is a yes-instance of \probTLDP,  then the described algorithm finds a solution with probability at least $\frac{2}{3^{3k-1}}$.
\end{lemma}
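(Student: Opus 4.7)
The plan is to exhibit an optimal witness and design a coloring event of probability at least $2/3^{3k-1}$ that forces the algorithm to succeed. Let $(P_1,P_2)$ be a pair of internally disjoint $(s,t)$-paths of minimum total length $L$; since Stage~1 did not find a solution, $L\ge 3k+1$. By walking along $P_1$ and $P_2$ from $s$ one can pick vertices $x_1\in V(P_1)$ and $x_2\in V(P_2)$, splitting each $P_i$ into $P_i'=(s,\ldots,x_i)$ and $P_i''=(x_i,\ldots,t)$, so that $|P_1'|+|P_2'|=k$ and for each $i$ either $x_i=s$ or $|P_i''|\ge k$; a short case analysis on $|P_i|$ versus $2k$ settles the existence. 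The minimality of $L$ then upgrades this to the structural property that $(P_1'',P_2'')$ is itself a minimum total length pair of internally disjoint $(x_1,t)$- and $(x_2,t)$-paths avoiding $V(P_1')\cup V(P_2')\setminus\{x_1,x_2\}$. Without loss of generality $|P_1''|\ge k$; let $w\in V(P_1'')$ be the vertex at distance exactly $k$ from $t$ along $P_1''$.

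Consider the pairwise disjoint sets
\[
S_0=\bigl(V(P_1')\cup V(P_2')\bigr)\setminus\{s,x_1,x_2\},\quad S_1=\text{last }\min\{k,|V(P_1'')|-2\}\text{ internal vertices of }P_1'',\quad S_2=\text{analog for }P_2'',
\]
whose union has size at most $3k-1$. The \emph{good event} is that there is a permutation $(c_0,c_1,c_2)$ of $\{1,2,3\}$ with $S_0\subseteq X_{c_0}$, $S_1\subseteq X_{c_1}$ and $S_2\subseteq X_{c_2}$. The six permutations yield pairwise disjoint coloring events (whenever the sets are nonempty), each of probability $(1/3)^{|S_0|+|S_1|+|S_2|}\ge 3^{-(3k-1)}$, so the good event has probability at least $6\cdot 3^{-(3k-1)}\ge 2/3^{3k-1}$.

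On the good event, the iteration $i=c_1$ of the algorithm succeeds. Since $w\in S_1\subseteq X_{c_1}$, the $(w,t)$-suffix of $P_1''$ lies in $G_{c_1}$ and has length $k$, so $\mathrm{dist}_{G_{c_1}}(w,t)\le k$. The main claim is equality: any $(w,t)$-path $P^{\star}$ in $G_{c_1}$ has length at least $k$. By the good event, $V(P^{\star})$ avoids $S_0\cup S_2$, so $V(P^{\star})\cap(V(P_1)\cup V(P_2))\subseteq S_1\cup\{x_1,x_2\}\cup M_1\cup M_2\cup\{t\}$, where $M_i$ denotes the internal vertices of $P_i''$ not in $S_i$. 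A splicing argument then derives a contradiction with the minimality of $(P_1'',P_2'')$: the last vertex of $V(P_1'')\cap V(P^{\star})$ before $t$ serves as a splice point on $P_1''$, and any intersection of $P^{\star}$ with $V(P_2'')$ is resolved by swapping the corresponding $P_2''$-arc with the intermediate $P^{\star}$-piece, yielding an internally disjoint $(x_1,t),(x_2,t)$-pair of strictly smaller total length that still avoids the prefix set. Once $\mathrm{dist}_{G_{c_1}}(w,t)=k$ is certified, $w$ is enumerated as a candidate $v$, and the algorithm picks some shortest $(w,t)$-path $P$; the $(s,w)$-subpath of $P_1$ together with $P_2$ are two internally disjoint $(s,w)$- and $(s,t)$-paths whose only vertices in $X_{c_1}\cap(V(P_1)\cup V(P_2))$ are $w$ and possibly the uncontrolled vertices in $M_1\cup\{x_1\}\cup M_2\cup\{x_2\}$, and any conflict with $\mathrm{int}(P)$ at those uncontrolled vertices is resolved by the same local swap used in the splicing argument, so the algorithm's flow step finds the required pair. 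The main obstacle is precisely the splicing/swap argument behind the no-shortcut claim, since $P^{\star}$ may interleave hits on $P_1''$ and $P_2''$; the plan is to process these hits in order along $P^{\star}$, swapping each maximal inter-hit segment of $P^{\star}$ with the corresponding arc of $P_1''$ or $P_2''$, and to exploit $|P^{\star}|<k\le |P_1''|$ to force a strictly smaller total length in the rerouted pair.
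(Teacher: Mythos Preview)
Your setup---choosing a minimum-length solution, splitting each $P_i$ into a prefix $P_i'$ and suffix $P_i''$ with $|P_1'|+|P_2'|=k$, and defining the good event via three disjoint sets of total size at most $3k-1$---is essentially the paper's, and the probability bound is fine. The gap is in the correctness argument on the good event.

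You assert that iteration $i=c_1$ succeeds, but this is not what the paper proves and is not in general true. The decisive technical step in the paper is Claim~\ref{cl:oneoftwo}: \emph{at least one} of the following holds---every shortest $(v_1,t)$-path in $G_{c_1}$ is internally disjoint from $P_2$, or every shortest $(v_2,t)$-path in $G_{c_2}$ is internally disjoint from $P_1$. Only the side guaranteed by this dichotomy is shown to work; which side that is depends on the graph, not on the lengths $|P_i''|$, so your ``WLOG $|P_1''|\ge k$'' does not pin it down. Concretely, when both $M_1,M_2\ne\emptyset$, the algorithm may pick a shortest $(w,t)$-path $P$ in $G_{c_1}$ that passes through a vertex of $M_2\subseteq V(P_2)$. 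Then $P_2$ is blocked as the $(s,t)$-path in $G-\mathrm{int}(P)$, and your ``local swap'' reroutes through a suffix of $P$---but that suffix may itself meet $M_1$, destroying disjointness from the $(s,w)$-prefix of $P_1$. The paper proves Claim~\ref{cl:oneoftwo} by assuming both sides fail, taking the offending shortest paths $Q_1,Q_2$, extracting four subpaths $R_1,R_2,S_1,S_2$, and showing that $R_2S_1$ and $R_1S_2$ form a strictly shorter solution; this genuinely two-sided construction is what a one-sided swap cannot reproduce.

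Your plan for the distance-$=$-$k$ claim (``process hits in order and swap segments'') has the same defect: when $P^{\star}$ alternates between $P_1''$ and $P_2''$, a $P^{\star}$-segment spliced into $P_2''$ may still meet $P_1''$ at an uncontrolled $M_1$-vertex, so the rerouted pair need not be internally disjoint, and no length bookkeeping is given. In the paper even the distance-$=$-$k$ statement is derived \emph{after} Claim~\ref{cl:oneoftwo}, for the clean side only. You need that claim, or a genuine substitute, to close the argument.
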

 
\begin{proof}[Proof of Lemma~\ref{cl:probability}]
Suppose that $(G,s,t,k)$ is a yes-instance. Then there are two internally disjoint $(s,t)$-paths $P_1$ and $P_2$ with total length at least $k$. We assume that the total length of paths  $P_1$ and $P_2$ is minimum. Recall that the total length of $P_1$ and $P_2$ is at least $3k+1$. We assume that the vertices of $P_1$ and $P_2$ are ordered in the path order starting with $s$. Thus whenever we refer to the first or the last vertices of the paths, these vertices respect this ordering.  We follow the same convention for all $(x,y)$-paths, that is, we order the vertices starting from $x$.  We consider two cases.

\medskip
\noindent
{\bf Case~1.}  \emph{The shortest path among  $P_1$ and $P_2$ is of length at most $k$.  }
Without loss of generality, 
  we assume that the length of $P_2$ is at most $k$. Then the length of $P_1$ is at least $2k+1$. Denote by $A$ the set of the first $k-1$ internal vertices of $P_1$,   by $B$ the set of the last $k$ internal vertices of $P_1$, and by  $C$  the set of internal vertices of $P_2$. Because the vertices of $V(G)\setminus \{s,t\}$ are colored uniformly at random, with probability  at least $\frac{3!}{3^{|A|}\cdot 3^{|B|}\cdot 3^{|C|}}\geq \frac{2}{3^{3k-1}}$ 
\begin{itemize}
\item[(i)]  vertices of each of the sets $A$, $B$, and $C$ receive the same colors, and 
\item[(ii)]   vertices of distinct sets are of distinct colors.
\end{itemize}
We show that if (i) and (ii) holds, then the algorithm finds a solution. For further analysis, we assume that $A\subseteq X_1$, $B\subseteq X_2$, and $C\subseteq X_3$.

Let $v$ be the internal vertex of $P_1$ at distance $k$ from $t$ in the path. Then  $v\in B\subseteq X_2$. Denote by $P_1'$ the $(s,v)$-subpath of $P_1$. Let $P$ be an arbitrary shortest $(v,t)$-path in $G_2=G[X_2\cup\{t\}]$. 
Notice that $P$ is internally disjoint with $P_2$, because $V(P_2)\setminus \{s,t\}\subseteq X_3$,  and $X_2\cap X_3=\emptyset$. 
We claim that $P_1'$ and $P$ are internally disjoint. 

Targeting towards a contradiction, assume that $V(P_1')\cap V(P)\setminus\{v\}\neq\emptyset$. Let $u$ be the first vertex of $P_1'$ that is in $V(P)$. Let $Q_1$ be the $(s,u)$-subpath of $P_1'$ and let $Q$ be the $(u,t)$-subpath of $P$.  Recall that the first $k-1$ internal vertices of $P_1$ are in $A\subseteq X_1$. This implies that $A\subseteq V(Q_1)$. Threfore,
the length of $Q_1$ is at least $k$. Let $\hat{P}_1=Q_1Q$. We obtain that $\hat{P}_1$ is an $(s,t)$-path of length at least $k$ that is internally disjoint with $P_2$. Hence, $\hat{P}_1$ and $P_2$ form a solution. However, the length of $\hat{P}_1$ is less than the length of $P_1$, contradicting the condition that $P_1$ and $P_2$ form a solution of minimum total length. Hence  $P_1'$ and $P$ are internally disjoint. 

We have that $\hat{P}_1=P_1'P$ is a path internally disjoint with $P_2$. We also have that $A\subseteq V(P_1)$ and, therefore, the length of $\hat{P}_1$ is at least $k$. Thus $\hat{P}_1$ and $P_2$ is a solution. By the construction of $\hat{P}_1$, the length of $\hat{P}_1$ is at most the length of $P_1$. Since $P_1$ and $P_2$ compose a solution of minimum total length, the length of $P_1$ is the same as the length of $\hat{P}_1$. Hence, $v$ is at distance $k$ from $v$ in $G_3$.  

Summarizing, there is a vertex $v$ at distance $k$ from $t$ in $G_2$ such that for any shortest $(v,t)$-path $P$ in graph $G_2$, in graph $G-(V(P)\setminus\{v,t\})$
there exist  an $(s,v)$-path $P_1$ and an $(s,t)$-path $P_2$   that are internally disjoint. In this case  our algorithm  finds a solution.

\medskip
\noindent
{\bf Case~2.} \emph{The length  of each of the paths $P_1$ and $P_2$ is at least $k+1$.} 
Let $B$ be the last $k$ internal vertices of $P_1$ and let $C$ be  the last $k$ internal vertices of $P_2$.  Since each of the paths $P_1$ and $P_2$ is of  length at least $k+1$ because the total length of both paths is at least $3k+1$, we conclude the following. 
For some positive integers $k_1$ and $k_2$ such that $k_1+k_2=k-1$, 
   the first $k_1$ internal vertices of $P_1$ are not in $B$, and the first $k_2$ internal vertices of $P_2$ are not in $C$. Denote by $A_1$ the   first $k_1$ internal vertices of $P_1$ and by $A_2$    the first $k_2$ internal vertices of $P_2$. We set $A=A_1\cup A_2$. Thus $|A|=k-1$ and the sets $A$, $B$ and $C$ are pairwise disjoint.   Since the vertices of $V(G)\setminus \{s,t\}$ are colored uniformly at random, we have that with probability  $\frac{3!}{3^{|A|}\cdot 3^{|B|}\cdot 3^{|C|}}\geq \frac{2}{3^{3k-1}}$
\begin{itemize}
\item[(i)]  vertices of each of the  sets $A$, $B$, and $C$ are of the same color, and
\item[(ii)]   vertices of distinct sets are of different colors.
\end{itemize}
As in Case~1, we show that if a coloring satisfies (i) and (ii),  then the algorithm finds a solution. Without loss of generality, we assume that $A\subseteq X_1$, $B\subseteq X_2$, and $C\subseteq X_3$.

Let $v_1$ be the  internal vertex of $P_1$ at distance $k$ from $t$ in $P_1$, 
and let  $v_2$ be the internal vertex of $P_2$ at distance $k$ from $t$ in $P_2$. 
Note that $v_1\in B\subseteq X_2$ and $v_2\in C\subseteq X_3$.
Denote by $P_1'$ the $(s,v_1)$-subpath of $P_1$ and by $P_1''$ the $(v_1,t)$-subpath of $P_1$.
Similarly, we define $P_2'$
as the $(s,v_2)$-subpath of $P_2$ and $P_2''$ as the $(v_2,t)$-subpath of $P_2$.


We prove the following claim.
\begin{claim}\label{cl:oneoftwo}
At least one the following options holds.
\begin{itemize}
\item 
Either for every shortest $(v_1,t)$-path $Q_1$ in $G_2=G[X_2\cup\{t\}]$, paths  $Q_1$ and $P_2$ are internally disjoint,
\item or for every shortest $(v_2,t)$-path $Q_2$ in  $G_3=G[X_3\cup\{t\}]$, paths $Q_2$ and $P_1$ are internally disjoint.
\end{itemize}
\end{claim}
\medskip
\noindent \emph{Proof of Claim~\ref{cl:oneoftwo}.}  
The proof is by contradiction. Assume that there is a   shortest  $(v_1,t)$-path $Q_1$ in $G_2$ and a shortest $(v_2,t)$-path $Q_2$ in  $G_3$ such that neither $Q_1$ and $P_2$ are internally disjoint, nor $Q_2$ and $P_1$ are internally disjoint. See Figure~\ref{fig:paths}.

\begin{figure}[ht]
\centering
\scalebox{0.7}{
\input{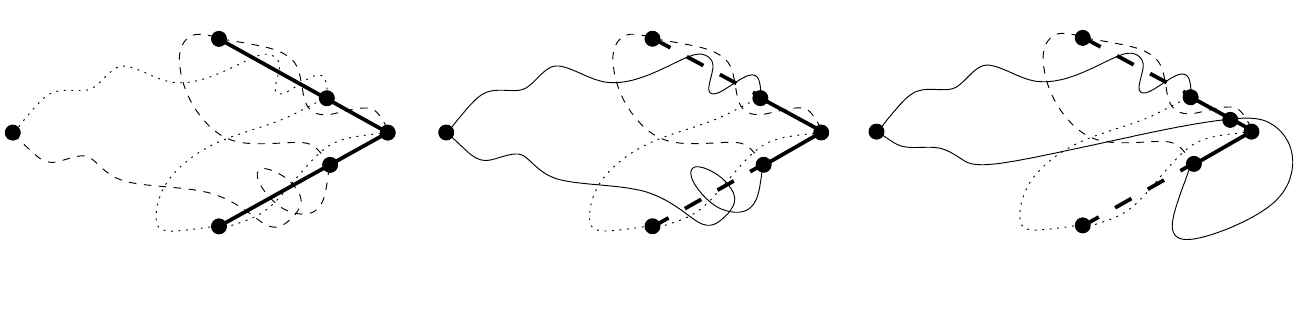_t}
}
\caption{Structure of the paths $P_1$, $P_2$, $Q_1$,  and $Q_2$. a) A dashed line shows   path $P_1$, a dotted line indicates $P_2$, and solid lines are used for $Q_1$ and $Q_2$. 
b) Solid lines indicate paths $R_1$, $R_2$, $S_1$, and $S_2$.  The thin lines are used for $R_1$ and $R_2$, while the thick lines for $S_1$ and $S_2$. The choice of $w$ is demonstrated in c).}
\label{fig:paths}
\end{figure}

Notice that $Q_1$ and $Q_2$ are internally disjoint since they are paths in $G_2$ and $G_3$ respectively, and $t$ is the unique common vertex of these graphs.  
Let $u_1$ be the vertex of $V(Q_1)\cap V(P_2)$ distinct from $t$ that is at the minimum distance from $t$ in $Q_1$. 
Similarly, let  $u_2$ be the vertex of $V(Q_2)\cap V(P_1)$ distinct from $t$ that is at the minimum distance from $t$ in $Q_2$. The choice of $u_1$ and $u_2$ is shown in Figure~\ref{fig:paths} (a).
Because  $V(P_2'')\subseteq X_3$ and $V(P_1'')\subseteq X_2$, we have that  $u_1\in V(P_2')$  and $u_2\in V(P_1')$.  
Denote by $R_1$ the $(s,u_2)$-subpath of $P_1$ and by $R_2$ the $(s,u_1)$-subpath of $P_2$.
Let also $S_1$ be the $(u_1,t)$-subpath of $Q_1$ and $S_2$ be the $(u_2,t)$-subpath of $Q_2$. The construction of these paths is shown in Figure~\ref{fig:paths} (b). 


We claim that paths $S_1$ and $R_1$ have no common vertices. For the sake of contradiction, let $V(S_1)\cap V(R_1)\neq \emptyset$ and assume that $w$ is the first vertex of $R_1$ in $V(S_1)$ (see Figure~\ref{fig:paths} (c)). Since $R_1$ and $R_2$ are internally vertex disjoint, $w$ is an internal vertex of $S_1$.
By the choice of $u_1$, there is no internal vertex of $S_1$ that belongs to $P_2$.
Hence, the concatenation $\hat{P}_1$ of the $(s,w)$-subpath of $R_1$ and the $(w,t)$-subpath of $S_1$, gives a path that is internally vertex disjoint with $P_2$. Observe also that $A_1\subseteq V(\hat{P}_1)$, because $w\in B$ and the first $k_1$ internal vertices of $P_1$ are in $A_1$. 
Therefore, the total length of paths $\hat{P}_1$ and $P_2$ is at least $k$. However, the length of the $(s,w)$-subpath of $R_1$ is less than the length of $P_1'$ and the length of the $(w,t)$-subpath of $S_1$ is less than $k$. Therefore, the length of 
 $\hat{P}_1$ is less than the length of $P_1$, contradicting the choice of $P_1$ and $P_2$. 

This proves that 
 $S_1$ and $R_1$ have no common vertices.  By the same arguments,  $S_2$ and $R_2$ also have no common vertices. 

Consider  $(s,t)$-paths $\hat{P}_1=R_2S_1$ and $\hat{P}_2=R_1S_2$.  Since  paths $S_1$ and $R_1$ do not intersect and paths  $S_2$ and $R_2$ also do not intersect, we have that paths $\hat{P}_1$ and $\hat{P}_2$ are internally disjoint. 
 Because $A_1\subseteq V(\hat{P}_2)$ and  $A_2\subseteq V(\hat{P}_1)$, the total length of paths  $\hat{P}_1$ and $\hat{P}_2$ is at least $k$. 
 However, because the length of $P_1''$ is less than the length of $S_1$ and because  the length of $P_2''$ is less than the length of $S_2$, the total length of   $\hat{P}_1$ and $\hat{P}_2$ is less than the total length of $P_1$ and $P_2$. This contradict the choice of $P_1$ and $P_2$ and proves the claim.
  \claimqed
 
By Claim~\ref{cl:oneoftwo}, without loss of generality, we assume  that  for every shortest $(v_1,t)$-path $Q_1$ in $G_2$, paths $Q_1$ and $P_2$ are internally disjoint.
 Now we repeat the arguments from Case~1. 
We observe that every shortest $(v_1,t)$-path $Q_1$ in $G_2$ is internally disjoint with $P_1'$. Indeed, if this is not the case, we can select the first vertex $u$ of $P_1'$ that is in $Q_1$. Then by replacing $P_1$ by the concatenation of the $(s,u)$-subpath of $P_1'$ and the $(u,t)$-subpath of $Q_1$, we obtain a solution with a shorter  total length. But this  contradicts the choice of $P_1$ and $P_2$. 
Since $Q_1$ and $P_1'$ are internally vertex disjoint, we have that the cycle formed by paths  $\hat{P}_1=P_1'Q_1$ and $P_2$ is a solution. This implies that $Q_1$ and $P_1''$ have the same length. Therefore $v_1$ is at distance $k$ from $t$.

We conclude that there is $v_1$ at distance $k$ from $t$ in $G_2$ such that for every shortest $(v,t)$-path $Q_1$ in $G_2$, there are  an $(s,v_1)$-path $P_1'$ and an $(s,t)$-path $P_2$ in $G-(V(P)\setminus\{v,t\})$ that are internally disjoint. 
Then the algorithm   finds a solution. This concludes Case~2 and the proof of Lemma~\ref{cl:probability}.
 \end{proof} 
 
 By Lemma~\ref{cl:probability}, if we iterate Algorithm~\ref{alg:step} $3^{3k-1}/2$ times, then we either find a solution, or return the no-answer with the error probability at most $(1-\frac{2}{3^{3k-1}})^{3^{3k-1}/2}\leq e^{-1}$.  
 Thus we have  a Monte Carlo algorithm with false negatives that  runs in time $\Oh(3^{3k}\cdot (n+m))$.
 
 \medskip\noindent\textbf{Derandomization.} 
  For the  Monte Carlo algorithm that we use in the first stage (finding a short cycle), derandomization uses the standard technique. 
We replace random colorings by functions from the \emph{$(n,3k)$-perfect hash family} of functions of size $e^{3k}k^{\Oh(\log k)}\cdot \log n$ that can be constructed in time 
$e^{3k}k^{\Oh(\log k)}\cdot n\log n$ by the results of Naor, Schulman, and Srinivasan~\cite{NaorSS95} (we refer to~\cite[Chapter~5]{CyganFKLMPPS15} for the detailed introduction to the technique). This allows us to check in 
$(2e)^{3k}k^{\Oh(\log k)}\cdot mn\log n$ deterministic time whether there are two internally vertex disjoint $(s,t)$-paths in $G$ whose total length is at least $k$ but at most $3k$.

 To derandomize  the algorithm from the second stage that uses random separation, we have to do an extra work.
  This is because   commonly random separation is used to distinguish two sets~ \cite[Chapter~5]{CyganFKLMPPS15}. In our algorithm we  distinguish three sets; derandomization here  is slightly different and is based on Lemma~\ref{lem:derand}. Lemma~\ref{lem:derand} could be a folklore, but we did not find it in the literature and prove it here  for completeness. 
 
Let $k$ and $n$ be positive integers.   An \emph{$(n,k)$-universal} set is a family $\mathcal{U}$ of subsets of $\{1,\ldots,n\}$ such that for any $S\subseteq \{1,\ldots,n\}$ of size $k$, the family 
$\{A\cap S\mid A\in \mathcal{U}\}$ contains all $2^k$ subsets of $S$.
We use the following result of  Naor, Schulman, and Srinivasan~\cite{NaorSS95}.  

\begin{proposition}[\cite{NaorSS95}]\label{prop:derand}
For any $n, k\geq 1$, one can construct an $(n,k)$-universal set of size $2^kk^{\Oh(\log k)}\cdot\log n$ in time $2^kk^{\Oh(\log k)}\cdot n\log n$.
\end{proposition}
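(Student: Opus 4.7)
The natural approach is to build the universal set in two layers, via a perfect hash family. The reduction is standard: given a family $\mathcal{H}$ of functions $h : [n] \to [k]$ that forms a perfect hash family, i.e., for every $k$-subset $S \subseteq [n]$ some $h \in \mathcal{H}$ is injective on $S$, the collection $\{A_{h,T} : h \in \mathcal{H},\, T \subseteq [k]\}$ with $A_{h,T} = h^{-1}(T)$ is an $(n,k)$-universal set. Indeed, given any $S$ of size $k$ and any target pattern $R \subseteq S$, pick $h \in \mathcal{H}$ injective on $S$ and set $T = h(R)$; then $A_{h,T} \cap S = R$. The resulting family has size $|\mathcal{H}| \cdot 2^k$, so it suffices to construct a perfect hash family of size $k^{\Oh(\log k)} \cdot \log n$ in time $k^{\Oh(\log k)} \cdot n \log n$, and then enumerate the $2^k$ preimage patterns.

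\textbf{Building the hash family.} The cleanest route is through splitters. An $(n, \ell, k)$-splitter is a family $\mathcal{F}$ of functions $[n] \to [\ell]$ such that for every $k$-subset $S$ of $[n]$, some $f \in \mathcal{F}$ splits $S$ into buckets of size at most $\lceil k/\ell \rceil$. For $\ell = k$ the splitter is exactly a perfect hash family. I would first build an $(n, k^2, k)$-splitter using small modular arithmetic: functions of the form $x \mapsto (ax \bmod p) \bmod k^2$ over a carefully chosen collection of primes $p$ and multipliers $a$ separate any fixed $k$-set into singletons with controlled probability, and can be derandomized via a limited-independence sample space of size polynomial in $k$ and $\log n$. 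To go from codomain $k^2$ down to $k$ while preserving injectivity on $k$-sets, I would compose each splitter function with a small family of functions $[k^2] \to [k]$ that, for every partition of $[k^2]$ into $k$ prescribed singletons, contains a representative that maps them injectively; this second family has size $k^{\Oh(1)}$.

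\textbf{Key obstacle.} The main difficulty is shaving the naive $k^{\Oh(k)}$ overhead from direct splitter constructions down to $k^{\Oh(\log k)}$. The trick is a recursive halving: partition the target $k$-set into two halves of size $k/2$, recursively obtain splitters for each half on the codomain $[k/2]$, then combine them into a splitter for the full $k$-set on $[k]$ by pairing. Unrolling the recursion gives $\Oh(\log k)$ levels, each contributing a factor of $k^{\Oh(1)}$, so the total family size is $k^{\Oh(\log k)} \cdot \log n$. The analysis at each level is a counting argument showing that one of the polynomially many combination patterns must hit the correct split.

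\textbf{Final count.} Once $\mathcal{H}$ is in hand, the construction of the universal set iterates over all $h \in \mathcal{H}$ and all $T \subseteq [k]$, computing $h^{-1}(T)$ in time $\Oh(n)$ per pair. The total running time is $|\mathcal{H}| \cdot 2^k \cdot \Oh(n) = 2^k k^{\Oh(\log k)} \cdot n \log n$, and the output family has size $2^k k^{\Oh(\log k)} \cdot \log n$, matching the statement. All combinatorial depth lies in the splitter step; the rest of the argument is bookkeeping and the standard preimage construction.
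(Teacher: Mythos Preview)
The paper does not prove this statement; it is cited as a black-box result of Naor, Schulman, and Srinivasan, so there is no in-paper proof to compare against. Your sketch, however, has a genuine gap. You reduce to constructing an $(n,k)$-perfect hash family (functions $[n]\to[k]$ injective on every $k$-set) of size $k^{\Oh(\log k)}\log n$, but no such family can exist: the Fredman--Koml\'os lower bound forces every $(n,k)$-perfect hash family onto $[k]$ to have size $\Omega\bigl(\tfrac{e^k}{\sqrt k}\log n\bigr)$. Even plugging in the best explicit family (size $e^k k^{\Oh(\log k)}\log n$, also due to NSS) into your preimage step yields a universal set of size $(2e)^k k^{\Oh(\log k)}\log n$, not the claimed $2^k$. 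The error is visible in your recursion: halving gives $T(k)=s(k)\cdot T(k/2)^2$ (one $2$-way split, then an independent hash family for \emph{each} half), and because of the squaring, the level-$i$ factor appears with exponent $2^i$ in the final product. Those exponents sum to $\Theta(k)$, so the product is $c^{\Theta(k)}$, not $k^{\Oh(\log k)}$ as you assert.

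The actual NSS construction avoids hashing perfectly onto exactly $k$ colors. After reducing the domain from $n$ to $k^2$ via a polynomial-size $(n,k,k^2)$-perfect hash family, it builds a $(k^2,k)$-universal set directly by a splitter scheme that cuts the $k$-set into roughly $k/\log k$ blocks of size $\Oh(\log k)$ and handles each small block by enumeration. That route never needs an $e^k$-sized object, which is precisely what delivers the leading $2^k$ rather than $(2e)^k$.
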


Using Proposition~\ref{prop:derand}, we prove the following lemma.

\begin{lemma}\label{lem:derand}
For an $n$-element set $\Omega$ and a positive $k$, there is a family of functions $\mathcal{F}_{n,k}$ mapping $\Omega$ to $\{1,2,3\}$ of size $2^{5k}k^{\Oh(\log k)}\cdot (\log n)^2$
such that for every triple of disjoint nonempty sets $A_1,A_2,A_3\subseteq \Omega$, each  of size at most $k$, there is $f\in\mathcal{F}_{n,k}$ with the property that 
\begin{itemize}
\item $f(x)=f(y)$ if $x,y\in A_i$ for some $i\in\{1,2,3\}$,
\item $f(x)\neq f(y)$ if $x\in A_i$ and $y\in A_j$ for distinct $i,j\in\{1,2,3\}$.
\end{itemize}
Moreover, $\mathcal{F}_{n,k}$ can be constructed in $2^{5k}k^{\Oh(\log k)}\cdot n^2\log n$ time.
 \end{lemma}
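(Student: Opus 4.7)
The plan is to reduce the three-way separation problem to two independent two-way separation problems, and then apply Proposition~\ref{prop:derand} twice with different parameters. The key observation is that to separate three disjoint sets $A_1,A_2,A_3$, it suffices to (i) distinguish $A_1$ from $A_2\cup A_3$ and then (ii) distinguish $A_2$ from $A_3$ on the leftover elements. Since $|A_1\cup A_2\cup A_3|\le 3k$ and $|A_2\cup A_3|\le 2k$, the two stages will cost $2^{3k}$ and $2^{2k}$ respectively, giving the claimed $2^{5k}$ bound.

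Concretely, I would invoke Proposition~\ref{prop:derand} twice to obtain an $(n,3k)$-universal family $\mathcal{U}_1$ of size $2^{3k}k^{\Oh(\log k)}\log n$ and an $(n,2k)$-universal family $\mathcal{U}_2$ of size $2^{2k}k^{\Oh(\log k)}\log n$, constructible in time $2^{3k}k^{\Oh(\log k)}n\log n$ and $2^{2k}k^{\Oh(\log k)}n\log n$ respectively (after identifying $\Omega$ with $\{1,\ldots,n\}$). For each pair $(B_1,B_2)\in\mathcal{U}_1\times\mathcal{U}_2$, define the function $f_{B_1,B_2}:\Omega\to\{1,2,3\}$ by
\[
f_{B_1,B_2}(x)=\begin{cases}1&\text{if }x\in B_1,\\ 2&\text{if }x\in B_2\setminus B_1,\\ 3&\text{otherwise.}\end{cases}
\]
Let $\mathcal{F}_{n,k}$ be the collection of all such functions. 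Its size is $|\mathcal{U}_1|\cdot|\mathcal{U}_2|=2^{5k}k^{\Oh(\log k)}(\log n)^2$, and the time to write it down is bounded by $2^{5k}k^{\Oh(\log k)}\cdot n^2\log n$ (multiplying the number of pairs by the $\Oh(n)$ cost of materializing each pair of characteristic vectors and the corresponding function).

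For correctness, fix any triple of disjoint nonempty sets $A_1,A_2,A_3\subseteq\Omega$, each of size at most $k$. Since $|A_1\cup A_2\cup A_3|\le 3k$, the $(n,3k)$-universal property of $\mathcal{U}_1$ yields a set $B_1\in\mathcal{U}_1$ with $B_1\cap(A_1\cup A_2\cup A_3)=A_1$. Since $|A_2\cup A_3|\le 2k$, the $(n,2k)$-universal property of $\mathcal{U}_2$ yields $B_2\in\mathcal{U}_2$ with $B_2\cap(A_2\cup A_3)=A_2$. For the function $f=f_{B_1,B_2}$, every vertex of $A_1$ lies in $B_1$ and hence is mapped to $1$; every vertex of $A_2$ lies outside $B_1$ but inside $B_2$, hence is mapped to $2$; every vertex of $A_3$ lies outside both $B_1$ and $B_2$, hence is mapped to $3$. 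Thus $f$ is constant on each $A_i$ and takes different values across the three sets, as required.

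The only subtle point is the accounting: a single $(n,3k)$-universal family (yielding $2^{3k}$) does not separate three labels by itself, while taking two $(n,3k)$-universal families would give $2^{6k}$. The savings come precisely from noting that the second universal family only needs to refine the complement of $B_1$, whose intersection with the three input sets has size at most $2k$, which is what brings the final exponent down to $5k$.
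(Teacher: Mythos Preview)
Your proof is correct and follows essentially the same two-stage strategy as the paper: first use an $(n,3k)$-universal family to isolate $A_1$, then a $2k$-parameter universal family to separate $A_2$ from $A_3$. The one difference is that the paper, for each first-stage set $U$ of size $p$, applies a $(p,2k)$-universal family on $U$ itself (so it builds $\mathcal{U}^{(2)}_p$ for every $p\le n$, which is where the $n^2$ in the construction time comes from), whereas you simply take a single $(n,2k)$-universal family on all of $\Omega$. Your variant is cleaner and in fact gives a slightly better construction time of $2^{5k}k^{\Oh(\log k)}\cdot n(\log n)^2$, well within the stated bound.
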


\begin{proof}
If $n\leq 3k$, then we define $\mathcal{F}_{n,k}$ to be the family of all at most $3^{3k}$ mappings $f\colon \Omega\rightarrow \{1,2,3\}$.  Hence, from now we assume that $n\geq 3k$.
Let $\Omega=\{\omega_1,\ldots,\omega_n\}$.

We apply Proposition~\ref{prop:derand} to construct the following family of universal sets.
We construct an $(n,3k)$ universal set $\mathcal{U}^{(1)}$. Then for every positive $p\leq n$, we construct an $(p,2k)$-universal set $\mathcal{U}^{(2)}_p$. Then $\mathcal{F}_{n,k}$ is constructed as follows.
For every $U=\{i_1,\ldots,i_p\}\in \mathcal{U}^{(1)}$ and every $W=\{j_1,\ldots,j_q\}\in \mathcal{U}^{(2)}_p$, we construct $f\colon \Omega\rightarrow \{1,2,3\}$ such that for every $h\in\{1,\ldots,n\}$,
$$
f(\omega_h)=
\begin{cases}
1&\mbox{if }h\notin \{i_1,\ldots,i_p\}\\
2&\mbox{if }h\in\{i_1,\ldots,i_p\}\setminus\{i_{j_1},\ldots,i_{j_q}\}\\
3&\mbox{if }h\in\{i_{j_1},\ldots,i_{j_q}\}.  
\end{cases}
$$ 

To see that $\mathcal{F}_{n,k}$ satisfies the required property, consider arbitrary disjoint sets $A_1,A_2,A_3\subseteq \Omega$ of size at most $k$. We  assume without loss of generality that each $A_i$ is of  size exactly $k$ (otherwise, we can complement the sets by adding elements of $\Omega$ that are outside these sets). Let $A_i=\{\omega_{i_1^i},\ldots,\omega_{i_k^i}\}$ for $i\in\{1,2,3\}$.  
Let $S=\{i_1^1,\ldots,i_k^1\}\cup \{i_1^2,\ldots,i_k^2\}\cup\{i_1^3,\ldots,i_k^3\}$.
By definition, the $(n,3k)$-universal set $\mathcal{U}^{(1)}$, contains a set $X$ such that 
$S\cap X=\{i_1^2,\ldots,i_k^2\}\cup\{i_1^3,\ldots,i_k^3\}$. Let $p=|X|$ and assume that $X=\{j_1,\ldots,j_p\}$.
Again by definition, the $(p,2k)$-universal set $\mathcal{U}_p^{(2)}$ contains a set $Z$ such that for every $s\in Z$, $j_s\neq i_1^2,\ldots,i_k^2$, and for every $t\in\{1,\ldots,k\}$, there is $s\in Z$ such that $j_s=i_t^2$.
This implies that for $f\in \mathcal{F}_{n,k}$ constructed for $X\in \mathcal{U}^{(1)}$ and $Y\in\mathcal{U}_p^{(2)}$, $f(x)=i$ if $x\in A_i$ for $i\in\{1,2,3\}$. Therefore, $f$ distinguishes the sets $A_1,A_2,A_3$.

By Proposition~\ref{prop:derand}, $|\mathcal{U}^{(1)}|=2^{3k}k^{\Oh(\log k)}\cdot\log n$ and $|\mathcal{U}^{(2)}_p|=2^{2k}k^{\Oh(\log k)}\cdot\log n$. Therefore,
$|\mathcal{F}_{n,k}|\leq 2^{5k}k^{\Oh(\log k)}(\log n)^2$.
By Proposition~\ref{prop:derand}, the universal sets can be constructed in time $2^{3k}k^{\Oh(\log k)}\cdot n^2\log n$. Then we construct  $\mathcal{F}_{n,k}$ in time $2^{5k}k^{\Oh(\log k)}\cdot n(\log n)^2$.
\end{proof}

 To derandomize our algorithm, we apply Lemma~\ref{lem:derand}. Notice that the only property of random colorings that we use in the algorithm is that with sufficiently high probability the sets $A$, $B$,  and $C$ defined in the proof of Lemma~\ref{cl:probability} are colored by distinct colors. The sets $A$, $B$,  and $C$ have sizes at most $k$, and they are subsets of $V(G)\setminus\{s,t\}$. This implies that the random colorings can be replaced by functions of the family $\mathcal{F}_{n-2,k}$ for $\Omega=V(G)\setminus\{s,t\}$. 
Since Algorithm~\ref{alg:step} runs in $\Oh(n+m)$ time, the running time is $2^{5k}k^{\Oh(\log k)}\cdot (n+m)(\log n)^2$.
Taking into account the time for constructing $\mathcal{F}_{n-2,k}$, we conclude that 
 the problem can be solved in $2^{5k}k^{\log k}\cdot mn\log n$ deterministic time.
 
 Recall that in the first stage of our algorithm, we  try to find  two internally disjoint $(s,t)$-paths of total length $\ell$ for some $\ell\in\{k,\ldots,3k\}$, and this can be done in 
 $(2e)^{3k}\cdot mn$ randomized and $(2e)^{3k}k^{\Oh(\log k)}\cdot mn\log n$ deterministic time. Since $(2e)^{3}\geq 2^{5}\geq 3^{3}$ and $nm\geq n(n-1)$ as $G$ is assumed to be connected, we obtain that the running time of the first stage dominates the running time of the second. 
 We conclude that the problem can be solved in
 $(2e)^{3k}\cdot mn$ randomized and $(2e)^{3k}k^{\Oh(\log k)}\cdot mn\log n$ deterministic time. It is  plausible that the running time for the first stage can be improved by making use of more sophisticated techniques   for   \probKPath and \probKCycle (see, e.g.,~\cite{FominLS14,Zehavi16}) but such an improvement goes beyond the scope of our paper.
\end{proof}
\section{\probstP}\label{sec:erdos-gallaiPath}

In this section we prove Theorem~\ref{thmEG}:  
 \emph{\probstP   is solvable in time 
	 $2^{\mathcal{O}(k+|B|)}\cdot n^{\mathcal{O}(1)}$.}
 The proof of the theorem relies on the structural properties of graphs with a long path. The notions of \emph{\bananadec} and \emph{\banana} are crucial here. We prove several combinatorial and algorithmic properties of \bananadec, and then apply the obtained properties in the proof of   \Cref{thmEG}.
 
 \subsection{Erd{\H {o}}s-Gallai decompositions and structures}

We need to introduce the operation of \emph{$B$-refinements}.
The intuition behind this operation is the following. In our proof, we will be using the following rerouting strategy. Suppose we have an $(s,t)$-path $P$, and we want to construct a longer path by rerouting some parts of $P$ through a connected component $H$ of $G-V(P)$. If $H$ is 2-connected, we can try to apply  \Cref{thm:circum}
  to argue that such an enlargement of $P$ is possible. However, when $H$ is not $2$-connected, we want to eliminate some ``insignificant'' parts of $H$. While in the refinement we contract some of the edges  inside $H$, all edges between $H$ and the remaining part of the graph remain.

%

	\begin{definition}[\textbf{$B$-refinement of   $H$}]
		Let $H$ be a connected subgraph of a graph $G$  and $B\subset V(G)$. 
		The \emph{$B$-refinement} of $H$, denoted by $\gbref{B}{H}$,  is the graph obtained by the following process.
		Start with $\gbref{B}{H}:=G$. 
		While  ${H}$ is not $2$-connected  and contains a leaf-block 
	with all inner vertices from  $B$,  contract all edges in $H$ from this leaf-block to its cut-vertex. 
	\end{definition}	
    In other words, 	$\gbref{B}{H}$ is obtained from $G$ by repeatedly contracting edges of $H$ from the leaf-blocks of $H$ whose inner vertices are from $B$.  Note that in  $B$-refinement only edges with both endpoints in $B$ can be contracted. We also say that $\gbref{B}{H}$ is obtained from $G$ by applying $B$-refinement to $H$.

\begin{figure}[ht]
	\begin{center}
		\ifdefined\STOC
\begin{tikzpicture}[scale=0.4]
\else
\begin{tikzpicture}[scale=0.8]
\fi
\tikzstyle{vertex}=[draw, fill, circle, black, minimum size=2,inner sep=0pt]

\draw[fill=lightgray]  plot[smooth cycle, tension=.7] coordinates {(2.4,-1.4) (2.4,-1.7) (2.5,-1.8) (2.7,-2.1) (2.9,-2.1) (2.8,-1.9) (2.8,-1.5)};
\draw[fill=lightgray]  plot[smooth cycle, tension=.7] coordinates {(9.55,-3.35) (9.2,-3.2) (9.25,-2.95) (9.6,-2.9) (9.8,-2.75) (10.2,-2.85) (10.15,-3.15) (9.9,-3.1) (9.65,-3.2)};
\draw[fill=lightgray]  plot[smooth cycle, tension=.7] coordinates {(4.25,-1.25) (4.5,-1.6) (4.7,-1.4) (4.65,-1.25) (4.35,-1) (4.15,-1.05)};
\draw[fill=lightgray]  plot[smooth cycle, tension=.7] coordinates {(-4.3,-0.6) (-4.3,-0.5) (-4.3,-0.4) (-4.2,-0.4) (-4,-0.4) (-3.7,-0.5) (-3.5,-0.4) (-3.1,-0.4) (-2.8,-0.3) (-2.5,-0.3) (-2.4,-0.2) (-2.2,-0.3) (-2.1,-0.4) (-2.2,-0.6) (-2.4,-0.6) (-2.8,-0.8) (-3.3,-1) (-3.8,-0.9) (-3.9,-0.8) (-4.1,-0.8)};
\draw[fill=lightgray]  plot[smooth cycle, tension=.7] coordinates {(1.7,0.3) (1.4,0.3) (1.4,0.7) (1.9,0.7) (2,0.9) (2.1,1.2) (2.4,1.2) (2.8,1.2) (3.1,1.1) (3.4,1) (3.8,0.8) (4.3,0.9) (4.6,0.8) (4.7,0.6) (4.7,0.3) (4.4,0.4) (4.2,0.7) (3.9,0.6) (3.2,0.7) (2.7,0.9) (2.4,0.8) (2.7,0.4) (2,0.3)};
\draw[fill=lightgray]  plot[smooth cycle, tension=.7] coordinates {(4.1,2.8) (3.2,2.7) (3.2,2.5) (3.5,2.4) (3.9,2.3) (4.6,2.2) (5,2.1) (5.1,2.4) (5,2.6) (4.6,2.7) (4.4,2.8)};

\draw [very thick, red] (-4.5,0.5) node [vertex] (v30) {} -- (-3.5,0.5) node [vertex] (v9) {} -- (-2.5,0.5) node [vertex] (v21) {} -- (-1.5,0.5) node [vertex] (v2) {} -- (-0.5,0.5) node [vertex] (v11) {};
\draw [very thick,red] (9.5,0.5) node [vertex] (v6) {} -- (10.5,0.5) node [vertex] (v7) {} -- (11.5,0.5) node [vertex] {} -- (12.5,0.5) node [vertex] (v8) {} -- (13.5,0.5) node [vertex] (v17) {};

\draw[fill=lightgray]  plot[smooth cycle, tension=.7] coordinates {(0.5,2.7) (0.5,2.9) (0.4,2.9) (0.3,2.8) (0.3,2.7) (0.2,2.5) (0.2,2.3) (0.4,2.1) (0.7,2.3) (0.6,2.5)};
\draw[fill=lightgray]  plot[smooth cycle, tension=.7] coordinates {(7.9,3.2) (7.9,3.4) (8.1,3.4) (8.2,3.1) (8.5,2.9) (8.6,2.7) (8.5,2.5) (8.3,2.6) (8.2,2.7) (8.1,2.9) (8,3) (8,3.1)};
\draw[fill=lightgray]  plot[smooth cycle, tension=.7] coordinates {(-0.6,3.8) (-0.4,3.5) (-0.2,3.3) (-0.3,3.2) (-0.5,3.3) (-0.6,3.4) (-0.7,3.5) (-0.9,3.6)};

\draw[very thick, blue]  plot[smooth cycle, tension=.7] coordinates {(-0.4,2.6) (0.1,4.1) (2.6,4.6) (4.1,4.6) (6.1,4.6) (7.6,4.6) (8.6,3.6) (7.6,3.1) (6.6,3.6) (4.6,3.6) (2.6,3.6) (1.2,3.1) (0.1,2.6)};
\node [vertex] (v3) at (-0.7,3.6) {};
\node [vertex] (v1) at (-0.2,2.7) {};
\node [vertex] (v4) at (-0.3,3.3) {};

\node [vertex] (v5) at (8.3,2.9) {};
\draw  (v5) edge (v6);
\draw  (v5) edge (v8);
\draw  (v1) edge (v9);
\draw  (v3) edge (v9);
\draw  (v4) edge (v9);
\draw (8.5,0.5) node [vertex] (v10) {} -- (7.5,0.5) node [vertex] {} -- (6.5,0.5) node [vertex] {} -- (5.5,0.5) node [vertex] {} -- (4.5,0.5) node [vertex] {} -- (3.5,0.5) node [vertex] {} -- (2.5,0.5) node [vertex] {} -- (1.5,0.5) node [vertex] {} -- (0.5,0.5) node [vertex] (v12) {};
\draw [very thick, blue] plot[smooth cycle, tension=.7] coordinates {(1.85,-0.05) (1.4,-0.05) (0.55,0.1) (0,0.5) (0.4,1.1) (1.4,1.3) (3,1.3) (4,1) (5.1,1.3) (6.1,1) (7.2,1.1) (7.7,0.9) (8.6,0.7) (9,0.2) (8,0) (7.1,-0.2) (6,-0.15) (5,0) (4,0) (3.3,-0.2) (2.5,0)};
\draw  (v10) edge (v6);
\draw  (v11) edge (v12);

\node [vertex] (v15) at (8,3.3) {};

\node [vertex] (v13) at (0.3,2.3) {};
\node [vertex] (v14) at (0.6,2.3) {};
\draw  (v9) edge (v13);
\draw  (v9) edge (v14);

\node [vertex] at (0.4,2.8) {};

\draw  (v15) edge (v6);
\node [vertex] (v16) at (8.5,2.6) {};
\draw  (v16) edge (v7);

\draw[very thick, blue]  plot[smooth cycle, tension=.7] coordinates {(0,-3.5) (-0.4,-3) (-0.7,-2.6) (-0.3,-2.3) (-0.3,-2) (0.8,-1.9) (2.3,-1.9) (2.9,-2.1) (2.5,-2.7) (2,-3.1) (1.4,-3.5) (0.7,-3.7)};
\draw  plot[smooth cycle, tension=.7] coordinates {(5,-3.4) (3.3,-2.3) (2.7,-2.1) (3.2,-1.7) (3.9,-1.5) (5,-1.4) (5.3,-1.7) (5.8,-2.4) (6.3,-2.6) (6,-3.1)};
\draw[very thick, blue]  plot[smooth cycle, tension=.7] coordinates {(8.3,-3.6) (7.4,-3.5) (6.7,-3) (6.1,-2.7) (6.8,-2.4) (7.8,-2.4) (8.3,-2.2) (9.1,-2.1) (9.4,-2.7) (9.4,-3.1) (9.2,-3.7)};
\node [vertex] (v18) at (4,-2) {};
\node [vertex] (v19) at (10,-2.9) {};
\draw  (v17) edge (v18);
\draw  (v17) edge (v19);

\node [vertex] at (6.2,-2.7) {};
\node [vertex] (v22) at (2.8,-2.05) {};
\node [vertex] (v20) at (2.55,-1.6) {};
\draw  (v11) edge (v20);
\draw  (v21) edge (v22);
\node [vertex] (v23) at (1.2,-3.15) {};
\draw  (v17) edge (v23);
\node [vertex] at (9.3,-3.1) {};
\node [vertex] (v24) at (8.75,-2.65) {};
\draw  (v17) edge (v24);

\node [vertex] at (4.55,-1.5) {};
\node [vertex] (v25) at (4.4,-1.2) {};
\draw  (v2) edge (v25);

\node [vertex] at (4.55,-1.25) {};
\node [vertex] (v28) at (2.3,1.1) {};
\node [vertex] at (3.2,0.9) {};

\node [vertex] (v26) at (3.8,2.6) {};
\node [vertex] (v27) at (4.8,2.4) {};
\draw  (v2) edge (v26);
\draw  (v6) edge (v27);
\draw  (v21) edge (v26);
\draw  (v11) edge (v28);
\node at (-4,0.9) {$P_1$};
\node at (13,0.9) {$P_2$};
\draw[bend right]  (v10) edge (v7);
\node at (3.8,4.1) {R1};
\node at (5,-2.7) {R3};
\node at (6.6,0.2) {R1};
\node at (4.2,2.5) {R0};
\node at (-3.3,-0.7) {R0};
\node [vertex] (v29) at (-4,-0.6) {};
\node [vertex] (v31) at (-2.3,-0.4) {};
\draw  (v29) edge (v30);
\draw  (v31) edge (v21);
\draw[bend right]  (v30) edge (v21);
\end{tikzpicture}
	\end{center}
	\caption{A schematic example of an \bananadec  for a path. The components are denoted by their respective types in the decomposition, R0 denotes components consisting entirely of vertices from $B$ (marked by light gray). The four \bananas are marked by thick blue borders.}
	\label{fig:bananapath}
\end{figure}
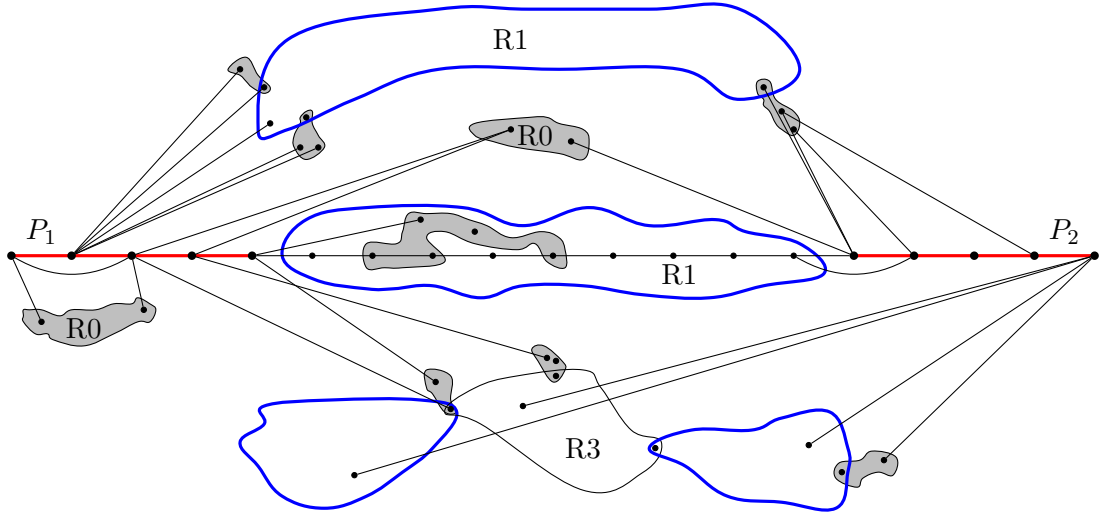

We are ready to introduce the primary tool for solving \probstP. This structure arises in the extremal cases when we cannot enlarge an $(s,t)$-path by local replacement used in the proof of the Erd{\H {o}}s-Gallai's theorem. This is where the name we use for the decomposition comes from.

\begin{definition}[\textbf{\Bananadec and \banana}]
Let $P$ be a path in  a $2$-connected graph $G$ and let $B\subseteq V(G)$.
We say that  two disjoint paths $P_1$ and $P_2$ in $G$ induce \emph{an \bananadec   for   $P$ and $B$}  in $G$ if
	\begin{itemize}
		\item 
		Path $P$ is of the form $P=P_1 {P'}P_2$, where the inner path ${P'}$ has at least $\delta(G- B)$ edges.
%
		\item 
            Let $G'$ be the graph obtained from $G$ by applying $B$-refinement to every  connected component $H$ of $G- V(P_1  \cup P_2)$,  except those components $H$ with  $V(H)\subseteq B$. Note that no edges of the paths $P_1$ and $P_2$ are contracted. 
            There are at least two connected components $H'$ in $G'-V(P_1\cup P_2)$ with $V(H')\not\subseteq B$.
For  every such connected component $H'$ holds $|V(H')|\ge 3$ and one of the following. 
		\begin{enumerate}[label=(R\arabic*)]
			\item\label{enum:tunnel_path_bic} $H'$ is $2$-connected and the maximum size of a matching in  $G'$ between $V(H')$ and $V(P_1)$  is one,  and between $V(H')$ and $V(P_2)$ is also  one;
						\item\label{enum:tunnel_path_cut_left} $H'$ is not 2-connected,   
exactly one vertex of $P_1$ has neighbors in $H'$, that is 			
			$|N_{G'}(V(H'))\cap V(P_1)|=1$, and no inner vertex from a  leaf-block of $H'$ has a neighbor in $P_2$;
			\item\label{enum:tunnel_path_cut_right} The same as  \ref{enum:tunnel_path_cut_left}, but with $P_1$ and $P_2$ interchanged. That is, 
$H'$ is not 2-connected,  			
			$|N_{G'}(V(H'))\cap V(P_2)|=1$, and no inner vertex from  a leaf-block of $H'$ has a neighbor in $P_1$.			
		\end{enumerate}
	
	\end{itemize}
The set of \emph{\banana}s for an \bananadec 
	 is defined as follows.
	First,  for each component $H'$ of type \ref{enum:tunnel_path_bic}, $H'$ is an \banana of the \bananadec.
	Second, for each  $H'$ of type \ref{enum:tunnel_path_cut_left}, or of type \ref{enum:tunnel_path_cut_right}, all its leaf-blocks are also \banana{s} of the \bananadec. The example of an \bananadec is given in Figure~\ref{fig:bananapath}.
\end{definition}

The following lemma provides a polynomial time algorithm that either finds a long path in the given graph or constructs an \bananadec. 

 \begin{lemma}\label{lemma:st_path_or_tunnel}
	Let $G$ be a   $2$-connected graph with two distinct vertices $s$ and $t$, $B\subseteq V(G)$ be a subset of vertices such that $s,t\in B$, and $k>0$ be an integer such that $4k+2|B|+4\le\delta(G-B)$. 
	There is a polynomial time algorithm that 
	\begin{itemize}
		\item either outputs an  $(s,t)$-path $P$  of length at least $\delta(G-B)+k$, 
		\item or outputs an  $(s,t)$-path $P$  with $V(P)\cup B =V(G)$,
		\item or outputs an  $(s,t)$-path $P$ with paths  $P_1, P_2$ that induce an \bananadec   for $P$ and $B$  in $G$.
	\end{itemize}
\end{lemma}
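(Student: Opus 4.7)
The plan is to start with an $(s,t)$-path $P$ of length at least $\delta(G-B)$, which is computable in polynomial time by Corollary~\ref{thm:relaxed_st_path}, and then iteratively attempt to enlarge it. Each successful enlargement strictly increases $|E(P)|$, so after at most $k$ iterations we either output an $(s,t)$-path of length at least $\delta(G-B)+k$, or reach a situation where no local enlargement is possible, from which we extract the decomposition. Throughout, $s,t \in B$ sit naturally as endpoints of $P_1$ and $P_2$, so they are not an obstruction to the budget count below.

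The enlargement moves I would consider are of two flavors. First, the classical Erd{\H o}s--Gallai short detour: for an off-path vertex $v$ with two neighbors $u_1, u_2 \in V(P)$, replace the $(u_1,u_2)$-arc of $P$ by $u_1 v u_2$ when that is longer. Second, a ``component bypass'' move: given a connected component $H$ of $G - V(P)$, invoke Menger's theorem to obtain two vertex-disjoint paths from $V(H)$ to $V(P)$ ending at $u_1, u_2 \in V(P)$, then use Lemma~\ref{lemma:path_by_large_degree} inside $H$ (or Theorem~\ref{thm:circum} if $H$ is 2-connected) to find a long path inside $H$ between the corresponding entry points, and splice this detour into $P$ in place of the $(u_1,u_2)$-arc. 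Whenever any such swap yields a strictly longer $(s,t)$-path, it is applied. A single iteration can be implemented in polynomial time since each candidate swap is decided by a flow/shortest-path computation.

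Assume now that no enlargement is available. I would choose $P_1$ and $P_2$ as the shortest prefix and suffix of $P$ whose vertex sets jointly contain $B \cap V(P)$ together with all attachment points on $V(P)$ of the refined off-path components of $G' - V(P)$, grouped naturally into ``$s$-side'' and ``$t$-side.'' A budget count gives $|V(P_1)|+|V(P_2)| \leq 2|B|+\Oh(k)$, and the hypothesis $4k+2|B|+4 \leq \delta(G-B)$ together with $|E(P)| \geq \delta(G-B)$ then yields $|E(P')| \geq \delta(G-B)$. The main obstacle is classifying each refined component $H'$ into types R1, R2, R3: for a 2-connected $H'$ with two independent attachment edges reaching $V(P_1)$, Theorem~\ref{thm:circum} applied inside $H'$ would produce a long bypass, contradicting non-enlargeability, which forces the matching condition of R1; for a non-2-connected $H'$, the $B$-refinement guarantees that every leaf-block contains an inner vertex outside $B$, hence of degree at least $\delta(G-B)$ inside the block, so Lemma~\ref{lemma:path_by_large_degree} applied inside a leaf-block forbids its inner vertices from having neighbors on the opposite side of $P$, yielding R2 and R3. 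Pushing this case analysis through cleanly --- while verifying that the prefix/suffix budget can simultaneously absorb every attachment point and every vertex of $B \cap V(P)$ --- is the most delicate step; the assumption $\delta(G-B) \geq 8$ is exactly what makes the Dirac-type bounds inside leaf-blocks and 2-connected refined components non-vacuous.
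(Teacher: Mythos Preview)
Your architecture matches the paper's: start from a long $(s,t)$-path via Corollary~\ref{thm:relaxed_st_path}, repeatedly try local detours through off-path components, and extract $P_1,P_2$ when stuck. But the phrase ``grouped naturally into $s$-side and $t$-side'' hides exactly the step that carries the proof, and your budget $|V(P_1)|+|V(P_2)|\le 2|B|+\Oh(k)$ is not justified as written. The paper establishes this \emph{before} defining $P_1,P_2$: for each refined off-path component $H'$ it first shows that a matching of size three between $V(H')$ and $V(P)$ would allow an enlargement (Claim~\ref{claim:case2-connected}), so the matching has size exactly two; and since any two vertices of $H'$ are joined by a path of length at least $\delta(G-B)-2$ inside $H'$ (Claim~\ref{claim:eg_path_bic_degree} in the $2$-connected case, an analogous leaf-block bound otherwise), the two attachment points on $P$ must be at distance at least $\delta(G-B)$ in $P$, hence one lies among the first $|V(P)|-\delta(G-B)+2\le k+2$ vertices and one among the last. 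Only after this localisation is $P_1$ defined as the minimal prefix covering the $s$-side attachments, and likewise $P_2$. Your proposal inverts this order and so leaves the localisation unproved. (Also, the decomposition does not require $B\cap V(P)\subseteq V(P_1)\cup V(P_2)$; your inclusion of $B\cap V(P)$ is unnecessary and complicates the count.)

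Second, even with the correct budget you only obtain $|E(P')|\ge\delta(G-B)-k$ from $|E(P)|<\delta(G-B)+k$, not $|E(P')|\ge\delta(G-B)$ as the decomposition demands. The paper closes this gap with a dedicated rerouting argument (Claim~\ref{claim:distP1}): when $P'$ is too short, every inner vertex of $P'$ outside $B$ has at least three neighbours in $V(P_1)\cup V(P_2)$, and one can jump from $P_1$ to a vertex near the middle of $P'$, walk back along $P'$ to $s'$, and then exit through an off-path component to reach $t$, producing an $(s,t)$-path of length at least $\tfrac32\delta(G-B)-k-|B|+1>\delta(G-B)+k$. Finally, the component of $G'-V(P_1\cup P_2)$ that contains the interior of $P'$ must itself be shown to be of type~\ref{enum:tunnel_path_bic} (Claims~\ref{claim:p'_no_neighbours} and~\ref{claim:p'_is_bic}); your proposal does not address this component at all. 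A minor point: Theorem~\ref{thm:circum} gives a long cycle, not a path between prescribed endpoints; inside $H'$ you need Corollary~\ref{thm:relaxed_st_path} or Lemma~\ref{lemma:path_by_large_degree}.
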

\begin{proof}
	By Corollary~\ref{thm:relaxed_st_path}, an $(s,t)$-path $P$  of length at least $\delta(G-B)$ can be  found in polynomial time.
	If the length of  $P$ is at least $\delta(G-B)+k$, we output it and stop. Otherwise, we try to make $P$ longer by replacing some of its parts with paths in $G- V(P)$.

	We first contract some edges of $G$ in a way similar to the definition of \bananadec{s}. For each connected component $H$ of $G-V(P)$ such that $V(H)$ is not in $B$, we perform $B$-refinement of $H$. That is, while $H$ is not 2-connected and has a leaf-block with all inner vertices from $B$, we contract all edges of this leaf-block. We denote the resulting graph by $G'$. Note that $G'$ still contains $P$ as a subgraph and that $\delta(G'-B)\ge\delta(G-B)$.
	If we find an $(s,t)$-path that is longer than $P$ in $G'$, this path can be easily transformed into a path of the same or greater length in $G$.
	Moreover, if we find paths $P_1$ and  $P_2$ that induce an \bananadec for $P$ in $G'$, then $P_1, P_2$ induce an \bananadec for $P$ in graph $G$ as well.
	Thus, from now on, we proceed  with the graph $G'$.
	
	We start with the trivial case. If  $V(G')\setminus P\subseteq B$,   then $V(P)\cup B=V(G)$.
	Hence, the algorithm just outputs $P$ and stops.
	
From now on we assume that 	
	  $(V(G')\setminus B) \setminus V(P)\neq\emptyset$. 
 Let $H'$ be a connected component  in $G'-V(P)$ that contains at least one vertex in $V(G')\setminus B$.
	We  consider several cases. The first case is a trivial case when $H'-B$ contains at most two vertices.
	The second case corresponds to {\banana}s of type
	\ref{enum:tunnel_path_bic}, while the third case to {\banana}s of types  \ref{enum:tunnel_path_cut_left} and \ref{enum:tunnel_path_cut_right}.

	If we find out that $P$ can be enlarged, we replace $P$ with the longer path in $G$ and start trying to make it longer again.
	Throughout the proof and all its claims, we consider that $P$ cannot be made longer with the replacement operation.

	\medskip\noindent	
	\textbf{Case 1:} \emph{$H'-B$ contains at most two vertices.}
	In this case, each vertex in $V(H'- B)$ has at least $\delta(G'-B)-2$ neighbors in $P$.
	If the length of $P$ is less than $2\delta(G-B)-4\ge \delta(G-B)+4k$, then each vertex in $V(H'-B)$ has two consecutive vertices in $P$ as neighbors.
	Hence, any such vertex can be inserted in $P$ between such two neighbors, so the length of $P$ increases by one.
	
	\textbf{Conclusion of Case 1.}  Either $H'-B$ consists of at least three vertices, or the length of $P$ can be increased (in polynomial time).
	
	\medskip\noindent
	\textbf{Case 2:} \emph{$H'$ is $2$-connected.} We start with the following claim. 
	\begin{claim}\label{claim:case2-connected}
	If there is a matching of size at least three between $V(H')$ and $V(P)$ in $G'$, then the length of $P$ can be enlarged  in polynomial time.
	\end{claim} 
	\noindent \emph{Proof of Claim~\ref{claim:case2-connected}.}  
	As $\delta(G'-B)\ge \delta(G-B)$,
	$2\delta(G'-B)-2>\delta(G-B)+4k+2|B|>\delta(G-B)+k$.
	So we assume that the length of $P$ is at most $2\delta(G'-B)-1$. 
	Let $u_1v_1, u_2v_2, u_3v_3$ be a matching in $G'$ such that $u_1, u_2, u_3 \in V(H')$ and $v_1, v_2, v_3 \in V(P)$.
	
	If no vertex in $V(H'-B)$ has a neighbor in $P$, then  $\delta(H'-B)\ge \delta(G'-B)$.
	By Corollary~\ref{thm:relaxed_st_path}, there is a path of length at least $\delta(G'-B)$ between any pair of vertices in $H'$.
	Because the length of $P$ is at most $2\delta(G'-B)-1<2\delta(G'-B)+4$, at least for one pair $\{v_i,v_j\}$, $i\neq j$, 
	the distance  
	  between  $v_i $ and $v_j$  in $P$ is less than $\delta(G'-B)+2$. Then    we   replace the $(v_i,v_j)$-subpath in $P$ with the path $v_i u_i \leadsto u_j v_j$, where $u_i \leadsto u_j$ is a path between $u_i$ and $u_j$ in $H'$ of length at least
	$\delta(G'-B)$. The length of  $v_i u_i \leadsto u_j v_j$ is  at least $\delta(G'-B)+2$ and hence we can enlarge $P$. 
	
	Now we assume that there is at least one vertex $w\in V(H'-B)$ with a neighbor in $P$. We can assume that in the matching $u_1v_1, u_2v_2, u_3v_3$,    one of the vertices $u_i=w$. (If all $u_i\in B$, we just replace $u_1$ with $w$.) 
Vertex $w$ has at least $\max\{1,\delta(G'-B)-\delta(H'-B)\}$ neighbors in $P$. Let $S$ be the set of neighbors of $u_1, u_2, u_3$ in $P$, that is,  $S:=(N_{G'}(u_1)\cup N_{G'}(u_2) \cup N_{G'}(u_3))\cap V(P)$.
%
%
%
%
%
 Then the size of $S$ is at least $\max\{\delta(G'-B)-\delta(H'-B),3\}$. Let $s_1,s_2, \dots, s_{|S|}$, be the order of vertices from $S$ in the path $P$. If the length of one of the subpaths $s_i, s_{i+1}$, $i\in \{1, \dots, |S|-1\}$,  of $P$  is   $1$, we can enlarge $P$ by 
replacing  $s_i, s_{i+1}$ with an $(s_i, s_{i+1})$-path of length at least 2  going through $V(H')$.
	Moreover, at least two of these subpaths go between neighbors of $u_i$ and $u_j$ for distinct $i$ and $j$.
If one of these paths, say  between $s_\ell$ and   $s_{\ell+1}$,  is of  length  less than  $\delta(H'-B)+2$, we 
can increase $P$ by  replacing it with  path $s_\ell  u_i \leadsto u_j s_{\ell+1}  $, where $u_i \leadsto u_j$ is a path between $u_i$ and $u_j$ in $H'$ of length at least $\delta(H'-B)$. This means that if we cannot enlarge $P$, then the length of $P$ is at least $2(|S|-3)+  2(\delta(H'-B)+2)\geq 2(\delta(G'-B)-\delta(H'-B)-3)+ 2(\delta(H'-B)+2)=2\delta(G'-B)-2>\delta(G-B)+k$.
%
%
\claimqed

By the claim and the fact that $G$ is $2$-connected, we can assume that  the maximum size of a matching between $V(H')$ and $V(P)$ in $G'$ is exactly two.

	\begin{claim}\label{claim:eg_path_bic_degree}
		There is a path of length at least $\delta(G'-B)-2$ between any pair of vertices in $H'$.
	\end{claim}
	\begin{claimproof}
		Let $h_1v_1, h_2v_2$ be the edges of the maximum matching between $V(H')$ and $V(P)$ in $G'$, where $h_1, h_2\in V(H')$, $v_1, v_2 \in V(P)$.
Note that no vertex in $V(H')\setminus\{h_1,h_2\}$ has neighbours in $V(P)\setminus \{v_1,v_2\}$.

If $h_1$ and $h_2$ have no neighbours other than $v_1$ and $v_2$ in $V(P)$, then, trivially, $N_G(V(H'))\cap V(P)=\{v_1,v_2\}$, so $\delta(H'-B)\ge \delta(G'-B)-2$.

Without loss of generality, we now assume that $h_1$ has a neighbour $v_3 \in V(P)\setminus \{v_1,v_2\}$.
Then no vertex in $V(H')\setminus \{h_1,h_2\}$ can have $v_1$ as a neighbour.
Analagously, if $h_2$ has a neighbour other than $v_1$ or $v_2$, no vertex in $V(H')\setminus\{h_1,h_2\}$ can have $v_2$ as a neighbour.
Hence, if $N_G(h_i)\not\subseteq \{v_1,v_2\}$ for both $i=1$ and $i=2$, then $\delta(H'-(B\cup \{h_1,h_2\}))\ge \delta(G'-B)-2$.

We now assume that $h_2$ has no neighbours other than $v_1$ and $v_2$ in $V(P)$.
If $h_2$ is adjacent to $v_1$, then no vertex in $V(H')\setminus \{h_1, h_2\}$ can be adjacent to $v_2$, as we would obtain a matching $h_1v_3$, $h_2v_1$, $h_3v_2$ of size at least three.
Hence, if $h_2v_1 \in E(G')$, then $\delta(H'-(B\cup\{h_1,h_2\}))\ge \delta(G'-B)-2$.
If $h_2$ is not adjacent to $v_1$, then all vertices in $V(H')\setminus\{h_1\}$ only can have $v_2$ as a neighbour, so $\delta(H'-(B\cup\{h_1\}))\ge \delta(G'-(B\cup\{h_1\}))-1\ge \delta(G'-B)-2$.

It is left to apply \Cref{thm:relaxed_st_path} to all of the cases.		
	\end{claimproof}

	Hence, if there is a matching between $V(H')$ and two vertices on $P$ that are closer than $\delta(G'- B)$ to each other and $P$ can be made longer.
	Suppose that we have a matching between $V(H')$ and $V(P)$ with endpoints $h_1,h_2 \in V(H')$ and $v_1, v_2 \in V(P)$, where $v_1$ is closer to $s$ on $P$ than $v_2$.
	Let $a_1$ denote the distance from $s$ to $v_1$ on $P$ and $a_2$ denote the distance from $v_2$ to $s$ on $P$.
	Then, if $|V(P)|+1-(a_1+a_2)<\delta(G'-B)$, $P$ can be enlarged using the long $(h_1,h_2)$-path of length at least $\delta(G'-B)-2$ in $H'$.
	Otherwise, $a_1+a_2\le |V(P)|+1-\delta(G'-B)$.
	In particular, $a_1,a_2 \le |V(P)|+1-\delta(G'-B)$.
	Thus, $v_1$ is within the first $|V(P)|+2-\delta(G'-B)$ vertices of $P$ and $v_2$ is within the last $|V(P)|+2-\delta(G'-B)$ vertices of $P$.

	\textbf{Conclusion of Case 2.} If $H'$ is $2$-connected, then either $P$ can be made longer or the following holds. 
	The size of the maximum matching between $H'$ and $P$ is exactly 2. Moreover, for any maximum matching between $H'$ and $P$, the endpoint of one edge of the matching is one of the first  $|V(P)|-\delta(G'- B)+2$ vertices of $P$ and one is within the last  $|V(P)|-\delta(G'- B)+2$  vertices of $P$. 
	
	\medskip\noindent
	\textbf{Case 3:} \emph{$H'$ is not $2$-connected.} Let $L$ be  a leaf-block $L$ of $H'$ and let $c$ be 
 the cut-vertex of the leaf-block  $L$.
	Note that $V(L)\setminus B\setminus \{c\}$ is not empty and $\delta(L- (B\cup \{c\}))\ge \delta(H'-B)-1$.
	
	Assume first that there is a matching of size three between $V(L)$ and $V(P)$ in $G'$.
	Similar to Case 2, then there is a vertex in $V(L-(B\cup \{c\}))$ with at least $\delta(G'-B)-\delta(L-(B\cup \{c\}))$ neighbors in $V(P)$. In this case, since the length of $P$ is at most  $\delta(G-B)+k< 2(\delta(G'-B)-\delta(L-(B\cup \{c\}))-1)+2\delta(L-(B\cup\{c\}))$, we can reroute a part of $P$ through $H'$ and thus make it longer. 
%

Now we may assume that the maximum matching size between $V(L)$ and $V(P)$ in $G'$ is at most two.
Again, similar to  Case 2 and \Cref{claim:eg_path_bic_degree} we derive  that $\delta(L- (B'\cup\{c\}))\ge \delta(G'-(B\cup\{c\}))-2\ge \delta(G'-B)-3$ for some $B'\supseteq B$.
	Hence, there is a path of length at least $\delta(G'-B)-3$ between any pair of vertices in $L$ by \Cref{thm:relaxed_st_path}.
	It follows that there is a path of length at least $\delta(G'-B)-3$ between an inner vertex of a leaf-block of $H'$ and any other vertex in $H'$.
	
	For each leaf-block in $H'$, there is at least one inner vertex that has at least one neighbor in $P$, otherwise $G'$ is not $2$-connected.
	
Suppose first that there are two inner vertices of two distinct leaf-blocks in $H'$ that have two distinct neighbors in $V(P)$.
There is always path between these two inner vertices of length at least $2(\delta(G'-B)-3)$: we can find two paths in each leaf-block starting in the cut vertex and ending in an inner vertex of length at least $\delta(G'-B)-3$.
	Since the length of $P$ is at most $\delta(G-B)+k\leq 2(\delta(G'- B)-3)+2$, the subpath of $P$ between their neighbours is shorter than if than the path between them through $H'$.
	So we can enlarge $P$ by using this path.
	
	Note that if there are at least two vertices $V(P)$ having at least one inner leaf-block vertex of $H'$ as a neighbour, then we can always pick two inner vertices as described in the previous paragraph.
	Hence, if $P$ cannot be made longer, there is exactly one vertex $v \in V(P)$ that is connected to inner vertices of the leaf-blocks of $H'$.
	Then, in fact, $\delta(L-(B\cup\{c\}))\ge \delta(G'-(B\cup\{c\}))-1\ge \delta(G'-B)-2$ for each leaf-block $L$ of $H$ with cut vertex $c$.
	The following claim holds.
	
	\begin{claim}\label{claim:eg_path_sep_degree}
		There is a path of length at least $\delta(G'-B)-2$ between any inner vertex of a leaf-block and any other vertex of $H'$.	
	\end{claim}

	Since $G$ is $2$-connected, there is at least one other vertex $u \in V(P)$ that has neighbors in $V(H')$.
	If the distance between $u$ and $v$ on $P$ is less than $(\delta(G'- B)-2)+2$, then $P$ can be made longer.
	As there is a path of length at least $\delta(G'- B)-2$ between their neighbours in $H$.
	Hence, $H'$ can only have neighbors among the first and among the last $|V(P)|+2-\delta(G'- B)$ vertices of $P$ analogously to Case 2.

	\textbf{Conclusion of Case 3.}
	If $H'$ contains at least three vertices and is not 2-connected, then 
	 either $P$ can be made longer, or the following properties hold. 
	All inner vertices of its leaf-blocks that have neighbors in $V(P)$ have exactly one neighbor on $P$, and this neighbour is the same for all inner vertices.
	This neighbour vertex is within the first (or the last) $|V(P)|+3-\delta(G'- B)$ vertices of $P$.
	All other neighbours of $V(H')$ on $P$ are, oppositely, within the last (or the first) $|V(P)|+2-\delta(G'- B)$ vertices of $P$.
	
	\medskip\noindent
	\textbf{Constructing \bananadec.}
We use the structural properties of $G'$ to construct an \bananadec in graph $G'$, and hence in $G$.
	We start from an $(s,t)$-path $P$ in $G'$ and try to increase its length by applying one of the algorithms from Cases 1--3. Assume that we cannot increase the length of $P$ anymore. Then we have a path $P$ and every connected component 
$H'$ of $G'-V(P)$ should satisfy the properties summarized in the conclusion of 	
 Case~2 or Case~3. We show that in this case we either could construct in polynomial time a new path $P$ of length at  least $\delta(G- B)+k$, or to construct an \bananadec.

	Then each $H'$ has neighbors within the first $k+2$ vertices of $P$ and within the last $k+2$ vertices of $P$.
	Denote by $P_1$ the shortest subpath of $P$ starting in $s$ that contains all starting neighbors (that is, neighbours that are closer to $s$ than to $t$ in $P$) among all possible components $H'$.
	Analogously, denote by $P_2$ the shortest subpath of  $P$ ending in $t$ that contains all ending neighbors (that is, neighbours that are closer to $t$ than to $s$ in $P$) among all possible $H'$. Thus $P=P_1 P' P_2$.
			
	\begin{claim}\label{claim:p'_long}
		The length of $P'$ is at least $\delta(G-B)-k$.
	\end{claim}
	\begin{claimproof}
	We know that $|V(P_1)|,|V(P_2)|\le |V(P)|-\delta(G'- B)+2$.	
	The length of each of $P_1$ and $P_2$ is at most $|V(P)|-\delta(G'-B)+1$, so the length of $P'$ is at least \[(|V(P)|+1)-2(|V(P)|-\delta(G'-B)+1)=2\delta(G'-B)-|V(P)|>\delta(G-B)-k.\]	
		\end{claimproof}
		
Denote by $s'$ and $t'$ the endpoints of $P'$, so $P_1$ and $P_2$ are the $(s,s')$-subpath and the $(t',t)$-subpath of $P$ respectively.
			The following claim is rather useful.
	
	\begin{claim}\label{claim:p'_no_neighbours}
		There is a connected component $H'$ in $G'-(V(P_1)\cup V(P_2))$ with $V(H')\setminus B=V(P'-\{s',t'\})\setminus B$.
	\end{claim}
	\begin{claimproof}
		We actually need to show that each vertex $v \in V(P'-\{s',t'\})$ can only have neighbours in $V(P)$ or $B$.
		Suppose that there is $v\in V(P'-\{s',t'\})$ with a neighbour $u \in V(G')\setminus V(P)\setminus B$.
		Then $u$ is in some connected component $H''$ of $G'-V(P)$ with $|V(H''-B)|\ge |\{u\}|>0$.
		Note that then $H''$ has a vertex with a neighbour in $P$ that is not in $V(P_1)\cup V(P_2)$.
		This contradicts the choice of $P_1$ and $P_2$.
	\end{claimproof}

We now show that the length of $P'$ can be actually assumed to be at least $\delta(G-B)$, as agrees with the definition of \bananadec.
This strengthens \Cref{claim:p'_long}.

\begin{claim}\label{claim:distP1}
If the distance  between $P_1$ and $P_2$ in $P$ is less than $\delta(G- B)$, then $G'$ contains an $(s,t)$-path of length at least $\delta(G-B)+k$. Moreover, this path can be computed in polynomial time.  
\end{claim}	
\begin{claimproof}
	Suppose that the distance between $P_1$ and $P_2$ in $P$ is less than $\delta(G- B)$.
	Equivalently, $|V(P'-\{s',t'\})|<\delta(G- B)-1$.
	Vertices in $P'-\{s',t'\}$ are adjacent in $G'$ only to vertices in $B$ and vertices from $V(P)$ by \Cref{claim:p'_no_neighbours}.
	Hence, each vertex in $V((P'-\{s',t'\})- B)$ has at least three neighbors in $V(P_1)\cup V(P_2)$, as it has at most $|V(P'-\{s',t'\})|-1\le\delta(G- B)-3$ neighbors in $V(P'-\{s',t'\})$.
	
	Consider the first vertex in $P'$ that is not in $B$ and is at distance at least $\delta(G- B)/2$ from the start of $P'$.	Denote this vertex by $v$.
	Note that the length of the $(s',v)$-subpath of $P'$ is at most $\delta(G-B)/2+|B|$.
	By \Cref{claim:p'_long}, the distance from $v$ to the last vertex of $P'$, i.e.\ the length of the $(v,t')$-subpath of $P'$, is 
	at least $(|V(P')|-1)-(\delta(G- B)/2+|B|)\ge \delta(G-B)-k-\delta(G-B)/2-|B|=\delta(G-B)/2-k-|B|$ .
	Hence, the distance from $v$ to each of the endpoints of $P'$ is  at least $\delta(G-B)/2-k-|B|$.
Vertex $v$  has at least two neighbors  in $V(P_1)$ or in $V(P_2)$, as it has at least three neighbours in $V(P_1)\cup V(P_2)$.
	Without loss of generality, assume that it has two neighbors in $P_1$.
	
	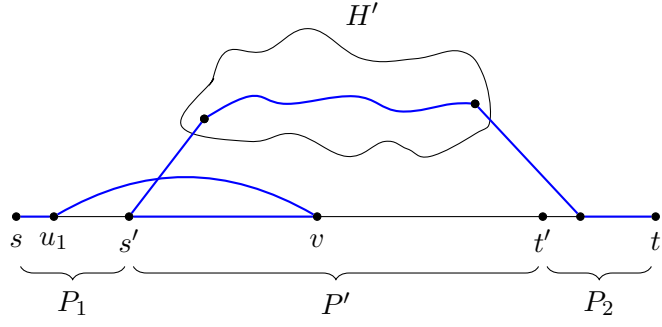
\begin{figure}
		\centering
		\usetikzlibrary{decorations.pathreplacing}
\ifdefined\STOC
\begin{tikzpicture}[scale=0.8]
\else
\begin{tikzpicture}
\fi
	\tikzstyle{vertex}=[circle,draw,fill,inner sep=1pt]
	\tikzstyle{inpath}=[blue, thick]
\node [vertex] (v1) at (-4.5,2.5) {};
\node [vertex] (v2) at (4,2.5) {};
\draw  (v1) edge (v2);
\node at (-4.5,2.18) {$s$};
\node at (4,2.18) {$t$};
\node [vertex] (v5) at (-3,2.5) {};
\node [vertex] (t') at (2.5,2.5) {};
\node at (-3,2.18) {$s'$};
\node at (2.5,2.18) {$t'$};
\node [vertex] (v4) at (-0.5,2.5) {};
\node [vertex] (v3) at (-4,2.5) {};
\node at (-4,2.18) {$u_1$};
\node at (-0.5,2.18) {$v$};
\draw[bend left, inpath]  (v3) edge (v4);
\draw  plot[smooth cycle, tension=.7] coordinates {(-1.9,4.3) (-1.7,4.7) (-1.2,4.7) (-0.6,5) (0.1,4.6) (0.8,4.6) (1.4,4.7) (1.7,4.4) (1.8,3.8) (1.6,3.6) (1,3.3) (0.7,3.4) (0.4,3.5) (-0.1,3.3) (-0.8,3.6) (-1.4,3.5) (-2.3,3.7) (-1.9,4.3)};
\node [vertex] (v6) at (-2,3.8) {};
\draw[inpath]  (v5) edge (v6);
\node [vertex] (v7) at (1.6,4) {};
\node [vertex] (v8) at (3,2.5) {};
\node at (3,2.18) {};
\draw[inpath]  (v7) edge (v8);
\node at (0.1,5.2) {$H'$};
\draw [inpath] plot[smooth, tension=.7] coordinates {(v6) (-1.4,4.1) (-0.9,4) (0,4.1) (0.6,3.9) (1.3,4) (v7)};
\draw[inpath]  (v1) edge (v3);
\draw[inpath]  (v5) edge (v4);
\draw[inpath]  (v8) edge (v2);
\node [vertex] at (1.6,4) {};
\node [vertex] at (-2,3.8) {};
\draw [decorate,decoration={brace,amplitude=5pt,mirror,raise=4ex}]
  (v1) -- (v5) node[midway,yshift=-3em]{$P_1$};
  \draw [decorate,decoration={brace,amplitude=5pt,mirror,raise=4ex}]
  (v5) -- (t') node[midway,yshift=-3em]{$P'$};
  \draw [decorate,decoration={brace,amplitude=5pt,mirror,raise=4ex}]
  (t') -- (v2) node[midway,yshift=-3em]{$P_2$};
\end{tikzpicture}
		\caption{Construction of a long path in $G'$ when $P'$ is shorter than $\delta(G-B)$.}\label{fig:eg_path_p'_short}
	\end{figure}

	One of its neighbors, say $u_1$, is different from $s'$.
	Construct an $(s,t)$-path as follows. Start from $s$, move to $u_1$ along $P_1$, then from $u_1$ to $v$, then follow the path $P'$ backwards from $v$ to $s'$. By the construction of 
	$P_1$,   there is at least one component $H'$ in $G'- V(P)$ that is connected with $s'$. Thus from 
	 $s'$ we enter $H'$, and follow a path of length at least $\delta(G'-B)-2$ in $H'$ (such path always exists by either \Cref{claim:eg_path_bic_degree} or \Cref{claim:eg_path_sep_degree}) to reach some vertex in $P_2$.
	  We complete the construction of the path by 
	 following along $P_2$  to $t$ (see \Cref{fig:eg_path_p'_short}).
	The length of the constructed path is at least $$\underbrace{1}_{s\leadsto u_1\leadsto v}+\underbrace{\delta(G-B)/2-k-|B|}_{v\leadsto s'}+\underbrace{1+(\delta(G-B)-2)+1)}_\text{$s'\leadsto t$ through $H'$},$$ which equals $\frac{3}{2}\delta(G-B)-k-|B|+1>\delta(G-B)+k$.
\end{claimproof}

By the claim, if the length of $P'$ is less than $\delta(G- B)$, then we find in polynomial time the desired path and stop. 	
Otherwise,  the distance between $P_1$ and $P_2$ in $P$ is at least $\delta(G- B)$, hence $|V(P_1)|+|V(P_2)|\le k+1$ as $|V(P)|\le \delta(G-B)+k$.

	\begin{claim}\label{claim:p'_is_bic}
		The connected component $H'$ from \Cref{claim:p'_no_neighbours} is of type \ref{enum:tunnel_path_bic} in $G'-(V(P_1) \cup V(P_2))$, or a path of length at least $\delta(G-B)+k$ in $G'$ can be found in polynomial time.
	\end{claim}
	\begin{claimproof}
		We first show that $H'$ is $2$-connected after $B$-refinements are applied to it.
		Denote the component $H'$ with applied $B$-refinements by $H''$ and assume that $G'=\gbref{B}{H'}$.
		If $H''$ is not $2$-connected, then it contains at least two leaf-blocks, as $|V(P'-\{s',t'\})|\ge \delta(G-B)-1>2$.
		Since $\delta(H'-B)\ge \delta(G-(B\cup(V(P_1)\cup V(P_2))))\ge \delta(G-B)-k-1$, each leaf-block of $H''$ should contain at least $\delta(G-B)-k$ vertices outside $B$.
		Hence, $H'-B$ consists of at least $2(\delta(G-B)-k)-1\ge \delta(G-B)+k\ge |V(P)|$ vertices.
		This is not possible since $V(H'-B)\subseteq V(P'-\{s',t'\})$ and $|V(P'-\{s',t'\})|<|V(P)|$.
		
		It is left to show that the matching conditions of type \ref{enum:tunnel_path_bic} are also satisfied.
		Assume that these conditions do not hold.
		Without loss of generality, assume that the maximum matching size between $V(H'')$ and $V(P_1)$ is at least two in $\gbref{B}{H'}$.
		Then there are two edges $v_1 h_1, v_2 h_2 \in E(G')$ with $v_1,v_2 \in V(P_1)$ and $h_1, h_2 \in V(H'')$.
		Without loss of generality, we assume that $v_1$ is closer to $s$ on $P$ than $h_2$.
		In particular, $v_1 \neq s'$.
		As $H''$ is $2$-connected, then by \Cref{thm:relaxed_st_path},  it contains a path of length at least $\delta(H''-B)=\delta(H'-B)\ge \delta(G-B)-k-1$ between $h_1$ and $h_2$.
		As discussed above in the proof of \Cref{claim:p'_long} (see \Cref{fig:eg_path_p'_short}), there is a path connecting $s'$ with some vertex in $P_2$ going through a component $H$ in $G-V(P)$. 
		Hence, there is an $(s',t)$-path of length at least $\delta(G-B)$ that does not have common vertices with $H''$. Then we concatenate the following paths. 
		Take the $(s,v_1)$-subpath of $P_1$, proceed further with the edge $v_1h_1$ and the $(h_1,h_2)$-path inside $H''$, then with the edge $h_2v_2$ and the $(v_2,s')$-subpath of $P_1$.
		Finish with the $(s',t)$-path.
		The obtained path is an $(s,t)$-path of length at least $$\underbrace{1}_{s\leadsto h_1}+\underbrace{\delta(G-B)-k-1}_{h_1\leadsto h_2}+\underbrace{1}_{h_2\leadsto s'}+\underbrace{\delta(G-B)}_{s'\leadsto t},$$
		which equals to $2\delta(G-B)-k+1>\delta(G-B)+k$.
		Thus, if $H''$ is not of type \ref{enum:tunnel_path_bic}, then we can find a long path in $G'$ in polynomial time.
	\end{claimproof}
	
	Note that every connected component in $G'-V(P_1 \cup  P_2)$ corresponds either to Case 2,  or to Case 3,  or to \Cref{claim:p'_is_bic}, or is fully contained in $B$.
	A connected component from Case 2 or \Cref{claim:p'_is_bic} corresponds to \ref{enum:tunnel_path_bic}-type connected components of {\bananadec}s.
	The connected components from Case 3 correspond to \ref{enum:tunnel_path_cut_left}-type and \ref{enum:tunnel_path_cut_right}-type connected components depending on whether the vertex $v$ is from $V(P_1)$ or from $V(P_2)$.
	Thus, $P_1$ and $P_2$ induce an \bananadec  for $P$ and $B$ in $G'$, and hence in $G$.
%
\end{proof}

The following proposition about long paths inside 
\banana{s} is clear from the proof of \Cref{lemma:st_path_or_tunnel}.

\begin{proposition}
	For any \banana of any \bananadec in $G$ for $B\subseteq V(G)$, there is a path of length at least $\delta(G-B)-2$ between any pair of vertices of this \banana.
\end{proposition}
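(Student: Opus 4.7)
The plan is to observe that this proposition is a direct consequence of the bookkeeping already carried out in the proof of Lemma~\ref{lemma:st_path_or_tunnel}. I would split the argument into the two kinds of \banana{s} allowed by the definition, and in each case invoke Corollary~\ref{thm:relaxed_st_path} with a carefully chosen enlargement $B' \supseteq B$ of the forbidden set, using the minimum-degree estimate already established in the Lemma.

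First, suppose the \banana $H'$ is a $2$-connected component of type \ref{enum:tunnel_path_bic}. The analysis done in Claim~\ref{claim:eg_path_bic_degree} shows that there exists $B' \supseteq B$ with $|B' \setminus B| \le 2$ (the at most two vertices of $H'$ saturated by the matching to $V(P_1) \cup V(P_2)$) such that $\delta(H' - B') \ge \delta(G - B) - 2$. Since $H'$ is $2$-connected, Corollary~\ref{thm:relaxed_st_path} applied to $H'$ with forbidden set $B'$ gives, for any two vertices of $H'$, a path between them of length at least $\delta(H' - B') \ge \delta(G - B) - 2$.

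Second, suppose the \banana $L$ is a leaf-block of a component $H'$ of type \ref{enum:tunnel_path_cut_left} or \ref{enum:tunnel_path_cut_right}, with cut-vertex $c$. The argument preceding Claim~\ref{claim:eg_path_sep_degree} established that $\delta(L - (B' \cup \{c\})) \ge \delta(G - B) - 2$ for some $B' \supseteq B$: inner vertices of $L$ lose at most one unit of degree to their unique neighbour $v$ on $P$ and at most one unit from having $c$ removed. Because $L$ is a block, it is $2$-connected, so Corollary~\ref{thm:relaxed_st_path} applied to $L$ with forbidden set $B' \cup \{c\}$ produces a path of length at least $\delta(G - B) - 2$ between any pair of vertices of $L$.

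The only nontrivial aspect of the proof is verifying that the two types of \banana{s} together exhaust the definition and that in each case the degree estimates from Lemma~\ref{lemma:st_path_or_tunnel} can indeed be imported verbatim; this is routine once the decomposition is unpacked. I do not expect any genuine obstacle, since all the heavy lifting (the matching bounds in the $2$-connected case and the single-attachment-vertex observation in the separable case) was already carried out inside the proof of Lemma~\ref{lemma:st_path_or_tunnel}; the proposition is effectively a repackaging of those two claims via Corollary~\ref{thm:relaxed_st_path}.
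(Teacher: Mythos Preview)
Your proposal is correct and follows exactly the route the paper intends: the paper itself gives no separate proof, merely stating that the proposition ``is clear from the proof of Lemma~\ref{lemma:st_path_or_tunnel}'', and you have correctly identified Claim~\ref{claim:eg_path_bic_degree} (for \ref{enum:tunnel_path_bic}-type components) and the degree bound just before Claim~\ref{claim:eg_path_sep_degree} (for leaf-blocks of \ref{enum:tunnel_path_cut_left}/\ref{enum:tunnel_path_cut_right}-type components) as the two places where the required estimate $\delta(M-B')\ge\delta(G-B)-2$ is established, after which Corollary~\ref{thm:relaxed_st_path} finishes the job. The only cosmetic point is that in the leaf-block case you can take $B'=B$ (the extra lost degree comes from the single vertex in $V(P_1)$ or $V(P_2)$ and from $c$, not from enlarging $B$), and you should note the trivial edge case where $L$ has only two vertices, which forces $\delta(G-B)\le 2$.
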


We start to establish the properties of \banana{s} that will be exploited by the algorithm. To state the first property, we need the following definition. 
\begin{definition} We say that a path $P$ \emph{enters} a subgraph $H$, if at least one edge of $H$ is also an edge of $P$. 
\end{definition}
Informally, the  property is the following. Consider an \banana $M$ for some  \bananadec and consider also an $(s,t)$-path $P'$.  Path $P'$ can hit some vertices of $M$. However, if $P'$ enters $M$, then  all vertices of $H$ hit by $P$, that is, all  common vertices of $P$ and $M$,  appear  consecutively in $P'$.  

\begin{lemma}\label{lemma:st_path_banana_consecutive}
	Let $G$ be a $2$-connected graph, $B\subseteq V(G)$, $P$ be an $(s,t)$-path in $G$. Let paths  $P_1, P_2$ induce an \bananadec  for $P$ and $B$ in $G$. Let also $G'$ be the graph 
obtained 	after $B$-refinements   of connected components of 
	$G- V(P_1  \cup P_2)$, 
	 and let  $M$ be an \banana. Then for every $(s,t)$-path $P'$ in $G'$, if $P' $ enters $M$, then all vertices of $M\cap V(P')$   appear consecutively in $P'$.
\end{lemma}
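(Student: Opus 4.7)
The plan is to argue by contradiction. Suppose $P'$ enters $M$ (uses at least one internal edge of $M$) but $V(P')\cap V(M)$ is not consecutive on $P'$. Then this intersection splits into at least two maximal runs along $P'$; pick two consecutive runs $R_1,R_2$. Each run contributes two \emph{transition edges} of $P'$ joining $V(M)$ to $V(G')\setminus V(M)$ (an entry and an exit), so together they account for four transitions. I will show, using the structural restrictions on a \banana, that four transitions are impossible unless both $R_1$ and $R_2$ are single-vertex runs, which would mean $P'$ uses no internal edge of $M$, contradicting the hypothesis.

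For a type-\ref{enum:tunnel_path_bic} \banana, partition the four transitions into $T_1$ and $T_2$ according to whether the external endpoint lies in $V(P_1)$ or $V(P_2)$. Because the maximum matching between $V(M)$ and $V(P_i)$ has size one, K\"onig's theorem yields a single covering vertex $\xi_i$ that lies either in $V(M)$ or in $V(P_i)$; every edge of $T_i$ is incident to $\xi_i$, and its path-degree in $P'$ is at most two, giving $|T_i|\le 2$ and $|T|\le 4$. For $|T|=4$, each $T_i$ must realize exactly the two path-edges at $\xi_i$. If $\xi_i\in V(M)$, then the two path-edges at $\xi_i$ both leave $V(M)$, so $\xi_i$ is a single-vertex run; moreover the two $T_i$-transitions must come from the same run, else $\xi_i$ would lie in both $R_1$ and $R_2$. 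If $\xi_i\in V(P_i)$, the two $T_i$-edges glue two distinct $V(M)$-vertices to $\xi_i$, forming a bridge $u-\xi_i-u'$ that separates two consecutive runs; placing both transitions inside a single run would require visiting $\xi_i$ twice. Enumerating the four combinations $(\xi_1,\xi_2)\in (V(M)\cup V(P_1))\times(V(M)\cup V(P_2))$, one checks that the only way to avoid visiting some vertex twice, lying in two runs at once, or producing an $V(M)$-to-$V(P_i)$ matching of size two is to have $\xi_1\in V(M)$, $\xi_2\in V(M)$, with $R_1=\{\xi_1\}$ and $R_2=\{\xi_2\}$ both single-vertex runs. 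Such a configuration uses no internal edge of $M$, contradicting the hypothesis that $P'$ enters $M$.

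For a type-\ref{enum:tunnel_path_cut_left} \banana (the type-\ref{enum:tunnel_path_cut_right} case is symmetric), let $c$ be the cut-vertex of $M$ in $H'$ and let $w$ be the unique vertex of $V(P_1)$ adjacent to $V(H')$. By the definition of this type of component, every inner vertex of $M$ has $w$ as its only neighbour outside $V(M)$, and $c$ is the only vertex of $M$ that can reach $V(H')\setminus V(M)$ or $V(P_2)$. Hence each transition edge is incident to $c$ or lies on an edge $vw$ with $v$ an inner vertex of $M$; the path-degrees of $c$ and of $w$ are both at most two, so $|T|\le 4$. Now, any multi-vertex run whose two $V(M)$-endpoints are both inner would route both of its transitions through $w$, placing $w$ on $P'$ immediately before the run and immediately after it, which is impossible since $P'$ is simple. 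Hence every multi-vertex run uses $c$ as one of its endpoints, and since $c$ appears on $P'$ at most once, at most one of $R_1,R_2$ can be multi-vertex. A single-vertex run at an inner vertex $v$ is ruled out because $v$'s only external neighbour is $w$, so $v$ cannot carry two external path-edges; and a single-vertex run at $c$ clashes with the multi-vertex run that already contains $c$. So two runs cannot coexist once $P'$ enters $M$.

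The main obstacle is the bookkeeping for type \ref{enum:tunnel_path_bic}: the K\"onig covers $\xi_1,\xi_2$ may lie independently in $V(M)$ or in $V(P_i)$, producing several subcases, and in each one must track where the four transition edges can sit on $P'$ relative to $R_1$ and $R_2$ without re-using a vertex or recreating a matching of size two. The types \ref{enum:tunnel_path_cut_left} and \ref{enum:tunnel_path_cut_right} are conceptually simpler, as $\{c,w\}$ acts as an explicit separator of size two between $V(M)\setminus\{c\}$ and the remainder of $G'$, reducing the argument to the fact that inner vertices of $M$ have a single external neighbour.
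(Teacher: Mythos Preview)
Your argument is correct and follows a genuinely different route from the paper's. The paper works by explicitly tracing the path $P'$: for type~\ref{enum:tunnel_path_bic} it locates three specific edges $v_1w_1,v_2w_2,v_3w_3$ between $V(M)$ and $V(P_1\cup P_2)$ (one coming from the $(s,v_1)$-prefix, one from the $(v_2,t)$-suffix, one from the middle segment), observes that the $w_i$ lie at three distinct positions on $P'$ and hence are distinct, and concludes that this matching of size three forces a matching of size two into one of the $P_i$. For types~\ref{enum:tunnel_path_cut_left}/\ref{enum:tunnel_path_cut_right} it argues directly with the cut-vertex $c$ and the single attachment vertex $w$, very much along the lines of your second half.

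Your approach replaces the explicit edge-hunting by a degree count: K\"onig's theorem turns the ``matching size one'' constraint into a single cover vertex $\xi_i$, and then the path-degree bound $\deg_{P'}(\xi_i)\le 2$ caps $|T_i|$. This is a cleaner abstraction of the same obstruction, and it makes the numerics (exactly two runs, $|T_1|=|T_2|=2$) fall out immediately rather than being discovered along the way. The price is the four-way case split on $(\xi_1,\xi_2)$, which you summarize with ``one checks''; each subcase really does terminate in either a repeated vertex on $P'$ or two single-vertex runs, but in a final write-up you should spell out at least the mixed case $\xi_1\in V(M),\ \xi_2\in V(P_2)$ and the case $\xi_1\in V(P_1),\ \xi_2\in V(P_2)$ explicitly, since they need slightly different observations (in the first, both $T_2$-transitions border the same run and force $\xi_2$ to appear on both sides of it; in the second, any assignment of $\{e_1,e_2,e_3,e_4\}$ to $T_1,T_2$ places one of the $\xi_i$ at two distinct positions). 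Your treatment of types~\ref{enum:tunnel_path_cut_left}/\ref{enum:tunnel_path_cut_right} via the separator $\{c,w\}$ is essentially the same as the paper's and is complete as stated.
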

\begin{proof}
Targeting towards a contradiction, assume that the statement of the lemma does not hold. 
Then there is an $(s,t)$-path $P'$ that contains at least one edge of $M$, but vertices of $M$ does not appear consecutively in $P'$.
	That is, there are  vertices $v_1, v_2 \in V(M)$, $v_1\neq v_2$,  such that $P'$ is of the form 
	$s, \dots, v_1, \dots, x,\dots, v_2, \dots, t$, where  $x\not \in V(M)$. No internal vertex of 
	the $(s,v_1)$-subpath   and the $(v_2,t)$-subpath of $P'$  belongs to $V(M)$.
	Moreover, the $(v_1, v_2)$-subpath of $P'$ contains at least one edge of $M$ and at least one edge outside of $M$.

	Let $G'$ be the graph obtained from  $G$ after applying all possible   $B$-refinements. 
	 According to the definition of \banana, $M$ can be one of the following three types. Either it is a connected component of 
	 $G'- V(P_1  \cup P_2)$ (this corresponds to type \ref{enum:tunnel_path_bic}), or it is a leaf-block of a  connected component of 
	 $G'- V(P_1  \cup P_2)$ (this corresponds to types \ref{enum:tunnel_path_cut_left} and \ref{enum:tunnel_path_cut_right}).
	Therefore, we consider three cases. 
	
	\medskip\noindent
	\textbf{Case 1.}
	Suppose that $M$ is an \banana of type \ref{enum:tunnel_path_bic}. That is, $M$ is a connected component of $G'- V(P_1  \cup P_2)$  and also $M$ is 2-connected. 
	Consider the $(s,v_1)$-subpath of $P'$ in $G$. Since $s\in V(P_1)$, there exists vertex 
	  $w_1$ that is  the last vertex on this subpath that is from $V(P_1\cup P_2)$.
	Then the subpath is of form $s\leadsto w_1 \leadsto v_1$, where all inner vertices of the subpath $w_1\leadsto v_1$ are from  $V(H)\setminus  V(M)$, where $H$ is the connected component $M$ before the $B$-refinements.  But after the $B$-refinement of $H$ in $G$, all inner edges of this path are contracted. Then in $G'$ this $(w_1,v_1)$-subpath consists of just single edge $w_1 v_1$.

	Analogously, consider the $(v_2, t)$-subpath of $P'$ and let $w_2$ be the first vertex from $V(P_1)\cup V(P_2)$ on this subpath.
	The $(v_2, w_2)$-subpath goes only through vertices in $V(H)\setminus  V(M)$ in $G$ and turns into  the edge between $v_2$ and $w_2$ in $G'$.
	
	The last subpath to consider is the $(v_1,v_2)$-subpath of $P'$.
	It goes between vertices in $M$ and contains at least one edge outside $M$; hence it should contain at least one vertex in $V(P_1)\cup V(P_2)$.
	Let $u$ be the first vertex on this subpath that is from $V(P_1)\cup V(P_2)$.
	Then either the $(v_1,u)$-subpath or the $(u,v_2)$-subpath contains an edge of $M$.

	First, suppose that the $(v_1,u)$-subpath contains an edge of $M$. 
	Denote by $v_3$ the last vertex from $V(M)$ on this subpath.
	Then $v_3\neq v_1$ and the $(v_3,u)$-subpath contains only vertices in $V(H)\setminus  V(M)$ as internal vertices.
	Hence, in this case there is an edge between $v_3$ and $w_3=u$ in $G'$.

	Now  for the case when the $(u,v_2)$-subpath contains an edge of $M$.
	Denote by $v_3$ the first vertex on this subpath that is from $M$.
	Then $v_3\neq v_2$ and the $(u,v_3)$-subpath does not contain vertices of $M$ as internal vertices.
	Denote by $w_3$ the last vertex in $V(P_1)\cup V(P_2)$ on this subpath.
	We obtain a path between $w_3$ and $v_3$ that goes only through $V(H)\setminus  V(M)$ in $G$, so there is an edge between $w_3$ and $v_3$ in $G'$.
	
	\medskip\noindent\textbf{Conclusion of Case 1.}
	If $M$ is an \banana corresponding to a connected component of type \ref{enum:tunnel_path_bic}, then there is a matching $v_1w_1, v_2w_2, v_3w_3$ of size three between $V(M)$ and $V(P_1)\cup V(P_2)$ in $G'$.
	Hence, there is a matching between $V(M)$ and $V(P_i)$ of size two for some $i\in\{1,2\}$.
	This contradicts to the corresponding condition  \ref{enum:tunnel_path_bic} of {\bananadec}s; hence Case~1 cannot occur. 
	
\medskip\noindent\textbf{Case 2.}
	Now suppose that there is a type \ref{enum:tunnel_path_cut_left} connected component $H$ in $G-(V(P_1)\cup V(P_2))$ such that $M$ is a leaf-block of the component obtained after some edges of $H$ were contracted in the process of $B$-refinement $\gbref{B}{H}$. 
	Denote 
	the cut-vertex of this leaf-block $M$ by $c$. We will refer to all remaining vertices of $M$ as to \emph{inner} vertices. 
	By the definition of \ref{enum:tunnel_path_cut_left}-type components, $N_{G'}(V(M))\cap V(P_1)=\{w\}$ for some $w \in V(P_1)$.
	Again consider the $(s,v_1)$-subpath, the $(v_1, v_2)$-subpath, and the $(v_2,t)$-subpath of $P'$.
	The $(v_1, v_2)$-subpath contains a vertex $u \in V(P_1)\cup V(P_2)$ as internal vertex, so we can also break it into $(v_1,u)$-subpath and $(u,v_2)$-subpath.
	
	Note that at least two of these four subpaths do not contain $c$.
	Each of these subpaths is an $(x,y)$-path for $x\in V(P_1)\cup V(P_2)$ and $y\in V(H)$.
	We claim that if such $(x,y)$-path does not contain $c$, then it contains $w$.
	Suppose that an $(x,y)$-path does not contain $c$, so $y$ is an inner vertex of $M$.
	This path does not contain $c$, and to reach $y$ it should reach some inner vertex of $M$ from the outside, since the path starts in $V(P_1)\cup V(P_2)$.
	Hence, this path should contain $w$.  Otherwise there is an edge between $V(P_2)$ and some inner vertex of $M$ in $G'$, which contradicts the property \ref{enum:tunnel_path_cut_left}.
	
	Thus at least two of the four subpaths contain $w$.
	The only possible option for this is when $u=w$ and both $(v_1,u)$-subpath and $(u,v_2)$-subpath do not contain $c$.
	Then both $(s,v_1)$-subpath and $(v_2,t)$-subpath do not  contain $w$, since $w$ can appear only once in $P'$.
	Both of them reach an inner vertex of $M$ from the outside of $M$.
	If a path reaches an inner vertex of $M$ and avoids $w$, then it should contain the cut-vertex $c$.
	Therefore,  the $(s,v_1)$-subpath and the $(v_2,t)$-subpath both contain $c$.
	This is contradiction, since these two paths are vertex-disjoint.

	\medskip\noindent\textbf{Conclusion of Case 2.}
	If $M$ is an \banana corresponding to a connected component of type \ref{enum:tunnel_path_cut_left}, then $P'$ necessarily contains an edge between $V(P_2)$ and an inner vertex of $M$.
	This contradicts the definition of \ref{enum:cycle_tunnel_path_cut_left}-type components.
	
	\medskip\noindent\textbf{Case 3.} The case when $M$ is an \banana of  type \ref{enum:tunnel_path_cut_right}  is symmetrical.
	
	In each of the three cases we obtained a contradiction with one of the properties of an \bananadec. This completes the proof.
\end{proof}

In order to proceed further with the structural properties of \bananadec{s}, we need the following definition and lemma. 

\begin{definition}[\textbf{$B$-leaf-block separator}]\label{definition:b_leaf_block_sep}
	Let $H$ be a connected graph that is not $2$-connected and $B$ be a subset of its vertices.
	Let $I$ be the set of inner vertices of all leaf-blocks of $H$.
	We say that $S \subseteq V(H)\setminus I$ is a \emph{$B$-leaf-block separator} of $H$, if $S$ separates at least one vertex in $V(H)\setminus (I\cup B)$ from $I$ in $H$.
\end{definition}
\begin{lemma}\label{lemma:separator_in_non_2c}
	Let $H$ be a connected graph with at least one cut-vertex and let $B$ be a subset of its vertices.
	Let $S$ be a $B$-leaf-block separator of $H$.
	Then for any vertex $v$ that is not an inner vertex of a leaf-block of $H$, there is a cut-vertex $c$ of a leaf-block of $H$ and a $(c,v)$-path of length at least $\frac{1}{2}\left(\delta(H-B)-|S|\right)$ in $H$.
\end{lemma}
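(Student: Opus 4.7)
Let $u \in V(H) \setminus (I \cup B)$ be a vertex witnessing that $S$ is a $B$-leaf-block separator, and let $K$ denote the connected component of $H - S$ containing $u$. Then $V(K) \cap I = \emptyset$, and every $x \in V(K) \setminus B$ has all of its $\delta(H{-}B)$ non-$B$ neighbors contained in $V(K) \cup S$, so at least $\delta(H{-}B) - |S|$ of them lie in $V(K) \setminus B$. Hence $\delta(K - B) \geq \delta(H{-}B) - |S| =: d$, and we may assume $d \geq 2$, since otherwise the claimed bound is trivial.

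My plan is to assemble the desired $(c,v)$-path from three segments: a bridge from $c$ through $S$ into $K$, a long arc inside $K$, and a return path from $K$ to $v$. Applying \Cref{prop:cycle_delta} to $K - B$ produces a cycle $C \subseteq K - B$ of length at least $d+1$. Fix any leaf-block $L$ of $H$ with cut-vertex $c$. Because $c$ has a neighbor in $V(L) \setminus \{c\} \subseteq I$ while $I$ is disjoint from $V(K) \cup S$, we conclude $c \notin V(K)$, so every $c$-to-$V(C)$ path in $H$ must cross $S$. Let $Q_c$ be any shortest such path, terminating at some $w \in V(C)$; by minimality its interior avoids $V(K) \cup V(C)$. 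Analogously, choose a shortest $v$-to-$V(C)$ path $Q_v$ in $H$, terminating at some $v^\star \in V(C)$.

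Split $C$ at $w$ and $v^\star$ into two arcs whose total length is $|E(C)| \geq d+1$, so the longer arc $A$ satisfies $|E(A)| \geq d/2$. The concatenation $Q_c \cdot A$ is already a simple $(c, v^\star)$-path of length at least $d/2$, since $Q_c$ has its interior outside $V(K)$ while $A \subseteq V(K)$, so $Q_c$ and $A$ meet only at $w$. Concatenating $\overline{Q_v}$ (the reverse of $Q_v$) and pruning to a simple path yields the required $(c,v)$-path of length at least $\frac{1}{2}(\delta(H{-}B) - |S|)$.

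The main obstacle will be verifying that pruning the walk $Q_c \cdot A \cdot \overline{Q_v}$ does not eat into the arc $A$. The problematic case is when $Q_v$ shares an interior vertex with $Q_c$, producing a shortcut that bypasses $A$; since $H$ has at least two leaf-blocks, one can instead pick $c$ from a leaf-block whose bridge $Q_c$ avoids the interior of $Q_v$, or, if unavoidable, redirect $Q_v$ so that it first enters $V(K)$ and then reaches $V(C)$, which by the minimum-degree lower bound in $K - B$ keeps the analysis tight. The degenerate case $w = v^\star$ is handled by taking $A$ to be $C$ with a single edge incident to $w$ removed, giving a $(w, w')$-path of length $|E(C)| - 1 \geq d$ for a neighbor $w'$ of $w$ on $C$, which still delivers the bound $d/2 = \frac{1}{2}(\delta(H{-}B) - |S|)$.
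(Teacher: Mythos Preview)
Your concatenation argument has a genuine gap that the hand-waved fixes do not close. First, a smaller issue: the assertion that the interior of a shortest $c$-to-$V(C)$ path $Q_c$ avoids $V(K)$ is unjustified; being shortest to $V(C)$ only guarantees the interior misses $V(C)$, and $Q_c$ may well enter $K$ at a vertex of $V(K)\setminus V(C)$ and wander inside $K$ before hitting $C$. The more serious problem is the pruning step. You correctly identify that $Q_v$ and $Q_c$ may share interior vertices, in which case the pruned walk bypasses $A$ entirely, but ``pick $c$ from another leaf-block'' need not help: nothing prevents all leaf-block cut-vertices from routing to $C$ through the same bottleneck that $Q_v$ also uses, and ``redirect $Q_v$ so that it first enters $V(K)$'' is not a proof. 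There is no mechanism in your argument forcing the long arc $A$ to survive the pruning.

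The paper avoids these disjointness headaches by exploiting the block structure of $H$. Since any cycle lies in a single biconnected component, the cycle $C$ sits inside one non-leaf block $K$ of $H$. Because $K$ is a non-leaf block, removing it from the block--cut tree leaves at least two subtrees, each containing a leaf-block; one can therefore choose a leaf-block cut-vertex $c$ in a subtree different from the one hosting $v$, which forces \emph{every} $(c,v)$-path to have a nontrivial subpath inside $K$, say with endpoints $x\neq y$. Now work inside the $2$-connected graph $K$: by Menger's theorem there are two vertex-disjoint paths from $\{x,y\}$ to $V(C)$, and together with the longer arc of $C$ this yields an $(x,y)$-path in $K$ of length at least $|V(C)|/2$. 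Substituting this for the original $K$-segment of an arbitrary $(c,v)$-path produces the desired long path with no pruning at all. The key idea you are missing is to localize the rerouting to a single $2$-connected block, where Menger's theorem supplies exactly the disjointness you were unable to enforce by hand.
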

\begin{proof}
	We assume that $\delta(H-B)> |S|$, since the other case is trivial.
	
	Consider   graph $H-(B\cup S)$.
	We know that there is at least one connected component in this graph that does not contain any vertex from $I$ and contains at least one vertex not in $B$.
	Denote this connected component by $T$.
	We know that $\delta(T)\ge \delta(H-(B\cup S))>1$.
	By \Cref{prop:cycle_delta}, $T$ contains a cycle $C$ of length at least $\delta(T)+1$.

	We know that $C$ is fully contained in some non-leaf-block of $H$.
	Denote this block by $K$.
	Now let $v$ be a vertex in $V(H)\setminus I$ given from the lemma statement.
	It is easy to see that we can always choose the vertex $c$ in a way that any $(c,v)$-path contains at least one edge of $K$.
	Take such vertex and an arbitrary $(c,v)$-path.
	Edges of $K$ induce a subpath of non-zero length in this path.
	Let $x,y$ be the endpoints of this subpath.
	We know that $x\neq y$.

We need the following claim. 	
	\begin{claim}\label{lemma:biconnected_cycle_to_any_path}
	If a $2$-connected graph contains a cycle on $k$ vertices, then it contains a path of length at least $\lceil \frac{k}{2} \rceil$ between any pair of vertices.
\end{claim}
\medskip
\noindent \emph{Proof of Claim~\ref{lemma:biconnected_cycle_to_any_path}.}  
	Take two distinct vertices $s$, $t$.
	To show that there is a path between $s$ and $t$ of length at least $\lceil \frac{k}{2} \rceil$, we  apply Menger's theorem to $\{s,t\}$ and the vertex set of the cycle of length $k$.
	This gives two vertex-disjoint paths going from $s$ and $t$ to two vertices $s'$ and $t'$ on the cycle.
	Take the longer arc between $s'$ and $t'$ on the cycle and combine it with the two paths. The resulting path is of length at least $\lceil \frac{k}{2} \rceil$. 
\claimqed \medskip
	
	By Claim~\ref{lemma:biconnected_cycle_to_any_path}, there is a path of length at least $\delta(T)/2>\frac{1}{2}(\delta(H-B)-|S|)$ between $x$ and $y$ in $K$.
	Replace the subpath of the initial $(c,v)$-path with this subpath.
	This yields a $(c,v)$-path of desired length.
\end{proof}

In Lemma~\ref{lemma:st_path_banana_consecutive}, we proved that if a path enters an \banana, then after leaving it, it cannot come back. The following lemma guarantees, that if we have a yes-instance, then there is a solution path that enters at least one \banana.
\begin{lemma}\label{lemma:st_path_edge_of_banana}
	Let $G$ be a graph, $B\subseteq V(G)$ be a subset of its vertices and $P_1, P_2$ induce an \bananadec  for an $(s,t)$-path $P$ in $G$ of length less than $\delta(G-B)+k$.
	Let $k$ be an integer such that $5k+4|B|+6 < \delta(G-B)$.
	If there exists an $(s,t)$-path of length at least $\delta(G-B)+k$ in $G$, then there exists $(s,t)$-path of length at least $\delta(G-B)+k$ in $G$ that  
enters an \banana.
\end{lemma}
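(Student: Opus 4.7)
Suppose for contradiction that there is an $(s,t)$-path $Q$ in $G$ of length at least $\delta(G-B)+k$, but no such path enters a banana; in particular $Q$ uses no banana edge. I will show this forces $|V(Q)|<\delta(G-B)+k+1$, contradicting the length hypothesis, so such a $Q$ must enter a banana.

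From the proof of \Cref{lemma:st_path_or_tunnel} we already have $|V(P_1)|+|V(P_2)|\le k+1$; set $B':=V(P_1)\cup V(P_2)\cup B$, so $|B'|\le k+|B|+1$. I would then classify every vertex of $V(Q)\setminus B'$ by the component $H'$ of $G'-V(P_1\cup P_2)$ containing it, where $G'$ is the $B$-refined graph. For a component $H'$ of type \ref{enum:tunnel_path_bic} (an R1 banana), the matching-one conditions together with K\"onig's theorem show that all edges from $V(H')$ to $V(P_1)\cup V(P_2)$ meet at most two ``port'' vertices; since $Q$ uses no edge of $H'$ (a banana edge), both $Q$-edges at any vertex of $V(Q)\cap V(H')$ must land in $B'$. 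The same reasoning applies to an inner vertex $v$ of a leaf-block of a component of type \ref{enum:tunnel_path_cut_left} or \ref{enum:tunnel_path_cut_right}: the neighbours of $v$ outside its leaf-block are confined to the unique $V(P_i)$-port and to vertices of $B$, all of which lie in $B'$. Charging two $Q$-edges per such vertex to $B'$-degree bounds the total contribution of R1-component vertices and inner leaf-block vertices to $V(Q)$ by $O(k+|B|)$.

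The main obstacle is bounding the contribution of \emph{spine} vertices---non-leaf-block vertices plus cut-vertices of leaf-blocks---of R2/R3 components. For this step I would combine \Cref{lemma:separator_in_non_2c} with the long-path guarantee of \Cref{thm:relaxed_st_path}. Suppose some R2/R3 component $H'$ contributes many spine vertices to $Q$, and let $S$ be the set of cut-vertices of leaf-blocks of $H'$ that appear on $Q$. Then $S$ is a $B$-leaf-block separator of $H'$, so \Cref{lemma:separator_in_non_2c} supplies an $(c,v)$-path of length at least $\tfrac{1}{2}(\delta(H'-B)-|S|)$ from a cut-vertex $c$ of some leaf-block $L$ to a spine vertex $v$ already on $Q$. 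Concatenating this with a long path inside $L$ starting at $c$, obtained by \Cref{thm:relaxed_st_path} applied to the $2$-connected leaf-block $L$, produces a detour through the banana $L$. Splicing this detour into $Q$ at $c$ (and reconnecting through the unique $V(P_i)$-port) yields an $(s,t)$-path of length at least $\delta(G-B)+k$ that enters the banana $L$, contradicting our standing assumption. Hence every R2/R3 component contributes only $O(1)$ spine vertices to $Q$ between consecutive entries from $B'$, and since $Q$ has at most $|B'|=O(k+|B|)$ such entries, the total spine contribution is $O(k+|B|)$.

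Summing all contributions gives $|V(Q)|\le (k+1)+|B|+O(k+|B|)$, which by the hypothesis $5k+4|B|+6<\delta(G-B)$ is strictly less than $\delta(G-B)+k+1$, contradicting the length of $Q$. I expect the splicing step to be the main technical difficulty: the inserted leaf-block detour must avoid vertices already used by $Q$, and reconnecting it cleanly requires the uniqueness of the $V(P_i)$-port of an R2/R3 component, the $2$-connectivity of leaf-blocks, and the slack provided by the hypothesis $5k+4|B|+6<\delta(G-B)$ to ensure the detour is long enough to compensate for any rerouting overhead.
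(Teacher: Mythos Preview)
Your proposal has the right overall shape---count the edges of $Q$ that can avoid bananas, observe that most of them must lie in non-leaf-blocks of R2/R3 components, and then use \Cref{lemma:separator_in_non_2c} to reroute through a leaf-block---but the key step, the spine-vertex argument, does not go through as written.

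First, your choice of separator is not justified. You let $S$ be ``the set of cut-vertices of leaf-blocks of $H'$ that appear on $Q$'' and assert that $S$ is a $B$-leaf-block separator. By Definition~\ref{definition:b_leaf_block_sep}, a $B$-leaf-block separator must cut some vertex of $V(H')\setminus(I\cup B)$ off from \emph{all} of $I$; merely collecting the leaf-block cut-vertices that happen to lie on $Q$ gives no such guarantee. The paper does not use cut-vertices at all here: it takes $P'_1$ and $P'_2$ to be the first $k{+}1$ and last $k$ vertices of $Q$ and sets $S=V(P'_1)\cup V(P'_2)$. This $S$ has controlled size ($\le 2k{+}1$), and now there is a genuine dichotomy: either $S$ separates the middle spine vertices of $Q$ from $I$ (Case~2, where \Cref{lemma:separator_in_non_2c} applies), or it does not, and then some spine vertex $w$ in the middle of $Q$ is still connected to a leaf-block in $H'-(V(P'_1)\cup V(P'_2))$ (Case~1).

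Second, your splicing is precisely the hard part, and ``reconnecting through the unique $V(P_i)$-port'' is not enough. The detour you build enters a leaf-block $L$ and exits via the single port vertex in $V(P_i)$; but $Q$ may already use that port, and the detour itself may hit $Q$ elsewhere, so you cannot simply insert it into $Q$. The paper's fix is to abandon $Q$ on one side: after reaching the port it follows the \emph{original} path $P$ (not $Q$) to $t$. The resulting $(w,t)$-walk meets $Q$ again at some vertex $x$, and one then glues the $(s,x)$- or $(x,t)$-prefix of $Q$ (of length at least $k{+}1$ by the choice of $w$ in the middle) to the $(w,x)$-detour (of length at least $\delta(G-B)-1$). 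This is why the paper introduces $P'_1,P'_2$ in the first place: they guarantee both that $w$ is far from $s$ and $t$ along $Q$, and that the detour constructed in Case~1 avoids $P'_1\cup P'_2$.

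Finally, the logical structure of your argument is roundabout. Your splicing step, if it worked, would already produce an $(s,t)$-path of length $\ge\delta(G-B)+k$ entering a banana---which is exactly the lemma's conclusion. There is no need to then feed this back into a bound on $|V(Q)|$; the claimed bound ``$O(1)$ spine vertices between consecutive $B'$-entries'' is neither proved nor needed.
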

\begin{proof}
	Since the length of $P$ is less than $\delta(G-B)+k$, we may assume that $|V(P_1)\cup V(P_2)|<k+2$.
	
	Assume that there is an $(s,t)$-path $P'$ of length at least $\delta(G-B)+k$ in $G$ that contains no edge of an \banana.
	We show that there exists an $(s,t)$-path of length at least $\delta(G-B)+k$ that enters some \banana.
	
	The path $P'$ path can contain only edges with endpoints in $V(P_1)\cup V(P_2) \cup B$ or edges of non-leaf-blocks of \ref{enum:tunnel_path_cut_left}- or \ref{enum:tunnel_path_cut_right}-type components.
	All other edges are edges of  \banana{s}.
	There are at most $2|V(P_1)\cup V(P_2)\cup B|\le 2(k+2+|B|)$ edges in $P'$ that have endpoints in the corresponding set.
	Hence, $P'$ contains at least $\delta(G-B)+k-2(k+2+|B|)=\delta(G-B)-k-4-2|B|$ edges that lie inside non-leaf-blocks of separable components of the \bananadec .

	Let $u$ be the vertex on $P'$ such that the $(s,u)$-subpath of $P'$ is of length exactly $k+1$.
	Denote this subpath by $P'_1$.
	Analogously, let $v$ be the vertex on $P'$ such that the $(v,t)$-subpath of $P'$ is of length exactly $k$, and denote this subpath by $P'_2$.
	Note that $P'_1$ and $P'_2$ are on a distance at least $\delta(G-B)-k>0$ from each other on $P'$.
	The $(u,v)$-subpath of $P'$ consists of at least $(\delta(G-B)-k-4-2|B|)-2k> 2|B|$ edges of the non-leaf-blocks.
	Hence, at least one non-leaf-block edge in $P'$ is not incident to any vertex in $B$.
	
	Let $H$ be the connected component in $G-(V(P_1)\cup V(P_2))$ that contains this edge and let $H'$ be its $B$-refinement.
	The graph $H'$ contains at least one edge of the $(u,v)$-subpath of $P'$, and none of these edges are incident to an inner vertex of its leaf-blocks.
	We note that the whole path $P'$ cannot go through any inner leaf-block vertex of $H'$.
	Suppose that this is not true and it contains such vertex.
	Since it does not contain any leaf-block edge, this path should enter and leave this inner vertex from the outside of $H$.
	And the only way to enter a \ref{enum:tunnel_path_cut_left}-type or a \ref{enum:tunnel_path_cut_right}-type component of the \bananadec  is to go from the only vertex of $V(P_1)$ or of $V(P_2)$ correspondingly.
	Thus, this vertex of either $V(P_1)$ or $V(P_2)$ is contained twice on the path, and that is not possible.
	Consider now the graph $H'-(V(P'_1)\cup V(P'_2))$.
	
	\textbf{Case 1.}
	Suppose that there is a connected component in this graph that contains an inner vertex of a leaf-block of $H'$ and some vertex of the $(u,v)$-subpath of $P'$ simultaneously.
	Denote this leaf-block by $L$ and the vertex of the $(u,v)$-subpath by $w$.
	Note that all paths between $w$ and vertices of $L$ go through the cut-vertex of $L$.
	If there are multiple choices of $w$ for $L$, choose the one which is the closest to the cut-vertex of $L$.
	As $P'$ does not contain any inner leaf-block vertex of $H'$, the connected component of $w$ in $H'-(V(P'_1)\cup V(P'_2))$ contains the whole leaf-block $L$.
	Hence, there is a path connecting $w$ with any inner vertex of $L$.
	Choose any inner vertex of $L$ that is connected to $V(P_1)$ (if $H'$ is \ref{enum:tunnel_path_cut_left}-type) or to $V(P_2)$ (if $H'$ is \ref{enum:tunnel_path_cut_right}-type).
	Denote this vertex by $z$.
	Since $L$ is a \banana, there is a path in $L$ of length at least $\delta(G-B)-2$ connecting $z$ with the cut-vertex of $L$, so there is a $(w,z)$-path of length at least $\delta(G-B)-2$ in $H'$.
	Note that the only common vertex of this path and $P'$ is the vertex $w$.
	We also know that the $(s,w)$-subpath and the $(w,t)$-subpath of $P'$ are of length at least $k+1$, since $w$ is not in $V(P'_1)\cup V(P'_2)$.
	
	Now prolong the $(w,z)$-path in $G$ by going  outside $H'$ from $z$ to the vertex from $V(P_1)$ or $V(P_2)$ depending on the type of $H$, and finally go from this vertex to $t$ following the initial path $P$.
	We obtain a $(w,t)$-path $Q$ that has at least two common vertices with $P'$.
	
	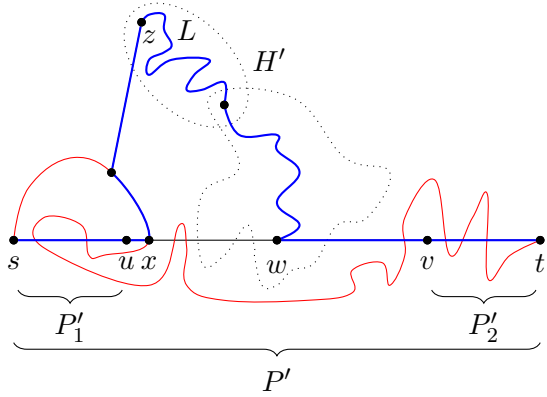
\begin{figure}
		\centering
		\usetikzlibrary{decorations.pathreplacing}
\begin{tikzpicture}
	\tikzstyle{vertex}=[circle,draw,fill,inner sep=1pt]
	\tikzstyle{inpath}=[blue, thick]
\begin{scope}
\node [vertex] (s) at (-3,0.5) {};
\node [vertex] (t) at (4,0.5) {};
\node  at (-3,0.2) {$s$};
\node  at (4,0.2) {$t$};
\draw [decorate,decoration={brace,amplitude=5pt,mirror,raise=4ex}]
  ($(s)+(0,-0.7)$) -- ($(t)+(0,-0.7)$) node[midway,yshift=-3em]{$P'$};
\node [vertex] (v1) at (-1.5,0.5) {};
\node [vertex] (v2) at (2.5,0.5) {};
\draw  (s) edge (v1);
\draw  (v2) edge (t);
\draw [decorate,decoration={brace,amplitude=5pt,mirror,raise=4ex}]
  (s) -- (v1) node[midway,yshift=-3em]{$P'_1$};
  \draw [decorate,decoration={brace,amplitude=5pt,mirror,raise=4ex}]
  (v2) -- (t) node[midway,yshift=-3em]{$P'_2$};
\draw  (v1) edge (v2);
\node [vertex] (v3) at (0.5,0.5) {};
\node at (0.5,0.15) {$w$};
\draw[dotted]  plot[smooth cycle, tension=.7] coordinates {(0.2,0.3) (0.3,-0.1) (0.5,-0.1) (0.7,0.1) (1,0.2) (1.3,0.7) (1.7,0.8) (2,1.4) (1.4,1.6) (0.7,2.4) (-0.4,2.4) (0,1.5) (-0.5,1.1) (-0.5,0.2) (-0.2,0.3) (0.1,0.7) (0.2,0.4)};

\draw[dotted, rotate=-45]  (-2.5,1.5) ellipse (1 and 0.6);
\node at (-0.7,3.3) {$L$};
\node [vertex] (v4) at (-0.2,2.3) {};
\node at (-1.5,0.2) {$u$};
\node at (2.5,0.2) {$v$};
\draw[inpath]  plot[smooth, tension=.7] coordinates {(v3) (0.8,0.7) (0.5,1) (0.8,1.4) (0.5,1.7) (0.5,1.9) (0,1.9) (v4)};
\node [vertex] (v5) at (-1.3,3.4) {};
\draw [inpath] plot[smooth, tension=.7] coordinates {(-0.2,2.3) (-0.2,2.6) (-0.5,2.5) (-0.7,2.5) (-0.6,2.7) (-0.4,2.9) (-0.9,2.9) (-1.2,2.7) (-1.2,2.9) (-1,3) (-0.9,3.1) (-1,3.3) (-1,3.5) (-1.2,3.5) (v5)};

\draw[red]  plot[smooth, tension=.7] coordinates {(s) (-2.8,1.2) (-2.2,1.6) (-1.6,1.3) (-1.2,0.5) (-1.8,0.3) (-2,0.3) (-2.1,0.7) (-2.6,0.8) (-2.6,0.4) (-1.2,-0.1) (-0.8,0.7) (-0.7,0.4) (-0.7,-0.1) (-0.2,-0.3) (0.9,-0.3) (1.5,-0.2) (1.5,0.1) (1.9,0.2) (2,-0.1) (2.2,0.3) (2.2,0.8) (2.6,1.3) (2.8,0.3) (3.1,0.8) (3.3,1.1) (3.3,0) (3.6,0.3) (t)};

\node [vertex] (v6) at (-1.7,1.4) {};

\node at (-1.2,3.2) {$z$};
\node at (0.4,2.9) {$H'$};
\draw  (v5) edge (v6);
\node [vertex] (v7) at (-1.2,0.5) {};
\node at (-1.2,0.2) {$x$};
\draw[inpath]  (s) edge (v7);
\begin{scope}[]
    \clip ($(v7)+(0.1,0)$) rectangle (v6);
    \draw[inpath]  plot[smooth, tension=0.7] coordinates {(s) (-2.8,1.2) (-2.2,1.6) (-1.6,1.3) (-1.2,0.5) (-1.8,0.3) (-2,0.3) (-2.1,0.7) (-2.6,0.8) (-2.6,0.4) (-1.2,-0.1) (-0.8,0.7) (-0.7,0.4) (-0.7,-0.1) (-0.2,-0.3) (0.9,-0.3) (1.5,-0.2) (1.5,0.1) (1.9,0.2) (2,-0.1) (2.2,0.3) (2.2,0.8) (2.6,1.3) (2.8,0.3) (3.1,0.8) (3.3,1.1) (3.3,0) (3.6,0.3) (t)};
  \end{scope}

\draw[inpath]  (v6) edge (v5);
\draw[inpath]  (v3) edge (t);

\node [vertex] (v6) at (-1.7,1.4) {};
\node [vertex] (v7) at (-1.2,0.5) {};
\node [vertex] (s) at (-3,0.5) {};
\node [vertex] (t) at (4,0.5) {};
\node [vertex] (v4) at (-0.2,2.3) {};
\node [vertex] (v3) at (0.5,0.5) {};
\node [vertex] (v5) at (-1.3,3.4) {};
\node [vertex] (v1) at (-1.5,0.5) {};
\node [vertex] (v2) at (2.5,0.5) {};
\end{scope}

\end{tikzpicture}
		\caption{Constructing a path entering an \banana in Case 1. The path $P$ is highlighted red. The constructed path is thick blue.}\label{fig:lemma_banana_edge}
	\end{figure}
	
	Denote by $x$ the second vertex on this $(w,t)$-path that is common with $P'$.
	Denote the $(w,x)$-subpath of $Q$ by $Q'$.
	The path $Q'$ is of length at least $\delta(G-B)-1$ and contains at least one \banana edge, since it contains the $(w,z)$-path as a proper subpath.
	Suppose that $x$ is a part of the $(s,w)$-subpath of $P'$.
	Then consider constructing the following path (see \Cref{fig:lemma_banana_edge}).
	Take the $(s,x)$-subpath of $P'$, then go following the path $Q'$ from $x$ to $w$, and finish with the $(w,t)$-subpath of $P'$.
	The constructed path is of length at least $0+(\delta(G-B)-1)+(k+1)\ge \delta(G-B)+k$, so we are done.
	If $x$ is not a part of the $(s,w)$-subpath of $P'$, then it is a part of the $(w,t)$-subpath of $P'$.
	Then the required path is combined of the $(s,w)$-subpath of $P'$, then of $Q'$ and of the $(x,t)$-subpath of $P'$.
	Its total length is at least $(k+1)+(\delta(G-B)-1)+0\ge \delta(G-B)+k$.
	
	\textbf{Case 2.}
	It is left to consider the case when $V(P'_1)\cup V(P'_2)$ separates all inner leaf-block vertices from all vertices of the $(u,v)$-subpath of $P'$ that are from $V(H')$.
	Note that at least one vertex of the $(u,v)$-subpath is from $V(H')\setminus B$.
	Denote the set of all inner leaf-block vertices in $H'$ by $I$.
	We know that $V(P'_1)\cup V(P'_2)$ separates at least one vertex in $V(H')\setminus B$ from $I$.
	Apply Lemma~\ref{lemma:separator_in_non_2c} to $H'$ and $S=V(P'_1)\cup V(P'_2)$.
	
	Suppose that $H'$ is of type \ref{enum:tunnel_path_cut_left}.
	Then take any vertex in $V(H')$ that is connected with $V(P_2)$ by an edge (after the edge contractions).
	Denote this vertex by $y$.
	We know that $y$ is not in $I$, so there is a $(c,y)$-path of length at least $\frac{1}{2}(\delta(H'-B)-|S|)$ in $H'$ for cut-vertex $c$ of some leaf-block $L$ in $H'$.
	This leaf-block has at least one inner vertex that is connected to $V(P_1)$ by an edge.
	Denote such vertex by $x$.
	There is a path of length at least $\delta(G-B)-3$ between $x$ and $c$ inside $L$.
	Combine this $(x,c)$-path with the $(c,v)$-path and obtain a path of length at least $1+(\delta(G-B)-3)+(\frac{1}{2}(\delta(H'-B)-|S|))+1$ between $V(P_1)$ and $V(P_2)$.
	This path does not intersect internally with $P_1$ or $P_2$.
	Hence, there is an $(s,t)$-path in $G$ of length at least $\delta(G-B)+\frac{1}{2}\delta(H'-B)-\frac{1}{2}|S|-1$.
	We know that $\delta(H'-B)\ge \delta(G-(B\cup V(P_1)\cup V(P_2)))\ge \delta(G-B)-k-1$.
	Thus, our $(s,t)$-path is of length at least $\delta(G-B)+\frac{1}{2}((\delta(G-B)-k-1)-(2(k+1))-2)=\delta(G-B)+\frac{1}{2}(\delta(G-B)-3k-5)\ge \delta(G-B)+k$.
	This path contains an edge of the leaf-block $L$, which is an \banana edge, so we are done.
	
	When $H'$ is of type \ref{enum:tunnel_path_cut_right}, the proof is symmetrical.
	The proof of the lemma is complete.
\end{proof}

We are now ready to formulate a very crucial lemma of this section.
It serves as a basic tool for applying recursion in \bananadec in the algorithm for \probstP.
Basically, it provides a way to search for a long part of the $(s,t)$-path inside an \banana wrapped up in a $2$-connected subgraph of $G$.

\begin{lemma}\label{lemma:st_path_banana_to_2_connected}
	Let  paths $P_1,P_2$ induce an \bananadec  for an $(s,t)$-path $P$ and $B\subseteq V(G)$ in graph $G$.
	Let $M$ be an \banana in $G$. Then there is a polynomial time algorithm that outputs a  $2$-connected subgraph $K$ of $G$ and two vertices $s', t' \in V(K)$, such that every 
	 $(s,t)$-path $P'$ in $G$ that enters $M$, the following hold
	\begin{enumerate}
		\item $V(K)\setminus B=(V(M)\cup\{s',t'\})\setminus B$;
		\item $P'[V(K)]$ is an $(s',t')$-subpath of $P'$ and an $(s',t')$-path in $K$;
		\item $\delta(K-(B\cup\{s',t'\})) \ge \delta(G-(B\cup\{s',t'\}))$;
	\end{enumerate}
\end{lemma}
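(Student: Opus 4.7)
The proof proceeds by a case analysis on the type of the \banana $M$ in the \bananadec, and in each case we identify two ``bottleneck'' vertices $s', t'$ through which every $(s,t)$-path of $G$ entering $M$ must pass, then set $K$ to be essentially $G[V(M) \cup \{s', t'\}]$. If $M$ is a type~\ref{enum:tunnel_path_bic} component, the matching-size-one condition between $V(M)$ and $V(P_i)$ forces all such edges to share a common endpoint (a standard fact about bipartite graphs with matching number one); I set $s'$ to be this common vertex on the $V(P_1)$ side when it has at least two neighbors in $V(M)$, and otherwise to be the unique $V(M)$-neighbor of that vertex, with $t'$ defined symmetrically via $V(P_2)$. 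If $M$ is a leaf-block of a type~\ref{enum:tunnel_path_cut_left} or~\ref{enum:tunnel_path_cut_right} component $H'$, I set one terminal to be the cut-vertex of $M$ in $H'$ and define the other analogously using the unique $V(P_i)$-vertex adjacent to $H'$. All of these identifications can plainly be carried out in polynomial time given the \bananadec.

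Taking $K = G[V(M) \cup \{s', t'\}]$, property~(1) is immediate from the definition. The 2-connectivity of $K$ follows from the standard fact that adding a vertex with at least two neighbors to a 2-connected graph preserves 2-connectivity: $M$ itself is 2-connected (either as a type~\ref{enum:tunnel_path_bic} component or as a leaf-block of its parent), and each of $s', t'$ either already lies in $V(M)$ or was chosen so as to have at least two neighbors in $V(M)$. For property~(3), a vertex $v \in V(M) \setminus (B \cup \{s', t'\})$ can only have neighbors in $G$ inside $V(M)$, inside $B$, or inside $V(P_1 \cup P_2)$; the structural conditions of the \bananadec force every neighbor of $v$ in $V(P_1 \cup P_2)$ to be exactly $s'$ or $t'$ (this is precisely what the matching-size-one and uniqueness-of-adjacency conditions buy us), so every neighbor of $v$ in $G - (B \cup \{s', t'\})$ already lies in $V(K)$, and its degree is thus preserved.

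For property~(2), the uniqueness of $s', t'$ as the only gateways into $V(M)$, combined with a lift of \Cref{lemma:st_path_banana_consecutive} to $G$ (the unique-adjacency structure implies that once $P'$ has left the parent component of $M$ through $s'$ or $t'$ it cannot re-enter, so $V(M) \cap V(P')$ must form a consecutive subpath with endpoints adjacent to $s'$, $t'$), shows that $P'[V(K)]$ is exactly the $(s', t')$-subpath of $P'$ and, being induced on $V(K)$, is a path in $K$.

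The main subtlety, and the hardest step, is that the original component $H \subseteq G$ whose $B$-refinement produced the parent $H'$ of $M$ may contain dangling leaf-blocks consisting entirely of $B$-vertices that were contracted into a cut-vertex $c \in V(M)$. A path $P'$ in $G$ could enter $V(M)$ by first traversing such a dangling leaf-block $L$, going from $s'$ through inner $B$-vertices of $L$ to $c$, in which case $P'[V(K)]$ would have to use a direct edge from $s'$ to $c$ that need not exist in $G$. I resolve this by augmenting $V(K)$ with exactly the inner $B$-vertices of every dangling leaf-block whose inner boundary is adjacent to $s'$ or $t'$. This augmentation preserves property~(1) (the added vertices lie in $B$) and property~(3) (they are removed when passing to $K-(B\cup\{s',t'\})$), and it preserves 2-connectivity: the very edges from $s'$ or $t'$ into the inner vertices of $L$ bridge around $c$, so $c$ does not become a cut-vertex of $K$.
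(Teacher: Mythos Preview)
Your overall strategy matches the paper's: case-split on the type of the component containing $M$, identify two ``gateway'' vertices $s',t'$ through which every $(s,t)$-path entering $M$ must pass, and take $K$ to be (essentially) the induced subgraph on $V(M)\cup\{s',t'\}$ plus appropriate $B$-vertices from contracted leaf-blocks. You also correctly flag the contracted $B$-leaf-blocks as the main subtlety. However, your uniform augmentation rule (``add every dangling $B$-block whose inner vertices are adjacent to $s'$ or $t'$'') is too coarse, and this is exactly where the paper's proof does substantially more work.

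The case you do not handle is the \emph{asymmetric} subcase of a type~\ref{enum:tunnel_path_bic} component: $N_{G'}(V(M))\cap V(P_1)=\{w_1\}$ with $w_1$ having at least two $G'$-neighbours in $V(M)$, while $N_{G'}(V(P_2))\cap V(M)=\{v_2\}$. Here your recipe gives $s'=w_1$ and $t'=v_2\in V(M)$. Now a dangling $B$-block $L$ contracted into $v_2$ has all its inner vertices adjacent to $t'=v_2$, so your rule adds $L$ to $V(K)$. But an $(s,t)$-path $P'$ that enters $M$ proceeds from $v_2$ \emph{through} $L$ on its way to $V(P_2)$; with $L\subseteq V(K)$, the subgraph $P'[V(K)]$ is then a $(w_1,\ell)$-path for some $\ell\in L$, not an $(s',t')$-path, and property~(2) fails. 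The paper's fix is the opposite of yours: it \emph{excludes} precisely these vertices (its set $B'$) and sets $K=G[(V(H)\cup\{w_1\})\setminus B']$. Moreover, this subcase splits further: if there \emph{is} an edge $w_1v_2$ in $G'$ and the natural candidate for $K$ fails to be $2$-connected, one must fall back to $K=M$ with $s'=v_1$ the second $G'$-neighbour of $w_1$ in $V(M)$. None of this is captured by your sketch.

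A second, smaller gap: your $2$-connectivity argument appeals to ``$s'$ has at least two neighbours in $V(M)$'', but that holds in $G'$, not in $G$; in $G$ all of $w_1$'s neighbours in $V(H)$ could sit in contracted $B$-blocks, so $G[V(M)\cup\{w_1\}]$ need not even be connected. The augmentation can repair this, but the verification that the augmented $K$ is $2$-connected (and that the chosen $B$-vertices do not create new cut-vertices) needs a real argument rather than the one-line appeal you give.
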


\begin{proof}
	We consider several cases depending on the type of the connected component $H$ of $G-(V(P_1)\cup V(P_2))$ that contains $M$.
	We start with the simpler case, when $H'=\gbref{H}{B}$ is separable.
	By $G'$ we as usual denote the graph $G$ where all edges corresponding to $B$-refinements of the \bananadec  induced by $P_1,P_2$ are applied.
	
\medskip\noindent\textbf{Case 1}.
	The component $H$ is of type \ref{enum:tunnel_path_cut_left} (type \ref{enum:tunnel_path_cut_right} is symmetrical as ususal).
	Then $M$ is some leaf-block of $H'$.
	If an $(s,t)$-path $P'$ enters $M$, then $P'[V(M)\cap V(P')]$ is a path in $L$ by Lemma~\ref{lemma:st_path_banana_consecutive}.
	Moreover, we know that this path starts in an inner vertex of $M$ and ends in the cut-vertex of $M$.
	Denote these two vertices by $v$ and $c$ respectively.
	
	Also, $N_G(H)=\{w\}$ for some $w\in V(P_1)$, and by definition of \ref{enum:tunnel_path_cut_left}-type connected components, $P'$ contains a $(w,v)$-subpath going (in either direction) internally only through vertices in $V(H)\cap B$.
	There are two cases of how $K$ should be constructed.
	
	If there is a single and the only inner vertex $v \in V(M)$ such that there is an edge between $w$ and $v$ in $G'$, then any path $P'$ that enters $M$ contains a path between $v$ and $c$ as a subpath.
	Thus, put $K:=M$ and $s':=v$, $t':=c$.
	Clearly, $K,s',t'$ satisfy all three conditions in the lemma statement.
	
	The other case is when there are at least two inner vertices in $M$ that are neighbors to $w$ in $G'$. 
	We cannot put $s'$ equal to any vertex of $M$, because we cannot be sure that $P'$ passes through a concrete inner vertex.
	But we are sure that $P'$ passes through $w$.
	Construct $K$ in the following way.
	Denote by $B'$ the set of vertices in $B$ that are reachable from $V(M)\setminus\{c\}$ in $H-\{c\}$.
	Then put $K:={G[V(M)\cup B' \cup \{w\}]}$, $s'=w$, $t'=c$.
	Note that $K$ is an induced subgraph of $G$ and is $2$-connected as $G'[V(M)\cup \{w\}]$ is $2$-connected.
	The first and the last two conditions in the lemma statement are satisfied, and we claim that the second one is satisfied as well.
	
	We already know that $P'$ contains a $(w,c)$-subpath.
	This subpath goes from $w$ to an inner vertex of $M$ through the vertices in $B'$, and then follows a path inside $M$.
	Hence, this subpath is contained in $K$.
	It is left to show that no vertex from $V(M)\cup B'\cup \{w\}$ can appear in $P'$ outside of the $(w,c)$-subpath.
	We do it by contradiction.
	Assume that there is such vertex $v \in V(M)\cup B' \cup \{w\}$.
	If $v \in V(M)$, then $v \neq c$, hence $v$ is an inner vertex of $M$.
	Then $P'$ should contain a $(s,v)$-subpath or a $(v,t)$-subpath that does not go through $w$ nor $c$, but $\{w,c\}$ separates $V(M)$ from $V(P_1)\cup V(P_2)$.
	Thus, $v \in B'$.
	Then there exists either $(s,v)$-subpath or $(v,t)$-subpath in $P'$ that does not contain $w$ and any vertex from $V(M)$.
	Hence, this subpath connects $v$ with some vertex $u \in V(P_2)$ and goes only through $B'$.
	We know that after the edge contractions for $G'$ the vertex $v$ becomes identified with an inner vertex of $M$, so there is an edge between $u$ and this inner vertex.
	This is not possible by the definition of type \ref{enum:tunnel_path_cut_left} connected components.
	We obtain a contradiction.
	
	\medskip\noindent\textbf{Case 2.}
	$H$ is of type \ref{enum:tunnel_path_bic}, so $H'$ is $2$-connected and $M=H'$.
	We know that the maximum matching size between $V(P_i)$ and $V(M)$ in $G'$ is exactly one for each $i\in\{1,2\}$.
	For each $i$, it splits into two possible options: either $|N_{G'}(V(P_i))\cap V(M)|=1$ or $|N_{G'}(V(M))\cap V(P_i)|=\{w_i\}$, where $w_i$ has at least two neighbors in $V(M)$ in $G'$.
	We now consider several subcases of Case 2 depending on the combinations of these options.
	
	If for each $i\in \{1,2\}$, $|N_{G'}(V(P_i))\cap V(M)|=1=\{v_i\}$ for some $v_i\in V(M)$, then, an $(s,t)$-path $P'$ can enter or leave $M$ only through the vertices $v_1$ and $v_2$.
	Note that $v_1\neq v_2$, since $\{v_1,v_2\}$ separates $V(M)$ from the rest of the graph in $G$.
	Thus, if $P'$ enters $M$, then it necessarily contains a $(v_1, v_2)$-subpath inside $M$.
	By Lemma~\ref{lemma:st_path_banana_consecutive}, we have that $P'[V(P')\cap V(M)]$ is exactly the $(v_1, v_2)$-subpath inside $M$.
	Thus, it is enough to put $K:=M$ and $s':=v_1$, $t':=v_2$.
	The first two and the last conditions of the lemma are satisfied for this choice of $K, s'$ and $t'$.
	Also, no vertex in $V(M)\setminus \{v_1,v_2\}$ has neighbors outside $V(H)$ in $G$, so $\delta(K-(B\cup \{s',t'\}))=\delta(M- \{v_1,v_2\}])\ge \delta(G-(B\cup\{s',t'\}))$, and the third condition is also satisfied.
	
	The other case is when for each $i \in \{1,2\}$, $|N_{G'}(V(M))\cap V(P_i)|=\{w_i\}$, where $w_i$ has at least two neighbors in $V(M)$ in $G'$.
	It is easy to see that to enter or leave any vertex of $H$ in $G$, an $(s,t)$-path $P'$ should go through $w_1$ and $w_2$.
	Since $G$ is $2$-connected, $w_1\neq w_2$ and $P'$ contains a $(w_1, w_2)$-subpath going internally only through vertices in $V(H)$.
	Put $K:=G[V(H)\cup \{w_1,w_2\}]$, $s':=w_1$, $t':=w_2$.
	Clearly, $K$ is $2$-connected because $G$ is $2$-connected, $\{w_1,w_2\}$ separates $V(H)$ from the rest of $G$, and degrees of $w_1$ and $w_2$ in $K$ are at least two.
	
	We need to show that the second condition is satisfied as well.
	If it is not satisfied, then $P'[V(K)]$ consists of at least two disjoint paths.
	We know that one of these paths is the $(w_1,w_2)$-subpath.
	Hence, the other one contains at least one vertex from $V(H)$ but does not contain $w_1$ or $w_2$.
	This is not possible since $\{w_1,w_2\}$ separates $V(H)$ from the rest of the graph.

	Thus, the first two and the last condition are satisfied.
	It is easy to see that the third condition is satisfied as well, because vertices in $V(H)$ have no outside neighbors apart from $s'$ and $t'$ in $G$.
	
	It is left to consider the case when $N_{G'}(V(M))\cap V(P_1)=\{w_1\}$, where $w_1$ has at least two neighbors in $V(M)$ in $G'$, and $N_{G'}(V(P_2))\cap V(M)=\{v_2\}$ (the case when $1$ and $2$ are interchanged is symmetrical).
	This is the most non-clear case.
	We know that if $P'$ enters $M$, then it should pass through both $w_1$ and $v_2$.
	Moreover, the $(w_1,v_2)$-subpath of $P'$ goes internally only through $V(H)$.
	Let $B'$ be the set of vertices reachable from $v_2$ by the edges in $E(H)\setminus E(H')$.

	The difficulty beyond choosing $K$ in this case is to satisfy the second condition.
	We split on two cases.
	
	Assume that there is no edge between $w_1$ and $v_2$ in $G'$.
	Then put $K:=G[(V(H)\cup \{w_1\})\setminus B']$, $s':=w_1$ and $t':=v_2$.
	Clearly, $K$ is equal to $G[V(H)\cup \{w_1\}]$ with applied $B$-refinements so it is $2$-connected.
	The first, the third and the fourth condition of the lemma are satisfied by the arguments similar to the cases considered above.
	It is left to show that the second condition is satisfied.
	Suppose that $P'[V(K)]$ contains a path different from the $(w_1,v_2)$-subpath.
	Then there is at least one vertex $u \in V(K)\setminus V(M)$ that is not on this subpath.
	Then $P'$ should contain either an $(s,u)$-subpath or an $(t,u)$-subpath that does not go through $w_1$ or $v_2$.
	Moreover, this subpath does not contain any vertex of $V(M)$ by Lemma~\ref{lemma:st_path_banana_consecutive}.
	Denote by $x$ the last vertex on this supbath that is not from $V(K)$.
	Then $P'$ contains an $(x,u)$-subpath, where $x \in V(P_1)\cup V(P_2)$
	After the $B$-refinement of $H$, this path yields an edge in $G'$ between $x$ and $y$ for some $y \in V(M)$.
	This is only possible when either $x=w_1$ or $y=v_2$.
	
	The case $x=w_1$ is not possible because the $(x,u)$-subpath does not contain $w_1$.
	Hence, it should be the case that $y=v_2$.
	Then $u$ is a vertex reachable from $v_2$ in $H$ outside $V(M)$.
	That is, $u \in B'$.
	Hence, $u \notin V(K)$.
	We obtain a contradiction, so all four conditions are satisfied.

	It is left to consider the case when there is an edge between $w_1$ and $v_2$ in $G'$.
	It is clear that in this case the graph $G[V(H)\cup \{w_1\}]$ is $2$-connected.
	Unfortunately, we cannot put $K$ equal to this graph because this might break the second condition of the lemma.

	We already know, however, that the graph $K:=G[(V(H)\cup \{w_1\})\setminus B']$ with $s':=w_1$ and $t':=v_2$ would satisfy all conditions of the lemma except, possibly the first.
	Thus, there are two cases.
	
	When the graph $G[(V(H)\cup \{w_1\})\setminus B']$ is $2$-connected, then consider $K,s'$ and $t'$ similarly to the case when there is no edge between $w_1$ and $v_2$.
	
	Otherwise, the graph $G[(V(H) \cup \{w_1\})\setminus B']$ is not $2$-connected.
	Then $w_1$ is connected to exactly two vertices from $V(M)$ in $G'$.
	One of these two vertices is $v_2$.
	The other one we denote by $v_1$.
	Then $P'$ necessarily contains a $(v_1,v_2)$-subpath inside $M$.
	Then it is sufficient to put $K:=M$, $s':=v_1$, $t':=v_2$, as $M$ is $2$-connected.
	
	The proof is complete.
\end{proof}

%

 \subsection{Algorithm for \probstP}\label{subsec:algorithmstpath}
 We are almost set to proceed with the proof of Theorem~\ref{thmEG}. The   algorithm is based on Lemmata
 ~\ref{lemma:st_path_or_tunnel}, \ref{lemma:st_path_banana_consecutive},  
\ref{lemma:separator_in_non_2c}, and \ref {lemma:st_path_banana_to_2_connected} on  properties of \bananadec{s}. For the proof of the correctness of the algorithm, we will need one more lemma.

\begin{lemma}\label{lemma:almost_ham_path}
	Let $G$ be a $2$-connected graph with $B\subseteq V(G)$ such that $\frac{6}{5}\delta(G-B)\ge |V(G)|$ and $\delta(G-B)\ge 4|B|$.
	Then for any pair of distinct vertices $s,t\in V(G)$, the longest $(s,t)$-path in $G$ contains all vertices from $V(G-B)$.
\end{lemma}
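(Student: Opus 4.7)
The plan is a short contradiction argument via insertion into a longest $(s,t)$-path. Set $d = \delta(G-B)$ and $n = |V(G)|$, so the hypothesis reads $d \geq 5n/6$. Let $P$ be a longest $(s,t)$-path in $G$; by \Cref{thm:relaxed_st_path} applied to $G$ and $B$, we have $|V(P)| \geq d + 1$, so the set $W = V(G) \setminus V(P)$ of vertices missed by $P$ has size $|W| \leq n - d - 1$.

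Suppose, for contradiction, that some vertex $v \in V(G-B) \setminus V(P)$ exists. Since $\deg_G(v) \geq d$ and at most $|W| - 1$ of the neighbors of $v$ can lie in $W \setminus \{v\}$, we obtain $|N_G(v) \cap V(P)| \geq d - |W| + 1$. The crucial observation is that no two consecutive vertices of $P$ are both neighbors of $v$: indeed, if $u_i u_{i+1}$ were an edge of $P$ with $u_i, u_{i+1} \in N_G(v)$, then replacing that edge by the detour $u_i\, v\, u_{i+1}$ would yield an $(s,t)$-path with one more edge than $P$, contradicting the maximality of $P$. Consequently, the neighbors of $v$ along $P$ are pairwise at path-distance at least two, which implies $|V(P)| \geq 2\,|N_G(v) \cap V(P)| - 1$.

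Plugging in $|V(P)| = n - |W|$ together with the lower bound on $|N_G(v) \cap V(P)|$ gives $n - |W| \geq 2(d - |W| + 1) - 1$, i.e., $|W| \geq 2d - n + 1$. Combined with $|W| \leq n - d - 1$, this forces $3d \leq 2n - 2$, so $d \leq (2n-2)/3$. But then $d \geq 5n/6$ would require $5n/6 \leq (2n-2)/3$, i.e., $5n \leq 4n - 4$, which is impossible. Hence no such $v$ exists, and $V(G-B) \subseteq V(P)$, as claimed.

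I do not anticipate any real obstacle here: the only content is the standard insertion trick combined with the non-consecutiveness bound on $N_G(v) \cap V(P)$, and the slack provided by $\tfrac{6}{5}\delta(G-B) \geq n$ is quite generous (the argument only needs $\delta(G-B) > 2n/3$). The auxiliary hypothesis $\delta(G-B) \geq 4|B|$ does not appear necessary for this particular counting and is presumably kept for uniformity with how the lemma is later invoked.
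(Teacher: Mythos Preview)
Your proof is correct and follows essentially the same approach as the paper: assume a longest $(s,t)$-path $P$ misses some $v\in V(G-B)$, use \Cref{thm:relaxed_st_path} to get $|V(P)|\ge d+1$, observe that no two consecutive vertices of $P$ can both be neighbours of $v$, and derive a numerical contradiction. Your counting is in fact slightly tighter than the paper's---by working directly with $\deg_G(v)\ge d$ and bounding the neighbours outside $P$ by $|W|-1$, you avoid the $|B|$ terms the paper carries through its estimates, and you correctly note that the hypothesis $\delta(G-B)\ge 4|B|$ is not actually needed (the paper's looser bookkeeping does invoke it in the final inequality).
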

\begin{proof}
	The proof is by contradiction.
	Suppose that there is an $(s,t)$-path $P$ in $G$ such that the length of $P$ is maximum possible, but there is $v \in V(G)\setminus B$ with $v \notin V(P)$.
	By Corollary~\ref{thm:relaxed_st_path}, the length of $P$ is at least $\delta(G-B)$.
	Hence, $|V(P-B)|> \delta(G-B)-|B|$.
	Then $v$ has at most $|V(G-B)|-|V(P-B)|<\frac{1}{5}\delta(G-B)+|B|$ neighbors outside $V(P)$.
	Hence, $v$ has more than $\frac{4}{5}\delta(G-B)-|B|$ neighbors from $V(P)$.
	
	Note that $v$ should not have any two consecutive vertices in $P$ as neighbors, otherwise $P$ can be made longer.
	Hence, $2(\frac{4}{5}\delta(G-B)-|B|)<|V(P)|$.
	Equivalently, $|V(P)|>|V(G)|+\frac{3}{5}\delta(G-B)-2|B|\ge |V(G)|$.
	This is a contradiction.
\end{proof}

For reader's convenience, we restate Theorem~\ref{thmEG}  here.  

\medskip\noindent\textbf{Theorem~\ref{thmEG}.} \emph{\probstP is solvable in $2^{\mathcal{O}(k+|B|)}\cdot n^{\mathcal{O}(1)}$ running time on $2$-connected graphs.}

\begin{proof} 
	The recursive algorithm is presented in Algorithm~\ref{alg:long_eg_st_path}.
	Note that this algorithm requires that $s,t\in B$ in the given input instance.
	Any instance can be reduced to instance with this restriction by adding $s,t$ into $B$ and increasing $k$ by at most two.
	This changes the parameters by a constant value and does not significantly affect the running time of the algorithm.
	Also, this algorithm does not just determine whether the given instance is a yes-instance.
	If the given instance is a no-instance, the algorithm also outputs the maximum length of an $(s,t)$-path in $G$ in the form $\delta(G-B)+x$, where $x\ge 0$ and $x < k$.
	Note that algorithm actually also finds a path of such length, and it possible to change it so the path is in the output of the algorithm.
	We now go through the lines of the algorithm to explain its correctness.
	
	\SetNlSty{}{}{}

	\let\oldnl\nl
	\newcommand\nonl{%
	\renewcommand{\nl}{\let\nl\oldnl}}
	\begin{algorithm}[!h]
		\SetKwFunction{LongestPath}{longest\_path}
		\SetKwFunction{LongSTPath}{long\_st\_path}
		\SetKwFunction{LongErdosSTPath}{long\_eg\_st\_path}
		\SetKwFunction{LongSTCycle}{long\_st\_cycle}
		\SetKwFunction{HamPath}{hamiltonian\_path}
		\Indm\nonl\LongErdosSTPath{$G,B,s,t,k$}
		
		\Indp
		
		\KwIn{an instance $(G,B,s,t,k)$ of \probstP, where $G$ is $2$-connected and $s,t\in B$}
		\KwResult{$k$, if $(G,B,s,t,k)$ is a yes-instance, or an integer $x$, such that the maximum length of an $(s,t)$-path in $G$ is $\delta(G-B)+x$.}
		
		\If{$k=0$}{
			\Return{$k$}\;
		}
		\If{$5k+5|B|+6\ge\delta(G-B)$}{
			$x\longleftarrow k$\;
			\While{\LongSTPath$(G,s,t,\delta(G-B)+x)$ is \textsc{No}}{
				$x \longleftarrow x-1$\;
			}
			\Return $x$\;
		}
		
		\uIf{the algorithm of Lemma~\ref{lemma:st_path_or_tunnel} applied to $G,B,s,t,k$ returns $P$ with $P_1, P_2$}{
			$x\longleftarrow 0$\;
			\ForEach{\banana $M$ of the \bananadec  of $(G,B)$ induced by $P_1, P_2$\label{alg:line:st_path_bananas}}{
				$K,s',t' \longleftarrow $ result of Lemma~\ref{lemma:st_path_banana_to_2_connected} applied to $G,B,P_1,P_2$ and $M$\;
				$x'\longleftarrow$ \LongErdosSTPath $(K,B\cup\{s',t'\},s',t',k)$\;

				$H \longleftarrow (V(G)\cup \{a,b\}, (E(G)\setminus E(G[T])) \cup \{as,at,bs', bt'\})$\;
				$r \longleftarrow \max\{(\delta(G-B)+k)-(\delta(K-(B\cup \{s',t'\}))+x'),0\}$\;
				\While{\LongSTCycle$(H,a,b,r+4)$ is \textsc{No}}{$r \longleftarrow r-1$\;}
				$x \longleftarrow \max\{x, (\delta(K-(B\cup \{s',t'\}))+x'+r)-\delta(G-B)\}$\;				
			}
			\Return $x$\;
		}\uElseIf{it returns $P$ with $V(P)\cup B=V(G)$\label{alg:line:ham_path}}{
			\ForEach{$B'\subseteq B\setminus\{s,t\}$}{
				$H \longleftarrow (V(G-B')\cup \{s',t'\}, E(G-B')\cup \{s's, tt'\})$\;
				\If{\HamPath$(H)$}{
					$x \longleftarrow \max\{x, |V(H)|-1-\delta(G-B)\}$\;
				}
			}
			\Return $\min\{x,k\}$\;
		}\Else{
			\Return {$k$}\;
		}
		\caption{Recursive algorithm solving \probstP on $2$-connected graphs.}
		\label{alg:long_eg_st_path}
	\end{algorithm}

	The first two conditional operators handle the most trivial cases of the problem.
	The first conditional operator is for  the case $k=0$, which corresponds to trivial yes-instances by Corollary~\ref{thm:relaxed_st_path}.
	The second operator ensures that parameters $k$ and $|B|$ are small enough compared to $\delta(G-B)$ to apply results discussed earlier in this section.
	If they are not, the algorithm just employs the algorithm from \Cref{prop:longest_cycle} 
	for \textsc{Long $(s,t)$-Path}, which works in $2^{\Oh(\delta(G-B)+k)}\cdot\polyn=2^{\Oh(k+|B|)}\cdot\polyn$.

	When the third conditional operator is reached, Lemma~\ref{lemma:st_path_or_tunnel} can indeed be applied to the input instance.
	Thus, in polynomial time either an $(s,t)$-path of length at least $\delta(G-B)+k$ is found, or an $(s,t)$-path $P$ with $V(P)\cup B=V(G)$ is found, or an $(s,t)$-path $P$ and two paths $P_1, P_2$ are found.
	The paths $P_1$ and $P_2$ induce an \bananadec  for $P$ in $(G,B)$.
	If the path of length at least $\delta(G-B)+k$ is found, our algorithm correctly decides that the given instance is a yes-instance and stops.
	Otherwise, it enters the third conditional operator body.

	The conditional operator in line \ref{alg:line:ham_path} checks that we should deal with the case covered by Lemma~\ref{lemma:almost_ham_path}.
	We shall now explain this in detail.
	Suppose that we enter the conditional operator body, i.e., \ $V(P)\cup B=V(G)$.
	Since the length of $P$ is at most $\delta(G-B)+k-1$, we get that $\delta(G-B)+k+|B|\ge |V(G)|$.
	Since this operator can be reached only if $5(k+|B|)\le |V(G)|$, we can apply Lemma~\ref{lemma:almost_ham_path} to our instance.
	We can now look for an $(s,t)$-path that contains all vertices in $V(G-B)$.
	Clearly, any such path is a Hamiltonian path in the graph $G-B'$ for some $B'\subseteq B$ with $s,t\notin B'$.
	To achieve that any Hamiltonian path in $G-B'$ corresponds to an $(s,t)$-path, we add two additional vertices $s', t'$ of degree one to obtain the graph $H$.
	
	Moreover, in the graph $H$ all vertices have degree at least $\delta(G-B)$, except, probably, at most $|B|+2$ vertices.
	We know that $2\delta(G-B)>|V(H)|$, so we can apply one of the two FPT-algorithms from \Cref{theorem:JansenKN} 
	 for solving \textsc{Hamiltonian Path} in $H$.
	This algorithm runs in $2^{\Oh(|B|+2)}\cdot\polyn$ time.
	Thus, the longest path in $G-B$ is found by the algorithm for the correct choice of $B'$.
	
	We now move to the most crucial part of the algorithm.
	This part deals with \banana{s} of the \bananadec  induced by $P_1$ and $P_2$.
	We note that when line~\ref{alg:line:st_path_bananas} of the algorithm is reached, there are at least two distinct \banana{s} of the \bananadec  of $(G,B)$ by  definition.
	By Lemma~\ref{lemma:st_path_edge_of_banana} and Lemma~\ref{lemma:st_path_banana_to_2_connected}, if the given instance is a yes-instance, there is an \banana that contains a long subpath of the desired path.
	Let $M$ be an \banana fixed by the foreach cycle.
	Lemma~\ref{lemma:st_path_banana_to_2_connected} applied to this \banana yields a triple $K,s',t'$.
	The following lines of the algorithm focus on finding maximum $x'$ such that there is an $(s',t')$-path in $K$ of length at least $\delta(K-(B\cup\{s',t'\}))+x'$.
	We know that such path in $K$ exists for $x'=0$ by Corollary~\ref{thm:relaxed_st_path}.
	We shall analyze the running time of this recursion later in this proof.
	
	Note that any $(s',t')$-path in $K$ can be expanded to an $(s,t)$-path in $G$ using at least $p:=|\{s',t'\}\setminus\{s,t\}|$ edges.
	Also, $s,t\in B$, so $|B\cup \{s',t'\}|\le |B|+p$.
	Hence, if there is an $(s',t')$-path in $K$ of length at least $\delta(K-(B\cup\{s',t'\}))+x'\ge \delta(G-(B\cup \{s',t'\}))+x'\ge \delta(G-B)-p+x'$, there is a path of length at least $\delta(G-B)+x'$ in $G$.
	It follows that if $x' \ge k$ then the algorithm can safely decide that the given instance is a yes-instance.
	
	Otherwise, the maximum possible $x'<k$ is found and it is left for the algorithm to expand the $(s',t')$-path in $K$ to an $(s,t)$-path in $G$.
	That is, it needs to find two vertex-disjoint paths of sufficient total length going from ${s',t'}$ to ${s,t}$ in $G$.
	An additional restriction for these paths is that they should not contain any edge of $M$.
	Since the sufficient total length is bounded by $k+2$, we can safely employ the algorithm for \probTLDP, Theorem~\ref{thmTLDP}  from Section~\ref{sec:tldp},  running in $2^{\Oh(k)}\cdot\polyn$ time.
	
	The correctness of the algorithm is now clear and we move to analyze the recursion running time.
	We know that without the recursive call, the algorithm runs in $2^{\Oh(k+|B|)}\cdot\polyn$ time.
	For convenience, we write this running time bound in the form $2^{\Oh(k+|B|)}\cdot(n-2)^{\Oh(1)}$.
	Note that this is possible since $n>2$ for any $2$-connected graph $G$.
	Thus, we can already assume that if the algorithm runs without making recursive calls, it runs in $2^{c_1(k+|B|)}\cdot(n-2)^{c_2}$ time, where $c_1,c_2\ge 1$ are constant integers given by the non-recursive subroutine.
	
	Since the recursive call is made when the graph contains at least two \banana{s}, it is always made to an instance with the smaller number of vertices.
	We will now prove that our algorithm runs in $2^{c_1(k+|B|)}\cdot (n-2)^{c_2+1}$ time by induction on $n$.
	
	The base of our induction are instances for which no recursive calls are made.
	Consider an instance for which at least two recursive calls are made.
	We want to prove that the algorithm running time $2^{\Oh(k+|B|)}\cdot\polyn$.
	First note that the parameter $k+|B|$ does not increase in a recursive call, because $|(B\cup \{s',t'\})\cap V(K)|\le |B\cap V(G)|$.

	Let $q\ge 2$ be the number of \banana{s} in $G$.
	For $i\in [q]$, denote by $K_i, s'_i, t'_i$ the triple given by Lemma~\ref{lemma:st_path_banana_to_2_connected} for the $i$-{th} \banana of $G$.
	Denote also $n_i:=|V(K_i)|$.
	The running time of the algorithm for the instance given by $K_i$ is at most $2^{c_1(k+|B|)}\cdot (n_i-2)^{c_2+1}$ by induction.
	Note that all $q$ sets $V(K_i)\setminus\{s'_i, t'_i\}$ are pairwise disjoint.
	Also, none of these sets contains $s$ or $t$.
	Hence, $\sum_{i=1}^{q} (n_i-2)\le n-2$.
	We now want to upper-bound the sum $\sum_{i=1}^q (n_i-2)^{c_2+1}$.
	
	\begin{proposition}
		Let $a_1, a_2, \ldots, a_q$ be a sequence of $q\ge 2$ positive integers with $\sum_{i=1}^q a_i=n$.
		Let $x>1$ be an integer.
		Then $\sum_{i=1}^q a_i^x \le (n-1)^x+1 \le n^x - n^{x-1}$.
	\end{proposition}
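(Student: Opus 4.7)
The plan is to reduce the problem to a two-term comparison via a convexity/smoothing argument, and then verify the two numerical inequalities directly.

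First I would assume without loss of generality that $a_1 = \max_i a_i$. Since $q \geq 2$ and each $a_i \geq 1$, we have $1 \leq a_1 \leq n-1$. Set $S = n - a_1 = \sum_{i \geq 2} a_i$. For each $i \geq 2$, since $a_i \leq S$, we get $a_i^x = a_i \cdot a_i^{x-1} \leq a_i \cdot S^{x-1}$, and summing over $i \geq 2$ gives $\sum_{i \geq 2} a_i^x \leq S^{x-1} \cdot S = S^x = (n - a_1)^x$. Therefore
\[
\sum_{i=1}^q a_i^x \;\leq\; a_1^x + (n - a_1)^x.
\]
The function $f(t) = t^x + (n-t)^x$ is convex on $[1, n-1]$ (as a sum of convex functions, using $x > 1$), so its maximum on this interval is attained at an endpoint; both endpoints give the same value $(n-1)^x + 1$. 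This establishes the first inequality $\sum_{i=1}^q a_i^x \leq (n-1)^x + 1$.

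For the second inequality, I would rearrange: $(n-1)^x + 1 \leq n^x - n^{x-1}$ is equivalent to $(n-1)^x + 1 \leq (n-1) n^{x-1}$, or
\[
(n-1)\bigl[n^{x-1} - (n-1)^{x-1}\bigr] \;\geq\; 1.
\]
Since $q \geq 2$ and all $a_i \geq 1$, we have $n \geq 2$, so $n - 1 \geq 1$. For the bracketed factor, the binomial expansion gives $n^{x-1} = \bigl((n-1)+1\bigr)^{x-1} = (n-1)^{x-1} + \sum_{k=0}^{x-2}\binom{x-1}{k}(n-1)^k$, and this remainder sum is at least $1$ for any integer $x \geq 2$ (taking the $k=0$ term). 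Multiplying the two lower bounds yields the desired inequality.

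There is no real obstacle here; the only subtle point is recognising that the convex function $f(t) = t^x + (n-t)^x$ is maximised at the boundary of the admissible interval $[1, n-1]$, which is exactly what the constraint $q \geq 2$ (forcing $a_1 \leq n - 1$) buys us. Without $q \geq 2$ the conclusion would fail, since with $q=1$ one would have $\sum a_i^x = n^x$, strictly larger than $(n-1)^x + 1$.
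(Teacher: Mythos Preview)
Your proof is correct. The overall strategy matches the paper's --- both argue that the maximum of $\sum a_i^x$ is attained at the configuration $(n-1,1)$ and then verify the second inequality by the same rearrangement $(n-1)\bigl[n^{x-1}-(n-1)^{x-1}\bigr]\geq 1$ --- but your execution is cleaner in two places. First, the paper collapses $q$ terms down to two by repeatedly merging a pair via $(a+b)^x>a^x+b^x$; you do this in a single stroke with the bound $\sum_{i\geq 2}a_i^x\leq\bigl(\sum_{i\geq 2}a_i\bigr)^x$. Second, the paper pushes the two-term configuration to the boundary by a discrete smoothing step, expanding $(a_1+1)^x-a_1^x\geq a_2^x-(a_2-1)^x$ with the binomial theorem; you instead invoke convexity of $t\mapsto t^x+(n-t)^x$ on $[1,n-1]$, which is shorter and more transparent. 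Either route works, and yours avoids the somewhat heavy binomial bookkeeping.
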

	\begin{proof}
		First, we show that the maximum of the sum $\sum_{i=1}^q a_i^x$ is achieved with $q=2$, $a_1=n-1$, $a_2=1$, if the sum $\sum_{i=1}^q a_i=n$ is fixed.
		To show that the maximum cannot be achieved with $q>2$, it is enough to see that replacing $a_{q-1}$ and $a_q$ with $a_{q-1}+a_q$ yields a greater total sum, as $(a_{q-1}+a_q)^x>a_{q-1}^x+a_q^x$.
		
		We know that the maximum is achieved with $a_1^x+a_2^x$ for some positive integers $a_1, a_2$ with $a_1+a_2=n$.
		Without loss of generality, we can assume that $a_1\ge a_2$.
		Suppose that $a_2>1$.
		Consider replacing $a_1$ with $a_1+1$ and $a_2$ with $a_2-1$.
		We need to show that the total sum does not decrease, i.e., \ $(a_1+1)^x+(a_2-1)^x\ge a_1^x+a_2^x$, or $(a_1+1)^x-a_1^x\ge a_2^x-(a_2-1)^x$.
		Rewrite the left and the right part to obtain
		\[\sum_{i=0}^x \binom{x}{i}a_1^i-a_1^x\ge a_2^x-\sum_{i=0}^x\binom{x}{i}a_2^i(-1)^{x-i}.\]
Then		\[\sum_{i=0}^{x-1}\binom{x}{i}a_1^i\ge -\sum_{i=0}^{x-1}\binom{x}{i}a_2^i(-1)^{x-i},\]
and 
		\[\sum_{i=0}^{x-1}\binom{x}{i}(a_1^i+a_2^i(-1)^{x-i})\ge 0.\]
		Each summand of the sum in the last inequality is non-negative since $a_1^i\ge a_2^i$ for any $i\ge 0$.
		Thus, the initial inequality holds and we can replace $(a_1,a_2)$ with $(a_1+1,a_2-1)$ if $a_2>1$ so the total sum does not decrease.
		Hence, the maximum is achieved with $a_1=n-1$ and $a_2=1$.
		
		It is left to show that $(n-1)^x+1\le n^x-n^{x-1}$.
		We rewrite it as
		\[1\le (n-1)\cdot (n^{x-1}-(n-1)^{x-1}),\]
		which  holds as $n>1$ and $x>1$.
		The proof is complete.
	\end{proof}

	With this proposition,  we have  that  the running time of the algorithm is upper-bounded by
	\begin{multline*}	
	2^{c_1(k+|B|)}\cdot (n-2)^{c_2}+2^{c_1(k+|B|)}\cdot \sum_{i=1}^q n_i^{c_2+1}\le 2^{c_1(k+|B|)}\cdot ((n-2)^{c_2}+((n-2)^{c_2+1}-(n-2)^{c_2})),
	\end{multline*}
	so the induction hypothesis holds.
	This concludes the proof.

\end{proof}

\section{Algorithm for small vertex covers}\label{sec:vcalgo}
In this section we prove Theorem~\ref{thmVCad} stating that \textsc{\probDC\ / Vertex Cover Above Degree} is solvable in $2^{\Oh(p+|B|)}\cdot n^{\Oh(1)}$ running time. Recall that the task of this problem is, given a graph $G$, a subset of vertices $B$, a vertex cover $S$ of $G$ of size $\delta(G-B)+p$ and a nonnegative integer $k$, decide whether $G$ has a cycle of length at least $2\delta(G-B)+k$.
We start by assembling combinatorial results about paths and vertex covers, which we later use in the algorithm.

The following lemma provides conditions when a part of  long cycle $C$ can be rerouted through any sufficiently large independent set. 

\begin{lemma}\label{lemma:cycle_contains_x}
	Let $G$ be a graph with a given subset of vertices $B$ and a vertex cover $S$ such that $S\supseteq B$ and $|S|=\delta(G-B)+p$ for some $p\geq 0$.  
	Let $k$ 
	be a non-negative integer and let $X\subseteq I=V(G)\setminus S$ be such that $|X|=\delta(G-B)-3p$.
	If $G$ has a cycle $C$ of length $2\delta(G-B)+k$, then it also has  a cycle $C'$   such that
	\begin{itemize}
	\item
	The length of $C'$ is   $2\delta(G-B)+k$,
	\item $C'$  contains all vertices of $X$, and  
	\item  $V(C)\cap S=V(C')\cap S$.
	\end{itemize}
\end{lemma}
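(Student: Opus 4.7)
\smallskip

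\noindent\textbf{Proof plan.} The plan is to apply an exchange argument. Among all cycles $C'$ of length $2\delta(G-B)+k$ with $V(C') \cap S = V(C) \cap S$, I pick one maximizing $|X \cap V(C')|$. I will argue that $X \subseteq V(C')$; otherwise, taking $u \in X \setminus V(C')$, I will swap $u$ for some $w \in V(C') \cap (I \setminus X)$ that lies between two $S$-neighbors of $u$ on $C'$, which keeps both the length and the $S$-intersection of $C'$ intact while strictly increasing $|X \cap V(C')|$, contradicting the choice of $C'$.

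The key observation about $u$ is that $u \in X \subseteq I$ and $X \cap B = \emptyset$ (since $B \subseteq S$ and $I = V(G) \setminus S$), so $\deg_G(u) \geq \delta(G-B)$. Because $S$ is a vertex cover and $I$ is independent, every neighbor of $u$ lies in $S$. Hence $u$ has at most $|S| - \delta(G-B) = p$ non-neighbors inside $S$.

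Next comes the structural bookkeeping on $C'$. Set $S' = V(C') \cap S$ and $W = V(C') \cap I$. Since $I$ is independent, no two $I$-vertices are adjacent on $C'$, so between any two cyclically consecutive $S'$-vertices on $C'$ lies either zero or one $I$-vertex. The number of cyclic pairs equals $|S'|$, and exactly $|W|$ of them contain an $I$-vertex. From $|S'| \leq |S| = \delta(G-B) + p$ and $|S'| + |W| = 2\delta(G-B) + k$ I get $|W| \geq \delta(G-B) + k - p \geq \delta(G-B) - p$.

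A cyclic pair $(v_i, v_{i+1})$ is \emph{good} if both endpoints are neighbors of $u$ and the $I$-vertex between them exists and lies in $I \setminus X$. Each of the at most $p$ non-neighbors of $u$ in $S'$ belongs to at most $2$ pairs, so at most $2p$ pairs fail the neighbor condition. The "between" condition fails for at most $(|S'| - |W|)$ pairs with no $I$-vertex plus at most $|X \cap V(C')| \leq |X| - 1 = \delta(G-B) - 3p - 1$ pairs whose $I$-vertex lies in $X$. Thus the number of good pairs is at least
\[
|S'| - 2p - \bigl(|S'| - |W|\bigr) - \bigl(\delta(G-B) - 3p - 1\bigr) \;=\; |W| - \delta(G-B) + p + 1 \;\geq\; k+1 \;\geq\; 1.
\]
Picking a good pair $(v_i, v_{i+1})$ with the $I \setminus X$ vertex $w$ between them, I replace the two edges $v_i w$ and $w v_{i+1}$ of $C'$ by $v_i u$ and $u v_{i+1}$, which are edges of $G$ since $u$ is a neighbor of both endpoints. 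The resulting cycle $C''$ has the same length as $C'$, satisfies $V(C'') \cap S = V(C') \cap S$, and $|X \cap V(C'')| = |X \cap V(C')| + 1$, giving the desired contradiction. The main (and essentially only) obstacle is making the counting tight; every term in the bound on good pairs is used, and it is the hypothesis $|X| = \delta(G-B) - 3p$ that makes the exchange just barely work.
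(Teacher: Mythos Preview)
Your proof is correct and follows essentially the same exchange argument as the paper's own proof: pick $C'$ maximizing $|X\cap V(C')|$ among cycles with the right length and $S$-intersection, then swap a missing $u\in X$ for some $w\in I\setminus X$ sandwiched between two $S'$-neighbors of $u$ on $C'$. The counting is organized slightly differently but is equivalent, yielding at least $k+1\ge 1$ good pairs just as in the paper.
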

\begin{proof}
Because $S$ is a vertex cover, the length of any cycle in $G$ does not exceed $2|S|$. Hence if $G$ contains a cycle of length 
$2\delta(G-B)+k$, we have that $k\le 2p$.
	
	Suppose that $G$ contains a cycle $C$ of length $2\delta(G-B)+k$. 
	Among all cycles of length $2\delta(G-B)+k$, we select a cycle $C'$ such that   $V(C)\cap S=V(C')\cap S$
	and, subject to that,  
	 with the maximum number of vertices from $X$. We claim that all vertices of $X$ are in $C'$. 
	Targeting towards a contradiction, assume that there is a vertex $x\in X$ that is not in $C'$. Let $S'=S\cap V(C')$.
	Note that $|S'|\ge \frac{|C'|}{2}= \delta(G-B)+\frac{k}{2}$.
 Because $S$ is a vertex cover, all neighbors of $x$ are in $S$. Then $x$ has at least $\delta(G-B)$ neighbors in $S$ and, therefore, all but $p$ vertices of $S'$ are adjacent to $x$.
	 
	
	Let $v_1, v_2, \ldots, v_{|S'|}$ be the vertices of $S'$ in the order they appear on the cycle $C'$.
	Note that for each  $i\in\{1,\dots, |S'|\}$,  vertices $v_{i}$ and $v_{i+1}$ (and $v_{|S'|}$,  $v_1$) are either  adjacent vertices in $C'$, or there exists exactly one vertex from $I$ that is between them in $C'$.
We want to show that there exists at least one pair $\{v_{i},v_{i+1}\}$ such that both $v_{i}$ and $v_{i+1}$ are adjacent to $x$ and a vertex $u\in I\setminus X$ is between  $v_{i}$ and $v_{i+1}$ in $C'$. If such a pair exists, then by swapping $u$ and $x$ in $C'$, we would obtain a cycle that has a larger number of vertices from $X$ leading to a contradiction.

	There are at least $\delta(G-B)+k-p$ vertices from $I$ in $C'$, so there are at least $\delta(G-B)+k-p$ pairs $\{v_i, v_{i+1}\}$ that have a vertex from  $I$ between them.
	We know  that at most $p$ vertices in $S'$ are not adjacent to $x$.
	Since each vertex in $S'$ is a member of at most two pairs, vertex $x$ is adjacent to all  but $2p$ such pairs  $\{v_i, v_{i+1}\}$.


	Suppose that $C'$ already contains $t\geq 0$ vertices from $X$. Note that by our assumption,  $t<|X|=\delta(G-B)-3p$, thus   $\delta(G-B)+k-p-2p-t>0$. Therefore, at least one pair of vertices  $\{v_i, v_{i+1}\}$ is adjacent to $x$ and  $v_i  u  v_{i+1}$ is a subpath of $C'$ for some  $u\in I\setminus X$. Therefore, by rerouting $C'$ through $x$, instead of $u$, we construct a cycle $C''$ of length 
 $2\delta(G-B)+k$,  such that   $V(C)\cap S=V(C'')\cap S$, and $C''$ containing $t+1$ vertices of $X$. 
But by our assumption, cycle $C'$ contains the maximum number of vertices $t$ from $X$.	We achieved the contradiction that concludes the proof of the lemma. \end{proof}

We will need the following two simple facts about the number of vertices in a vertex cover that have a small amount of neighbors outside the vertex cover. 

\begin{lemma}\label{thm:many_vertices_with_large_degree}
	Let $G$ be a graph, $S\subseteq V(G)$ be a vertex cover of $G$, and let $I=V(G)\setminus S\neq\emptyset$. 
	Let $d=\min_{v \in I}\deg_G(v)$,  $b=|S|-d \ge 0$, and 
	  $\beta=\frac{d}{|I|}$. Then for any 
	  $\alpha\in (0, \frac{1}{\beta})$, 
	  the number of vertices in $S$ having less that $\alpha d$ neighbors in $I$ is strictly less than $\frac{b}{1-\alpha\beta}$.
\end{lemma}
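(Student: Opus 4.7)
\medskip
\noindent\textbf{Proof plan.} The plan is to argue by a standard double counting of edges between $S$ and $I$. Since $S$ is a vertex cover, $I$ is an independent set, so every edge incident to a vertex of $I$ has its other endpoint in $S$. Hence, for every $v \in I$, all $\deg_G(v) \ge d$ neighbors of $v$ lie in $S$, and therefore the number of edges between $S$ and $I$ is at least $d\,|I|$.

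Let $T \subseteq S$ denote the set of vertices in $S$ having fewer than $\alpha d$ neighbors in $I$, and let $e(S,I)$ be the number of edges between $S$ and $I$. Each vertex of $T$ contributes fewer than $\alpha d$ such edges, while each vertex of $S \setminus T$ contributes at most $|I|$ such edges (trivially). Summing these bounds gives
\[
d\,|I| \;\le\; e(S,I) \;\le\; |T|\,\alpha d \;+\; (|S|-|T|)\,|I|,
\]
with the rightmost inequality being strict whenever $T \neq \emptyset$.

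The final step is to rearrange. Using $|S| = d + b$, the inequality becomes
\[
d\,|I| \;<\; |T|\,\alpha d + (d+b-|T|)\,|I|,
\]
which simplifies to $|T|(|I|-\alpha d) < b\,|I|$. Since $\alpha < 1/\beta = |I|/d$, the factor $|I|-\alpha d$ is strictly positive, and dividing yields
\[
|T| \;<\; \frac{b\,|I|}{|I|-\alpha d} \;=\; \frac{b}{1-\alpha\beta},
\]
as claimed. (When $T=\emptyset$ the bound is trivial for $b>0$; the interesting case is $T\neq\emptyset$, which is precisely where the strictness is used.)

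No real obstacle is expected: the proof is a one-line double-counting argument, and the only care needed is to track where the strict inequality comes from (namely the definition of $T$ via ``strictly fewer than $\alpha d$'').
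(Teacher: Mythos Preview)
Your proof is correct and is essentially identical to the paper's: both argue by double-counting the edges between $S$ and $I$, lower-bounding by $d|I|$ and upper-bounding by $|T|\alpha d + (|S|-|T|)|I|$, then rearranging. You are in fact slightly more careful than the paper in isolating where the strict inequality comes from (the case $T\neq\emptyset$), whereas the paper simply asserts the strict upper bound without comment.
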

\begin{proof}
%

Let $s$ be the number of vertices in $S$ with less than $\alpha d$ neighbors in $I$.
	On one hand, the number of edges between $I$ and $S$ is at least $d|I|$.
	On the other hand, it is less than $\alpha d s + (|S|-s)|I|=\alpha ds + (d+b-s)|I|$.
	Hence, \[d|I|<\alpha ds+(d+b-s)|I|.\]
This is equivalent to 
	\[d<\frac{\alpha ds}{|I|}+d+b-s.\]
	Thus 
	\[b>s\cdot \left(1-\alpha \cdot \frac{d}{|I|}\right)=s\cdot\left(1-\alpha\beta\right),\]
and we conclude that 	
  $s < \frac{b}{1-\alpha\beta}$.

\end{proof}


\begin{lemma}\label{lemma:many_vertices_with_neighbors_in_x}
	Let $G$ be a graph with $B\subseteq V(G)$ and a vertex cover $S\supseteq B$ with $|S|=\delta(G-B)+p$, where $0<p< \delta(G-B)/8$. 
	Then for any $X\subseteq V(G)\setminus S$ with $|X|\geq\delta(G-B)-3p$, at most $2p$ vertices in $S$ have less than $2p$ neighbors in $X$.
\end{lemma}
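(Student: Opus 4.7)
The statement is a straightforward double-counting inequality in the bipartite graph of edges between $X$ and $S$, so the proof will be short; the main (only) obstacle is making sure the slack in $|X| \geq \delta(G-B)-3p$ and $p < \delta(G-B)/8$ is enough to pin down the bound $2p$ exactly. No structural combinatorics is needed, and this is very much in the spirit of the preceding Lemma~\ref{thm:many_vertices_with_large_degree} (one could in fact deduce the statement from a mild refinement of that lemma applied to the subgraph $G[X\cup S]$, but a direct count is cleaner).

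\textbf{Step 1 (lower bound on edges between $X$ and $S$).} Write $d=\delta(G-B)$. Every $v\in X\subseteq V(G)\setminus S$ lies outside $B$ (because $B\subseteq S$), so $\deg_{G-B}(v)\geq d$. Since $S$ is a vertex cover, $V(G)\setminus S$ is an independent set, hence every neighbour of $v$ lies in $S$. Therefore $v$ has at least $d$ neighbours in $S$, and the number of edges between $X$ and $S$ is at least $d\,|X|$.

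\textbf{Step 2 (upper bound from the offending vertices).} Let $s$ denote the number of vertices of $S$ with fewer than $2p$ neighbours in $X$. These $s$ vertices contribute at most $(2p-1)s<2ps$ edges to $E(X,S)$, and each of the remaining $|S|-s=d+p-s$ vertices contributes at most $|X|$ edges. Hence
\[
d\,|X| \;\leq\; 2ps + (d+p-s)\,|X|,
\]
which rearranges to $s(|X|-2p)\leq p\,|X|$.

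\textbf{Step 3 (finish using $p<d/8$).} The hypothesis $p<d/8$ gives $|X|\geq d-3p>5p>2p$, so we may divide to obtain
\[
s \;\leq\; \frac{p\,|X|}{|X|-2p} \;=\; p + \frac{2p^{2}}{|X|-2p} \;\leq\; p + \frac{2p^{2}}{d-5p}.
\]
Again using $d>8p$, we have $d-5p>3p$, so $\frac{2p^{2}}{d-5p}<\frac{2p}{3}$ and therefore $s<p+\frac{2p}{3}<2p$. This yields $s\leq 2p$ as required, completing the proof.
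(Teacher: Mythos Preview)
Your proof is correct and follows essentially the same double-counting idea as the paper. The only cosmetic difference is that the paper invokes the preceding Lemma~\ref{thm:many_vertices_with_large_degree} (applied to $G[S\cup X]$ with $\alpha=\tfrac14$) as a black box, whereas you unfold the same edge-count directly; both arrive at $s<2p$ with comparable slack ($\tfrac{20}{11}p$ in the paper versus $\tfrac{5}{3}p$ in your argument).
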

\begin{proof}
	Consider the graph $G[S\cup X]$.
	Apply Lemma~\ref{thm:many_vertices_with_large_degree} to this graph with $I=X$.
	Clearly, $d\ge \delta(G-B)>8p$, because $B\subseteq S$ and $p< \delta(G-B)/8$. Therefore, $b\leq p$. 
	Because $p< \delta(G-B)/8$, we also have that 	
	 $\beta= d/|X|\le |S|/(\delta(G-B)-3p)=(\delta(G-B)+p)/(\delta(G-B)-3p)\le 1+4p/(\delta(G-B)-3p)< \frac{9}{5}$. 	Pick $\alpha=\frac{1}{4}$, so $\alpha\beta<\frac{9}{20}$.
	By Lemma~\ref{thm:many_vertices_with_large_degree}, at most $p/(1-\alpha \beta)<\frac{20}{11}p<2p$ vertices in $S$ have less than $\alpha d > \frac{1}{4} \cdot 8p = 2p$ neighbors in $I=X$. 
\end{proof}

The following structural  lemma provides necessary and sufficient conditions for the existence of a long cycle in graph crossing specified subsets of the vertex cover and the independent set. These conditions can be checked in FPT time (Lemma~\ref{lemma:path_cover_dp}), and both lemmata are the crucial components in the proof of Theorem~\ref{thmVCad}.

\begin{lemma}\label{thm:path_cover}
	Let $G$ be a graph, $B\subseteq V(G)$, and let $p> 0$ be an integer such that $p<\delta(G-B)/8$. Assume that $G$ has a vertex cover $S$ such that    
	$|S|=\delta(G-B)+p$ and $B\subseteq S$.
	%
	Let $k\ge 0$ be an integer and let $X \subseteq I=V(G)\setminus S$ such that $|X|\geq \delta(G-B)-3p$.
	Let $A \subseteq S$  be the set of  vertices of $S$ with at least $p+1$ neighbors in $X$,  and let $Z=S\setminus A$.
	Then there is a cycle of length $2\delta(G-B)+k$ in $G$ containing all vertices in $X \cup Z$ if and only if there is a set $Y\subseteq I$ and a path cover $\mathcal{P}$ of $G[S \cup Y]$, such that:
	\begin{itemize}
		\item[(i)] $\mathcal{P}$  consists of $|S|+k-2q-|Y|$ paths, where $\frac{k}{2} \le q \le p$,  
		\item[(ii)] $\mathcal{P}$ contains no path with an endpoint in $Z$ or $Y$,  
		\item[(iii)]  At least 
		$p-q$ of  paths in $\mathcal{P}$  are paths of length $0$, that is,  covering a single vertex of $A$,  
		\item[(iv)] $|Y|\le 2|Z|$,  
		\item[(v)] $|X\cup Y| \le \delta(G-B)+k-q$, and
		\item[(vi)] $|I| \ge \delta(G-B)+k-q$.
	\end{itemize}
\end{lemma}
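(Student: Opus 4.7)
The plan is to prove both implications by explicit construction, viewing a cycle $C$ of length $\ell = 2\delta(G-B)+k$ as a ``path cover of $S \cup Y$ reassembled with $I$-vertices as connectors''.

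\emph{Forward direction.} Suppose $C$ exists with $X \cup Z \subseteq V(C)$. Let $s = |V(C) \cap S|$ and set $q := s - \delta(G-B)$. Because $I$ is independent, $s$ lies in $[\lceil \ell/2 \rceil,\ |S|]$, so $q \in [k/2, p]$, and the double count $|V(C) \cap I| = \ell - s = \delta(G-B) + k - q$ immediately gives (vi). I will define $Y \subseteq V(C) \cap I$ as the set of $I$-gap-vertices on $C$ having at least one cycle-neighbor in $Z$; since each vertex of $Z$ has at most two $I$-cycle-neighbors and $|Z| \le 2p$ by Lemma~\ref{lemma:many_vertices_with_neighbors_in_x}, this gives $|Y| \le 2|Z|$, which is (iv). Set $B_C := (V(C) \cap I) \setminus Y$; by definition every vertex of $B_C$ has both cycle-neighbors in $A$. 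Let $\mathcal{P}$ consist of the paths obtained by deleting $B_C$ from $C$ together with the trivial paths $\{a\}$ for each $a \in S \setminus V(C)$. Since $Z \subseteq V(C)$, we have $S \setminus V(C) \subseteq A$, so trivial paths satisfy (ii); the non-trivial endpoints lie in $A$ by the defining property of $B_C$ (noting that any single-$S$-vertex ``piece'' between two consecutive $B_C$-vertices is in $A$ for the same reason). A direct count gives $|\mathcal{P}| = |B_C| + (p - q) = (\delta(G-B) + k - q - |Y|) + (p - q) = |S| + k - 2q - |Y|$, proving (i). Conditions (iii) and (v) follow from $|S \setminus V(C)| = p - q$ and $X \cup Y \subseteq V(C) \cap I$, respectively.

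\emph{Backward direction.} Given $Y$ and $\mathcal{P}$ satisfying (i)--(vi), I will omit $p - q$ of the trivial paths in $A$ guaranteed by (iii), keeping $r' := |\mathcal{P}| - (p - q) = \delta(G-B) + k - q - |Y|$ paths. The retained paths include all non-trivial ones, hence cover every vertex of $Z$ internally by (ii). I will arrange the retained paths in a cyclic order and connect consecutive pairs by $r'$ distinct ``bridges'' drawn from $I$, each adjacent to both flanking endpoints (which lie in $A$ by (ii)), while forcing every vertex of $X$ to appear as a bridge. Conditions (v) and (vi) respectively guarantee $|X| \le r'$ and $|I \setminus (X \cup Y)| \ge r' - |X|$, so enough candidate bridges exist. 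A count gives cycle length $(|S| - (p - q)) + |Y| + r' = 2\delta(G-B) + k$, with $X$ appearing among the bridges and $Z$ appearing as internal vertices of the retained non-trivial paths.

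The hard part will be the matching argument in the backward direction: one must simultaneously choose the cyclic arrangement and the $r'$ distinct bridges so that (a) every $X$-vertex is used as a bridge and (b) each bridge is adjacent to the two $A$-endpoints flanking its slot. A direct Hall-type argument on the (slot, bridge)-bipartite graph fails, since two arbitrary $A$-endpoints may share only $2(p+1) - |X|$ common $X$-neighbors, which can be negative. The proof must exploit both the strong individual degree bound (each $a \in A$ has $\ge p+1$ neighbors in $X$, and only a small global set of ``bad'' vertices has low $X$-degree by Lemma~\ref{lemma:many_vertices_with_neighbors_in_x}) and the substantial freedom in permuting the cyclic order; I expect the argument proceeds via a greedy embedding in which the abundant trivial $A$-paths are inserted into positions that locally break any matching bottleneck, so that each $X$-vertex can be accommodated one at a time.
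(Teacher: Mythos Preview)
Your forward direction is correct and matches the paper's argument essentially line for line.

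The backward direction, however, has a genuine gap that you yourself flag: you need to choose a cyclic order on the $r'$ retained paths and an injection from slots to bridges so that every bridge is adjacent to both flanking $A$-endpoints and every vertex of $X$ is used. You correctly observe that a Hall-type argument on the slot--bridge bipartite graph does not go through (two $A$-endpoints need not share many $X$-neighbours), and your proposed ``greedy embedding'' is only a hope, not an argument. The paper sidesteps this difficulty completely via the Bondy--Chv\'atal closure (Theorem~\ref{thm:bh}). Concretely: take $H=G[S'\cup I']$ where $S'$ is $S$ minus the $p-q$ omitted trivial vertices and $I'\supseteq X\cup Y$ has size $\delta(G-B)+k-q$; delete from $H$ all $S'$--$S'$ edges not lying on a path of $\mathcal{P}'$, call the result $H'$. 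Now add all $I'$--$I'$ edges to get $H''$; this is harmless because $H'[S']$ decomposes into exactly $|I'|$ paths, so any Hamiltonian cycle of $H''$ already uses $2|I'|$ edges between $S'$ and $I'$ and hence cannot use an $I'$--$I'$ edge. In $H''$ each $v\in I'$ has degree at least $\delta(G-B)-(p-q)+|I'|-1=2\delta(G-B)+k-p-1$, and each $a\in S'\cap A$ has at least $p+1$ neighbours in $X\subseteq I'$; the degree sum is at least $|V(H'')|$, so by Bondy--Chv\'atal one may further make every $I'$-vertex adjacent to every $A\cap S'$-vertex. In this final graph the Hamiltonian cycle is trivial to build (just alternate the $r'$ paths with the $r'$ vertices of $I'\setminus Y$), and the closure theorem transports it back to $H'$ and hence to $G$. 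This is the missing idea: closure converts the awkward simultaneous matching problem into a pure degree-sum calculation.
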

\begin{proof}
	Let $C$ be a cycle of length $2\delta(G-B)+k$ in $G$ containing all vertices from $X\cup Z$.
	Define $Y\subseteq I$ to be the set of the vertices of $C$ in $I$ having neighbors in $Z$ in the cycle.  
	Clearly, $|Y|\leq 2|Z|$ satisfying (iv).
	Let $S'=S\cap V(C)$ and define $q:=|S'|-\delta(G-B)$.
	Note that  $q \ge \frac{k}{2}$ because $2|S'|\ge |C|$,  and that $q \le p$ because  $|S'|\le |S|$. Hence, the conditions for $q$ in (i) are satisfied.
Since $X\cup Y \subseteq  V(C)$ and $| V(C)\setminus (X\cup Y)|\ge|S'|=\delta(G-B)+q$, we have that $|X\cup Y|\le \delta(G-B)+k-q$ and (v) holds. Notice that 
$|I|\geq |V(C)\cap I|=|V(C)|-|S'|=(2\delta(G-B)+k)-(q+\delta(G-B))=\delta(G-B)+k-q$ and (vi) is fulfilled.

	Because $p<\delta(G-B)/8$ and  $|X|\geq \delta(G-B)-3p$, $|Z|<2p$ by Lemma~\ref{lemma:many_vertices_with_neighbors_in_x}. Then $|Y|\leq 2|Z|< 4p$ and, therefore, $|S'|+|Y|< \delta(G-B)+q+4p\leq \delta(G-B)+5p$. Since $C$ has $2\delta(G-B)+k\geq 2\delta(G-B)$ vertices and $\delta(G-B)\geq 8p$, we obtain that $|S'|+|Y|< |C|$. This means that $C[S'\cup Y]$ is a proper subgraph of $C$, that is, the union of disjoint paths. 
Consider the path cover $\mathcal{P}'$ of $S'\cup Y$ which is produced by $C$, that is, $\mathcal{P}'$ is the set of paths that are connected components of $C[S'\cup Y]$ (see Figure~\ref{fig:pathcover}).
	It consists of $|C|-|S'|-|Y|$ paths, as each vertex from $V(C)\setminus (S'\cup Y)$ on $C$ is a neighbor to exactly two endpoints in the path cover produced by $C$. Note also that the endpoints of each path of $\mathcal{P}'$ are in $A$. 
	This  path cover still does not cover vertices in $S\setminus S'$, so we  add $|S|-|S'|=p-q$ paths of zero length covering each vertex from $S\setminus S' \subseteq A$; this satisfies (iii). 
	The obtained path cover  $\mathcal{P}$  is a path cover of $S\cup Y$ consisting of exactly $|C|+|S|-2|S'|-|Y|=|S|+k-2q-|Y|$ paths implying (i). Since all the path in $\mathcal{P}'$ have their endpoints in $A$ and each or $p-q$ trivial paths is a vertex of $A$, we obtain that (ii) is fulfilled. We conclude that $\mathcal{P}$ satisfies conditions (i)--(vi) in the  statement of the lemma.

	
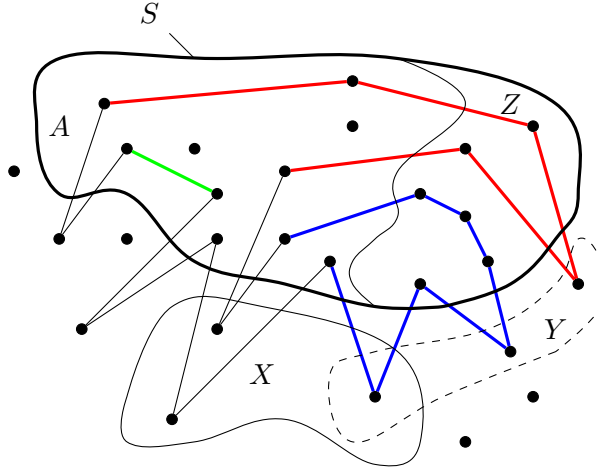
\begin{figure}[ht]
	\centering
	\ifdefined\STOC
\begin{tikzpicture}[scale=0.55]
\else
\begin{tikzpicture}[scale=0.6]
\fi
	\tikzstyle{vertex}=[draw, fill, circle, black, minimum size=4,inner sep=0pt]
	\tikzstyle{path1}=[red, very thick]
	\tikzstyle{path2}=[blue, very thick]
	\tikzstyle{path3}=[green, very thick]
	\tikzstyle{regular}=[]

\node [vertex] (v9) at (5,2.5) {};
\node [vertex] (v4) at (4.5,5) {};
\node [vertex] (v2) at (6,5.5) {};
\node [vertex] (v3) at (7,2) {};
\node [vertex] (v1) at (-3.5,6) {};
\node [vertex] (v5) at (0.5,4.5) {};
\node [vertex] (v22) at (2,5.5) {};
\node [vertex] (v23) at (2,6.5) {};
\draw [path1](v1) -- (v23) -- (v2) -- (v3) -- (v4) -- (v5);
\node [vertex] (v10) at (5.5,0.5) {};
\node [vertex] (v11) at (3.5,2) {};
\node [vertex] (v12) at (2.5,-0.5) {};
\node [vertex] (v8) at (4.5,3.5) {};
\node [vertex] (v7) at (3.5,4) {};
\node [vertex] (v6) at (0.5,3) {};

\node [vertex] (v13) at (1.5,2.5) {};
\draw [path2](v6);
\draw [path2](v6) -- (v7) -- (v8) -- (v9) -- (v10) -- (v11) -- (v12) -- (v13);
\node [vertex] (v14) at (-2,-1) {};
\node [vertex] (v15) at (-1,3) {};
\node [vertex] (v16) at (-4,1) {};
\node [vertex] (v17) at (-1,4) {};
\node [vertex] (v18) at (-3,5) {};
\draw [regular](v13) -- (v14) -- (v15) -- (v16) -- (v17);
\draw [path3](v17) -- (v18);
\node [vertex] (v19) at (-1,1) {};
\draw [regular](v5) -- (v19) -- (v6);
\node [vertex] (v20) at (-4.5,3) {};
\draw [regular](v1) -- (v20) -- (v18);

\draw [very thick]  plot[smooth, tension=.7] coordinates {(4,1.5) (5.5,2) (6.5,3) (7,4) (6.5,6) (3,7) (-1,7) (-4.5,7) (-5,5.5) (-4.5,4) (-3,4) (-1.5,2.5) (0.5,2) (2.5,1.5) (4,1.5)};

\draw  plot[smooth, tension=.7] coordinates {(3,7) (4.5,6) (3,4.5) (3,3.5) (2.5,3) (2,2.5) (2,2) (2.5,1.5)};
\draw [dashed] plot[smooth, tension=.7] coordinates {(6.5,0.5) (4,-0.5) (2,-1.5) (1.5,0) (2.5,0.5) (5,1) (6.5,2) (7,3) (7.5,2.5) (7.5,1.5) (6.5,0.5)};
\draw  plot[smooth, tension=.7] coordinates {(-3,-0.5) (-2,1.5) (0,1.5) (3,0.5) (3.5,-1.5) (2.5,-2) (0.5,-1) (-1.5,-1.5) (-3,-1.5) (-3,-0.5)};
\node [vertex] at (4.5,-1.5) {};
\node [vertex] at (-1.5,5) {};

\node [vertex] at (6,-0.5) {};
\node [vertex] at (-3,3) {};
\node [vertex] at (-5.5,4.5) {};
\node at (-4.5,5.5) {$A$};
\node at (5.5,6) {$Z$};
\node at (0,0) {$X$};
\node at (6.5,1) {$Y$};
\node (v21) at (-2.5,8) {$S$};
\draw [regular](v21) -- (-1.5,7);
\end{tikzpicture}
	\caption{Illustration of how a cycle forms a path cover  $\mathcal{P}$ from Lemma~\ref{thm:path_cover}.
		Edges belonging to different paths are colored with different colors.
		Vertices in $A$ that have no incident colored edge are covered by   zero-length paths in $\mathcal{P}$.}\label{fig:pathcover}
\end{figure}
	
	We now prove the opposite  direction.
	Let   $Y\subseteq I$ and  let $\mathcal{P}$ be  a path cover of $G[S\cup Y]$ satisfying conditions (i)--(vi) of the lemma. In particular, $|\mathcal{P}|=|S|+k-2q-|Y|$, where $\frac{k}{2}\leq q\leq p$,  by (i). 
	We show that there exists a cycle of length $2\delta(G-B)+k$ containing all vertices of  $X\cup Z$.
	We remove from  $\mathcal{P}$  arbitrary $p-q$ zero length paths covering  single vertices of $A$ using (iii).
	The obtained set of paths $\mathcal{P}'$ consists of $|S|-p-q+k-|Y|=\delta(G-B)+k-q-|Y|$ paths and covers $G[S'\cup Y]$, where $S'\subseteq S$ and  $|S'|=|S|-(p-q)=\delta(G-B)+q$.
	Now choose an arbitrary subset $I'$   of $I$ of size $\delta(G-B)+k-q$ containing all vertices from  $X\cup Y$ that exists due to (v) and (vi).
	We consider $H=G[S'\cup I']$.  Notice that $|V(H)|=|S'|+|I'|=2\delta(G-B)+k$.
	We claim that graph $H$ contains a  Hamiltonian cycle. Clearly, this suffices for the proof, because the length of such a cycle is $2\delta(G-B)+k$ and it contains all the vertices of $X\cup Z$ as required. 
	 
	Let $H'$ be the graph obtained from $H$ be the deletion of edges $e\in H[S']$ that are not included in the paths of $\mathcal{P}'$. It is straightforward to see that it is sufficient to show that $H'$ has a Hamiltonian cycle. By construction, $H'[S']$ is the union of paths that are subpaths of the elements of $\mathcal{P}'$. By (ii), no path of $\mathcal{P}'$ has an endpoint in $Y$. 
	Since $\mathcal{P}'$ consists of paths with endpoints in $S'$ covering $S'\cup Y$ and there is no edges between vertices in $Y$, removal of each vertex from $Y$ breaks one path into two.
	Hence, the number of disjoint paths forming  $H'[S']$ is exactly $|\mathcal{P}'|-|Y|=\delta(G)+k-q-|Y|+|Y|=|I'|$.
	This implies that the graph $H''$ obtained from $H'$ by making every pair of distinct vertices of $I'$ adjacent has a Hamiltonian cycle if and only if the same holds for $H'$, because no Hamiltonian cycle of $H''$ cannot contain an edge $uv$ with $u,v\in Y$. Otherwise,  such a cycle would  
	cover 
	$S'$ by less than $|I'|$ paths.
	Now the degree of each vertex from $I'$ in $H''$ is at least $\delta(G-B)-(p-q)+|I'|-1=2\delta(G-B)+k-p-1$.
	
	Take a vertex   $v\in S'\setminus Z$. Then $v\in A$ and, by the definition of $A$, $v$  has at least $p+1$ neighbors in $X\subseteq I'$. Hence it has at least $p+1$ neighbors in $H''$.
	Therefore, the sum of vertex degrees  of a  vertex from $S'\setminus Z$ and a vertex from $I'$ in $H''$,  is at least $2\delta(G-B)+k=|V(G')|$.  We construct $H'''$ from $H''$ by making adjacent every pair of vertices $u$ and $v$ with $u\in I'$ and $v\in S'\setminus Z$.  	Theorem~\ref{thm:bh} implies that $H'''$ has a Hamiltonian cycle if and only if $H''$ has a Hamiltonian cycle. 
	
	
	
	Finally, we construct  a Hamiltonian cycle in $H'''$ using the paths of $\mathcal{P}'$. For this, recall that each path of $\mathcal{P}'$ has its endpoints in $A$ by (ii). 
	Notice that there are exactly $|I'\setminus Y|=\delta(G-B)+k-q-|Y|$ vertices of $I'$ that are not covered by the paths. Since the number of paths in $\mathcal{P}'$ is $\delta(G-B)+k-q-|Y|$ and every endpoint of a path is adjacent to every vertex of $I'\setminus Y$, it is straightforward to see that we can construct a Hamiltonian cycle joining the paths of $\mathcal{P}'$ via the vertices of $I'\setminus Y$. 

Thus we conclude that $H'''$ has a Hamiltonian cycle. This implies that $H$ has a Hamiltonian cycle and competes the proof. 
\end{proof}

By Lemma~\ref{thm:path_cover}, to find a cycle of length 
$2\delta(G-B)+k$ in $G$ containing all vertices in $C \cup Z$, it suffices to identify a path cover $\mathcal{P}$. Such a path cover can be computed by making use of color-coding. More precisely. 

\begin{lemma}\label{lemma:path_cover_dp}
	Given $G, B, S, k,$ and $X,A,Z$ defined in the same way as in Lemma~\ref{thm:path_cover}, the existence of $Y$ and a path cover $\mathcal{P}$ of $G[S\cup Y]$ satisfying (i)--(vi)
	can be determined in $2^{\Oh(p)}\cdot\polyn$ running time.
\end{lemma}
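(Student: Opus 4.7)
The plan is to rephrase the existence question in terms of finding a small linear forest in $G$, and then to find such a forest by color coding. Observe first how constrained the problem is: Lemma~\ref{lemma:many_vertices_with_neighbors_in_x} yields $|Z|\le 2p$, and then (iv) gives $|Y|\le 4p$. Moreover, comparing the total number of paths $|\mathcal{P}|=|S|+k-2q-|Y|$ from (i) with the number of covered vertices $|S|+|Y|$, one sees that $\mathcal{P}$ contains exactly $2|Y|+2q-k\le 10p$ edges in total. Hence, after discarding the zero-length components of $\mathcal{P}$, the remainder is a linear forest $F$ with $\Oh(p)$ edges and at most $\Oh(p)$ vertices whose vertex set contains $Z$, whose endpoints lie in $A$, and whose intersection with $I$ is $Y$. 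Conversely, any such $F$ can be completed to a path cover satisfying (i)--(vi) by filling in $A\setminus V(F)$ with zero-length paths, provided the arithmetic conditions relating $q$, $|Y|$ and $|V(F)\cap A|$ hold and (v), (vi) are verified directly.

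I would then enumerate $q\in\{\lceil k/2\rceil,\dots,p\}$ and $|Y|\in\{0,\dots,2|Z|\}$, giving $\Oh(p^2)$ outer guesses that fix the target edge count $L=2|Y|+2q-k$ and target cardinalities for $V(F)\cap A$ and $V(F)\cap I$. The residual task is to decide, for each guess, whether $G$ contains a linear forest with those exact parameters, endpoints in $A$, and covering $Z$. Since $|V(F)|=\Oh(p)$, this is a classical setting for color coding: I would color $V(G)\setminus Z$ uniformly at random with $c=\Theta(p)$ colors, drawn from two disjoint palettes (one for $A$ and one for $I$), and search only for $F$'s whose $V(F)\setminus Z$ is rainbow. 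A valid $F$ is rainbow with probability $e^{-\Oh(p)}$, so $2^{\Oh(p)}$ independent colorings amplify the success probability to a constant; derandomization follows from the perfect hash families of Proposition~\ref{prop:derand}.

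For a fixed coloring I would use a two-layer dynamic program. The inner layer is the standard color-coded $(u,v)$-path DP: for each pair of potential endpoints $u,v\in A$, each color subset $C$, and each $Z'\subseteq Z$, decide whether $G$ contains a $u$-$v$ path that is rainbow on $C$ and whose $Z$-vertices are exactly $Z'$. The outer layer glues these paths into the forest: a table $f(C,Z',L',t)$ records whether a rainbow linear forest using colors $C$, $Z$-vertices $Z'$, $L'$ edges, and $t$ components exists, and transitions peel off one path at a time by consulting the inner table. The main obstacle I anticipate is bookkeeping: the uncolored $Z$-vertices may appear as internal vertices of different paths and must be carried explicitly in the state, while the numerical constraints on $|V(F)\cap A|$, $|Y|$, $|E(F)|$, the number of components, and conditions (v)--(vi) must be enforced simultaneously. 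Since every tracked quantity is $\Oh(p)$, the state space stays within $2^{\Oh(p)}$, and the overall running time is $2^{\Oh(p)}\cdot n^{\Oh(1)}$.
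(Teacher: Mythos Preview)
Your approach is essentially the same as the paper's: bound $|Z|,|Y|,|E(F)|$ by $\Oh(p)$, guess $q$ and $|Y|$, then use color coding with separate palettes for $A$ and $I$ together with a two-layer DP (single colorful path, then gluing into a forest) to find the nontrivial part of $\mathcal{P}$.

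The one place your write-up is thin is condition~(v). Your table $f(C,Z',L',t)$ records only existence, but $|X\cup Y|=|X|+|Y\setminus X|$ depends on \emph{which} $I$-vertices realize the colors in $C$, so (v) cannot be ``verified directly'' afterwards. The paper resolves this by turning the DP into an optimization: give each vertex of $X$ weight~$1$ and let the single-path table $\alpha(R)$ and the forest table $\beta(R,\ell)$ return the \emph{maximum} number of $X$-vertices a colorful solution can use; then (v) holds for some valid $Y$ iff it holds for the $Y$ achieving this maximum. Equivalently, you could add $|Y\setminus X|\le 4p$ as one more $\Oh(p)$-bounded coordinate of your state --- either fix restores a complete argument.
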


\begin{proof}
         Because $p<\delta(G-B)/8$ and  $|X|\geq \delta(G-B)-3p$, $|Z|<2p$ by Lemma~\ref{lemma:many_vertices_with_neighbors_in_x}. Then we are looking for $Y\subseteq I$ with $|Y|\leq 2|Z|< 4p$ by (iv).
Also by (i), $\frac{k}{2}\leq q\leq p$. 
	We assume without loss of generality that $q$ and the cardinality $r$ of $Y$ are  fixed, as an algorithm can iterate over all $\Oh(p^2)$ possible pairs of these values in an outer loop. We also assume that (vi) holds for the given value of $q$.
	The algorithm is now to find a set of disjoint paths $\mathcal{P}$ covering all vertices in $S$ and a set $Y\subseteq I$ of  size $r$.
	Since Lemma~\ref{thm:path_cover} requires an upper bound (v) on $|X\cup Y|$, we will aim to 
	maximize
	$|X\cap Y|$, i.e.\  the number of vertices from $X$ used by the paths of $\mathcal{P}$.
	
	As the paths of $\mathcal{P}$ cover  exactly $|S|+|Y|$ vertices and their number is exactly $|S|+k-2q-|Y|$ by (i), the total length of these paths is exactly $2|Y|+2q-k\leq 10p$.
	This allows us to deal with a bounded number of paths of positive length.
	By (ii), there is no path in $\mathcal{P}$ with an endpoint in $Z\cup Y$. 
	In particular, this means that all paths of zero length are vertices in $A$ and the endpoint of nontrivial paths are in $A$. 
	Each nontrivial path has exactly two endpoints in $A$. Then, because the total number of path $\mathcal{P}$ is $|S|+k-2q-|Y|$, the number of nontrivial paths $t$ is at most $|A|-|\mathcal{P}|=|A|- (|S|+k-2q-|Y|)=|Y|-|Z|+2q-k\leq 6p$. Note also that because  $|\mathcal{P}|=|S|+k-2q-|Y|$, the nontrivial paths should cover exactly $s=|S|+k-2q-|Y|-t$ vertices of $A$  and they should leave uncovered at least $p-q$ vertices of $A$ to satisfy (iii). Clearly, $s\leq 20p$, because the total length of the nontrivial paths is at most $10p$.
	Thus, our task is reduced to deciding whether there is a set $Y\subseteq I$ of size $r\leq 4p$
	 and 
	 a family of $t\leq |Y|-|Z|+2q-k\leq 6p$ disjoint nontrivial paths  $\mathcal{P}'$ 
	 such that 
	 \begin{itemize}
	 \item[(a)] the endpoints of the paths of $\mathcal{P}'$ are in $A$, 
	 \item[(b)] the paths  cover the vertices of $Y$ and exactly
	 $s=|S|+k-2q-|Y|-t\leq 20p$ 
	 vertices of $A$, and they leave uncovered at least $p-q$ vertices of $A$,
	 \item[(c)] subject to (a)--(b), $|Y\cap X|$ is maximum. 
	 \end{itemize}
	 The color-coding technique of Alon, Yuster, and Zwick~\cite{AlonYZ95} is a standard tool for solving problems of this type. Since the approach is standard (see, e.g, the book~\cite[Chapter~5]{cygan2015parameterized}), we only briefly sketch the algorithm. In the same way as in the proof of Theorem~\ref{thmTLDP},  we give a sketch of a randomized Monte Carlo algorithm and then explain how it can be derandomized.	 
	 
For each positive integer $t\leq |Y|-|Z|+2q-k$, we verify whether there are $Y$ and $\mathcal{P}'$ satisfying (a) and (b) and find the maximum size of $|X\cap Y|$. After iterating over all possible values of $t$, the algorithm returns a solution that gives the maximum value of $X\cap Y$. For a given $t$, we compute $s=|S|+k-2q-|Y|-t$ and verify whether $|A|-s\geq p-q$. We discard the current choice of $t$ if $|A|-s< p-q$. 
From now we assume that the value of $t$ is fixed and $|A|-s\geq p-q$. 
	
	We use the following randomized procedure. 
	We color the vertices of $I$ by $r=|Y|$ distinct colors uniformly at random and  then the vertices of  $A$ are colored uniformly at random with another set of  $s$ distinct colors.	
	We also assume that the vertices of $Z$ are colored as well by pairwise distinct colors that are different from the colors used for $I$ and $A$. We denote by $C_I$, $C_A$, and $C_Z$ the sets of colors used to color $I$, $A$, and $Z$, respectively. Let also $C=C_I\cup C_A\cup C_Z$. Clearly, $|C|=\Oh(p)$.
	We say that $Y\subseteq I$ and a set of disjoint nontrivial paths  $\mathcal{P}'$ satisfying (a) and (b) is a 	\emph{coloful} solution if the vertices of the paths are colored by distinct colors.  
	
	The  main steps of our algorithm either finds the maximum $|X\cap Y|$ for a colorful solution or reports that a colorful solution does not exist.

	For a set of colors $R\subseteq C$, denote by $\alpha(R)$ the maximum number of vertices of $X$ that can be covered by a nontrivial path $P$ with $|R|$ vertices such that their  the endpoint are in $A$ and the vertices of $P$ are colored by distinct colors from $R$; we assume that $\alpha(R)=-\infty$ if such a path does not exist.  We observe that for every $R\subseteq C$, the value of $\alpha(R)$ can be computed in $2^{\Oh(p)}\cdot n^{\Oh(1)}$ time by a straightforward modification of the standard dynamic programming algorithm for finding a colorful $|R|$-path (see~\cite{AlonYZ95} and~\cite[Chapter~5]{cygan2015parameterized}). It is easy to incorporate the condition that the endpoits are in $A$. To maximize the number of vertices of $X$ used by a path, we can assume that the vertices of $X$ are of weight one and the vertices of $V(G)\setminus X$ are given zero weights. Then we use the variant of the algorithm that finds a colorful path of maximum weight.  From now, we assume that we are given the table of values of $\alpha(R)$ for all $R\subseteq C$. Note that this table of size $2^{\Oh(p)}$ can be constructed in $2^{\Oh(p)}\cdot n^{\Oh(1)}$ time. 
	
	Let $R\subseteq C$, and $\ell\leq t$ be a positive integer. Denote by $\beta(R,\ell)$ the maximum number of vertices of $X$ that can be  covered by exactly $\ell$ nontrivial path with $|R|$ vertices in total such that their endpoint are in $A$ and the vertices of the paths are colored by distinct colors from $R$; we assume that $\beta(R,\ell)=-\infty$ if such paths do not exist; in particular $\beta(R,\ell)=-\infty$ if $|R|\leq 1$.  It is straightforward to see that $\beta(R,1)=\alpha(R)$ for every $R\subseteq C$. To compute $\beta(R,\ell)$ for $\ell>1$, we use the following straightforward 	recurrence for $|R|\geq 2$.
\begin{equation}\label{eq:rec-beta}	
\beta(R,\ell)=\max\{\alpha(R')+\beta(R\setminus R',\ell-1)\mid \emptyset\neq R'\subset R\}.
\end{equation} 
	We use (\ref{eq:rec-beta}) to compute the table of values of $\beta(R,t)$ for all nonempty $R\subseteq C$. Because $|C|=\Oh(p)$, computing the table can be done in $2^{\Oh(p)}\cdot n^{\Oh(1)}$ time. 
	
	By the choice of $C_I$, $C_A$ and $C_Z$, we have that $\beta(C,t)$ is the maximum number of vertices of $X$ that can be covered by a colorful solution, and $\beta(C,t)=-\infty$ if there is no colorful solution. 
	
	To obtain an optimum (non-colorful) solution, we define $N=\lceil e^{s+t}\rceil\geq  \frac{r^r\cdot s^s}{r!\cdot s!}$ and iterate the randomized procedure $N$ times. Then the algorithm returns  a solution that gives the maximum value $|X\cap Y|$ over all coloful solution or reports that there is no solution if the algorithm fails to find a colorful solution in every iteration.	
	
	Suppose that $Y\subseteq I$ of size $r$ and $\mathcal{P}'$ of size $t$ satisfy (a) and (b) and provide the maximum value of $|X\cap Y|$. Then with probability at least $\frac{r!}{r^r}$, the vertices of $Y$ are colored by distinct colors from $C_I$ by a random coring. Similarly, with probability at least $\frac{s!}{s^s}$, the $s$ vertices of $A$ covered by the paths of $\mathcal{P}'$ are colored by distinct colors of $C_A$. Then with probability at least $\frac{r!\cdot s!}{r^r\cdot s^s}$, the vertices of the paths of $\mathcal{P}'$ are colored by distinct colors. Respectively, the probability that this does not holds, that is, there are at least two vertices of the same color, is at most $(1-\frac{r!\cdot s!}{r^r\cdot s^s})$. By the choice of $N$, we obtain that the probability that for every iteration, 	at least two vertices of paths of $\mathcal{P}$ have the same color, is at most 
$(1-\frac{r!\cdot s!}{r^r\cdot s^s})^N\leq e^{-1}$. Thus, the probability that the randomized algorithm fails to return an optimum solution is at most $e^{-1}<1$.

To evaluate the running time, recall that the tables of values of $\alpha(\cdot)$	 and $\beta(\cdot,t)$ can be computed in $2^{\Oh(p)}\cdot n^{\Oh(1)}$ time. Since $r<4p$ and $s\leq 20p$, $N=2^{\Oh(p)}$ and, therefore, the total running time is $2^{\Oh(p)}\cdot n^{\Oh(1)}$.

To derandomize the algorithm, we use the standard technique (see~\cite{AlonYZ95} and~\cite[Chapter~5]{cygan2015parameterized}). For given $r$ and $s$, we construct the $(|I|,r)$ and $(|A|,s)$-perfect hash families 
of the functions $\mathcal{F}_I$ and $\mathcal{F}_A$, respectively, of sizes   $e^{r}r^{\Oh(\log k)}\cdot \log |I|$ and $e^{s}s^{\Oh(\log s)}\cdot \log |A|$, respectively, using the results of Naor, Schulman, and Srinivasan~\cite{NaorSS95}. These families can be constructed in time $2^{\Oh(p)}\cdot n\log n$. Then we replace the random colorings of $I$ and $A$ by the functions from $\mathcal{F}_I$ and $\mathcal{F}_A$, respectively, and iterate the main step over all these functions. This gives deterministic $2^{\Oh(p)}\cdot n^{\Oh(1)}$ running time.

To conclude the proof, note that algorithms finds the maximum possible size of $|X\cap Y|$ for $Y\subseteq I$ of size $r$ such that $S\cap Y$ can be covered by a set of paths $\mathcal{P}$ satisfying conditions (i)--(iv) and (vi) of Lemma~\ref{lemma:many_vertices_with_neighbors_in_x}. To verify (v), it is sufficient to check additionally whether $|X\cup Y|\leq \delta(G-B)+k-q$, by the maximality of $|X\cap Y|$. This concludes the proof.
\end{proof}
%

Everything is settled for the proof of Theorem~\ref{thmVCad}. For convenience, we restate the theorem here. 

\medskip\noindent\textbf{Theorem~\ref{thmVCad}}. \emph{\textsc{\probDC\ / Vertex Cover Above Degree} is solvable in $2^{\Oh(p+|B|)}\cdot n^{\Oh(1)}$ running time.}
\medskip 
%
%
\begin{proof}
	Let $(G, B, S, k)$ be a given instance of the problem. 	We assume without loss of generality that $B \subseteq S$; otherwise we can set $S:=S\cup B$ and $p:=p+|B\setminus S|$, 
	which increases $p$  by at most $|B|$. Let $I=V(G)\setminus S$. 
	Note that $G$ has no cycle longer than   $2|S|\le 2\delta(G-B)+2p$.
	In particular, if $k>2p$, then the given instance is a no-instance. Therefore, we can assume that $k\leq 2p$. 
	If $\delta(G-B)\leq 8p$, then $2\delta(G-B)+k\leq 18p$ and one can verify whether $G$ has a cycle of length $2\delta(G-B)+k$ in  $2^{\Oh(p)}\cdot \polyn$ time using, e.g., the algorithm given by Zehavi~\cite{Zehavi16}.  	
	From now on, we assume that $\delta(G-B)>8p$. It is also convenient to assume that our aim is to verify the existence of a cycle of length \emph{exactly} $2\delta(G-B)+k$; for this we iterate over all possible values of the parameter from the initial given value of $k$ and $2p$.


	Also, if $p=0$, then $k=0$ and each vertex in $I$ is adjacent to all vertices in $S=\delta(G-B)$.
	Then $G$ contains all edges between $S$ and $I$, so a cycle of length at least  $2\delta(G-B)=2|S|$ exists in $G$ if and only if $|S|\geq |I|$ and $|S|\geq 2$.
	Thus, we can now assume that $p>0$.

        If $|I| < \delta(G-B)+k-p$, then $(G, B, S, k)$ is a no-instance. Hence, we can assume that this is not the case. 
	Our algorithm chooses an arbitrary $X \subseteq I$ of size $\delta(G-B)-3p$.
	By Lemma~\ref{lemma:cycle_contains_x}, the algorithm can now look for a cycle of length $2\delta(G-B)+k$ in $G$ containing all vertices from $X$.
	
	Then we partition $S$ into two subsets $A$ and $Z$.
	The subset $A$ consists of all vertices in $S$ that have at least $p+1$ neighbors in $X$.
	The subset $Z$ consists of all other vertices in $S$. The running time of the procedure computing $Z$ is clearly polynomial. 
	By Lemma~\ref{lemma:many_vertices_with_neighbors_in_x}, the cardinality of $Z$ is at most $2p$.
	
	Before we can apply Lemmata~\ref{thm:path_cover} and~\ref{lemma:path_cover_dp}, we need to ensure that the cycle we are looking for contains \emph{all} vertices from $Z$.
	To achieve that, we allow our algorithm to brute-force over all $2^{|Z|}=2^{\Oh(p)}$ options of how the cycle intersects $Z$.
	When an option is fixed, consider deleting  from $G$ all vertices of $Z$ outside the fixed intersection.
	This can change the value of $p$, as $p=|S|-\delta(G-B)$, and both $|S|$ and $\delta(G-B)$ may change after the deletion.
	As a consequence, the equality $|X|=\delta(G-B)-3p$ could no longer hold, so we need to change $X$ correspondingly.
	Rewrite $\delta(G-B)-3p=4\delta(G-B)-3|S|$.
	Note that the removal  of a single vertex of $Z$ from $G$ always decreases $|S|$ by one and can decrease $\delta(G-B)$ by at most one.
	Hence, the value $\delta(G-B)-3p$ can only increase.
	Thus, after the deletion, to ensure $|X|=\delta(G-B)-3p$,  we   add some vertices from $I$ to $X$.
	By Lemma~\ref{lemma:cycle_contains_x}, the choice of these vertices can be arbitrary and we can be sure that there is a cycle containing $X$ while its intersection with $S$ remains the same.
	Each vertex in $A$ still has at least $p+1$ neighbors in $X$.
	Since $X$ now can containin some new vertices from $I$, a vertex in $Z$ may   have at least $p+1$ neighbors in $X$.
	If such a vertex exists, we simply move it from $Z$ to $A$. 
	Observe that the value of the parameter $p$ may be only decreased and the deletion does not violate the property  $\delta(G-B)>8p$.
	Note that the deletion operation discussed above also can imply an increment in $k$ as $\delta(G-B)$ can decrease.
	This is safe as Lemma~\ref{lemma:cycle_contains_x} does not depend on the value of $k$ other than for estimating the length of the cycle.
	
	After the intersection of the cycle with $Z$ is fixed and all vertices from $Z$ outside it are deleted from $G$, the algorithm finally employs the routine from Lemma~\ref{lemma:path_cover_dp} to find the path cover from Lemma~\ref{thm:path_cover}, hence to find the cycle. 
	The total running time of the algorithm (under the assumption that $B\subseteq S$) is proportional to the number of sets $Z$, which is $2^{\Oh(p)}$,  times the time required to compute the path cover for each of the sets, which is $2^{\Oh(p)}\cdot \polyn$ by 
Lemma~\ref{lemma:path_cover_dp}. Hence the total running time is 	
	$2^{\Oh(p)}\cdot \polyn$. Taking into account that to ensure the assumption  that $B\subseteq S$ we may increase the initial value of $p$ by at most $|B|$, we conclude that the algorithm runs in  $2^{\Oh(p+|B|)}\cdot \polyn$ time.
	 \end{proof}

\section{Finding almost Hamiltonian cycles} \label{sec:HamCycles}
This section is dedicated to the proof of Theorem~\ref{theorem:hamiltonian}. To recall, the theorem states that given a graph $G$ with a set $B \subset V(G)$ and a parameter $k$ such that $|B| \le k$ and 
$\delta(G - B) \ge \frac{n}{2} - k$, in time $2^{\Oh(k)}n^{\Oh(1)}$ we can find the longest cycle in $G$. Before we move on to prove the theorem itself, we show how to deal with the special case where there is a small separator in the graph, as it is an important subroutine in the main algorithm. Another key ingredient to the proof of Theorem~\ref{theorem:hamiltonian} is our \textsc{\probDC\ / Vertex Cover Above Degree} result, presented in Section~\ref{sec:vcalgo}.

    \subsection{Small separator lemma}

    We show an algorithm for \textsc{Almost Hamiltonian Dirac Cycle} when there is a small (i.e. of size $\Oh(k)$) separator $B$ in $G$. Intuitively, the presense of a small separator makes the problem easier in the following sense. Each component of $G - B$ still has high minimal degree, slightly less than $\frac{n}{2}$. Thus, essentially, we must have exactly two components of size roughly $\frac{n}{2}$ in $G - B$, which means they are very dense.
As was proven in~\cite{fomin_et_al:LIPIcs:2020:11724}, in this situation, we can always partition a component into paths that start and end at the given vertices, and span the whole component. We restate their result formally in the next lemma.

\begin{lemma}[Lemma 1 in \cite{fomin_et_al:LIPIcs:2020:11724}]
    \label{lemma:many_paths}
    Let $G$ be an $n$-vertex graph and $p$ be a positive integer such that $\delta(G) \ge \max\{5p - 3, n - p\}$. Let $\{s_1, t_1\}$, \ldots, $\{s_r, t_r\}$, $r \le p$, be a collection of pairs of vertices of $G$ such that (i) $s_i \notin \{s_j, t_j\}$ for all $i \ne j$, $i, j \in \{1, \ldots, r\}$, and (ii) there is at least one index $i \in \{1, \ldots, r\}$ such that $s_i \ne t_i$. 
    Then there is a family of pairwise vertex-disjoint paths $\mathcal{P} = \{P_1, \ldots, P_r\}$
    in $G$ such that each $P_i$ is an ($s_i$, $t_i$)-path and $\cup_{i = 1}^r V(P_i) = V(G)$, that is, the paths cover all vertices of $G$.
\end{lemma}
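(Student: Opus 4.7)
The plan is to use an extremal argument combined with a rotation-insertion technique. First, I would establish the existence of at least one family of pairwise vertex-disjoint $(s_i,t_i)$-paths in $G$: given condition (i) (distinct $s_i$ and disjointness from other $t_j$) together with the hypothesis $\delta(G) \ge \max\{5p-3,\, n-p\}$, one can build them greedily, invoking a Hamiltonian-connectedness-type result (e.g., a consequence of Ore's theorem) on the subgraph of remaining vertices plus the two endpoints of the current pair. The slack provided by the $5p-3$ term ensures that enough minimum degree survives each of the at most $p$ sequential deletions.

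Among all such families, fix one $\mathcal{P} = \{P_1, \ldots, P_r\}$ that maximizes $|V^*|$, where $V^* = \bigcup_i V(P_i)$. Assume for contradiction that there is an uncovered vertex $v \in V(G) \setminus V^*$. The main step is to show that $v$ can be inserted into some $P_i$, contradicting maximality: if $v$ has two consecutive neighbors $x_j, x_{j+1}$ on $P_i = s_i x_1 \ldots x_{k_i} t_i$, then replacing the edge $x_j x_{j+1}$ by the detour $x_j v x_{j+1}$ yields a strictly larger family.

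The existence of two consecutive neighbors of $v$ on some path follows from double-counting. The degree bound gives $v$ at least $n - p$ neighbors, and at most $p - 1$ of these can lie outside $V^*$ (else $\mathcal{P}$ is not extremal), so at least $n - 2p + 1$ neighbors of $v$ lie on the paths. If no two were consecutive on any $P_i$, then within each $P_i$ the neighbors of $v$ form an ``independent'' set of positions, bounding the total by $\lceil |V^*|/2 \rceil + r \le n/2 + p$ (the additive $r$ accounts for the two endpoints of each path, which may both be neighbors). This contradicts the earlier lower bound exactly when $p < n/6$, i.e., when $n - p$ dominates $5p - 3$ in the degree hypothesis.

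The main obstacle will be the tight regime $n \le 6p$, where the $5p-3$ bound is binding and the consecutive-neighbor counting above is no longer strong enough. There the graph is nearly complete (each vertex misses at most $p$ others out of at most $6p$), so I would argue differently by a direct swap-and-repair: either $v$ together with some internal $x_j \in V(P_i)$ and one of its neighbors on $P_i$ yields an insertion as above, or one swaps $v$ with a suitable vertex of some $P_i$ and repairs the broken path using the dense structure of $G$ (specifically, using that the non-edges form a subgraph of maximum degree at most $p$). Showing that enough flexibility always exists for such a repair is where the precise constant $5p-3$ enters, and constitutes the most delicate part of the proof.
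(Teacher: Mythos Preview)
The paper does not prove this lemma at all: it is quoted as Lemma~1 of~\cite{fomin_et_al:LIPIcs:2020:11724} and used as a black box, with only the remark that the original proof is constructive. So there is no in-paper argument to compare your proposal against.

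On its own merits, your sketch has a real gap in the insertion step. You assert that at most $p-1$ neighbors of the uncovered vertex $v$ lie outside $V^*$, justified by ``else $\mathcal{P}$ is not extremal.'' That inference is not valid: maximality of $|V^*|$ tells you nothing about how many \emph{other} uncovered vertices happen to be adjacent to $v$, since adjacency among uncovered vertices does not by itself produce a larger family of $(s_i,t_i)$-paths. What you actually need is a bound like $|V(G)\setminus V^*|\le p$, which you never establish. Without it the lower bound $|N(v)\cap V^*|\ge n-2p+1$ is unsupported and the consecutive-neighbor pigeonhole collapses. A cleaner route is to push the work into the existence step: fix the index $i$ with $s_i\neq t_i$, cover every other pair by a path on at most three vertices, and then find a Hamiltonian $(s_i,t_i)$-path in the remainder via an Ore-type bound---this already covers everything when $n$ is large relative to $p$, making the insertion argument unnecessary in that regime. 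Your handling of the tight regime $n\le 6p$ is also not yet a proof: ``swap and repair using density'' is the right instinct, but you have to exhibit a concrete exchange that always succeeds, and that is precisely where the constant $5p-3$ (rather than, say, $4p$) must be used.
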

We note that the proof of Lemma~\ref{lemma:many_paths}
given in \cite{fomin_et_al:LIPIcs:2020:11724} is actually constructive.
That is, there is a polynomial time algorithm that given $G$, $p$, and the respective set of pairs of vertices,
returns the family of paths $\mathcal{P}$ from the statement of Lemma~\ref{lemma:many_paths}.

For simplicity, suppose there is a Hamiltonian cycle $C$ in $G$. The cycle induces a certain partition of $B$ into paths. On the other hand, if we are able to find any such path cover $\mathcal{P}$, we can construct the whole Hamiltonian cycle. Namely, on each component $H$ of $G - B$, we invoke Lemma~\ref{lemma:many_paths} with a collection of pairs being a certain matching on ends of $\mathcal{P}$ belonging to $H$. In this way we connect the paths together while also visiting every vertex of $H$. If the pairs are selected in a certain way in both components, the union of all these parts will actually form a Hamiltonian cycle. We find the path cover itself with the help of dynamic programming and the color coding technique of Alon, Yuster, and Zwick~\cite{AlonYZ95}. In what follows, we prove the above in more detail.

\begin{lemma}
	\label{lemma:hamiltonian_separator}
	Let $G$ be a given $2$-connected graph on $n$ vertices and let $k\ge 0$ be a given integer.
	Let $B\subseteq V(G)$ be such that $|B|\le k$, $\delta(G - B) \ge \frac{n}{2} - k$, and the graph $G- B$ is not connected.
	There is a $2^{\Oh(k)}\cdot n^{\Oh(1)}$ running time algorithm that finds the longest simple cycle in $G$.
\end{lemma}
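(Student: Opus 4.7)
The plan is to exploit the structure imposed by the separator $B$. Since $\delta(G - B) \ge n/2 - k$, every component of $G - B$ has at least $n/2 - k + 1$ vertices, so for $n$ sufficiently large compared to $k$ (smaller $n$ being handled by a brute-force $2^{\Oh(n)} = 2^{\Oh(k)}$ search) $G - B$ must consist of exactly two components $H_1, H_2$ with $|V(H_i)| \in [n/2 - k, n/2 + k]$. Moreover, each vertex in $V(H_i)$ has at least $n/2 - k - |B| \ge n/2 - 2k$ neighbors in $V(H_i)$, so $\delta(H_i) \ge n/2 - 2k \ge |V(H_i)| - 3k$, making each component extremely dense relative to its size.

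For any simple cycle $C$ in $G$, let $B' = V(C) \cap B$. Then $C$ decomposes canonically as a cyclic alternation of \emph{transit paths} $Q_1, \ldots, Q_r$, whose internal vertices lie in $B'$ and whose endpoints lie in $V(G) \setminus B$, and \emph{component paths} $P_1, \ldots, P_r$, each contained entirely within $H_1$ or entirely within $H_2$. Since the $Q_i$ partition $B'$ internally, we have $r \le |B'| \le k$ and the total number of endpoints is at most $2r \le 2k$. The length of $C$ equals $|B'| + \sum_i |V(P_i)|$, so maximizing it amounts to both using as much of $B$ as possible and making the $P_i$ cover as much of $V(H_1) \cup V(H_2)$ as possible.

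I would enumerate the ``skeleton'' of $C$, consisting of the transit paths $\mathcal{Q} = \{Q_1, \ldots, Q_r\}$ together with their cyclic order and the labelling of each component path by $H_1$ or $H_2$, using color coding. Color the vertices of $V(G) \setminus B$ with $\Oh(k)$ colors, conditioning on the event (which occurs with probability $2^{-\Oh(k)}$) that the at most $2k$ endpoints of $\mathcal{Q}$ in $C$ all receive pairwise distinct colors. A standard dynamic programming algorithm with state indexed by a current endpoint, the subset of $B$ used so far, the subset of colors used so far, the starting endpoint, and the parity of the current segment, enumerates all such colorful skeletons in total time $2^{\Oh(k + |B|)} n^{\Oh(1)} = 2^{\Oh(k)} n^{\Oh(1)}$. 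The scheme can then be derandomized via the perfect hash families of Naor, Schulman, and Srinivasan, analogously to the derandomization in the proof of Theorem~\ref{thmTLDP}.

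For each candidate skeleton, the component paths in $H_i$ must realize the prescribed endpoint pairing and cover $V(H_i)$ to the greatest possible extent. Because $\delta(H_i) \ge |V(H_i)| - 3k$ and at most $r \le k$ paths are required in $H_i$, taking $p := 3k$ in Lemma~\ref{lemma:many_paths} satisfies both $\delta(H_i) \ge |V(H_i)| - p$ and $\delta(H_i) \ge 5p - 3$ for sufficiently large $n$; the mild structural preconditions (i) and (ii) on the pairs are trivially satisfied since endpoints of distinct $Q_j$ are distinct vertices of $V(G) \setminus B$. Hence Lemma~\ref{lemma:many_paths} applied separately to $H_1$ and $H_2$ yields spanning path covers with exactly the required endpoints, so each valid skeleton produces an actual simple cycle of length exactly $n - |B \setminus B'|$, and we return the longest one found. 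The main technical obstacle will be formulating the DP so that only skeletons corresponding to genuine single cycles (rather than unions of smaller disjoint cycles) are enumerated, and so that the labelling into $H_1$-paths and $H_2$-paths is consistent with the component of each endpoint; both constraints can be maintained through appropriate state bookkeeping that contributes only a $2^{\Oh(k)}$ factor.
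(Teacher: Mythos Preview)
Your proposal is correct and follows essentially the same approach as the paper: observe that $G-B$ has exactly two very dense components, use color coding plus a $2^{\Oh(k)}$-state dynamic program to recover the small ``skeleton'' of the longest cycle (the $\Oh(k)$ vertices where it touches $B$), and then complete the cycle inside each component via Lemma~\ref{lemma:many_paths}. The only cosmetic differences are that the paper defines the skeleton as paths whose every \emph{edge} is incident to $B$ (so internal vertices alternate between $B$ and $V(G)\setminus B$) rather than your transit paths with all internal vertices in $B$, and the paper decouples the enumeration of the path cover from the choice of cyclic pairing (supplying the latter explicitly in Claim~\ref{claim:good_cover}) whereas you fold the cyclic order directly into the DP.
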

\begin{proof}
    Assume $n \ge 12k$, otherwise we invoke the general $2^{\Oh(n)}$ algorithm for the \textsc{Longest Cycle} problem from \Cref{prop:longest_cycle} polynomial number of times to find the longest cycle in $G$.
    
    First, observe that there are exactly two connected components in $G - B$. There must be at least two of them since $G - B$ is not connected. Suppose there are at least three components.
    Each of them contains a vertex of degree at least $\frac{n}{2} - k$ in $G - B$, therefore the size of each component is at least $\frac{n}{2} - k + 1$. The total number of vertices is then at least $3 \frac{n}{2} - 3k + 3 = n + (\frac{n}{2} - 3k) + 3 > n$. This is a contradiction.   
    From now on, let $H_1$ and $H_2$ be the two connected components of $G - B$.
    
    Consider the longest cycle $C$ in $G$. Recall that by Theorem~\ref{thm:relaxed_long_cycle} the length of $C$ is at least $\min\{n - 2k,n-|B|\}\ge n-2k$, thus it necessarily contains vertices from all of $H_1$, $H_2$ and $B$. We say that $C$ induces a path cover $\mathcal{P}$ of $B$, where $\mathcal{P}$ is the set of paths that $C$ forms when restricted to the edges incident to $B$. In other words, remove from $C$ all the edges that are not incident to $B$, and all the vertices that became isolated after that. The resulting collection of vertex-disjoint paths is the path cover $\mathcal{P}$. Note that $\mathcal{P}$ satisfies the following properties.
    \begin{enumerate}
        \item Every path $P \in \mathcal{P}$ starts and ends in $V(G) \setminus B$.
        \item Each path $P \in \mathcal{P}$ has at least one vertex in $B$ and no two consecutive vertices in $V(G) \setminus B$.
        \item The paths of $\mathcal{P}$ contain at most $3|B|$ vertices in total.
        \item The number of paths in $\mathcal{P}$ that start and end in different components of $G - B$ is even and at least two.
    \end{enumerate}
    Since for every vertex of $B$, its degree in $C$ is exactly two even when restricted to the edges incident to $B$, the property (1) follows. Each path goes through $B$, and two vertices in $V(G) \setminus B$ cannot be adjacent via an edge incident to $B$, thus (2) follows. Property (3) follows directly from property (2). Finally, (4) holds since $C$ must leave both $H_1$ and $H_2$ an even number of times. Moreover, if there are no paths in $\mathcal{P}$ that start and end in different components, $H_1$ and $H_2$ cannot be connected via $C$, thus $C$ is not a cycle of length at least $n - 2k$.

    We call a set of vertex-disjoint paths in $G$ satisfying (1)--(4) \emph{a good path cover}. Now we claim that any good path cover can be used to construct a long cycle in $G$, i.e. we can collect all the vertices of $V(G) \setminus B$ in a cycle by going along the paths in the cover. The proof is essentially by pairing endpoints of the paths carefully and then applying Lemma~\ref{lemma:many_paths} to both $H_1$ and $H_2$. The illustration is shown in Figure~\ref{fig:small_sep} and the proof follows next.

    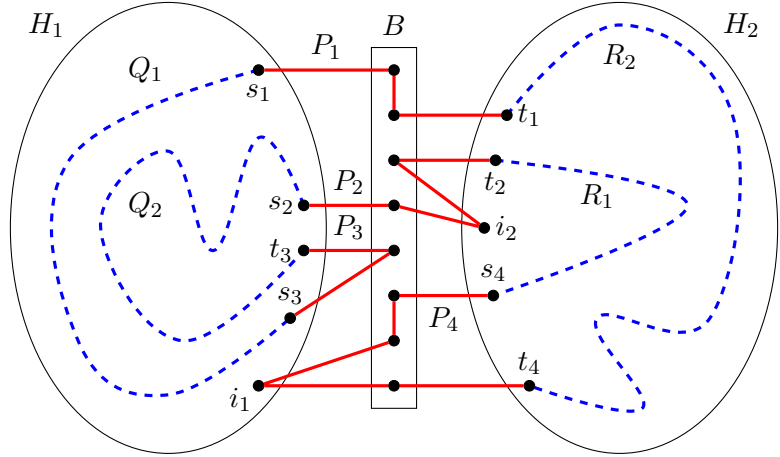
\begin{figure}[ht]
        \centering
        \ifdefined\STOC
        \begin{tikzpicture}[scale=0.46]
        \else
        \begin{tikzpicture}[scale=0.6]
        \fi
	        \tikzstyle{vertex}=[draw, fill, circle, black, minimum size=4,inner sep=0pt]
	        \tikzstyle{path1}=[red, very thick]
	        \tikzstyle{path2}=[blue, very thick, dashed]
	        \tikzstyle{regular}=[]

            \draw[path2] plot[smooth, tension=.7] coordinates {(2, 7) (-2.2, 5) (-2, 1) (0, -0.2) (2.7, 1.5)};
            \node at (-0.5, 7) {$Q_1$};
            \draw[path2] plot[smooth, tension=.7] coordinates {(3, 4) (2, 5.5) (1, 3) (0, 5.2) (-1.5, 3.5) (0.3, 1) (3, 3)};
            \node at (-0.5, 4) {$Q_2$};

            \draw[path2] plot[smooth, tension=.7] coordinates {(7.5, 6) (10, 8) (12.5, 6) (12, 1.5) (9.5, 1.5) (10.5, -0.5) (8, 0)};
            \node at (10, 7.3) {$R_2$};
            \draw[path2] plot[smooth, tension=.7] coordinates {(7.25, 5) (11.5, 4) (7.2, 2)};
            \node at (9.5, 4.2) {$R_1$};

            \node [vertex] (b1) at (5,7) {};
            \node [vertex] (b2) at (5,6) {};
            \node [vertex] (b3) at (5,5) {};
            \node [vertex] (b4) at (5,4) {};
            \node [vertex] (b5) at (5,3) {};
            \node [vertex] (b6) at (5,2) {};
            \node [vertex] (b7) at (5,1) {};
            \node [vertex] (b8) at (5,0) {};
            \draw[regular] (4.5, 7.5) rectangle  (5.5, -0.5);
            \node at (5, 8) {$B$};
            \draw[regular] (0, 3.5) circle [x radius=3.5cm, y radius = 5cm];
            \node at (-2.7,8) {$H_1$};
            \draw[regular] (10, 3.5) circle [x radius=3.5cm, y radius = 5cm];
            \node at (12.7,8) {$H_2$};

            \node[vertex] (s1) at (2, 7) {};
            \node at (2, 6.5) {$s_1$};
            \node[vertex] (t1) at (7.5, 6) {};
            \node at (8, 6) {$t_1$};
            \draw[path1] (s1) -- node[above, black]{$P_1$} (b1) -- (b2) -- (t1);
            \node[vertex] (s2) at (3, 4) {};
            \node at (2.5, 4) {$s_2$};
            \node[vertex] (t2) at (7.25, 5) {};
            \node at (7.25, 4.5) {$t_2$};
            \node[vertex] (i2) at (7, 3.5) {};
            \node at (7.5, 3.5) {$i_2$};
            \draw[path1] (s2) -- node[above, black]{$P_2$} (b4) --(i2) -- (b3) -- (t2);

            \node[vertex] (t3) at (3, 3) {};
            \node at (2.5, 3) {$t_3$};
            \node[vertex] (s3) at (2.7, 1.5) {};
            \node at (2.7, 2) {$s_3$};
            \draw[path1] (t3) -- node[above, black]{$P_3$} (b5)  -- (s3);

            \node[vertex] (i1) at (2, 0) {};
            \node at (1.6, -0.3) {$i_1$};
            \node[vertex] (t4) at (8, 0) {};
            \node at (8, 0.5) {$t_4$};
            \node[vertex] (s4) at (7.2, 2) {};
            \node at (7.2, 2.5) {$s_4$};
            \draw[path1] (s4) -- node[below, black]{$P_4$} (b6)  -- (b7) -- (i1) -- (b8) -- (t4);
        \end{tikzpicture}
        \caption{Reconstructing the cycle from the good path cover $\mathcal{P} = \{P_1, P_2, P_3, P_4\}$. The paths $Q_1$ and $Q_2$ are obtained by applying Lemma~\ref{lemma:many_paths} to $H_1' = H_1 - \{i_1\}$, the same for the paths $R_1$, $R_2$ and the graph $H_2' = H_2 - \{i_2\}$. The resulting concatenation of paths is a Hamiltonian cycle in $G$.}
        \label{fig:small_sep}
    \end{figure}

    \begin{claim}
        \label{claim:good_cover}
        There is a polynomial time algorithm that given a good path cover $\mathcal{P}$ finds a cycle of length $n - t$ in $G$, where $t$ is the number of vertices in $B$ not covered by the paths in $\mathcal{P}$.
    \end{claim}
    \begin{proof}
        Denote $\mathcal{P} = \{P_1, \ldots, P_r\}$, and for each $i \in \{1, \ldots, r\}$, denote the two ends of the path $P_i$ by $s_i$ and $t_i$. We may assume that the paths are ordered in a way that paths $P_1$, \ldots, $P_a$ lead from $H_1$ to $H_2$, paths $P_{a + 1}$, \ldots, $P_b$ start and end in $H_1$, and paths $P_{b + 1}$, \ldots, $P_r$ start and end in $H_2$, for certain integers $a$ and $b$, such that $1 < a \le b \le r$, and $a$ is even by property (5). Additionally, for $i \in \{1, \ldots, a\}$ assume that $s_i \in V(H_1)$, $t_i \in V(H_2)$.

        Let $I$ be the set of internal vertices of paths in $\mathcal{P}$,
        let $H_1' = H_1 - I$, $H_2' = H_2 - I$. The graphs $H_1'$ and $H_2'$ are targets for applying Lemma~\ref{lemma:many_paths}. By property (2), the size of $I \setminus B$ is at most $k$, thus $\delta(H_1') \ge \delta(G - B) - k = \frac{n}{2} - 2k$, and by the same argument $\delta(H_2') \ge \frac{n}{2} - 2k$.

        Consider the following set $T_1$ of $b - \frac{a}{2}$ pairs of vertices in $H_1'$. If $b = a$, the pairs are $\{s_1, s_2\}$, $\{s_3,s_4\}$, \ldots, $\{s_{a - 1}, s_a\}$. If $b > a$, the pairs are $\{s_{2i - 1}, s_{2i}\}$ for $1 \le i < \frac{a}{2}$, $\{s_{a - 1}, s_{a + 1}\}$, $\{t_j, s_{j + 1}\}$ for $a + 1 \le j < b$, and $\{t_b, s_a\}$.

        Now, we apply Lemma~\ref{lemma:many_paths} to the graph $H_1'$, the set of pairs $T_1$, and we set the parameter $p$ to be $2k$. Since pairs in $T_1$ are disjoint, and $\max\{5p - 3, n - p\} = \max\{10k - 3, n - 2k\} \le \delta(H_1')$, all conditions of the lemma are satisfied. Thus, there exist vertex-disjoint paths $Q_1$, \ldots, $Q_{b - \frac{a}{2}}$ that have the respective endpoints from $T_1$ and cover all vertices of $H_1'$.

        We deal with $H_2'$ similarly. We only need to connect $t_1$, \ldots, $t_a$ in a shifted way compared to $s_1$, \ldots, $s_a$, so that we obtain a cycle at the end.
        Consider the following set $T_2$ of $\frac{a}{2} + r - b$ pairs of vertices in $H_2'$. If $b = r$, the pairs are $\{t_2, t_3\}$, $\{t_4, t_5\}$, \ldots, $\{t_{a - 2}, t_{a - 1}\}$, and $\{t_1, t_a\}$.
        If $b < r$, the pairs are $\{t_{2i}, t_{2i + 1}\}$ for $1 \le i < \frac{a}{2}$, $\{t_a, s_{b + 1}\}$, $\{t_j, s_{j + 1}\}$ for $b + 1 \le j < r$, and $\{t_1, t_r\}$.
        Again, we apply Lemma~\ref{lemma:many_paths} to the graph $H_2'$, the set of pairs $T_2$, and $p = 2k$.
        We obtain vertex-disjoint paths $R_1$, \ldots, $R_{r - b + \frac{a}{2}}$ that have the respective endpoints from $T_2$ cover all vertices of $H_2'$. 

        The resulting cycle $C$ with $V(C)=(V(G)\setminus B )\cup I$ is a cyclic concatenation of paths $P_1$, \ldots, $P_r$, $Q_1$, \ldots, $Q_{b - \frac{a}{2}}$, $R_1$, \ldots, $R_{r - b + \frac{a}{2}}$ in a certain order. Namely,
        \[C = P_1Q_1P_2R_1\cdots P_{a - 1}Q_{\frac{a}{2}}P_{a + 1}Q_{\frac{a}{2} + 1}P_{a + 2}\cdots P_b Q_{b - \frac{a}{2}} P_a R_{\frac{a}{2}} P_{b + 1}\cdots P_r R_{r - b + \frac{a}{2}},\]
        where we understand the notation $PQ$ for paths $P$ and $Q$ with a common endpoint as their natural concatenation. Clearly, $C$ is a cycle, and it spans all the previously defined paths. By construction, these paths cover all vertices in $I$, $V(H_1')$, and $V(H_2')$, thus they cover all vertices in $V(G)$ except those vertices in $B$ that are not covered by $\mathcal{P}$.
    \end{proof}

    Now it only remains to find a good path cover that covers the maximum number of vertices in $B$. By Claim~\ref{claim:good_cover}, a good path cover immediately gives us a cycle of the corresponding length, and we have also showed that a long cycle in $G$ induces a good path cover.

    To find the desired good path cover, first we observe that the number of vertices covered by the paths in the cover is at most $3|B|$ by property (3) of a good path cover. We proceed with a color-coding scheme using $r = 3|B|$ colors: color each vertex in $B$ in its own color, and each vertex in $V(G) \setminus B$ randomly and independently in one of the remaining $r - |B|$ colors, with equal probability for each color. Denote this coloring by $c : V(G) \to \{1, \ldots, r\}$. Now we look for a colored good path cover, that is, a good path cover that covers at most one vertex of each color.

    We find a colored good path cover with the help of dynamic programming. Define a \emph{state} as a tuple $(C, v, i, \ell, p)$ where $C$ is a subset of $\{1, \ldots, r\}$, $v$ is a vertex in $V(G)$, $i \in \{1, 2\}$, $\ell \in \{1, \ldots, r\}$, and $p \in \{0, 1, \ldots, |B|\}$.
    We call a state $(C, v, i, \ell, p)$ \emph{feasible} if there exists a set of vertex-disjoint paths $\mathcal{P} = \{P_1, \ldots, P_t\}$ in $G$ such that the following holds.
    \begin{enumerate}
        \item Every path $P_1$, \ldots, $P_{t - 1}$ starts and ends in $V(G) \setminus B$ and has the length of at least three, $P_t$ starts in $V(H_i)$,  ends in $v$, and its length is $\ell$.
        \item No path $P \in \mathcal{P}$ has two consecutive vertices in $V(G) \setminus B$.
        \item The paths in $\mathcal{P}$ cover exactly one vertex of each color in $C$, and no vertices of other colors.
        \item The number of paths in $\{P_1, \ldots, P_{t - 1}\}$ that start and end in different components of $G - B$ is exactly $p$.
    \end{enumerate}
    Note that $\mathcal{P}$ in the definition of a feasible state is essentially an ``unfinished'' good path cover that agrees with the state $(C, v, i, \ell, p)$. Our goal now is to compute the set of all feasible states $S$. We start by setting
    \[S_1 = \big\{\big(\{c(v)\}, v, 1, 1, 0\big) : v \in V(H_1)\big\} \cup \big\{\big(\{c(v)\}, v, 2, 1, 0\big) : v \in V(H_2)\big\}.\]
    These are our initial states, corresponding to sets containing one path of length one. Trivially, each such state is feasible, and these are all feasible states that use exactly one color. Next, for each $j$ in $\{1, \ldots, r - 1\}$, we show how to compute the set of feasible states $S_{j + 1}$ of size $j + 1$ from $S_j$, the set of feasible states of size $j$. Here by the size of the state $(C, v, i, \ell, p)$ we mean $|C|$, the number of colors used, which is the same as the total number of vertices covered by any set of paths corresponding to the state.

    To compute $S_{j + 1}$ from $S_j$, we iterate over all states in $S_j$ and try to extend each of them by an additional vertex. Intuitively, we either extend the unique unfinished path corresponding to the state, or declare it finished and start a new path. Fix a state $(C, v, i, \ell, p) \in S_j$, there is a set of paths $\mathcal{P} = \{P_1, \ldots, P_t\}$ satisfying the feasibility definition for $(C, v, i, \ell, p)$. Consider each $u \in N_G(v)$ such that $c(u) \notin C$. If both $v$ and $u$ are not in $B$, we do nothing. Otherwise, add to $S_{j + 1}$ the state $(C \cup c(u), u, i, \ell + 1, p)$. Clearly, the size of this state is $j + 1$, and it is easy to verify that the set of paths $\mathcal{P}' = \{P_1, \ldots, P_tu\}$ satisfies the feasibility definition for $(C \cup c(u), u, i, \ell + 1, p)$.

    For the ``new path'' kind of extending $(C, v, i, \ell, p)$ with $\ell > 2$, consider each vertex $u \in V(G) \setminus B$ such that $c(u) \notin C$. If $v \in B$ do nothing, otherwise add to $S_{j + 1}$ the state $(C \cup c(u), u, i', 1, p')$, where $i'$ is such that $u \in H_{i'}$ and $p' = p$ if $v \in V(H_i)$, or $p' = p + 1$ if $v \notin V(H_i)$. To see that this state is feasible, consider the set of paths $\mathcal{P}' = \{P_1, \ldots, P_t, u\}$.
    Indeed, every path among $P_1$, \ldots, $P_{t - 1}$ starts and ends in $V(G) \setminus B$, and $P_t$ as well, since $v \in V(G) \setminus B$. The length of $P_t$ is $\ell$ so at least three, and for $P_1$, \ldots, $P_{t - 1}$ this holds by feasibility of the original state. The last path $u$ starts in $V(H_{i'})$ by definition of $i'$, ends in $u$, and has the length of one. Properties (2) and (3) are preserved automatically. The value $p'$ reflects exactly how $p$ is changed with respect to the newly finished path $P_t$.


    Now we show that $S_{j + 1}$ contains all feasible states of size $j + 1$, provided that $S_j$ contains all feasible states of size $j$. Consider a state $(C', u, i', \ell', p') \in S_{j + 1}$ and a corresponding set of paths $\mathcal{P}' = \{P_1, \ldots, P_t\}$. Recall that $|P_t| = \ell'$, if $\ell' > 1$, consider a state $(C, v, i', \ell' - 1, p')$ where $v$ is the previous vertex to $u$ in $P_t$, $C = C' \setminus \{c(u)\}$. Observe that $(C, v, i', \ell' - 1, p')$ is feasible as witnessed by the set of paths $\mathcal{P} = \{P_1, \ldots, P_t'\}$ where $P_t'$ is $P_t$ without its last vertex $u$. Since $u \in N_G(v)$, $c(u) \notin C$, and $v$ and $u$ are not simultaneously in $V(G) \setminus B$ by property (2) for $\mathcal{P}'$, the state $(C', u, i', \ell', p')$ is added to $S_{j + 1}$ when the algorithm considers extending the state $(C, v, i', \ell'-1, p') \in S_j$ by the vertex $u$. If $\ell' = 1$, consider a state $(C, v, i, \ell, p)$ where $v$ is one of the endpoints of $P_{t - 1}$, $i$ is the index of the component of the other endpoint of $P_{t - 1}$, $\ell = |P_{t - 1}|$, $C = C' \setminus \{c(u)\}$, and $p$ is either $p'$ or $p' - 1$, depending on whether $v$ belongs to $H_i$ or not. The set of paths $\{P_1, \ldots, P_{t - 1}\}$ witnesses the feasibility of $(C, v, i, \ell, p)$, and thus $(C', u, i', \ell', p')$ is added to $S_{j + 1}$ on the corresponding ``new path'' step.

    Therefore, we have shown that for each $j$ in $\{1, \ldots, r - 1\}$, we correctly compute the set $S_{j + 1}$ from $S_j$, so in the end we have the sets $S_j$ of feasible states of size $j$, for each $j \in \{1, \ldots, r\}$. Finally, we consider a subset $\mathcal{C}$ of the feasible states $(C, v, i, \ell, p) \in \bigcup_{j  = 1}^r S_j$, such that $v \notin B$, $\ell > 2$, and $p'$ is at least 2 and even, where $p' = p$ if $v \in H_i$ and $p'= p + 1$ if $v \notin H_i$. Note that $\mathcal{C}$ is not empty since a long cycle in $G$ guaranteed by Theorem~\ref{thm:relaxed_long_cycle} induces a good path cover, and thus a feasible state of the form above. From $\mathcal{C}$, we pick a state maximizing $|C \cap \{1, \ldots, |B|\}|$. The set of paths $\{P_1, \ldots, P_t\}$ corresponding to this state is a good path cover in $G$ that covers the maximum number of vertices in $B$. Note that the actual good path cover may be found by the usual means of backtracking in dynamic programming.
    Together with Claim~\ref{claim:good_cover} this concludes the algorithm, and the proof of its correctness.


    \textbf{Running time analysis.} In the dynamic programming part, the number of states is at most $2^r \cdot n \cdot 2 \cdot r \cdot (k + 1)$. While considering a state, we update $\Oh(n)$ other states, thus the total running time of the dynamic programming subroutine is $2^{\Oh(k)} n^{\Oh(1)}$. For a fixed long cycle $C$ in $G$, the probability that we guess the coloring that assigns different colors to all vertices of the induced by $C$ good path cover, is at least $e^{-r}$, since there are at most $r$ vertices in the good path cover. By performing $\lceil e^r \rceil$ iterations of the color coding subroutine, we amplify the success probability to at least $1 - (1 - e^r)^{e^r} \ge 1 - e^{-1}$. Therefore, we obtain a Monte Carlo algorithm with constant success probability and running time $\Oh(k^2 \cdot e^{3k} \cdot 2^{3k} \cdot n^2) = \Oh(2^{\Oh(k)} n^{\Oh(1)})$. Finally, the algorithm could be derandomized in the standard fashion by using perfect hash families~\cite{NaorSS95}. 
\end{proof}

\subsection{Main theorem}
Now we move on to Theorem~\ref{theorem:hamiltonian}, the main result of this section. We restate the theorem here for convenience of the reader.

%

    \theoremhamiltonian*
	\begin{proof}
        First, we may assume that $n > 40k$, otherwise the problem can be solved by the classical $2^{\Oh(n)}$ algorithm for \textsc{Longest Cycle}.
	    Instead of proving the theorem directly, we show that there exists an algorithm that in time $2^{\mathcal{O}(k)}\cdot n^{\mathcal{O}(1)}$ either 
	    \begin{enumerate}
		    \item finds the longest cycle in $G$, or
		    \item finds a vertex cover of $G$ of size at most $\frac{n}{2}+9k$, or
		    \item finds a set $B'\supseteq B$ of size at most $35k$ such that $G-B'$ is not connected.
	    \end{enumerate}
	    We say that (1)--(3) are the \emph{terminal states} of the algorithm. 

	    If state (3) is reached, we simply invoke the algorithm from Lemma~\ref{lemma:hamiltonian_separator} with the respective separating set $B'$ of size at most $35k$. This gives us immediately the longest cycle in $G$.
	    Similarly, reaching terminal state (2) also suffices to solve the problem, as shown in the next claim.
	    
	    \begin{claim}
	    	If terminal state (2) is reached, the longest cycle in $G$ can be found in $2^{\Oh(k)}\cdot\polyn$ time.
	    \end{claim}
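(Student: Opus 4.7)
The plan is a direct reduction to Theorem~\ref{thmVCad}. First, without loss of generality I may assume $B\subseteq S$ (otherwise replace $S$ by $S\cup B$, which enlarges the vertex cover by at most $|B|\le k$ vertices). With this normalization, set $p:=|S|-\delta(G-B)$. Then
\[
0\le p\le \left(\tfrac{n}{2}+9k+k\right)-\left(\tfrac{n}{2}-k\right)=11k,
\]
where the lower bound holds because (assuming $V(G)\setminus S\ne\emptyset$) every vertex of $V(G)\setminus S\subseteq V(G)\setminus B$ has at least $\delta(G-B)$ neighbors, all of which must lie in $S$. Hence $p+|B|=\Oh(k)$, which is exactly the regime in which Theorem~\ref{thmVCad} runs in time $2^{\Oh(k)}\cdot n^{\Oh(1)}$.

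Next, I would bound the length $L^*$ of the longest cycle in $G$. By Theorem~\ref{thm:relaxed_long_cycle} applied to the $2$-connected graph $G$,
\[
L^*\ge \min\{\,n-|B|,\ 2\delta(G-B)\,\}\ge n-2k,
\]
and trivially $L^*\le n\le 2\delta(G-B)+2k$, using $\delta(G-B)\ge \tfrac{n}{2}-k$. Writing $L^*=2\delta(G-B)+k^*$, we therefore have $k^*\le 2k$.

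The main step is then simply to iterate $k'=0,1,\ldots,2k$ and invoke the constructive algorithm of Theorem~\ref{thmVCad} on the instance $(G,B,S,k')$, which either returns a cycle of length at least $2\delta(G-B)+k'$ or certifies that none exists. Each call runs in $2^{\Oh(p+|B|)}\cdot n^{\Oh(1)}=2^{\Oh(k)}\cdot n^{\Oh(1)}$ time, and with only $\Oh(k)$ iterations the total is still $2^{\Oh(k)}\cdot n^{\Oh(1)}$. The largest $k'$ returning ``yes'' together with the produced witness cycle gives $L^*$ and the longest cycle itself, provided $k^*\ge 0$.

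The only subtlety is the degenerate case $k^*<0$, which forces $L^*<2\delta(G-B)$ and (via Theorem~\ref{thm:relaxed_long_cycle}) $L^*\ge n-|B|$, so that $L^*\in[n-|B|,n]$ takes one of only $|B|+1\le k+1$ possible values; moreover $\delta(G-B)>\tfrac{n}{2}$, so $G-B$ already satisfies Dirac's condition. This narrow range is handled by enumerating the $2^{|B|}\le 2^k$ subsets $B_\mathrm{in}\subseteq B$ of vertices to be included in the sought cycle and, for each candidate, invoking Theorem~\ref{theorem:JansenKN} on $G[(V(G)\setminus B)\cup B_\mathrm{in}]$, whose high-degree set $V(G)\setminus B$ provides the required almost-Dirac condition. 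There is no real obstacle here: the proof is essentially a one-line use of Theorem~\ref{thmVCad} plus a short extra argument in the corner case.
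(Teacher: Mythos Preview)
Your main line is correct and matches the paper: once $B\subseteq S$ you have $p=|S|-\delta(G-B)=\Oh(k)$, so iterating $k'$ and calling Theorem~\ref{thmVCad} runs in $2^{\Oh(k)}\cdot n^{\Oh(1)}$ and finds the longest cycle whenever $L^*\ge 2\delta(G-B)$.

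The gap is entirely in your ``degenerate case'' $k^*<0$. First, the assertion $\delta(G-B)>\tfrac{n}{2}$ is wrong; what actually follows from $L^*<2\delta(G-B)$ together with Theorem~\ref{thm:relaxed_long_cycle} is only $2\delta(G-B)>n-|B|$, i.e.\ $\delta(G-B)>\tfrac{|V(G-B)|}{2}$. More seriously, your plan of enumerating $B_{\mathrm{in}}\subseteq B$ and testing Hamiltonicity of $G[(V(G)\setminus B)\cup B_{\mathrm{in}}]$ presupposes that the longest cycle contains \emph{all} of $V(G)\setminus B$. This is not justified: from $L^*\ge n-|B|$ one cannot conclude that the $\le|B|$ missing vertices lie in $B$. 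A standard insertion argument only yields $L^*\le 2n-2\delta(G-B)-2$ if some $v\in V(G)\setminus B$ is off the cycle, which is compatible with $n-|B|\le L^*<2\delta(G-B)$ whenever $n-|B|<2\delta(G-B)\le n+|B|-2$, a nonempty range for $|B|\ge 2$. Finally, even granting that assumption, Theorem~\ref{theorem:JansenKN} does not obviously apply to $G[(V(G)\setminus B)\cup B_{\mathrm{in}}]$: vertices of $B_{\mathrm{in}}$ may have arbitrarily small degree (ruling out the $\delta\ge n'/2-k'$ hypothesis), and the high-degree vertices only satisfy $\deg\ge\delta(G-B)>\tfrac{n-|B|}{2}=\tfrac{n'}{2}-\tfrac{|B_{\mathrm{in}}|}{2}$, which need not reach $n'/2$.

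The paper sidesteps all of this by a one-line trick: it enlarges $B$ to $B'$ by adding $19k$ neighbors of a minimum-degree vertex, so that $\delta(G-B')=\delta(G-B)-19k$ and hence $2\delta(G-B')\le n-20k\le n-|B'|$. This forces $L^*\ge 2\delta(G-B')$ by Theorem~\ref{thm:relaxed_long_cycle}, so the degenerate case never arises, and one applies Theorem~\ref{thmVCad} with $B'$, $p\le 29k$, and $k'\le 40k$. Adopting this shift makes your argument complete.
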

    	\begin{claimproof}
    		Denote the obtained vertex cover of $G$ of size at most $\frac{n}{2}+9k$ by $S$.
    		We would like to invoke the algorithm given by \Cref{thmVCad}, but we are not guaranteed that the longest cycle in $G$ has length of the form $2\delta(G-B)+k'$ for $k'\ge 0$.
			
			By \Cref{thm:relaxed_long_cycle}, we have that there is a cycle of length at least $\min\{2\delta(G-B), n-|B|\}$ in $G$, as $G$ is $2$-connected.
			We aim to achieve $2\delta(G-B)\le n-|B|$.
			Each vertex in $G-B$ has at most $|S|$ neighbours.
			Take the vertex in $G-B$ with smallest degree.
			It has at least $\frac{n}{2}-k> 19k$ neighbours in $G-B$.
			Obtain $B'$ by adding $19k$ neighbours of this vertex in $G-B$ to $B$.
			We have that $\delta(G-B')=\delta(G-B)-19k$, and $G-B'$ still contains at least one vertex.
			
			Note that $\delta(G-B')\le |S|-19k\le \frac{n}{2}-10k\le \frac{n}{2}-\frac{|B'|}{2}$, as $|B'| \le 20k$, so $2\delta(G-B')\le n-|B'|$. 
			Thus, by \Cref{thm:relaxed_long_path}, the length of the longest cycle in $G$ is of form $2\delta(G-B')+k'$ for $k'\ge 0$.
			The size of the vertex cover $S$ is at most $\frac{n}{2}+9k\le \delta(G-B)+10k= \delta(G-B')+29k$.

	    	Recall that Theorem~\ref{thmVCad} provides a $2^{\mathcal{O}(p + k')}\cdot n^{\mathcal{O}(1)}$-time algorithm that finds a cycle of length at least $2 \delta(G - B') + k'$ given a vertex cover of $G$ of size $\delta(G - B') + p$, if there is any.
	    	By trying  all possible $k'$ from $n - 2\delta(G - B') \le 40k$ to zero, we find the longest cycle in $G$ in time $2^{\mathcal{O}(k)}\cdot n^{\mathcal{O}(1)}$ as $p\le 29k$. By Theorem~\ref{thm:relaxed_long_cycle} there is a cycle of length at least $2\delta(G-B')$ in $G$, thus invoking Theorem~\ref{thmVCad} with $k' = 0$ necessarily provides us with a cycle.
    	\end{claimproof}
    
     Therefore, in what follows we assume that reaching any of the terminal states solves the problem immediately.

    Now consider a cycle $C$ of maximum length in $G$. Identically to the proof of Lemma~\ref{lemma:many_paths}, $C$ induces a path cover of a subset of $B$. Namely, in this proof, we call a set of vertex-disjoint paths $\mathcal{P}$ in $G$ \emph{a good path cover} if $\mathcal{P}$ satisfies the following properties.
    \begin{enumerate}
        \item Every path $P \in \mathcal{P}$ starts and ends in $V(G) \setminus B$.
        \item Each path $P \in \mathcal{P}$ has at least one vertex in $B$ and no two consecutive vertices in $V(G) \setminus B$.
        \item The paths of $\mathcal{P}$ contain at most $3|B|$ vertices in total.
    \end{enumerate}
    Note that this definition is the same as in Lemma~\ref{lemma:many_paths}, except for the property (4) there. Intuitively, we do not need it in this lemma since we may now assume that $G - B$ is connected. Since the current definition is strictly less restrictive, it follows immediately from the proof of Lemma~\ref{lemma:many_paths} that
    \begin{itemize}
        \item for each $0 \le t \le |B|$, if there is a cycle of length $n - t$ in $G$, there is also a good path cover in $G$ that covers all but $t$ vertices of $B$,
        \item in time $2^{\Oh(k)} n^{\Oh(1)}$ we can find a good path cover $\mathcal{P}$ that covers the maximum number of vertices in $B$, by the combination of color coding and dynamic programming. 
    \end{itemize}
    Note that the empty set is a good path cover, thus a good path cover always exists.

    So for the rest of the proof we deal with the case where we have computed a good path cover $\mathcal{P}$ of $G$, possibly an empty one. Denote by $r$ the number of paths in $\mathcal{P}$, and by $B'$ the set of vertices covered by paths in $\mathcal{P}$ together with the rest of vertices of $B$. By definition, $B \subset B'$, and by property (2) of a good path cover $|B'| \le 3k$. If $G - B'$ is not connected, we have a small separator: the algorithm outputs $B'$ and stops, reaching terminal state (3). If $G - B'$ is not $2$-connected, we add to $B'$ an arbitrary cut vertex of $G - B'$ and return $B'$. Thus, from now on we may assume that $G - B'$ is $2$-connected. The minimum degree of $G - B'$ is at least
    \[\delta(G - B') \ge \frac{n}{2} - k - |B' \setminus B| \ge \frac{n - |B' \setminus B|}{2} - 2k > \frac{n - |B'| + 2}{3},\]
    since $|B' \setminus B| \le 2k$ and $n > 16k$. By Theorem~\ref{proposition:cycle_or_is}, in time $\Oh(n^3)$ we find either a Hamiltonian cycle $C_0$ in $G - B'$, or an independent set of size $\delta(G - B') + 1$. If an independent set is found, its complement in $G - B'$ together with $B'$ is a vertex cover of $G$ of size at most $\frac{n}{2} + k + 2|B'| \le \frac{n}{2} + 7k$. In this case we output the vertex cover and stop, reaching terminal state (2).

    Otherwise, we have a Hamiltonian cycle $C_0$ in $G - B'$. Now, we iteratively insert the paths of $\mathcal{P} = \{P_1, \ldots, P_r\}$ into the cycle. Namely, for each $i \in \{1, \ldots, r\}$ we prove that given a cycle $C$ that contains exactly the vertices of the cycle $C_0$ and the paths $P_1$, \ldots, $P_{i - 1}$,
    we can either modify the cycle $C$ such that it satisfies the same property for $i + 1$, i.e. contains the vertices of the path $P_i$ as well, or reach one of the terminal states. Clearly, applying the above for each $i \in \{1, \ldots, r\}$, starting from the cycle $C_0$, proves the theorem. Thus from now on we focus on this statement.

    Consider the path $P_i$ and the obtained cycle $C$ that contains all vertices of $C_0$ and $P_1$, \ldots, $P_{i - 1}$. Denote the endpoints of $P_i$ by $s$ and $t$, observe that both $s$ and $t$ have at least $\frac{n}{2} - 3k$ neighbors on $C$. That holds since $s \notin B$, so $\deg_{G - B} (s) \ge \frac{n}{2} - k$, and at most $2k$ vertices of $G$ belong to $B' \setminus B$ and are neither on $C$ nor in $B$, analogously for $t$. 

    Denote by $C_s$ the set of neighbors of $s$ on $C$, and by $C_t$ the set of neighbors of $t$ on $C$.
    Consider a vertex $c_s \in C_t$ and a vertex $c_t \in C_t$. If $c_s$ and $c_t$ are next to each other on $C$ then we can immediately insert $P_i$ in $C$. If these vertices are not adjacent, but are at distance two on $C$ with a vertex $c' \notin B$ between them, we do the following. Insert $P_i$ in $C$ by going from $c_s$ to $c_t$ through $P_i$ and not through $c'$, denote the resulting cycle by $C'$. The vertex $c'$ is the only vertex that is in $V(C) \cup V(P_i)$, but not on $C'$, thus we are done as long as we insert $c'$ back in $C'$. By the same argument as for $s$ and $t$, $c'$ has at least $\frac{n}{2} - 3k$ neighbors on $C'$. If there are two neighbors of $c'$ on $C'$ that are consecutive on $C'$ again we can immediately insert $c'$ in $C'$, thus we assume this is not the case. Now on $C'$ between every two consecutive neighbors of $c'$ there is a group of at least one and possibly several non-neighbors of $c'$. Since there are at least $\frac{n}{2} - 3k$ neighbors of $c'$ on $C'$, there are also at least $\frac{n}{2} - 3k$ such groups of consecutive non-neighbors. Since there are at most $\frac{n}{2} + 3k$ non-neighbors of $c'$ on $C'$, at most $6k$ of the groups may contain more than one vertex. Thus at least $\frac{n}{2} - 9k$ groups consist of a single vertex, denote the set of all such vertices by $I$. Each vertex of $I$ is not adjacent to $c'$, but both of its neighbors on $C'$ are adjacent to $c'$. We claim that if two vertices in $I$ are adjacent in $G$, there is a cycle that goes through $c'$ and all vertices of $C'$. Denote these vertices by $u$ and $v$, go from $c'$ to a neighbor of $u$, then to $v$ along the arc of $C'$ that does not contain $u$, then take the edge $uv$, and finally collect the rest of $C'$ going from $u$ to a neighbor of $v$ and returning to $c'$. 
    If no two vertices in $I$ are adjacent in $G$, then $I$ is an independent set of size at least $\frac{n}{2} - 9k$ in $G$. Thus the complement of $I$ is a vertex cover of $G$ of size at most $\frac{n}{2} + 9k$, and we are in the terminal state (2).
    
    \begin{figure}[ht]
	\centering
	\begin{tikzpicture}[scale=0.6]
		\tikzstyle{vertex}=[draw, fill, circle, black, minimum size=4,inner sep=0pt]
		\tikzstyle{path1}=[very thick]
		\tikzstyle{path2}=[blue, very thick, dashed]
		\tikzstyle{regular}=[]
		
		\draw[path1, red] plot[smooth, tension=.7] coordinates {(-1, 10) (2, 10.5) (4, 10) (7, 10.5) (9, 10)};
		\node at (2, 10) {$P$};
		\node[vertex] (s) at (-1, 10) {};
		\node at (-1.5, 10) {$s$};
		\node[vertex] (t) at (9, 10) {};
		\node at (9.5, 10) {$t$};
		
		\draw[path1, blue] plot[smooth, tension=.7] coordinates {(-2, 5) (-1, 2.5) (4, 1.5) (9, 2.5) (10, 5)};
		\node at (-2.4, 3.7) {$C$};
		\draw[path1, blue] plot[smooth, tension=.7] coordinates {(0, 7) (3, 7.5) (5, 7.5) (8, 7)};
		\node[vertex] (s1) at (-2, 5) {};
		\node[vertex] (s2) at (-1.1, 6.2) {};
		\node at (-0.9, 5.7) {$c_s$};
		\node[vertex] (s3) at (0, 7) {};
		\node[vertex] (t3) at (10, 5) {};
		\node[vertex] (t2) at (9.1, 6.2) {};
		\node at (9, 5.7) {$c_t$};
		\node[vertex] (t1) at (8, 7) {};
		\draw[dashed] (s) -- (s1);
		\draw[dashed] (s) -- (s2);
		\draw[path1] (s) -- (s3);
		\draw[path1, blue] (s1) -- (s2);
		\draw[dashed, blue] (s2) -- (s3);
		\draw[dashed] (t) -- (t1);
		\draw[dashed] (t) -- (t2);
		\draw[path1] (t) -- (t3);
		\draw[path1, blue] (t1) -- (t2);
		\draw[dashed, blue] (t2) -- (t3);
		\draw[path1] (s2) -- (t2);
		
	\end{tikzpicture}
	\caption{Inserting the path $P$ (in red) into the cycle $C$ (in blue) in the presence of an edge between an internal $s$-vertex $c_s$ and an internal $t$-vertex $c_t$. The resulting cycle is in solid.}
	\label{fig:insertion}
\end{figure}
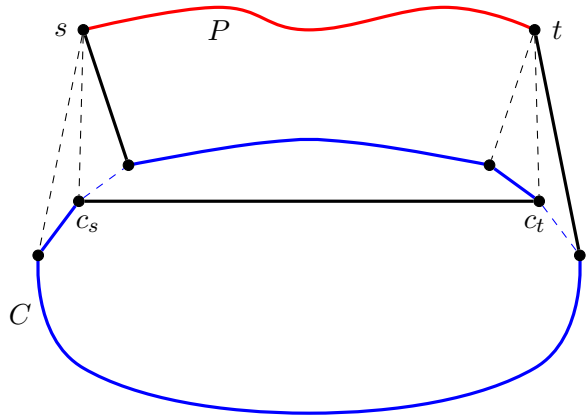

    Now we deal with the case where for every $c_s \in C_s$ and every $c_t \in C_t$, there is either a vertex of $B$ or at least two other vertices between them on $C$. First, we bound the number of common neighbors of $s$ and $t$ on $C$, denote $C_s \cap C_t$ by $C_{st}$. Fix an ordering on $C$, and consider a vertex $u \in C_{st}$ and the next vertex $v$ along the cycle that belongs to either $C_s$ or $C_t$. Between $u$ and $v$, there must be at least two vertices that belong to neither $C_s$ nor $C_t$, or a vertex of $B$. Thus with each vertex of $C_{st}$ we can uniquely associate either two vertices of $V(C) \setminus C_s \setminus C_t$, or a vertex of $B$. We get that apart from the vertices of $C_{st}$, $C_s \setminus C_{st}$ and $C_t \setminus C_{st}$, there are at least $2(|C_{st}| - |B|)$ other vertices in $C$. Summing the sizes of these four disjoint sets together, we get
    \begin{align*}
        &|C_{st}| + (|C_s| - |C_{st}|) + (|C_t| - |C_{st}|) + 2(|C_{st}| - |B|) \le n,\\
        &|C_{st}| \le n - |C_s| - |C_t|  + 2 |B| \le 8k,
    \end{align*}
    since both $C_s$ and $C_t$ contain at least $\frac{n}{2} - 3k$ vertices, and the size of $B$ is at most $k$. From this bound, we also immediately get that the number of vertices on $C$ that are not adjacent to both $s$ and $t$ is at most $n - |C_s| - |C_t| + |C_{st}| \le 14k$. Thus nearly all vertices of $C$, except for $\Oh(k)$, are adjacent either to $s$ but not to $t$, or to $t$ but not to $s$. As vertices from $C_s$ cannot be next to vertices from $C_t$ on $C$, they must come in large consecutive chunks along the cycle. To formalize this intuition, let us call a vertex in $C_s$ an \emph{internal $s$-vertex} if both of its neighbors along the cycle are also from $C_s$, and \emph{internal $t$-vertices} are defined analogously. We claim that except for $\Oh(k)$ vertices, all the vertices of $C$ are either internal $s$-vertices or internal $t$-vertices. Vertices from $C_s$ that are not internal $s$-vertices must have at least one neighbor along $C$ that is not from $C_s$ nor $C_t$, and the same holds for $C_t$. However, there are at most $14k$ vertices in $V(C) \setminus C_s \setminus C_t$, and each of them can ``spoil'' at most two vertices of $C_s$ or $C_t$. Also note that a vertex of $C_{st}$ must have vertices of $V(C) \setminus C_s \setminus C_t$ on both sides, as a vertex from $C_s$ cannot lie next to a vertex of $C_t$ on $C$. Thus the total number of internal $s$-vertices and internal $t$-vertices is at least 
    \begin{multline*}(|C_s| - |C_{st}|) + (|C_t| - |C_{st}|) - 2 (|V(C) \setminus C_s \setminus C_t| - |C_{st}|) \ge 2(\frac{n}{2} - 3k) - 28k = n - 34k.\end{multline*}

    Now assume there is an edge between an internal $s$-vertex and an internal $t$-vertex. If this holds, the path $P_i$ can be inserted in $C$ in the same way as in the case of a single high-degree vertex above, see Figure~\ref{fig:insertion} for an illustration. 
    On the other hand, if there are no edges between internal $s$-vertices and internal $t$-vertices, then the graph induced on the sets of internal $s$-vertices and $t$-vertices is not connected, as these sets are both non-empty.
    Then removing at most $34k$ vertices from $G$ leaves these sets disconnected. Thus we arrive to the terminal state (3) where we have a small separator.
    In order to apply \Cref{lemma:hamiltonian_separator}, it should contain $B$ as a subset, so after taking the union with $B$ its size is at most $35k$.
	\end{proof}

\section{\cyclebananadec}\label{sec:bananas}

In this section, we define \cyclebananadec{s} and show that, given a \cyclebananadec for a cycle in $G$, we can either find a longer cycle or solve the instance $(G,B,k)$ of \probDC in time single-exponential in $k+|B|$.

\begin{figure}[ht]
	\begin{center}
		\ifdefined\STOC
\begin{tikzpicture}[scale=0.4]
\else
\begin{tikzpicture}[scale=0.8]
\fi
\tikzstyle{vertex}=[draw, fill, circle, black, minimum size=3,inner sep=0pt]

\node[vertex] (v1) at (0, 7){};
\node[vertex] (v2) at (0, 6){};
\node[vertex] (v3) at (0, 5){};
\node[vertex] (v4) at (0, 4){};
\node[vertex] (v5) at (10, 7){};
\node[vertex] (v6) at (10, 6){};
\node[vertex] (v7) at (10, 5){};
\node[vertex] (v8) at (10, 4){};

\draw[very thick, red] (v1) -- (v2) -- (v3) -- (v4);
\draw[very thick, red] (v5) -- (v6) -- (v7) -- (v8);

\node[vertex] (v9) at (1, 8.5) {};
\node[vertex] (v10) at (9, 8.5) {};

\draw (v1) -- (v9);
\draw (v5) -- (v10);

\draw  plot[smooth, tension=.7] coordinates {(1, 8.5) (2, 9.3) (3, 9.2) (4, 9.7) (5, 9.4) (6, 9.7) (7, 9.2) (8, 9.3) (9, 8.5) };
\draw[blue, very thick]  plot[smooth cycle, tension=.7] coordinates { (1, 9) (3, 9.9) (7, 9.9) (9, 9 ) (9, 8.1) (7, 9) (3, 9) (1, 8.1)};

\node[vertex] (v11) at (1, 2.5) {};
\node[vertex] (v12) at (9, 2.5) {};

\draw (v4) -- (v11);
\draw (v8) -- (v12);

\draw  plot[smooth, tension=.7] coordinates {(1, 2.5) (2, 1.7) (3, 1.8) (4, 1.3) (5, 1.6) (6, 1.3) (7, 1.8) (8, 1.7) (9, 2.5) };
\draw[blue, very thick]  plot[smooth cycle, tension=.7] coordinates { (1, 2) (3, 1.1) (7, 1.1) (9, 2 ) (9, 2.9) (7, 2) (3, 2) (1, 2.9)};

\draw[fill=lightgray]  plot[smooth cycle, tension=.7] coordinates {(3.8, 3.1) (5, 2.4) (5.7, 3.3)};
\node[vertex] (v13) at (4.2, 3) { };
\node[vertex] (v14) at (5, 2.6) { };
\node[vertex] (v15) at (5.5, 3.1) { };
\draw (v13) -- (v14);
\draw (v14) -- (v15);
\draw (v13) -- (v15);
\draw (v4) -- (v13);
\draw (v15) -- (v7);

\node[vertex] (v16) at (2.5, 7.5) { };
\node[vertex] (v17) at (5.5, 7.3) { };
\draw[blue, very thick, rotate around={176:(4,7.4)}]  (4, 7.4) ellipse (53pt and 15pt) ;

\draw (v2) -- (v16);

\draw[fill=lightgray]  plot[smooth cycle, tension=.7] coordinates {(6.6, 7) (7.5, 7.3) (7.4, 6.9) (6.8, 6.8)};
\node[vertex] (v19) at (5.85, 6.15) { };
\node[vertex] (v18) at (3.5, 5) { };
\draw (v2) -- (v18);
\draw (v7) -- (v19);
\draw[blue, very thick, rotate around={25:(4.5,5.5)}]  (4.5, 5.5) ellipse (53pt and 15pt) ;
\draw  plot[smooth cycle, tension=.7] coordinates {(5.8, 7.8) (7, 7) (6.5, 5.8) (5.2, 6.5)};
\node[vertex] (v20) at (6.8, 6.9) { };
\node[vertex] (v21) at (7.4, 7.1) { };
\draw (v20) -- (v21);
\draw (v6) -- (v21);

\node at (-0.5, 5.5) {$P_1$};
\node at (10.5, 5.5) {$P_2$};
\node at (0.5, 9) {D1};
\node at (9.5, 1.9) {D1};
\node at (6.4, 5.4) {D2};
\node at (6.2, 3) {D0};
\node at (9.9, 7.8) {$C$};
\end{tikzpicture}
	\end{center}
	\caption{A schematic example of a \cyclebananadec, vertices belonging to $B$ are in light gray. Removing the paths $P_1$ and $P_2$ leaves two \ref{enum:cycle_tunnel_path_bic}-type components that correspond to the long arcs $P'$ and $P''$ of the starting cycle $C$, one \ref{enum:cycle_tunnel_path_cut_left}-type component, and a component consisting only of vertices from $B$, denoted by D0. The four Dirac components are in thick blue.}
	\label{fig:bananasoncycle}
\end{figure}
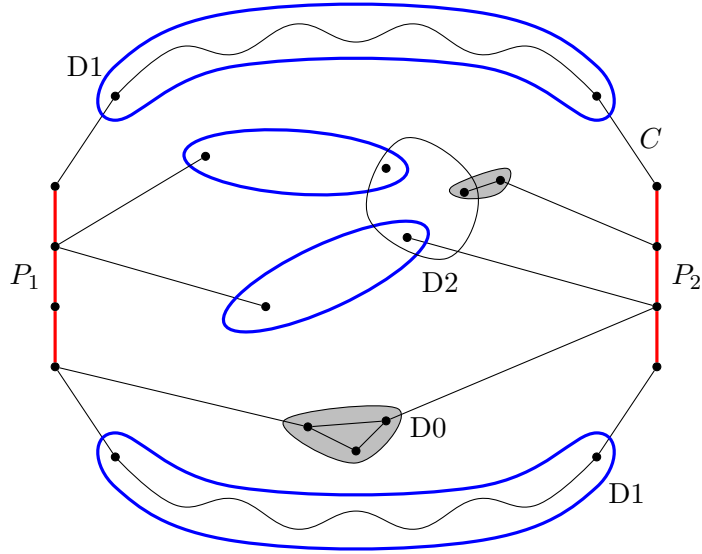
\begin{definition}[\textbf{\Cyclebananadec and \cyclebanana}]
	Let $G$ be a 2-connected graph, let $B$ be a subset of $V(G)$, and let $C$ be a cycle in $G$ of length at least $2\delta(G-B)$.
	We say that  two disjoint paths $P_1$ and $P_2$ in $G$ induce \emph{a \cyclebananadec   for   $C$ and $B$}  in $G$ if
	\begin{itemize}
		\item 
		The cycle $C$ is of the form $C=P_1 {P'}P_2{P''}$, where  each of the paths ${P'}$ and ${P''}$ has at least $\delta(G- B)-2$ edges.
		%
		\item 
		    Let $G'$ be the graph obtained from $G$ by applying $B$-refinement to every  connected component $H$ of $G- V(P_1  \cup P_2)$,  except those components $H$ with  $V(H)\subseteq B$. Note that no edges of the paths $P_1$ and $P_2$ are contracted. 
		Then for  every connected component $H'$ of $G'-V(P_1  \cup P_2)$, except those with $V(H')\subseteq B$, holds $|V(H')|\ge 3$ and one of the following. 
		\begin{enumerate}[label=(D\arabic*)]
			\item\label{enum:cycle_tunnel_path_bic} $H'$ is $2$-connected and the maximum size of a matching in  $G'$ between $V(H')$ and $V(P_1)$  is one,  and between $V(H')$ and $V(P_2)$ is also  one;
			\item\label{enum:cycle_tunnel_path_cut_left} $H'$ is not 2-connected,   
			    exactly one vertex of $P_1$ has neighbors in $H'$, that is, 			
			$|N_{G'}(V(H'))\cap V(P_1)|=1$, and no inner vertex from a  leaf-block of $H'$ has a neighbor in $P_2$;
			\item\label{enum:cycle_tunnel_path_cut_right} The same as  \ref{enum:cycle_tunnel_path_cut_left}, but with $P_1$ and $P_2$ interchanged. That is, 
			$H'$ is not 2-connected,  			
			$|N_{G'}(V(H'))\cap V(P_2)|=1$, and no inner vertex from  a leaf-block of $H'$ has a neighbor in $P_1$.			
		\end{enumerate}
	
		\item There is exactly one connected component $H$ in $G-V(P_1\cup P_2)$ with $V(H)\setminus B=V(P')\setminus (B\cup\{s',t'\})$, where $s'$ and $t'$ are the endpoints of $P'$.
		Analogously, there is exactly one connected component $H$ in $G-V(P_1\cup P_2)$ with $V(H)\setminus B=V(P'')\setminus (B\cup\{s'',t''\})$.
		
	\end{itemize}
	The set of \emph{\cyclebanana}s for a \cyclebananadec 
	is defined as follows.
	First,  for each component $H'$ of type \ref{enum:cycle_tunnel_path_bic}, $H'$ is a \cyclebanana of the \cyclebananadec.
	Second, for each leaf-block of each $H'$ of type \ref{enum:cycle_tunnel_path_cut_left}, or of type \ref{enum:cycle_tunnel_path_cut_right}, this leaf-block  is also  a \cyclebanana of the \cyclebananadec.	
	For an example of a Dirac decomposition, see Figure~\ref{fig:bananasoncycle}.
\end{definition}


Note that Lemma~\ref{lemma:st_path_banana_consecutive} holds for an arbitrary cycle $C$ if we replace \banana{s} and \bananadec{s} with \cyclebanana{s} and \cyclebananadec{s}.
We give the analogue of this lemma below without proof, since it is identical to the proof of Lemma~\ref{lemma:st_path_banana_consecutive}.

\begin{lemma}\label{lemma:dirac_cycle_banana_consecutive}
	Let $G$ be a $2$-connected graph, $B\subseteq V(G)$, $C$ be a cycle in $G$. Let paths  $P_1, P_2$ induce a \cyclebananadec  for $C$ and $B$ in $G$.
	Let $M$ be a \cyclebanana of the \cyclebananadec and $P$ be a path in $G$ such that $P$ contains at least one vertex in $V(P_1)\cup V(P_2)$.
	If $P$ enters $M$, then all vertices of $M$ hit by $P$ appear consecutively on $P$.
\end{lemma}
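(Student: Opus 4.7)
The plan is to mirror the proof of Lemma~\ref{lemma:st_path_banana_consecutive} essentially verbatim, with the only adaptation being that the role previously played by the endpoints $s, t \in V(P_1) \cup V(P_2)$ of the $(s,t)$-path is now played by the assumed existence of at least one vertex of $P$ in $V(P_1) \cup V(P_2)$. This assumption is precisely what is needed so that the arguments in the original proof go through unchanged, since in each case analysis we needed to ``exit'' $V(M)$ along $P$ to some vertex in $V(P_1) \cup V(P_2)$ in order to produce the forbidden matching or separator violation.

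The proof would proceed by contradiction. Suppose $P$ enters $M$, so $P$ contains at least one edge of $M$, but the vertices of $V(P) \cap V(M)$ do not all appear consecutively on $P$. Then I can pick vertices $v_1, v_2 \in V(P) \cap V(M)$ with $v_1 \neq v_2$ and a vertex $x \in V(P) \setminus V(M)$ occurring between $v_1$ and $v_2$ on $P$, in such a way that the $(v_1, v_2)$-subpath of $P$ still contains at least one edge of $M$. Using that $P$ meets $V(P_1) \cup V(P_2)$, I can then walk from $v_1$ (respectively $v_2$) along $P$ away from the $(v_1, v_2)$-subpath, and at the very least the $(v_1, v_2)$-subpath itself must transit through $V(P_1) \cup V(P_2)$ (since leaving $V(M)$ and returning forces this in view of the $B$-refinement structure). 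This produces the same kind of ``two vertices entering/leaving $M$ plus a third transition point'' configuration as in Lemma~\ref{lemma:st_path_banana_consecutive}.

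Then I would split into the three cases according to the type of the component $H'$ of $G' - V(P_1 \cup P_2)$ that gives rise to $M$. In Case~1 ($M$ of type \ref{enum:cycle_tunnel_path_bic}), the contracted $B$-refinement paths turn each transition from $V(M)$ back to $V(P_1) \cup V(P_2)$ into a single edge of $G'$, yielding three matching edges $v_1 w_1, v_2 w_2, v_3 w_3$ between $V(M)$ and $V(P_1) \cup V(P_2)$; by pigeonhole one of $V(P_i)$ ($i \in \{1, 2\}$) receives two matched vertices, contradicting the matching-size-one condition of \ref{enum:cycle_tunnel_path_bic}. In Case~2 ($M$ a leaf-block of a \ref{enum:cycle_tunnel_path_cut_left}-type component), the unique neighbor $w \in V(P_1)$ and the cut-vertex $c$ of $M$ together separate the inner vertices of $M$ from the rest of the graph; counting appearances of $w$ and $c$ on $P$ and its subpaths, one concludes that either $w$ or $c$ is used twice, or else $P$ must use an edge between $V(P_2)$ and an inner vertex of $M$, contradicting \ref{enum:cycle_tunnel_path_cut_left}. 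Case~3 is symmetric.

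The only real adaptation is at the setup step: in Lemma~\ref{lemma:st_path_banana_consecutive} the endpoints $s, t$ of the $(s,t)$-path automatically lie outside $V(M)$ and anchor the path to $V(P_1) \cup V(P_2)$, whereas here $P$ is an arbitrary path and we must use the explicit hypothesis that $V(P) \cap (V(P_1) \cup V(P_2)) \neq \emptyset$ to guarantee the same anchoring. I expect no new technical obstacle beyond a careful restatement of how the subpaths $s \leadsto v_1$, $v_1 \leadsto v_2$, $v_2 \leadsto t$ (in the old proof) are replaced by subpaths of $P$ running from $v_1$ and $v_2$ towards the fixed vertex in $V(P_1) \cup V(P_2)$; the remaining matching/separator arguments then apply verbatim.
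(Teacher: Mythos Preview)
Your proposal is exactly what the paper does: it states Lemma~\ref{lemma:dirac_cycle_banana_consecutive} without proof, explicitly saying ``since it is identical to the proof of Lemma~\ref{lemma:st_path_banana_consecutive}.'' Your identification of the one point needing adaptation (replacing the automatic anchoring via $s,t\in V(P_1)\cup V(P_2)$ by the hypothesis that $P$ meets $V(P_1)\cup V(P_2)$) is precisely the right observation, and your case split \ref{enum:cycle_tunnel_path_bic}/\ref{enum:cycle_tunnel_path_cut_left}/\ref{enum:cycle_tunnel_path_cut_right} mirrors the original argument verbatim.
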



We now want to prove an analogue of \Cref{lemma:st_path_edge_of_banana} showing that if a long cycle in $G$ exists, then it suffices to look for a long cycle entering a \cyclebanana.
For that we first require the following weaker result.

\begin{lemma}\label{lemma:dirac_b_leaf_block_separator}
	Let $G$ be a $2$-connected graph, $B\subseteq V(G)$, $C$ be a cycle in $G$ of length less than $2\delta(G-B)+k$. Let paths  $P_1, P_2$ induce a \cyclebananadec  for $C$ and $B$ in $G$.
	If $H'$ is a \ref{enum:cycle_tunnel_path_cut_left}-type or a \ref{enum:cycle_tunnel_path_cut_right}-type component of the \cyclebananadec and $S$ is a $B$-leaf-block separator of $H'$, then there is a cycle of length at least $\frac{1}{2}(5\delta(G-B)-|S|-(k+5))$ that enters a \cyclebanana in $G$.
\end{lemma}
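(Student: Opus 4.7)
The plan is to imitate the construction used for paths in \Cref{lemma:st_path_edge_of_banana}, but adapted to the cycle setting and using the leaf-block separator $S$ to quantify the length. First I would bound the size of $V(P_1\cup P_2)$. Since $C=P_1P'P_2P''$ has less than $2\delta(G-B)+k$ vertices and both $P'$ and $P''$ contain at least $\delta(G-B)-1$ vertices, a direct count gives $|V(P_1\cup P_2)|\le k+5$. As a consequence, every connected component $H'$ of $G'-V(P_1\cup P_2)$ satisfies $\delta(H'-B)\ge\delta(G-B)-(k+5)$ (the $B$-refinement only increases degrees).

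Next I would locate the ingredients needed to detour through a \cyclebanana. Assume $H'$ is of type \ref{enum:cycle_tunnel_path_cut_left} (the other case is symmetric), and let $w$ be the unique vertex of $P_1$ that has a neighbor in $H'$ in $G'$. Because $G$ is $2$-connected, $G-w$ is connected, so $H'$ must have a vertex $y$ with a neighbor $z\in V(P_2)$; moreover the definition of \ref{enum:cycle_tunnel_path_cut_left} forces such a $y$ to lie outside the inner parts of the leaf-blocks. Applying \Cref{lemma:separator_in_non_2c} to the $B$-leaf-block separator $S$ and the vertex $y$ yields a leaf-block $L$ of $H'$ with cut-vertex $c$ together with a $(c,y)$-path in $H'$ of length at least $\tfrac12(\delta(H'-B)-|S|)$. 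The leaf-block $L$ is a \cyclebanana, and another application of $2$-connectivity to the leaf-block (no inner vertex of $L$ may be adjacent to $P_2$) forces $L$ to contain an inner vertex $x$ with $xw\in E(G')$. Because $L$ is a \cyclebanana, there is an $(x,c)$-path in $L$ of length at least $\delta(G-B)-2$, analogous to the property established for \bananas in Section~\ref{sec:erdos-gallaiPath}.

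With these pieces in hand I would assemble the cycle by concatenating: the edge $wx$, the $(x,c)$-path inside $L$, the $(c,y)$-path inside $H'$, the edge $yz$, and finally the longer of the two $(z,w)$-arcs of $C$. Since $|C|\ge 2\delta(G-B)$, the longer arc has length at least $\delta(G-B)$. Summing the five contributions and substituting $\delta(H'-B)\ge\delta(G-B)-(k+5)$ gives
\[
1+(\delta(G-B)-2)+\tfrac12\bigl(\delta(H'-B)-|S|\bigr)+1+\delta(G-B)\ \ge\ \tfrac12\bigl(5\delta(G-B)-|S|-(k+5)\bigr),
\]
as required. Any edge of this cycle that lives in $G'$ but not in $G$ corresponds to a path in $G$ through vertices of $B$, so un-contracting only increases the length. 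The cycle uses edges of $L$, therefore it enters the \cyclebanana $L$, which is what we had to produce.

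The main obstacle I expect is checking that the two applications of $2$-connectivity are really valid in the contracted graph $G'$ (the existence of $y$ with a neighbor in $V(P_2)$, and the existence of $x$ adjacent to $w$), and verifying that the $(c,y)$-path produced by \Cref{lemma:separator_in_non_2c} is internally disjoint from the $(x,c)$-path we take inside $L$. For the disjointness, it suffices to observe that the path from \Cref{lemma:separator_in_non_2c} is forbidden to use inner leaf-block vertices of $L$ (they lie in $I$ in the notation of \Cref{definition:b_leaf_block_sep}), so the only vertex shared with $L$ is $c$ itself, which is exactly the joining point.
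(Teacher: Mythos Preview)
Your proposal is correct and follows essentially the same approach as the paper's proof: locate a vertex $y$ of $H'$ with a neighbor in $P_2$, invoke \Cref{lemma:separator_in_non_2c} to get a long $(c,y)$-path, extend through the leaf-block $L$ to an inner vertex $x$ adjacent to the unique $w\in V(P_1)$, and close up along the longer arc of $C$; the arithmetic and the bound $|V(P_1\cup P_2)|\le k+5$ are the same. Your extra care about working in $G'$ versus $G$ and about the disjointness of the $(x,c)$-path and the $(c,y)$-path (using that $c$ separates the inner vertices of $L$ from the rest of $H'$) makes explicit what the paper leaves implicit.
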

\begin{proof}
	Without loss of generality, let $H'$ be a \ref{enum:cycle_tunnel_path_cut_left}-type component of the \cyclebananadec.
	Take a vertex $v \in V(H')$ that is not an inner vertex of a leaf-block of $H'$ and has a neighbour in $P_2$.
	Such vertex always exists by definition of a \cyclebananadec.
	
	Let $S$ be a $B$-leaf-block separator of $H'$.
	By \Cref{lemma:separator_in_non_2c}, there is a $(c,v)$-path of length at least $\frac{1}{2}\delta(H'-B)-\frac{1}{2}|S|$ in $H'$ for some cut-vertex $c$ of a leaf-block of $H'$.
	Also $\delta(H'-B)\ge \delta(G-B)-|V(P_1)\cup V(P_2)|\ge \delta(G-B)-(k+5)$, since the total length of $P_1$ and $P_2$ is at most $k+3$.
	Denote this leaf-block of $H'$ by $L$.

	Note that $\delta(L-(B\cup\{c\}))\ge \delta(G-B)-2$ by properties of \ref{enum:cycle_tunnel_path_cut_left}-type components.
	By \Cref{thm:relaxed_st_path}, there is a path of length at least $\delta(G-B)-2$ between $c$ and any other vertex in $L$.
	Let $u$ be an inner vertex in $L$ that has a neighbour in $P_1$.
	
	Combine the $(u,c)$-path inside $L$ with the $(c,v)$-path going outside $L$ in $H'$.
	The obtained path is a $(u,v)$-path of length at least $(\delta(G-B)-2)+(\frac{1}{2}\delta(H'-B)-\frac{1}{2}|S|)$.
	
	Since $u$ and $v$ have neighbours in $V(P_1)$ and $V(P_2)$ respectively, we obtain a chord of $C$ of length at least $\delta(G-B)+\frac{1}{2}\delta(H'-B)-\frac{1}{2}|S|$.
	The chord splits $C$ into two arcs, one of which is of length at least $\delta(G-B)$.
	Combine this arc with the chord and obtain a cycle of length at least $2\delta(G-B)+\frac{1}{2}\delta(H'-B)-\frac{1}{2}|S|\ge \frac{5}{2}\delta(G-B)-\frac{1}{2}|S|-\frac{1}{2}(k+5)$.
\end{proof}

The following lemma is an analogue of \Cref{lemma:st_path_edge_of_banana} for \cyclebanana{s}.
In contrast to \Cref{lemma:dirac_cycle_banana_consecutive}, the proof is significantly different from the proof of \Cref{lemma:st_path_edge_of_banana}.

\begin{lemma}\label{lemma:dirac_cycle_edge_of_banana}
	Let $G$ be a graph, $B\subseteq V(G)$ be   a subset of its vertices and $P_1, P_2$ induce a \cyclebananadec  for a cycle $C$ of length less than $2\delta(G-B)+k$ in $G$.
	Let $k$ be an integer such that $6k+4|B|+6 < \delta(G-B)$.
	If there exists a cycle of length at least $2\delta(G-B)+k$ in $G$ that contains at least one vertex in $V(P_1)\cup V(P_2)$, then there exists a cycle of length at least $2\delta(G-B)+k$ in $G$ that  
	enters a \cyclebanana.
\end{lemma}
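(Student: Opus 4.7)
Let $C^*$ be a cycle of length at least $2\delta(G-B)+k$ in $G$ that contains a vertex of $V(P_1)\cup V(P_2)$. If $C^*$ itself enters a \cyclebanana, we are done. Otherwise, the plan is to show that some non-$2$-connected component of the \cyclebananadec admits a small $B$-leaf-block separator, and to invoke \Cref{lemma:dirac_b_leaf_block_separator} on it to produce a cycle of length at least $2\delta(G-B)+k$ entering a \cyclebanana.

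First I will bound $|V(P_1\cup P_2)|$. Because $C=P_1P'P_2P''$ has length less than $2\delta(G-B)+k$ while $|P'|,|P''|\geq \delta(G-B)-2$, the paths $P_1$ and $P_2$ have combined length at most $k+3$, so $|V(P_1\cup P_2)|\leq k+5$. Let $A:=V(P_1\cup P_2)\cup B$, so $|A|\leq k+5+|B|$. Then I will classify the edges of $C^*$: any edge $uv$ with $u,v\notin A$ lies within a single connected component of $G-V(P_1\cup P_2)$, and because neither endpoint is in $B$, $uv$ is preserved by the $B$-refinement (no edge both of whose endpoints avoid $B$ can lie in a contracted leaf-block, whose inner vertices all belong to $B$). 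Hence $uv$ is an edge of $G'$. Under the assumption that $C^*$ enters no \cyclebanana, $uv$ cannot be an edge of a type-\ref{enum:cycle_tunnel_path_bic} component (which is itself a \cyclebanana), nor an edge of a leaf-block of a non-$2$-connected component (also a \cyclebanana); so $uv$ must be a non-leaf-block edge of some non-$2$-connected component $H'$ of the \cyclebananadec.

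A short counting step then makes such edges abundant. At most $2|A|\leq 2(k+5+|B|)$ edges of $C^*$ are incident to $A$, and the hypothesis $\delta(G-B)>6k+4|B|+6$ gives
\[
|C^*|-2|A| \;\geq\; 2\delta(G-B)+k-2(k+5+|B|) \;>\; 0.
\]
Therefore some non-$2$-connected $H'$ has a non-leaf block $K$ containing a vertex $v\in V(K)\setminus(I\cup B)$ (namely an endpoint of such a non-leaf-block $C^*$-edge).

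Finally I apply \Cref{lemma:dirac_b_leaf_block_separator} to $H'$. For any $B$-leaf-block separator $S$ of $H'$ it yields a cycle entering a \cyclebanana of length at least $\tfrac{1}{2}\bigl(5\delta(G-B)-|S|-(k+5)\bigr)$; for this to reach $2\delta(G-B)+k$ we need $|S|\leq \delta(G-B)-3k-5$, and the hypothesis ensures this bound is at least $3k+4|B|+1$. The crux of the proof is thus to exhibit a $B$-leaf-block separator of this size. My plan is to build $S$ from a small set of cut vertices of $H'$ surrounding a suitably chosen non-leaf block containing $v$, bounding $|S|$ via the block-tree structure of $H'$: the strong minimum-degree condition on $V(H')\setminus B$ (inherited from $\delta(G-B)$ and the tight bound on $|V(P_1\cup P_2)|$) forces each leaf-block of $H'$ to contain many non-$B$ inner vertices, which limits the block-tree degrees relevant to $S$ and yields the required bound on $|S|$. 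This small-separator construction is expected to be the most delicate part of the proof.
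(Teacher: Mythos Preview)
Your opening steps are correct and match the paper: bounding $|V(P_1\cup P_2)|\le k+5$, classifying the edges of $C^*$, and showing that edges with both endpoints outside $A=V(P_1\cup P_2)\cup B$ must lie in non-leaf blocks of \ref{enum:cycle_tunnel_path_cut_left}/\ref{enum:cycle_tunnel_path_cut_right}-type components. However, the final step---producing a $B$-leaf-block separator of size at most $\delta(G-B)-3k-5$---is a genuine gap, and the block-tree argument you sketch does not close it. Suppose your vertex $v$ is the cut vertex $c_i$ of a leaf-block $L_i$. Then $v$ is adjacent to inner leaf-block vertices of $L_i$, and the only vertex of $L_i$ outside $I$ is $v$ itself; no set $S\subseteq V(H')\setminus I$ with $v\notin S$ can separate $v$ from $I$. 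Even if $v$ lies strictly inside a non-leaf block $K$, separating $v$ from $I$ requires removing all cut vertices of $H'$ lying in $K$, i.e.\ the block-tree degree of $K$. Your argument that large leaf-blocks ``limit the block-tree degrees'' does not work: each leaf-block has $\Omega(\delta(G-B))$ vertices, but $|V(H')|$ is not bounded in terms of $\delta(G-B)$, so the number of leaf-blocks---and hence the block-tree degree of $K$---can be arbitrarily large. In short, there is no reason a small $B$-leaf-block separator should exist at all.

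The paper's proof is a dichotomy that handles exactly this. Rather than looking for a separator directly, it exploits that $C'$ contains \emph{many} ($>5k$) vertices $W\cap V(C')$ in non-leaf blocks. It then iteratively builds $3k$ chords of $C'$: from each $w_i\in W\cap V(C')$ it walks inside $H'$ to a leaf-block (a \cyclebanana), through it, and back to $C'$, avoiding the previously used $w'_1,\dots,w'_{i-1}$. If at some step this walk is blocked, then $\{w'_1,\dots,w'_{i-1}\}$ is a $B$-leaf-block separator of size $<3k$, and \Cref{lemma:dirac_b_leaf_block_separator} finishes. Otherwise you obtain $3k$ long chords of $C'$, and a separate geometric argument (pairwise graphical intersection, or else a pigeonhole ordering of non-crossing chords along $C'$) extracts a cycle of length $\ge 2\delta(G-B)+k$ through a \cyclebanana. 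The chord-construction branch is essential and is precisely what your proposal is missing.
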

\begin{proof}
	Suppose that there exists a cycle $C'$ of length at least $2\delta(G-B)+k$ in $G$ that contains at least one vertex in $V(P_1)\cup V(P_2)$.
	If $C'$ already contains an edge of a \cyclebanana, we are done.
	We now assume that $C'$ does not contain any edge of any \cyclebanana.
	We show how to use $C'$ and construct a cycle of length at least $2\delta(G-B)+k$ in $G$ that contains an edge of a \cyclebanana of the given \cyclebananadec.
	
	Let $W$ be the set of all vertices of $G$ that are vertices of non-leaf-blocks of \ref{enum:cycle_tunnel_path_cut_left}-type or \ref{enum:cycle_tunnel_path_cut_right}-type components in the \cyclebananadec. 
	We start with the following claim.
	
	\begin{claim}
		$|W\cap V(C')|> 5k$.
	\end{claim}
	\begin{claimproof}
		This is a counting argument.
		Note that $C'$ cannot contain an edge with both endpoints inside a \cyclebanana of $G$.
		Since \cyclebanana{s} of $G$ are \ref{enum:cycle_tunnel_path_bic}-type components of the \cyclebananadec and leaf-blocks of \ref{enum:cycle_tunnel_path_cut_left}-type or \ref{enum:cycle_tunnel_path_cut_right}-type connected components, each edge of $C'$ has an endpoint either in $V(P_1)\cup V(P_2)\cup B$, or inside a non-leaf-block of a \ref{enum:cycle_tunnel_path_cut_left}-type or a \ref{enum:cycle_tunnel_path_cut_right}-type connected component.
		The union of the vertex sets of the non-leaf-blocks form the set $W$.
		Hence, $(W\cap V(C'))\cup V(P_1)\cup V(P_2)\cup B$ is a vertex cover of $C'$.
		
		Note that a vertex cover of any cycle consists of at least half of its vertices.
		Then $$2|(W\cap V(C'))\cup V(P_1)\cup V(P_2)\cup B|\ge |V(C')|\ge 2\delta(G-B)+k.$$
		Immediately we get that \begin{multline*}2|W\cap V(C')|\ge 2\delta(G-B)+k-2|V(P_1)\cup V(P_2)|-2|B|\ge 2\delta(G-B)+k-2(k-2)-2|B|>10k.\end{multline*}
	\end{claimproof}

	The following claim is useful for constructing long chords of $C'$ going through the \cyclebanana{s} that are leaf-blocks.
	
	\begin{claim}\label{claim:hard_lemma_no_inner_vertices}
		Let $H'$ be a \ref{enum:cycle_tunnel_path_cut_left}-type or a \ref{enum:cycle_tunnel_path_cut_right}-type component in the \cyclebananadec.
		$C'$ does not contain any inner vertex of the leaf-blocks of $H'$.
	\end{claim}
	\begin{claimproof}
		Suppose that $C'$ contains some vertex $u \in V(H')$ that is an inner vertex of some leaf-block $L$ of $H'$.
		As $L$ is a \cyclebanana of $G$, $C'$ cannot contain any edge of $L$, so $C'$ should enter $L$ from $V(P_1) \cup V(P_2)$ through $u$ and leave it immediately.
		By definition of \cyclebananadec{s}, the only option to enter or leave $L$ is to go through the only vertex in $V(P_1)$ (if $H'$ is of type \ref{enum:cycle_tunnel_path_cut_left}) or in $V(P_2)$ (if $H'$ is of type \ref{enum:cycle_tunnel_path_cut_right}).
		As $C'$ cannot contain any vertex twice, this is not possible.
	\end{claimproof}

	We now use the above claims to construct either a family of long chords of $C'$ going through \cyclebanana{s}, or a $B$-leaf-block separator of some of the \ref{enum:cycle_tunnel_path_cut_left}-type or \ref{enum:cycle_tunnel_path_cut_right}-type components in the \cyclebananadec.
	
	To construct the first chord of $C'$, take a vertex $w_1 \in W$.
	Since $w_1$ is a vertex of a  separable component $H'$, there is a cut vertex $c_1$ of a leaf-block $L_1$ of $H'$ reachable from $w_1$ inside $H'$.
	The leaf-block $L_1$ contains also at least one vertex $v_1\neq c_1$ that is connected to $V(P_1)$ (if $H'$ is of type \ref{enum:cycle_tunnel_path_cut_left}) or to $V(P_2)$ (if $H'$ is of type \ref{enum:cycle_tunnel_path_cut_right}) outside $H'$.
	We know that $\delta(L_1-(B\cup\{c_1\}))\ge \delta(G-(B\cup \{c_1\}))-1\ge \delta(G-B)-2$, since the only outside neighbour of vertices in $L_1-(B\cup{c_1})$, apart from vertices in $B$, is a single vertex in $V(P_1)$ or $V(P_2)$.
	By Corollary~\ref{thm:relaxed_st_path}, there exists an $(c_1,v_1)$-path inside $L_1$ of length at least $\delta(G-B)-2$.
	Combine this with $(w_1,c_1)$-path inside $H'$ and obtain a $(w_1,v_1)$-path inside $H'$.
	
	Note that the constructed $(w_1,v_1)$-path can contain vertices from $W$ apart from $w_1$.
	Let $w'_1 \in W$ be the vertex on the $(w_1,v_1)$-path farthest from $w_1$.
	Note that the $(w'_1,v_1)$-subpath does not contain any vertex from $W$ except $w'_1$, and it still contains the $(c_1,v_1)$-path as a subpath by Claim~\ref{claim:hard_lemma_no_inner_vertices}.
	Hence, we obtain a $(w'_1,v_1)$-path of length at least $\delta(G-B)-2$ inside $H'$ that does not contain any vertex in $W\setminus \{w'_1\}$.
	To obtain a long chord of $C'$, it is left to reach the vertex in $V(P_1)\cup V(P_2)$ from $v_1$ outside $H'$, and then follow the cycle $C$ until a vertex $v'_1$ of $C'$ is reached. This is always possible since $V(C)\cap V(C')\supseteq (V(P_1)\cup V(P_2))\cap V(C')\neq \emptyset$.
	We obtain a chord of length at least $\delta(G-B)-1$ connecting $w'_1$ and $v'_1$.
	
	To construct the second chord, we follow the same process for a vertex $w_2 \in W\setminus \{w'_1\}$.
	When constructing the path going from $w_2$ to a cut vertex of a leaf-block, we prohibit this path from going through $w'_1$.
	If $w'_1$ separates $w_2$ from all leaf-block cut vertices, then we obtain a small $B$-leaf-block separator of $H'$.
	Otherwise, we obtain a $(w'_2,v'_2)$-chord of $C'$ of length at least $\delta(G-B)-1$ that does not contain any vertex in $W\setminus\{w'_1,w'_2\}$.
	It is important that during the construction of different chords we always follow $C$ in the same direction.
	
	Repeat this process $3k$ times and obtain either a family of $3k$ $(w'_i,v'_i)$-chords of $C'$ of length at least $\delta(G-B)-1$, or a $B$-leaf-block separator of size at most $3k$.
	If it is the latter case, then, by Lemma~\ref{lemma:dirac_b_leaf_block_separator}, there is a cycle of length at least $2\delta(G-B)+\frac{1}{2}(\delta(G-B)-3k-(k+5))> 2\delta(G-B)+k$ that enters a \cyclebanana.
	We now assume that a family of chords is obtained.
	The following claim is useful.
	
	\begin{claim}
		If for some $i,j\in [3k]$ we have $v'_i\neq v'_j$, then the chords between $w'_i$ and $v'_i$ and between $w'_j$ and $v'_j$ do not have any common vertex.
	\end{claim}
	\begin{claimproof}
		Note that $v'_i$ or $v'_j$ depend only on the vertex in $V(P_1)$ or in $V(P_2)$ which we start following $C$ from.
		If $v'_i\neq v'_j$, then they were found when starting from different vertices.
		Then the $i^\text{th}$ and the $j^\text{th}$ chords were constructed from different components of the Dirac decomposition, as for each separable component there is only one vertex in $V(P_1) \cup V(P_2)$ that is adjacent to inner vertices of leaf-blocks of this component.
		Therefore, the $(w'_i,v_i)$- and $(w'_j,v_j)$-subpaths of the chords do not have common vertices.
		The $(v_i,v'_i)$- and $(v_j,v'_j)$-subpaths also cannot have any common vertex as $v'_i$ and $v'_j$ were found as the first vertices from $V(C')$ on $C$ when following $C$ in the same direction.
	\end{claimproof}

	\begin{definition}
		Consider two chords of $C$ that do not have common endpoints.
		Denote the endpoints of one chord by $s$ and $t$ and of the other by $p$ and $q$.
		We say that these two chords of $C$ \emph{intersect graphically}, if the vertices $s$, $t$, $p$ and $q$ are located in the order $s, p, t, q$ on $C$ when following $C$ in one or the other direction.
		In other words, two chords intersect graphically if each chord cuts $C$ into two arcs each containing exactly one endpoint of the other chord.
	\end{definition}

	We can now show that if two chords in the constructed family intersect graphically, then we can find a long cycle that contains these two chords.
	Assume that there are $i, j \in [3k]$ such that $v'_i\neq v'_j$ and the vertices $w'_i,v'_i,w'_j,v'_j$ are located in the order $w'_i,w'_j,v'_i,v'_j$ on $C'$ when following $C'$ in one of the two directions.
	These four vertices split $C'$ into four arcs.
	A pair of opposite arcs together with the two chords constitute a cycle in $G$.
	Take the pair of arcs with the longest total length.
	This total length is at least $\frac{1}{2}|V(C')|\ge \delta(G-B)+\frac{k}{2}$.
	Combining these two arcs with the two chords, we obtain a cycle of length at least $3\delta(G-B)+\frac{k}{2}-2>2\delta(G-B)+k$.
	This cycle enters two \cyclebanana{s}, as each chord enters a \cyclebanana.
	Hence, in this case the desired cycle exists and the lemma is proved.
	
	\begin{figure}
		\center
		\begin{tikzpicture}
\tikzstyle{vertex}=[circle,draw,fill,inner sep=1pt]
\draw  (-4,2.5) ellipse (3.5 and 2.5);
\node at (-0.5,4) {$C'$};
\node [vertex] (v1) at (-6.05,4.5) {};
\node [vertex] (v5) at (-5.35,4.8) {};
\node [vertex] (v6) at (-4.55,4.95) {};
\node [vertex] (v7) at (-3.4,4.9) {};
\node [vertex] (v11) at (-2.25,4.65) {};
\node [vertex] (v2) at (-5.9,0.4) {};
\node [vertex] (v3) at (-5.1,0.15) {};
\node [vertex] (v4) at (-4.2,0) {};
\node [vertex] (v9) at (-2.8,0.15) {};
\node [vertex] (v10) at (-2.1,0.4) {};
\node [vertex] (v12) at (-1.55,0.7) {};
\node [vertex] (v8) at (-3.35,0.05) {};
\node [vertex] (v13) at (-6.8,3.95) {};
\node [vertex] (v14) at (-6.8,1) {};
\draw[blue]  plot[smooth, tension=.7] coordinates {(v1) (-6.15,4.15) (-6.05,4) (-5.55,3.7) (-5.8,3.45) (-6.05,3.2) (-5.85,2.8) (-5.7,2.5) (-5.4,2.3) (-5.6,2.1) (-6.05,1.85) (-5.9,1.7) (-5.75,1.55) (-5.7,1.3) (-5.9,1.1) (-6,0.9) (-5.9,0.65) (v2)};
\draw[blue]  plot[smooth, tension=.7] coordinates {(-6.05,4.5) (-5.6,4.1) (-5.25,3.7) (-5.1,3.25) (-5.2,2.95) (-5.15,2.7) (-5.1,2.2) (-4.95,1.85) (-5.1,1.5) (-5.3,1.25) (-4.9,0.9) (-5.3,0.6) (v3) };
\draw[blue]  plot[smooth, tension=.7] coordinates {(v4) (-4.25,0.25) (-4.4,0.4) (-4.55,0.75) (-4.4,0.9) (-4.35,1.35) (-4.55,1.65) (-4.55,1.95) (-4.35,2.25) (-4.35,2.6) (-4.55,2.85) (-4.65,3.15) (-4.45,3.4) (-4.5,3.65) (-4.7,3.85) (-5,4.15) (-4.85,4.35) (-5.05,4.55) (v5)};
\draw[blue]  plot[smooth, tension=.7] coordinates {(v4) (-3.95,0.35) (-3.9,0.75) (-4,1.2) (-3.95,1.7) (-3.85,2) (-3.95,2.4) (-3.85,2.9) (-4.2,3.2) (-4.2,3.55) (-4.2,3.95) (-4.4,4.3) (-4.4,4.7) (v6)};
\draw[blue]  plot[smooth, tension=.7] coordinates {(v7) (-3.7,4.45) (-3.6,4.3) (-3.45,4) (-3.75,3.8) (-3.6,3.55) (-3.5,3.35) (-3.4,3) (-3.5,2.7) (-3.45,2.45) (-3.3,1.95) (-3.25,1.6) (-3.45,1.3) (-3.5,0.95) (-3.5,0.6) (-3.55,0.25) (v8)};
\draw[blue]  plot[smooth, tension=.7] coordinates {(v7) (-3.2,4.35) (-3.05,4.05) (-3.25,3.85) (-3.25,3.55) (-3.2,3.2) (-3.05,2.9) (-3,2.65) (-3,2.3) (-3,1.85) (-2.85,1.45) (-3,1.15) (-3.05,0.8) (-2.8,0.45) (v9)};
\draw[blue]  plot[smooth, tension=.7] coordinates {(v7) (-3.2,4.65) (-3.05,4.55) (-2.85,4.4) (-2.8,4.1) (-2.95,3.85) (-2.95,3.45) (-2.9,3.05) (-2.7,2.75) (-2.6,2.2) (-2.6,1.8) (-2.65,1.65) (-2.65,1.35) (-2.35,1.1) (-2.45,0.8) (-2.4,0.65) (-2.3,0.5) (v10)};
\draw[blue]  plot[smooth, tension=.7] coordinates {(v11) (-2.3,4.25) (-2.15,3.9) (-2.15,3.55) (-2.15,3.2) (-2.25,2.75) (-2.05,2.45) (-2,2) (-2,1.5) (-1.85,1.1) (-1.8,0.8) (v12)};
\draw[blue]  plot[smooth, tension=.7] coordinates {(v13) (-6.6,3.7) (-6.4,3.6) (-6.4,3.3) (-6.3,3) (-6.3,2.55) (-6.55,2.45) (-6.45,2.15) (-6.3,2) (-6.25,1.6) (-6.5,1.45) (-6.35,1.25) (-6.6,1.1) (v14)};
\node at (-6.95,4.15) {$a_1$};
\node at (-6.4,4.7) {$a_2,a_3$};

\node [vertex] (v1) at (-6.05,4.5) {};
\node [vertex] (v5) at (-5.35,4.8) {};
\node [vertex] (v6) at (-4.55,4.95) {};
\node [vertex] (v7) at (-3.4,4.9) {};
\node [vertex] (v11) at (-2.25,4.65) {};
\node [vertex] (v2) at (-5.9,0.4) {};
\node [vertex] (v3) at (-5.1,0.15) {};
\node [vertex] (v4) at (-4.2,0) {};
\node [vertex] (v9) at (-2.8,0.15) {};
\node [vertex] (v10) at (-2.1,0.4) {};
\node [vertex] (v12) at (-1.55,0.7) {};
\node [vertex] (v8) at (-3.35,0.05) {};
\node [vertex] (v13) at (-6.8,3.95) {};
\node [vertex] (v14) at (-6.8,1) {};
\node at (-5.34,4.995) {$a_4$};
\node at (-4.635,5.135) {$a_5$};
\node at (-3.35,5.15) {$a_6,a_7,a_8$};
\node at (-2.1,4.85) {$a_9$};
\node at (-6.85,0.75) {$b_1$};
\node at (-5.96,0.235) {$b_2$};
\node at (-5.175,-0.025) {$b_3$};
\node at (-4.17,-0.21) {$b_4,b_5$};
\node at (-3.3,-0.2) {$b_6$};
\node at (-2.7,-0.1) {$b_7$};
\node at (-2,0.15) {$b_8$};
\node at (-1.4,0.45) {$b_9$};
\end{tikzpicture}
		\caption{The family of long chords of the cycle $C'$ that do not intersect graphically pairwise.}\label{fig:long_chords}
	\end{figure}
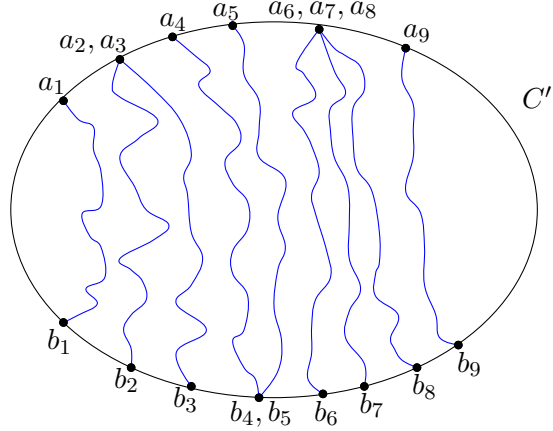

	We now assume that no two chords of the family intersect graphically.
	In this case we can arrange them in an order from left to right.
	That is, we can choose a permutation $\pi \in S_{3k}$ and for each $i\in[3k]$ either a pair $a_i=w'_{\pi_i}$ and $b_i=v'_{\pi_i}$ or a pair $a_i=v'_{\pi_i}$ and $b_i=w'_{\pi_i}$ in such a way that following the cycle $C'$ starting from $a_1$ one will read the $6k$ vertices in the order $a_1, a_2, \ldots, a_{3k}, b_{3k}, b_{3k-1}, \ldots, b_1$ (see Figure~\ref{fig:long_chords}).
	It is important to note that $a_i={a_{i+1}}$ or $b_i=b_{i+1}$ might hold true for any $i\in [3k-1]$,
    but at least one of $a_i\neq a_{i+1}$ and $b_i\neq b_{i+1}$ always holds.
	
	Take an arbitrary $i\in[3k]$.
	The chord between $a_i$ and $b_i$ splits $C'$ into two arcs.
	We call them \emph{left arc of $(a_i,b_i)$}, that is, the arc that contains vertices $a_1,b_1,a_2,b_2,\ldots,a_i,b_i$ and the \emph{right arc of $(a_i,b_i)$}, that is, the arc that contains vertices $a_i,b_i,a_{i+1},b_{i+1},\ldots,a_{3k},b_{3k}$.
	If at least one of these arcs has length at least $\delta(G-B)+k+1$, we call the chord between $a_i$ and $b_i$ a \emph{good} chord.
	Then the chord together with the longer arc constitute a cycle of length at least $2\delta(G-B)+k$.
	This cycle enters a \cyclebanana, so the proof is complete if there is a good chord in the constructed family of chords.
	We now show that there exists at least one good chord in the family.

	If the chord between $a_1$ and $b_1$ is good, then we are done.
	Otherwise, both arcs of $(a_1,b_1)$ are of length at most $\delta(G-B)+k$.
	Since the length of $C'$ is at least $2\delta(G-B)+k$, it follows that the length of both these arcs is at least $\delta(G-B)$.
	Consider the chord between $a_2$ and $b_2$.
	Note that the left arc of $(a_2,b_2)$ is longer than the left arc of $(a_1,b_1)$ because $(a_1,b_1)\neq (a_2,b_2)$.
	Hence, the length of the left arc of $(a_2,b_2)$ is at least $\delta(G-B)+1$.
	Analogously, we can show that for each $i\in[3k]$ the length of the left arc of $(a_i,b_i)$ is at least $\delta(G-B)+i-1$.
	Hence, for any $j \in [2k]$ the chord between $a_{k+j}$ and $b_{k+j}$ is a good chord.
	The proof of the lemma is complete.
\end{proof}

Finally, we state and prove the main theorem of this section.

\begin{theorem}\label{thm:cyclebanana}
	Let $(G,B,k)$ be a given instance of \probDC.
	There is an algorithm that, given a cycle $C$ in $G$ and two paths $P_1, P_2$ that induce an \cyclebananadec for $C$ and $B$ in $G$, in time $2^{\Oh{(k+|B|)}}\cdot\polyn$ either
	\begin{itemize}
		\item Solves $(G,B,k)$, or
		\item Finds a cycle longer than $C$ in $G$.
	\end{itemize}
\end{theorem}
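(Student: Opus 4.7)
My plan is to port the strategy of the algorithm for \probstP (Theorem~\ref{thmEG}) to the cycle setting, using Lemma~\ref{lemma:dirac_cycle_edge_of_banana} as the cycle analogue of Lemma~\ref{lemma:st_path_edge_of_banana}. Let $\ell^\ast := \min\{2\delta(G-B),\,n-|B|\}+k$ denote the target length. If $|V(C)|\ge \ell^\ast$ the algorithm immediately reports YES; otherwise I first dispose of the extremal regimes. When $\delta(G-B)=\Oh(k+|B|)$, the target is $\Oh(k+|B|)$ and Theorem~\ref{prop:longest_cycle} solves the instance directly. When $\delta(G-B)\ge n/2-\Oh(k+|B|)$, Theorem~\ref{theorem:hamiltonian} computes the longest cycle in $G$ after absorbing at most $\Oh(k)$ vertices into $B$. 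In the remaining regime the hypothesis $6k+4|B|+6<\delta(G-B)$ of Lemma~\ref{lemma:dirac_cycle_edge_of_banana} holds, and $|V(C)|<2\delta(G-B)+k$ forces $|V(P_1)|+|V(P_2)|=\Oh(k)$.

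The main step is to search for a cycle longer than $C$ that enters some \cyclebanana $M$. I will carry out the case analysis of Lemma~\ref{lemma:st_path_banana_to_2_connected} verbatim in the cycle setting; only the matching and connectivity properties \ref{enum:cycle_tunnel_path_bic}--\ref{enum:cycle_tunnel_path_cut_right} are used, and these coincide with their \cyclebananadec counterparts. This yields, in polynomial time, a $2$-connected subgraph $K=K(M)\subseteq G$ and distinguished vertices $s',t'\in V(K)$ such that every cycle in $G$ that enters $M$ contains an $(s',t')$-subpath lying entirely in $K$, with $\delta(K-(B\cup\{s',t'\}))\ge \delta(G-(B\cup\{s',t'\}))$. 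Applying Theorem~\ref{thmEG} to $K$ with endpoints $s',t'$ computes, in time $2^{\Oh(k+|B|)}\polyn$, the maximum $x=\Oh(k)$ for which $K$ admits an $(s',t')$-path of length $L(M):=\delta(K-(B\cup\{s',t'\}))+x$.

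To close such an internal path into a cycle of $G$ I form the auxiliary graph $G_M$ by deleting the interior of $K$ and inserting a fresh edge $s't'$, and then invoke Theorem~\ref{thmTLDP} on $G_M$ with terminals $s',t'$ to find the longest cycle through both of them. Since I only care about cycles of length between $|V(C)|+1$ and $\ell^\ast$, the outside portion has at most $\ell^\ast-L(M)+\Oh(1)=\Oh(k+|B|)$ edges, so the \probTLDP call has parameter $\Oh(k+|B|)$ and fits in the budget. Concatenating the inside $(s',t')$-path with the outside detour gives the longest cycle in $G$ entering $M$; iterating over all (polynomially many) \cyclebanana{s} $M$ costs $2^{\Oh(k+|B|)}\polyn$ in total, and if the best cycle produced is strictly longer than $C$, I return it.

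Otherwise, by Lemma~\ref{lemma:dirac_cycle_edge_of_banana} no cycle of length $\ge \ell^\ast$ can use any vertex of $V(P_1)\cup V(P_2)$, so any remaining candidate is confined to a single $2$-connected block of $G-V(P_1\cup P_2)$. For each such block $M'$ --- either a \cyclebanana or a non-leaf block of a separable component --- I will recursively invoke the full algorithm of Theorem~\ref{theorem:main} on the strictly smaller instance $(M',\,B\cap V(M'),\,k)$; well-foundedness follows from $V(P_1)\cup V(P_2)\ne\emptyset$, and the same amortization argument used at the end of the proof of Theorem~\ref{thmEG} (summing $(n_i-2)^{c_2+1}$ over disjoint subproblems) keeps the total running time at $2^{\Oh(k+|B|)}\polyn$. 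The hardest part will be this internal case: I need to argue that every candidate long cycle avoiding $V(P_1)\cup V(P_2)$ is genuinely confined to a single $2$-connected block of size strictly less than $n$ --- with the arc-components ruled out because $|V(P')|,|V(P'')|\le \delta(G-B)+\Oh(k)<\ell^\ast$ in the non-extremal regime --- and that the recursion indeed inherits a valid instance with the parameter $k+|B|$ unchanged. Once this is settled, combining the external and internal searches either produces a cycle longer than $C$ or, by Lemma~\ref{lemma:dirac_cycle_edge_of_banana}, certifies that $G$ has no cycle of length $\ell^\ast$, solving $(G,B,k)$.
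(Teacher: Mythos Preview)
Your central running-time claim is wrong. You assert that the outside portion has length at most $\ell^\ast - L(M) + O(1) = O(k+|B|)$, but $L(M)$ is only guaranteed to be at least $\delta(K-(B\cup\{s',t'\}))\ge\delta(G-B)-2$; it is \emph{not} close to $2\delta(G-B)$. For a Dirac component of size roughly $\delta(G-B)$ --- for instance, either of the two arc components containing $P'$ or $P''$ --- one has $L(M)\le|V(K)|-1=\delta(G-B)+O(k)$, so $\ell^\ast-L(M)=\delta(G-B)+O(k)$, and your \probTLDP call runs in $2^{\Theta(\delta(G-B))}$ time. The paper avoids this by using Theorem~\ref{thmEG} (\probstP) for \emph{both} the inner and the outer piece: after deleting $V(M)\setminus\{u,v\}$ and adding the edge $uv$, the remaining graph still has $\delta(\cdot -B')\ge\delta(G-B)-O(k+|B|)$, so an outer path of length $\approx\delta(G-B)+O(k)$ is only $O(k+|B|)$ above \emph{its own} minimum degree, and \probstP finds it in the budget. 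The $2\delta$ target is split as $\delta+\delta$, each handled above degree; no call ever has raw parameter $\Theta(\delta(G-B))$.

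There is a second, independent gap. Lemma~\ref{lemma:st_path_banana_to_2_connected} does \emph{not} port verbatim from $(s,t)$-paths to cycles: its proof relies on $P'$ being an $(s,t)$-path with $s\in V(P_1)$ and $t\in V(P_2)$, which forces the $K$-portion to run from the $P_1$-side to the $P_2$-side. A long cycle $C'$ with $|X|:=|V(C')\cap(V(P_1)\cup V(P_2))|\le 1$ need not touch both $s'$ and $t'$; e.g., it can live entirely inside $V(M)\cup\{w\}$ for a single vertex $w\in V(P_1)$. Your main search therefore misses such cycles, and your fallback recursion on blocks of $G-V(P_1\cup P_2)$ covers only $|X|=0$, not $|X|=1$; so the appeal to Lemma~\ref{lemma:dirac_cycle_edge_of_banana} is unjustified. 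The paper handles this by an explicit case analysis on $|X|\in\{0,1,2,\ge3\}$: for $|X|\le1$ it constructs a longer cycle directly from the large component containing $C'$ (using \probstP on a suitable $2$-connected supergraph), and for $|X|\ge2$ it iterates over all pairs $(u,v)\in V(M)^2$ rather than committing to a fixed $(s',t')$.
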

\begin{proof}
	The algorithm considers several cases.
	If the given cycle $C$ is of length at least $2\delta(G-B)+k$, then the algorithm correctly determines that $(G,B,k)$ is a yes-instance.
	Hence, we assume that $|V(C)|< 2\delta(G)+k$.
	From now on, we also assume that $6k+4|B|+6 < \delta(G-B)$, otherwise the algorithm solves $(G,B,k)$ using the algorithm from Proposition~\ref{prop:longest_cycle}.
	
	Suppose now that $G$ contains a cycle $C'$ of length at least $2\delta(G-B)+k$.
	We show how the algorithm finds \emph{some} cycle of length at least $2\delta(G-B)+k$ in $G$ or enlarges $C$ provided that $C'$ exists.
	We are now interested in the set $X=V(C')\cap (V(P_1)\cup V(P_2))$.
	Depending on its cardinality, there are several cases.
	In most of the cases, it is possible for the algorithm to replace one arc of $C$ with a longer arc that is found using the algorithm for \probstP given by Theorem~\ref{thmEG}.
	The algorithm is usually applied to a component in $G-V(P_1\cup P_2)$ with the goal of finding a path of length at least $\delta(G-B)+k/2$.
	Since $\delta(G-(V(P_1\cup P_2)\cup B))> \delta(G-B)-k-4$, the running time of the algorithm is still bounded by $2^{\Oh(k+|B|)}\cdot\polyn$.
	
	\medskip\noindent\textbf{Case 1:} $|X|=0$.
	Then $C'$ is completely contained in some connected component $H$ of $G-V(P_1\cup P_2)$.
	This component cannot be contained in $B$, since $|B| < \delta(G-B)$.
	So after the $B$-refinements, this component is a component $H'$ of type \ref{enum:cycle_tunnel_path_bic}, \ref{enum:cycle_tunnel_path_cut_left} or \ref{enum:cycle_tunnel_path_cut_right}.
	Note that only the leaf-blocks that have all inner vertices in $B$ are contracted in $H$ to obtain $H'$.
	The cycle $C'$ does not pass through cut vertices.
	Moreover, its length is greater than $B$.
	Hence, no edge of $C'$ is contracted during the $B$-refinements of $H$, so $C'$ is fully contained in $H'$.
	
	By the last property of \cyclebananadec{s}, there are exactly two connected components in $G-(V(P_1)\cup V(P_2))$ containing vertices of $C$.
	Both of them contain at most $\delta(G-B)+k+|B|$ vertices, as the length of $C$ is less than $2\delta(G-B)+k$.
	Hence, $H'$ does not share any vertices with the initial cycle $C$, as $|V(H')|\ge |V(C')|\ge 2\delta(G-B)+k$.
	
	If $H'$ is of type \ref{enum:cycle_tunnel_path_bic}, then it contains a path of length at least $|V(C')|/2$ between any pair of vertices by \Cref{lemma:biconnected_cycle_to_any_path}.
	Take any pair $(s,t)$ of neighbours of $H'$ in $V(P_1)$ and $V(P_2)$ respectively.
	There is an $(s,t)$-path in $G$ of length at least $|V(C')|/2+2>\delta(G-B)+k/2$ that contains only vertices in $V(H')\cup B$ as internal vertices.
	One of the arcs of $C$ between $s$ and $t$ have length less than $\delta(G-B)+k/2$, so it can be replaced with the obtained $(s,t)$-path, making $C$ longer.
	
	If $H'$ is of type \ref{enum:cycle_tunnel_path_cut_left}, then it is not $2$-connected.
	Denote the only neighbour of $H'$ in $V(P_1)$ by $s$.
	Note that the graph $G'[V(H')\cup \{s\}]$ is $2$-connected and still contains the cycle $C'$.
	Hence, it contains a path of length at least $|V(C')|/2$ between any pair of vertices.
	Now take any neighbour of $H'$ in $V(P_2)$, say $t$.
	It is easy to obtain an $(s,t)$-path in $G$ of length at least $|V(C')|/2+1$ going only through vertices in $V(H')\cup B$.
	Again, this path is a replacement for one of the arcs between $s$ and $t$ in $G$.
	The case of type \ref{enum:cycle_tunnel_path_cut_right} is symmetrical.
	
	\textbf{Conclusion of Case 1.}
	To handle this case, the algorithm unconditionally iterates over all components in $G-V(P_1\cup P_2)$ and tries to find a suitable path  of length at least $\delta(G-B)+k/2$ in a $2$-connected subgraph of $G$ using the algorithm of Theorem~\ref{thmEG} for \probstP.
	Note that a subgraph picked by the algorithm is always a graph $H'$ with $\delta(H'-B')\ge \delta(G-B'-(V(P_1)\cup V(P_2))$, where $|B'|\le |B|+1$.
	Hence, $\delta(H'-B')\ge \delta(G-B)-(k+5)$, so the algorithm for \probstP always runs in $2^{\Oh(k+|B|)}\cdot\polyn$ running time.
	If a suitable path is found, $C$ is made longer by the algorithm, and the algorithm outputs the longer cycle and terminates.
	Otherwise, there are no long cycles $C'$ with $|X|=0$ in $G$.
	
	\medskip\noindent\textbf{Case 2.}
	$|X|=1$.
	Denote the only vertex in $X$ by $v$.
	Note that $C'$ passes through only one connected component in $G-V(P_1\cup P_2)$, since $C'-v$ is a path having no common vertices with $P_1$ or $P_2$.
	Denote the component containing $C'-v$ in $G-V(P_1\cup P_2)$ by $H$.
	We know that $H$ consists of at least $2\delta(G-B)+k-1$ vertices, so it is not fully contained in $B$ and does not share any vertex with the initial cycle $C$, just as in the previous case.
	
	Without loss of generality, assume that $v \in V(P_1)$.
	Denote by $H'$ the connected component $H$ after the $B$-refinements.
	Independently of the type of $H'$ in the \cyclebananadec, there is a vertex in $H$ with a neighbour in $V(P_2)$.
	Hence, there is a path starting in a certain vertex $u \in V(P_2)$ and going to a certain vertex $z \in V(C'-v)$ through $H$ in $G$.
	Take the longer arc between $z$ and $v$ on $C'$ and combine it with the path between $u$ and $z$.
	The obtained path is of length at least $|V(C')|/2+1$ and is a replacement for the shorter arc between $u$ and $v$ on $C$.
	
	The only obstacle here is that $H$ is not necessarily $2$-connected.
	However, the graph $G[V(H)\cup \{v\}]$ still contains the whole cycle $C'$.
	If $H'$ is of type \ref{enum:cycle_tunnel_path_bic}, then $G[V(H)\cup \{v\}]$ is necessarily $2$-connected after $B$-refinements.
	If $H'$ is of type $\ref{enum:cycle_tunnel_path_cut_left}$, then $G[V(H)\cup \{v\}]$ also becomes $2$-connected after $B$-refinements, as $v$ is the only neighbour in $V(P_1)$ connecting all leaf-blocks of $H'$ together.
	In either of the two cases, if $z$ is fixed, the path between $v$ and $z$ can be found using the algorithm for \probstP.
	
	Finally, if $H'$ is of type \ref{enum:cycle_tunnel_path_cut_right}, then $u$ is the only neighbour of $H'$ in $V(P_2)$ after the $B$-refinements.
	Then the graph $G[V(H)\cup\{u,v\}]$ is necessarily $2$-connected after $B$-refinements and the desired $(u,v)$-path can be found inside it.
	
	\textbf{Conclusion of Case 2.}
	The algorithm iterates over all suitable pairs of $v$ and $H$, and iterates over all possible options of $z$ or $u$ when necessary.
	When this triple is fixed, it is left to apply the algorithm of Theorem~\ref{thmEG} to the corresponding $B$-refinement as described above.
	Again, in time $2^{\Oh(k+|B|)}\cdot\polyn$ our algorithm either makes the initial cycle $C$ longer and stops or correctly determines that no long cycle $C'$ with $|X|=1$ exists.
	
	\medskip\noindent\textbf{Case 3:}
	$|X|=2$.
	Let $X=\{s,t\}$.
	Starting from this case, we need to consider \cyclebanana{s} that $C'$ enters.
	By Lemma~\ref{lemma:dirac_cycle_edge_of_banana}, we can assume that $C'$ enters some \cyclebanana $M$.
	The cycle $C'$ has two arcs between $s$ and $t$.
	At least one of them enters $M$ and, by Lemma~\ref{lemma:dirac_cycle_banana_consecutive}, we know that all vertices of $M$ appear consecutively on this arc.
	
	Suppose that both arcs between $s$ and $t$ enter $M$.
	If both $s,t \in V(P_1)$ or $s,t\in V(P_2)$, then we obtain a matching of size two between $V(P_i)$ and a \cyclebanana, which is not possible by the definition of \cyclebananadec.
	Hence, we can assume that $s \in V(P_1)$ and $t \in V(P_2)$.
	Since both arcs enter $M$, there is a connected component $H$ in $G-V(P_1\cup P_2)$ that contains $M$ and both arcs of $C'$.
	Thus $C'$ is contained in $G[V(H)\cup\{s,t\}]$.
	After the $B$-refinements, $G[V(H)\cup\{s,t\}]$ also contains the whole cycle $C'$ similarly to the arguments above.
	Note that after the $B$-refinements this graph is $2$-connected, since if $H'$ is of type \ref{enum:cycle_tunnel_path_cut_left} or of type \ref{enum:cycle_tunnel_path_cut_right}, the vertex $s$ or the vertex $t$ correspondingly is the vertex connecting all its leaf-blocks together.
	Hence, the algorithm can look for an $(s,t)$-path of length at least $\delta(G-B)+k/2$ inside the graph $G[V(H)\cup\{s,t\}]$ with applied $B$-refinements.
	The component $H$ contains at least $2\delta(G-B)+k-2$ vertices, so it does not share vertices with $C$.
	Thus, this $(s,t)$-path is a suitable replacement for a shorter arc between $s$ and $t$ on $C$.
	
	Suppose now that only one arc of $C'$ between $s$ and $t$ enters $M$.
	Then, by Lemma~\ref{lemma:dirac_cycle_banana_consecutive}, we can be sure that \emph{all} vertices of $M$ appear consecutively on $C'$.
	That is, there are two vertices $u,v \in V(M)\cap V(C')$ such that one of the arcs of $C'$ between $u$ and $v$ is a $(u,v)$-path inside $M$, and the other arc is a $(u,v)$-path in $G$ that does not contain any vertex of $M$ as internal vertex.
	In this case, the algorithm can find these two arcs in the following way.
	
	When $s,t$ and $M$ are fixed, the algorithm iterates over all pairs of distinct vertices $u, v \in V(M)$.
	Firstly, the algorithm tries to find a path between $u$ and $v$ outside $M$.
	Since there is a path of length at least $\delta(G-B)-2$ between any pair of vertices in $M$, the outer path length $\delta(G-B)+k+2$ is sufficient to construct a cycle of length $2\delta(G-B)+k$ in $G$.
	Hence, to find the $(u,v)$-path outside $M$, the algorithm removes all vertices in $V(M)\setminus\{u,v\}$ from $G$, and adds a single edge between $u$ and $v$ in $G$.
	Note that if the outer path between $u$ and $v$ exists, then $G$ remains $2$-connected after these operations, since $u$ and $v$ still belong to the same cycle.
	If $G$ is not $2$-connected, then the choice of $u$ and $v$ was wrong.
	Otherwise, we apply the algorithm of Theorem~\ref{thmEG} to the changed graph $G$ to find a long path between $u$ and $v$.
	If a $(u,v)$-path of length at least $\delta(G-B)+k+2$ exists, then $(G,B,k)$ is a yes-instance.
	Otherwise, the algorithm finds the longest path between $u$ and $v$.
	This is done in $2^{\Oh(k+|B|)}\cdot\polyn$ time.
	Note that a $(u,v)$-path of length at least $\delta(G-(V(P_1)\cup V(P_2)\cup B\cup \{u,v\}))\ge \delta(G-B)-k-6$ always exists in the modified graph $G$ by \Cref{thm:relaxed_st_path}.
	
	If the outer $(u,v)$-path is found, it is left for us to find a long path between $u$ and $v$ inside $M$.
	If this path is of length at least $\delta(G-B)+2k+6$, then $(G,B,k)$ is a yes-instance of \probDC, since the outer $(u,v)$-path is of length at least $\delta(G-B)-k-6$.
	Thus, using the algorithm for \probstP, we either find a sufficiently long path between $u$ and $v$ inside $M$, such that the total length of this path and the outer path is at least $2\delta(G-B)+k$, or conclude that none exists and move on to the next choice of $u$ and $v$.
	
	\textbf{Conclusion of Case 3.}
	To handle this case, the algorithm iterates over all possible pairs of $s$ and $t$.
	To handle the case when $C'$ enters just one connected component of $G-V(P_1\cup P_2)$, the algorithm behaves similarly to previous cases.
	Additionally, to handle the case when $C'$ contains a consecutive path inside a \cyclebanana,  the algorithm iterates over all possible \cyclebanana{s} $M$, and pairs $u,v \in V(M)$ and tries to construct a long cycle using two calls to the algorithm for \probstP.
	
	\textbf{Case 4.}
	$|X|\ge 3$.
	Then $X$ contains three distinct vertices $v_1, v_2, v_3$.
	These vertices split $C'$ into three arcs $A_1, A_2, A_3$, where $A_1$ is the arc between $v_1$ and $v_2$ that does not contain $v_3$, $A_2$ is the arc between $v_2$ and $v_3$ that does not contain $v_1$, and $A_3$ is the arc between $v_3$ and $v_1$ that does not contain $v_2$.
	By Lemma~\ref{lemma:dirac_cycle_edge_of_banana}, we can assume that $C'$ enters a \cyclebanana $M$.
	Without loss of generality, assume that $A_1$ enters $M$.
	By Lemma~\ref{lemma:dirac_cycle_banana_consecutive}, all vertices of $M$ appear consecutively on this arc.

	\begin{claim}
		$A_2$ and $A_3$ do not contain any vertex of $M$.
	\end{claim}
	\begin{claimproof}
		Take the arc between $v_1$ and $v_3$ that contains $v_2$, i.e.\ the union of $A_1$ and $A_2$.
		We now that this arc enters $M$, so by Lemma~\ref{lemma:dirac_cycle_banana_consecutive} all vertices of $M$ appear consecutively on it.
		But $A_1$ contains at least two vertices of $M$.
		Hence, $A_2$ cannot contain any vertex of $M$, as $v_2 \notin V(M)$ separates $A_1$ and $A_2$ on the arc between $v_1$ and $v_3$.
		
		To show that $A_3$ does not contain any vertex of $M$, take the arc between $v_3$ and $v_2$ that contains $v_1$, i.e.\ the union of $A_3$ and $A_1$.
		Again, by Lemma~\ref{lemma:dirac_cycle_banana_consecutive} this arc contains vertices of $M$ consecutively, but $v_1$ divides $A_3$ and $A_1$ on the arc.
		Since $A_1$ contains at least two vertices of $M$, $A_3$ cannot contain any of them.
	\end{claimproof}

	The claim shows that vertices of $M$ induce an arc of $C'$, similarly to the second part of Case 3.
	Hence, this case can be handled by the algorithm in exactly the same way as in Case 3.
	
	\textbf{Conclusion of Case 4.}
	To cover this case, our algorithm first fixes $v_1, v_2 \in V(P_1\cup P_2)$. 
	Then it iterates over all \cyclebanana{s} of the \cyclebananadec and tries to combine a long cycle from two paths, one inside the \cyclebanana, and one outside.
	This is done in exactly the same way as in the second part of Case 3.
	
	The list of cases is exhaustive, so if $C'$ exists, our algorithm enlarges the initial cycle $C$ or finds a cycle of length at least $2\delta(G-B)+k$ in $G$, determining that $(G,B,k)$ is a yes-instance.
	If $C'$ does not exist, the algorithm does not find any long arc or long cycle in $G$, and safely decides that $(G,B,k)$ is a no-instance.
	This concludes the proof.
\end{proof}

\section{Long Dirac Cycle: Putting all together}\label{sec:longDC}

In this section we finalize the proof of \Cref{theorem:main} by combining the main results of previous sections.
This relies crucially on the following lemma.
The most important part of this lemma is the construction of a \cyclebananadec.

\begin{lemma}\label{lemma:main_cycle_lemma}
	Let $G$ be an $n$-vertex $2$-connected graph, $B\subseteq V(G)$, and $k$ be an integer such that   $0< k \le \frac{1}{24}\delta(G-B)$, and 
	\[2k+2|B|+12\leq \delta(G- B)< \frac{n}{2}.
	\]
	Then there is an algorithm that, given a cycle $C$ of length less than $2\delta(G- B)+k$ with $V(G-B)\not\subseteq V(C)$ in polynomial time finds either
	\begin{itemize}
		\item Longer cycle in $G$, or
		\item Vertex cover of $G-B$ of size at most $\delta(G-B)+2k$, or
		\item Two paths $P_1, P_2$ that induce a \cyclebananadec for $C$ and $B$ in $G$.
	\end{itemize}
\end{lemma}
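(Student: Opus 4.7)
The proof plan mirrors Lemma~\ref{lemma:st_path_or_tunnel} for paths, adapted to the cyclic setting. First I would $B$-refine every connected component $H$ of $G - V(C)$ with $V(H) \not\subseteq B$, obtaining a graph $G'$ in which $C$ is still present, $\delta(G'-B) \geq \delta(G-B)$, and no edge of $C$ has been contracted. For each resulting connected component $H'$ of $G' - V(C)$ not contained in $B$, I would run a case analysis parallel to Cases~1--3 of Lemma~\ref{lemma:st_path_or_tunnel}: (i) if $V(H') \setminus B$ has at most two vertices, each such vertex has many consecutive pairs of neighbours on $C$ (as $|V(C)| < 2\delta(G-B)+k$) and can be inserted into $C$; (ii) if $H'$ is $2$-connected, bound the maximum matching between $V(H')$ and $V(C)$ and use Corollary~\ref{thm:relaxed_st_path} to exhibit a long chord through $H'$ which reroutes an arc of $C$ whenever any two matching endpoints lie too close on $C$; (iii) if $H'$ is separable, analyse its leaf-blocks identically, so that either $C$ is enlarged or the inner leaf-block vertices share a single neighbour on $C$. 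In every case we either directly enlarge $C$ (and terminate), or we certify that $H'$ has one of the types \ref{enum:cycle_tunnel_path_bic}--\ref{enum:cycle_tunnel_path_cut_right} and, moreover, that the neighbours of $V(H')$ in $V(C)$ lie in at most two short ``clusters'' of $C$, each of length at most $|V(C)| - \delta(G-B) + 2 \leq k + 2$.

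If no rerouting succeeded for any $H'$, I would try to consolidate all cluster-arcs across all components into two disjoint short arcs $P_1, P_2 \subseteq C$ such that $C = P_1 P' P_2 P''$ with $|V(P')|, |V(P'')| \geq \delta(G-B) - 1$ and $|V(P_1)| + |V(P_2)| \leq k + 3$. When this succeeds, the structural types carried over from (ii)--(iii) imply the matching and leaf-block conditions \ref{enum:cycle_tunnel_path_bic}--\ref{enum:cycle_tunnel_path_cut_right} for $P_1, P_2$ in $G'$, and the last bullet of the \cyclebananadec definition follows automatically, since by construction the inner vertices of $P'$ and $P''$ have no neighbours outside $V(C) \cup B$ in $G'$. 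Hence $P_1, P_2$ induce the desired \cyclebananadec.

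If instead these clusters cannot be consolidated into two short arcs, I would extract a small vertex cover of $G-B$ as follows. Let $X \subseteq V(C) \setminus B$ be the set of cycle vertices that have at least one neighbour in $V(G) \setminus V(C) \setminus B$ (after refinement). The failure to consolidate forces $X$ to spread across $C$ beyond any two arcs of total length $k+3$, yet the structural analysis caps $|X|$ linearly in $k$. Taking one side of the natural bipartition of the cycle $C$ and augmenting it with $X$ and the $O(k)$ leftover vertices from clusters yields a set covering every edge of $G - B$: every cycle edge is covered by the bipartition, every chord between two inner $P', P''$ vertices is impossible by the definition of $X$, and every edge leaving $V(C)$ touches $X$. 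A direct count using $|V(C)| < 2\delta(G-B)+k$ and the $O(k)$ bound on $|X|$ gives a vertex cover of $G-B$ of size at most $\delta(G-B) + 2k$; the hypotheses $\delta(G-B) \geq 12$ and $2k + 2|B| + 12 \leq \delta(G-B)$ ensure the slack needed in this count.

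The main obstacle will be making the consolidation/vertex-cover dichotomy precise in the second and third paragraphs: two different components may each impose two cluster-arcs on $C$, and there is no a priori reason the pairs should line up. I expect the key to be showing that whenever the arcs fail to line up, one of three things occurs --- (a) two components both reach into a common arc, allowing a long rerouting via Menger's theorem between them; (b) the failure produces enough ``spread'' in $X$ that the vertex-cover count closes with room to spare; or (c) we are in a degenerate extremal configuration (few vertices outside $V(C)$) that again yields the vertex cover mechanically. Making this trichotomy clean without losing constants in the final $\delta(G-B) + 2k$ bound is the subtle part, and will likely require carefully tuning the length thresholds in the cluster analysis step.
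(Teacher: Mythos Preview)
Your case~(i) contains a genuine error that propagates through the rest of the plan. A single isolated vertex $h \in V(G)\setminus V(C)\setminus B$ has at least $\delta(G-B)$ neighbours on $C$, but since $|V(C)|$ may be as large as $2\delta(G-B)+k-1$ there is no pigeonhole forcing two of them to be consecutive (e.g.\ almost all gaps of length~2, a few of length~3). So an isolated vertex \emph{cannot} in general be inserted, and this is precisely the source of the vertex-cover outcome. The paper's dichotomy is therefore not ``clusters consolidate vs.\ clusters fail to consolidate'' but rather ``some component of $G-V(C)-B$ is a single vertex vs.\ none is''. In the first situation one shows (via a chord argument, Claim~\ref{claim:dirac_cycle_isolated_and_chord}) that either $C$ grows or \emph{every} non-$B$ component is a single vertex; then for a fixed isolated $h$ one looks at its ``$101$-neighbours'' on $C$ (vertices of $C$ both of whose $C$-neighbours are adjacent to $h$ but which are not themselves adjacent to $h$). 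These $101$-neighbours, together with all of $V(G)\setminus V(C)$, form an independent set in $G-B$ (any violating edge would let one reroute $C$ through $h$, Claim~\ref{claim:vczeroone}), and a direct count gives at least $\delta(G-B)-k$ of them. Their complement inside $V(C)$ is the required vertex cover of size $\le \delta(G-B)+2k$.

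Your proposed vertex-cover construction (``one side of the natural bipartition of $C$ plus $X$'') does not work: the cycle need not be bipartite, and even on an even cycle one colour class already has size $\ge \delta(G-B)$, so augmenting it cannot meet the $\delta(G-B)+2k$ bound unless $X$ is tiny---but nothing in the failed-consolidation scenario forces $|X|=O(k)$. Conversely, once one knows there are \emph{no} isolated vertices in $G-V(C)-B$, the consolidation always succeeds (or $C$ grows): the key point you are missing is that the two long chords produced by two different components either ``intersect graphically'' on $C$---in which case the two chords plus two opposite arcs of $C$ give a longer cycle---or they do not, in which case all cluster-arcs nest and fall inside two arcs of total length $<k$. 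This graphical-intersection argument (see the proof of Claim after Claim~\ref{claim:dirac_cycle_component_dirac_layout}) is what resolves your ``main obstacle'' cleanly, with no third vertex-cover branch needed at that stage.
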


Before proceeding with the proof of the lemma, we show  how to use it for the proof of  \Cref{theorem:main}.

\subsection{Proof of \Cref{theorem:main}}


We combine the main results of Sections \ref{sec:vcalgo}, \ref{sec:HamCycles}, \ref{sec:bananas}, and \Cref{lemma:main_cycle_lemma}.
Let $(G, B, k)$ be an instance of \probDC. First we consider the cases that do not fit the conditions of  \Cref{lemma:main_cycle_lemma}. 
If $\delta(G-B)<12$ or if $24 k > \delta(G-B)$, we can find a cycle of length at least $2\delta(G-B)+k> 48k+24$ in time $2^{\Oh(k)}\cdot\polyn$ by calling  the algorithm for \textsc{Longest Cycle} from  \Cref{prop:longest_cycle}.

If $2k+2|B|+12> \delta(G- B)$, we have that $5k+4|B|+48> 2\delta(G- B)+k$.
By  \Cref{prop:longest_cycle}, a cycle of length at least $5k+4|B|+48$ could be found in time $2^{\Oh(k+|B|)}\cdot\polyn$.

The remaining reason why \Cref{lemma:main_cycle_lemma} could not be applied to a cycle $C$ is that $C$ satisfies $V(G-B)\subseteq V(C)$.
Then $|V(C)|\ge n-|B|$,
implying that $\delta(G-B)\ge n/2-(|B|+k)/2$.
In this case,
we apply the algorithmic results of Section~\ref{sec:HamCycles}.
We put $k':=\max\{|B|,\frac{|B|+k}{2}\}$,  and obtain that $\delta(G-B)\ge \frac{n}{2}-k'$ for $|B|\le k'$. We apply Theorem~\ref{theorem:hamiltonian} for $G, B$ and $k'$. Then the problem is solvable in time  $2^{\Oh(k')}\cdot\polyn=2^{\Oh(k+|B|)}\cdot\polyn$.

From now we assume that $k$ and $B$ satisfy the conditions of  \Cref{lemma:main_cycle_lemma}.
Then start with arbitrary cycle $C$ in $G$.
If its length is at least $2\delta(G-B)+k$, then report that $(G,B,k)$ is a yes-instance.
Otherwise, apply \Cref{lemma:main_cycle_lemma} to $G,B,k$ and $C$.
In polynomial time we either find a longer cycle, a vertex cover,  or a \cyclebananadec of $G$.
If a longer cycle is found and the length of this cycle is still less than $2\delta(G-B)+k$, we call
  \Cref{lemma:main_cycle_lemma} with the longer cycle. 
  If a vertex cover of $G-B$ of size at most $\delta(G-B)+2k$ is found, then the vertex cover of $G$ is at most $\delta(G-B)+2k +|B|.$ We 
    apply Theorem~\ref{thmVCad} to solve the problem  in time $2^{\Oh(k+|B|)}\cdot\polyn$.
Finally, if a \cyclebananadec for $C$ and $B$ is found in $G$, we use Theorem~\ref{thm:cyclebanana} to solve $(G,B,k)$ in running time single-exponential in $k+|B|$ or find a longer cycle in $G$ and repeat the application of  \Cref{lemma:main_cycle_lemma}.

The proof of Theorem~\ref{theorem:main} (up to the proof of \Cref{lemma:main_cycle_lemma})  is complete.
\qed

\subsection{Last piece: proof of \Cref{lemma:main_cycle_lemma}}
The remaining part of the section is devoted to the postponed proof of \Cref{lemma:main_cycle_lemma}.

\begin{proof}[Proof of \Cref{lemma:main_cycle_lemma}]
The proof is algorithmic. We try 
 to replace an arc of $C$, that is, a path in $C$,  with a path in $G- V(C)$.
	This process of \emph{enlarging} $C$  is similar to the process of enlarging a path in Lemma~\ref{lemma:st_path_or_tunnel}.
	We consider connected components $H$ in $G-V(C)$ that contain at least one vertex in $V(G)\setminus B$.
	Note that at least one such component exists since $V(G)\setminus V(C)\not\subseteq B$.
	
	To simplify our job, we first apply $B$-refinements to all connected components in $G-V(C)$.
	Without loss of generality, we assume that $G$ is a graph with all possible $B$-refinements applied, i.e.,  $\gbref{B}{H}=G$ for any connected component $H$ in $G-V(C)$ with $V(H)\not\subseteq B$.
	Note that this assumption preserves all resulting points of the lemma statement: if a longer cycle, or a vertex cover, or a \cyclebananadec is found for the graph with applied $B$-refinements, they can be easily restored in the original graph.

	Similarly to the proof of Lemma~\ref{lemma:st_path_or_tunnel}, we consider several cases depending on the structure of a connected component $H$ with $V(H)\not\subseteq B$.
	The difference is that isolated vertices in $G-V(C)$ now do not lead to an immediate enlargement of $C$.
	However, we show that they contribute  to a construction of a vertex cover of $G-B$. 
	
 In what follows we  prove the following.  If  there is a component $H$ with 
$G-V(C)$ with exactly two vertices, then cycle $C$ can be always enlarged. If there is a component $H$ with at least $3$ vertices, call it a large component,  then either $C$ can be enlarged, or 
$H$ has a very special structure. The special structure of large components  is used twice. First, we    show that if there is at least one single-vertex component   and at least one large component, then $C$ can be enlarged. Thus if we cannot enlarge $C$, it means that either $G-V(C)$ is an independent set or all components are large. In the first case, we prove that the vertex cover of   $G-B$ is at most $\delta(G-B)+2k$. In the second case, the structural properties of large components are used to construct a \cyclebananadec.

We start with two claims that will be used in several places of the proof.
The first claim shows that if there is a pair of distant consecutive neighbors of a vertex $h\not\in V(C)$ in  $C$, then $C$ can be enlarged.
	
	\begin{claim}\label{claim:dirac_cycle_isolated_close_neighbors}
		Let $h \in V(G)\setminus V(C)$ be a vertex with at least $\delta(G-B)-2$ neighbors in $V(C)$ and such that there is a pair of   neighbors $u,v$ of $h$ on $C$  such that one of the $(u,v)$-arcs is of length  at least $8k$  containing no other neighbors of $h$. Then $C$ can be enlarged in polynomial time.
	\end{claim}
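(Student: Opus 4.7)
My plan is to exploit the fact that too many neighbors of $h$ must be crowded onto the complementary (``short'') arc, and hence some pair of consecutive neighbors of $h$ on that arc must in fact be adjacent on $C$. Once we find such a pair, inserting $h$ between them enlarges $C$ by one.

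First, I would set up the counting. Let $A_{\mathrm{long}}$ be the $(u,v)$-arc of length at least $8k$ that contains no neighbor of $h$ other than $u$ and $v$, and let $A_{\mathrm{short}}$ be the other $(u,v)$-arc. Because $|V(C)| < 2\delta(G-B)+k$ and the length of $A_{\mathrm{long}}$ is at least $8k$, the length of $A_{\mathrm{short}}$ satisfies
\[
    \ell := |E(A_{\mathrm{short}})| \;<\; 2\delta(G-B) + k - 8k \;=\; 2\delta(G-B) - 7k.
\]
All neighbors of $h$ on $C$ except possibly those placed at $u$ and $v$ lie on $A_{\mathrm{short}}$; since $h$ has at least $\delta(G-B)-2$ neighbors on $C$, the short arc contains at least $\delta(G-B)-2$ neighbors of $h$ (counting $u$ and $v$), and hence at least $r := \delta(G-B)-3$ consecutive pairs of neighbors of $h$ along $A_{\mathrm{short}}$.

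Next, I split these $r$ consecutive pairs into those at distance $1$ on $C$ and those at distance $\ge 2$. If $r_1$ and $r_2$ denote the respective counts, then $r_1+r_2=r$ and
\[
    r_1 + 2r_2 \;\le\; \ell \;<\; 2\delta(G-B) - 7k,
\]
so $r_1 \;>\; 2r - (2\delta(G-B)-7k) \;=\; 2(\delta(G-B)-3) - 2\delta(G-B) + 7k \;=\; 7k - 6$. Since $k\ge 1$, this gives $r_1 \ge 1$. Therefore there exists a pair of consecutive neighbors $w,w'$ of $h$ on $A_{\mathrm{short}}$ that are adjacent on $C$, i.e.\ $ww'\in E(C)$. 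Such a pair can obviously be located in polynomial time by scanning $C$.

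Finally, I would construct the enlarged cycle by replacing the edge $ww'$ in $C$ with the length-$2$ path $w\,h\,w'$ through $h$. The result is a simple cycle of length $|V(C)|+1$ in $G$, completing the proof. The only nontrivial step is the counting bound $r_1 \ge 7k-5$, and that is immediate from the two inequalities above; there is no serious obstacle beyond getting the arithmetic right and verifying that the constant $8k$ in the hypothesis is enough to make $r_1$ positive.
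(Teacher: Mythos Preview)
Your argument is correct and matches the paper's proof essentially line for line: both bound the short arc by $2\delta(G-B)-7k-1$, note that the $\delta(G-B)-2$ neighbors of $h$ all lie on it, and use pigeonhole to find two that are adjacent on $C$, then insert $h$. Your explicit $r_1,r_2$ split is a slightly more verbose version of the paper's one-line inequality $2(\delta(G-B)-3)>2\delta(G-B)-7k-1$, but the content is identical.
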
	
	\begin{claimproof}
		Suppose that there are two neighbors of $h$, say $u, v \in V(C)$ such that one arc of $C$ between $u$ and $v$ is of length at least $8k$ and does not contain any neighbor of $h$.  Hence 
 the other arc between $u$ and $v$ contains all neighbors of $h$ on $C$. Moreover, since the length of $C$ is at most  $2\delta(G-B)+k-1$, the length of this arc is at most $2\delta(G-B)-7k-1$.

		There are at least $\delta(G-B)-2$ neighbors of $h$ on $C$. 
		Since $2(\delta(G-B)-3)>2\delta(G-B)-7k-1$, by the pigeonhole principle, there is a pair of neighbors of $h$ that are adjacent $C$.
		Then $h$ can be inserted in $C$ between these neighbors so the length of $C$ increases by one.
	\end{claimproof}

	The following claim allows to eliminate the existence of large connected components in $G-V(C)$, when there are isolated vertices in $G-V(C)$.
	This claim will be useful later in this proof. Recall that a chord of a cycle $C$ is a path connecting two vertices of $C$ and containing no other vertices of $C$.
	
	\begin{claim}\label{claim:dirac_cycle_isolated_and_chord}
		If there is a vertex $h \in V(G)\setminus V(C)$ with at least $\delta(G-B)$ neighbors in $V(C)$ and there is a chord of $C$ of length at least $16k$ that does not pass through $h$, then $C$ can be enlarged in polynomial time.
	\end{claim}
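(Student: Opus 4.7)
The plan is to dispose of a trivial regime where the chord is already longer than the shorter arc of $C$, and otherwise apply a dichotomy on whether the $h$-neighbors on $C$ contain a large empty gap (handled by Claim~\ref{claim:dirac_cycle_isolated_close_neighbors}) or are densely packed, in which case I combine $P$ with a short ``detour'' chord through $h$ in a crossing-chord construction.

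Let $x,y$ be the endpoints of $P$ on $C$ and let $A_1, A_2$ be the two $(x,y)$-arcs of $C$, so $|V(C)|=|A_1|+|A_2|$. If $|P|$ strictly exceeds the shorter of $|A_1|,|A_2|$, say $|P|>|A_1|$, then $P\cup A_2$ is a simple cycle of length $|P|+|A_2|>|V(C)|$ and we are done. Otherwise $|A_1|,|A_2|\geq |P|\geq 16k$, and I turn to the dichotomy. If there exist consecutive $h$-neighbors on $C$ whose arc between them (containing no other $h$-neighbor) has length at least $8k$, then since $h$ has at least $\delta(G-B)\geq\delta(G-B)-2$ neighbors on $V(C)$, Claim~\ref{claim:dirac_cycle_isolated_close_neighbors} applies and enlarges $C$. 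Otherwise every such cyclic gap has length strictly less than $8k$; in particular both $A_1$ and $A_2$ must carry at least one $h$-neighbor, since otherwise traversing $A_i$ would be a single gap of length at least $16k>8k$. Letting $\alpha_i$ (resp.\ $\beta_i$) denote the distance on $A_i$ from $x$ (resp.\ $y$) to the nearest $h$-neighbor on $A_i$, the no-big-gap condition yields $\alpha_1,\alpha_2,\beta_1,\beta_2<8k$.

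For the crossing-chord construction, I pick $u\in N(h)\cap V(A_1)$ and $v\in N(h)\cap V(A_2)$, so that $Q:=u\,h\,v$ is a chord of $C$ of length $2$ that crosses $P$ graphically. The two chords split $C$ into four subarcs $\gamma_1=A_1[x\to u]$, $\gamma_2=A_1[u\to y]$, $\gamma_3=A_2[y\to v]$, $\gamma_4=A_2[v\to x]$, and combining opposite pairs with both chords yields two simple cycles of lengths $|P|+|\gamma_1|+|\gamma_3|+2$ and $|P|+|\gamma_2|+|\gamma_4|+2$. The second strictly exceeds $|V(C)|$ iff $|\gamma_1|+|\gamma_3|<|P|+2$, which, upon choosing $u$ closest to $x$ on $A_1$ and $v$ closest to $y$ on $A_2$, becomes $\alpha_1+\alpha_2<|P|+2$; the first exceeds $|V(C)|$ symmetrically iff $\beta_1+\beta_2<|P|+2$. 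If both inequalities failed, we would obtain $\alpha_1+\alpha_2+\beta_1+\beta_2\geq 2|P|+4\geq 32k+4$, contradicting the bound $\alpha_1+\alpha_2+\beta_1+\beta_2<32k$ inherited from the no-big-gap case. Hence at least one of the two crossing-chord cycles enlarges $C$, and every object used (nearest neighbors on arcs, the four subarcs, the two cycles) can be found in polynomial time.

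The main obstacle I anticipate is careful bookkeeping of degenerate cases in the crossing-chord construction: if $x$ or $y$ happens to be an $h$-neighbor, some $\gamma_i$ could vanish or a chord endpoint could be reused, threatening the simplicity of the resulting cycles. Fortunately, the no-big-gap condition together with $|A_i|\geq 16k$ forces each $A_i$ to carry at least two distinct $h$-neighbors (a single neighbor on $A_i$ would force $\alpha_i+\beta_i=|A_i|\geq 16k$, contradicting $\alpha_i,\beta_i<8k$), leaving enough freedom to pick $u\neq y$ and $v\neq x$, or to fall back on the next-closest neighbor, without disturbing any of the $<8k$ distance bounds, so the contradiction still goes through.
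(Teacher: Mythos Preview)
Your argument is correct and follows essentially the same route as the paper: reduce to the ``no-big-gap'' situation via Claim~\ref{claim:dirac_cycle_isolated_close_neighbors}, pick $h$-neighbours $u\in A_1$, $v\in A_2$ close to opposite endpoints of the chord, and combine the chord with the length-$2$ chord $uhv$ in a crossing configuration.

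Two minor remarks. First, there is an indexing slip: with $u$ closest to $x$ on $A_1$ and $v$ closest to $y$ on $A_2$ you get $|\gamma_1|=\alpha_1$ and $|\gamma_3|=\beta_2$, so the condition for the second cycle is $\alpha_1+\beta_2<|P|+2$, not $\alpha_1+\alpha_2<|P|+2$ (and symmetrically $\beta_1+\alpha_2$, not $\beta_1+\beta_2$). This does not affect the final contradiction, since the sum of the two deficits is $\alpha_1+\alpha_2+\beta_1+\beta_2$ either way. Second, your two-case split is in fact unnecessary: once you know $\alpha_1,\beta_2<8k$ and $|P|\ge 16k$, the inequality $\alpha_1+\beta_2<|P|+2$ holds outright, so the second cycle already beats $|V(C)|$ --- which is exactly how the paper argues it, constructing that single cycle directly.
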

	\begin{claimproof}
		By Claim~\ref{claim:dirac_cycle_isolated_close_neighbors}, we can  assume that for every pair of   neighbors $u,v$ of $h$ on $C$, each of the  $(u,v)$-arcs is either of length less than $8k$ or  contains other neighbors of $h$.

		Let the endpoints of the chord be $c_1, c_2 \in V(C)$.
		If the distance between $c_1$ and $c_2$ in $C$ is less than the length of the chord, then $C$ can be made longer by replacing an arc between $c_1$ and $c_2$ with the chord.
		Otherwise, both arcs between $c_1$ and $c_2$ are of length at least $16k$.
		
		Each of these two arcs should contain a neighbor of $h$ as an internal vertex.	Select one of the two arcs between $c_1$ and $c_2$.
		Let $v_1\neq c_1$ be the neighbor of $h$ that is closest to $c_1$ on this arc. Since there are no other neighbors of $h$ between $c_1$ and $v_1$, the 
  distance in $C$ between $c_1$ and $v_1$ is at most $8k$.
		Analogously, take the other arc between $c_1$ and $c_2$ and let $v_2$ be the neighbor of $h$ on this arc that is closest to $c_2$, but is different from it.
		Again, the distance between $c_2$ and $v_2$ is at most $8k$.
		
		Now construct the following path between $v_1$ and $v_2$: go from $v_1$ to $c_2$ following the first arc, then go from $c_2$ to $c_1$ following the chord, then go from $c_1$ to $v_2$ following the second arc. See \Cref{fig:chord}.
		This path contains all but at most $16k$ edges of the cycle $C$, since $c_i$ and $v_i$ are close to each other on $C$ for each $i \in \{1,2\}$.
		Additionally, this path contains at least $16k$ edges of the chord.
		Hence, the length of the constructed $(v_1,v_2)$-path is at least the length of the cycle $C$.
		This path does not contain $h$, so adding two edges between $v_1$ and $h$ and between $h$ and $v_2$ to it,  yields a cycle of length at least $|V(C)|+2$.
		
\begin{figure}[ht]
 \begin{center}
 \includegraphics[scale=.25]{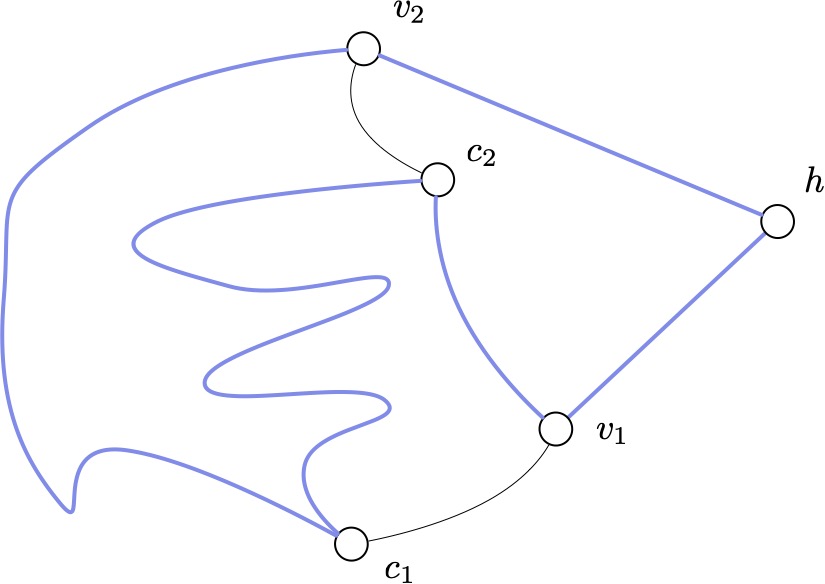}
 \caption{Rerouting through a chord.}\label{fig:chord}
 \end{center}
 \end{figure}		
\end{claimproof}

 \medskip 
 Depending on the number of vertices in a component $H$   of $G-V(C)$, we consider difference cases. We start with the simplest case. 
 
 	\medskip\noindent\textbf{Case 1:} 
	\emph{At least one component $H$ consists of two vertices.} In this case   we can always enlarge $C$ in polynomial time.

Let	 $V(H)=\{h_1,h_2\}$ for $h_1\neq h_2$. Then both $h_1$ and $h_2$ have at least $\delta(G-B)-1$ neighbors in $V(C)$ and are connected by an edge.
	In this case, $C$ can be made longer in polynomial time.
	We formulate this slightly more generally in the following claim.

	\begin{claim}\label{claim:dirac_cycle_two_connected_isolated}
		If there are two distinct vertices  $h_1, h_2 \in V(G)\setminus V(C)$,  each  having at least $\delta(G-B)-1$ neighbors in $V(C)$,  and that are connected by a path in $G-V(C)$, then the length of $C$ can be increased  in polynomial time.
	\end{claim}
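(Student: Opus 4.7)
The plan is to enlarge $C$ by one of three elementary rerouting operations. Fix a path $P_{12}$ of length $\ell\geq 1$ from $h_1$ to $h_2$ inside $G- V(C)$, write $L=|V(C)|$ and $N_i:=N(h_i)\cap V(C)$, so $|N_1|,|N_2|\geq\delta(G- B)-1$. Label each $v\in V(C)$ by its membership in $(N_1,N_2)$, giving types $0,1,2,12$. The three operations are: (a) if two $C$-adjacent vertices both lie in $N_1$, insert $h_1$ between them (cycle length $+1$); (b) analogously via $h_2$; (c) if a $C$-arc of length $d\leq \ell+1$ connects some $u_1\in N_1$ to $u_2\in N_2$, replace that arc by $u_1,h_1,P_{12},h_2,u_2$, gaining $\ell+2-d\geq 1$. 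A case-by-case check on the labels of the two endpoints shows that any $C$-edge with both endpoints in $N_1\cup N_2$ already triggers one of (a)--(c); in particular a type-$(1,2)$ edge applies (c) to the single edge, since $d=1\leq\ell+1$.

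First I would handle the easy regime $|N_1\cup N_2|>L/2$: the number of $C$-edges incident to a type-$0$ vertex is at most $2(L-|N_1\cup N_2|)<L$, so some edge has both endpoints in $N_1\cup N_2$ and we are done. Otherwise $|N_1\cup N_2|\leq L/2$, which combined with $|N_i|\geq\delta(G- B)-1$ forces $L\geq 2(\delta(G- B)-1)$ and, by inclusion--exclusion, $|N_1\cap N_2|\geq \delta(G- B)-\tfrac{k+3}{2}$. In this dense regime I may assume no $C$-edge has both endpoints in $N_1\cup N_2$, so every type-$*$ vertex has both $C$-neighbors of type $0$. Now if some type-$0$ vertex $v$ has two type-$12$ $C$-neighbors $u_1,u_2$, then $u_1\in N_1$, $u_2\in N_2$, and (c) applies to the length-$2$ arc $u_1vu_2$ with gain $\ell\geq 1$.

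The main obstacle is ruling out the one remaining sub-case by a counting argument. There every type-$12$ vertex has both its $C$-neighbors of type $0$, and every type-$0$ vertex has at most one type-$12$ $C$-neighbor; counting $(\text{type-}12,\text{type-}0)$ $C$-adjacencies from both sides yields
\[
2|N_1\cap N_2|\;\leq\; |V(C)\setminus(N_1\cup N_2)| \;=\; L-|N_1\cup N_2|\;\leq\; L-\delta(G- B)+1\;\leq\;\delta(G- B)+k.
\]
Combined with $|N_1\cap N_2|\geq \delta(G- B)-\tfrac{k+3}{2}$, this gives $\delta(G- B)\leq 2k+3$, which contradicts $k\leq \delta(G- B)/24$ since $k\geq 1$ forces $\delta(G- B)\geq 24$. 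All three operations and the case analysis are constructive and run in polynomial time, delivering the algorithm claimed.
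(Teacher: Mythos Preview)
Your argument is correct. It uses the same three rerouting operations as the paper's proof, but organizes the counting that rules out the non-enlargeable case differently. The paper observes that if none of the operations applies, then every arc of $C$ between consecutive vertices of $S=N_1\cup N_2$ has length at least~$2$, and every such arc with a common-neighbor endpoint has length at least~$3$; summing directly gives $|V(C)|\geq 2|S|+a$ with $a=|N_1\cap N_2|$, and a short case split on whether $a\geq\delta(G-B)-1$ then forces $|V(C)|>2\delta(G-B)+k$. You instead split on whether $|N_1\cup N_2|>L/2$, and in the remaining regime double-count type-$12$/type-$0$ adjacencies on $C$ to get $2|N_1\cap N_2|\leq L-|N_1\cup N_2|$. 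Both routes reach the same contradiction $\delta(G-B)\leq 2k+\Oh(1)$; the paper's arc-length summation is a bit more direct (and in fact uses that a length-$2$ arc with \emph{one} common-neighbor endpoint already triggers operation~(c), a slightly stronger observation than the ``two type-$12$ neighbours'' step you invoke), while your labelled case analysis is equally valid and perhaps makes the role of operation~(c) on short arcs more transparent.
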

	\begin{claimproof}
		Let $S$ be the set of neighbors of $h_1$ and $h_2$ in $V(C)$.
		Let $a$ be the number of the common neighbors of $h_1$ and $h_2$ in $S$.
		Then $|S|\ge 2\delta(G-B)-2-a$ and $S$ splits $C$ into at least $2\max\{a,\delta(G-B)-1\}-a$ arcs. If we have an arc of length $1$, we can always enlarge $C$ by inserting one or both of the $h_i$. Moreover, if one of the endpoints of an arc is a common neighbor of $h_1$ and $h_2$, then the length of this arc should be at least $3$. Indeed, if an arc having a common neighbor of $h_1$ and $h_2$ as its endpoint and is of length less than three, then we can insert a path between $h_1$ and $h_2$ and two boundary edges instead of this arc in $C$; thus  $C$ becomes longer.
	
	Therefore, if  $C$ cannot be enlarged, its length is  at least 
	$2(|S|-a)+3a$. By the conditions of the lemma, we have that  $\delta(G-B)\geq 2k+12$. 
	 If $a\geq  \delta(G-B) -1$, then 
	 \[2(|S|-a)+3a\geq 3a \geq 3(\delta(G-B) -1)> 2\delta(G-B)+k.
	 \]
If $a< \delta(G-B) -1$, then 
 \begin{multline*}2(|S|-a)+3a=2|S|+a \geq   2(2\delta(G-B)-2-a)+a =4\delta(G-B)-a -8 > 2\delta(G-B)+k.\end{multline*}
 
In both cases, we have that  the length of cycle $C$ is more than $ 2\delta(G-B)+k$.
This contradicts our assumption that 		 $|V(C)|< 2\delta(G-B)+k$.	
	\end{claimproof}

The next two cases consider the situation when a component $H$ of $G-V(C)$ contains at least 3 vertices. Then $H$ could be $2$-connected or it contains a cut-vertex.

	\medskip\noindent\textbf{Case 2:} \emph{$H$ is $2$-connected.} We show that either we can enlarge $H$, or $H$ has very specific properties described in 
	\Cref{claim:dirac_cycle_matching}	and \Cref{claim:dirac_cycle_matching_degree}. These properties will be used in handling isolated components and in constructing  \cyclebananadec.

\begin{claim}\label{claim:dirac_cycle_matching}	
	Either the maximum size of a  matching between $V(H)$ and $V(C)$ in $G$ is   two, or $C$ can be enlarged in polynomial time.
	\end{claim}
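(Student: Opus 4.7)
My plan is to prove the contrapositive: if the matching between $V(H)$ and $V(C)$ in $G$ has size at least three, then $C$ admits a polynomial-time enlargement. The approach mirrors the template of Claim~\ref{claim:case2-connected} (the analogous statement for $(s,t)$-paths) with the extra geometric flexibility of a cycle in place of a fixed-endpoint path. Let $u_1v_1, u_2v_2, u_3v_3$ be such a matching with $u_i\in V(H)$ and $v_i\in V(C)$. The vertices $v_1, v_2, v_3$ partition $C$ into three arcs whose total length is less than $2\delta(G-B)+k$. Using the hypothesis $k\le\delta(G-B)/24$, this total is strictly less than $3\delta(G-B)$, so at least one arc, between some $v_i$ and $v_j$, has length strictly less than $\delta(G-B)$. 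The plan is to replace this short arc by a detour $v_i u_i Q u_j v_j$, where $Q$ is a long $(u_i,u_j)$-path inside $H$; the key enabling point is that $H$ is 2-connected by the Case~2 hypothesis, so Corollary~\ref{thm:relaxed_st_path} applies internally to $H$.

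I would split into two subcases parallel to Claim~\ref{claim:case2-connected}. In Subcase~1, $\delta(H-B)\ge\delta(G-B)$, which happens precisely when no vertex of $V(H)\setminus B$ has a neighbour on $V(C)$ (equivalently, all three $u_i$ lie in $B\cap V(H)$). Then Corollary~\ref{thm:relaxed_st_path} applied inside $H$ with parameter set $B\cap V(H)$ delivers a $(u_i,u_j)$-path of length at least $\delta(H-B)\ge\delta(G-B)$, which exceeds the short-arc length by at least two, so the substitution strictly enlarges $C$.

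In Subcase~2, $\delta(H-B)<\delta(G-B)$, and the minimum-degree vertex $w$ of $H-B$ then has at least $\delta(G-B)-\delta(H-B)\ge 1$ neighbours on $V(C)$. I substitute $w$ for one of the $u_i$ (rewiring the matching if $w$ is not already among the $u_i$'s) so that $w$ itself participates in a size-three matching, and form $S:=(N_G(u_1)\cup N_G(u_2)\cup N_G(u_3))\cap V(C)$, so that $|S|\ge\max\{3,\delta(G-B)-\delta(H-B)\}$. Listing $S$ in cyclic order along $C$ yields $|S|$ arcs, each either \emph{monochromatic} (both endpoints neighbours of the same $u_i$) or \emph{bichromatic} (endpoints are neighbours of two distinct $u_i,u_{i'}$). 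A monochromatic arc of length one lets us insert the corresponding $u_i$ into $C$; a bichromatic arc of length less than $\delta(H-B)+2$ can be replaced by a $(u_i,u_{i'})$-detour through $H$ of length at least $\delta(H-B)+2$ supplied by Corollary~\ref{thm:relaxed_st_path}. If no such short arc existed, summing the worst-case arc lengths together with the observation that $v_1, v_2, v_3$ force several bichromatic arcs would produce $|V(C)|\ge 2(|S|-3)+3(\delta(H-B)+2)$, which combined with the bound on $|S|$ contradicts $|V(C)|<2\delta(G-B)+k$ under the numerical hypotheses of the lemma.

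The principal obstacle I anticipate lies in the bookkeeping of Subcase~2: precisely classifying monochromatic versus bichromatic arcs, treating cleanly the case in which $w$ already coincides with one of the $u_i$ (so that no rewiring is needed), and coping with $u_i$'s that share a common neighbour on $C$, which forces a sharper bound on $|S|$ than the naive estimate above. All the constructive ingredients---extracting the matching, identifying $w$, building the $H$-detour via Corollary~\ref{thm:relaxed_st_path}, and performing the arc substitution in $C$---are implementable in polynomial time, which yields the stated running-time guarantee.
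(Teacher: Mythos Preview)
Your Subcase~1 is fine, but Subcase~2 has a genuine gap. After the counting you outline, the conclusion is
\[
|V(C)|\ \ge\ 2(|S|-3)+3(\delta(H-B)+2)\ =\ 2|S|+3\delta(H-B)\ \ge\ 2\delta(G-B)+\delta(H-B),
\]
and this only contradicts $|V(C)|<2\delta(G-B)+k$ when $\delta(H-B)\ge k$. Nothing in your setup forces this: the minimum-degree vertex $w$ of $H-B$ may have almost all of its $\delta(G-B)$ neighbours on $C$, leaving $\delta(H-B)$ as small as $0$ or $1$. In that regime your detour through $H$ has length only $\delta(H-B)+2\le 3$, and the arc-sum bound $2|S|$ is compatible with $|V(C)|=2\delta(G-B)$. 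The direct port of Claim~\ref{claim:case2-connected} works in the path setting because there the target length is $\delta(G-B)+k$, not $2\delta(G-B)+k$; doubling the threshold is exactly what breaks the naive argument.

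The paper fixes this by a different and more delicate choice. Instead of the minimum-degree vertex $w$, it takes $h_1,h_2\in V(H-B)$ with the \emph{largest} and second-largest number of neighbours on $C$ (say $n_1\ge n_2$) and works with $t=\delta(H-(B\cup\{h_1\}))$. Removing $h_1$ ensures every remaining vertex has at most $n_2$ neighbours on $C$, which gives $t\ge\delta(G-B)-1-n_2$. A preliminary counting over the arcs between neighbours of $h_1,h_2$ (tracking the number $a$ of common neighbours) then forces $t\ge\tfrac12\delta(G-B)$; only after this is established does the paper bring in the size-three matching and conclude $|V(C)|\ge 2\delta(G-B)+t-2>2\delta(G-B)+k$. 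The key idea you are missing is this two-step argument that first certifies a long internal path in $H$ \emph{independently} of the matching assumption.
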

	\begin{claimproof}	
Since $G$ is 2-connected, the maximum matching size  	between $V(H)$ and $V(C)$ is always at least $2$. 
		
	Suppose first that at most one vertex in $V(H-B)$ has neighbors in $V(C)$.
	If such vertex exists, let $h \in V(H-B)$ be that vertex, otherwise let $h$ be an arbitrary vertex in $H-B$.
	We know that $\delta(H-(B\cup \{h\}))\ge \delta(G-B)-1$, since $H$ is a connected component in $G-V(C)$.
	We now claim that if there is a matching of size at least three between $V(H)$ and $V(C)$ in $G$, then $C$ can be made longer by replacing one of its arcs with a path in $H$.
	By Theorem~\ref{thm:relaxed_st_path}, there is a path of length at least $\delta(H-(B\cup\{h\}))\ge \delta(G-B)-1$ between an arbitrary pair of vertices in $H$.
	The endpoints of the matching in $V(C)$ split $C$ into at least three arcs.
	If at least one of these arcs is of length less than $(\delta(G-B)-1)+2$, it can be replaced with a path in $H$ connecting corresponding endpoints of the matching.
	Hence, if $C$ cannot be made longer, its length is at least $3\delta(G-B)+3$.
	Since $|V(C)|<2\delta(G-B)+k<3\delta(G-B)+3$, we obtain that either $C$ can be made longer or the maximum matching size between $V(H)$ and $V(C)$ in $G$ is two.
	
	Now we assume that at least two vertices in $V(H-B)$ have neighbors in $V(C)$.
	Take the vertices $h_1, h_2 \in V(H-B)$ that have the most and the second most number of neighbors in $V(C)$.
	Denote $n_i=|N_G(h_i)\cap V(C)|$ for each $i\in\{1,2\}$. Thus $n_1\geq n_2$. 
	By Theorem~\ref{thm:relaxed_st_path}, there is a path of length at least $\delta(H-(B\cup \{h_1\}))$ between $h_1$ and $h_2$ in $H$.
	Let  $t=\max\{\delta(H-(B\cup\{h_1\})), 1\}$.
	Note that the path between $h_1$ and $h_2$ is of length at least $t$.
	
	Assume that $\delta(H-(B\cup \{h_1\}))<\delta(G-B)-1-n_2$.
	Then at least one vertex in $V(H-(B\cup \{h_1\}))$ has at most $\delta(G-B)-2-n_2$ neighbors in $V(H-(B\cup \{h_1\}))$.
	Hence, it has at most $\delta(G-B)-1-n_2$ neighbors in $V(H-B)$.
	All other neighbors of this vertex in $V(G-B)$ are from $V(C)$, so this vertex should have at least $n_2+1$ neighbors in $V(C)$.
	This contradicts the choice of $h_2$ and $n_2$.
	Thus, $t\ge \delta(G-B)-1-n_2$, or $n_2\ge \delta(G-B)-t-1$.
	
	Denote by $S$ the set of all neighbors of $h_1$ and $h_2$ in $V(C)$, i.e.\ $S=(N_G(h_1)\cup N_G(h_2))\cap V(C)$.
	Let $a$ be the number of common neighbors of $h_1$ and $h_2$ in $S$, i.e.\ $a=|N_G(h_1)\cap N_G(h_2)\cap S|$.
	Observe that vertices in $S$ split $C$ into $|S|=n_1+n_2-a$ arcs.
	Note that each arc is of length at least two, otherwise we enlarge $C$. Moreover, every arc whose endpoint is a common neighbor of $h_1$ and $h_2$ should have length at least $t+2$, because otherwise  $C$ can be made longer.
	Hence, $|V(C)|\ge 2|S|+at$.
	Since $2\delta(G-B)+k> |V(C)|$, we have that 
\begin{multline*}
			2\delta(G-B)+k>   2(n_1+n_2-a)+at \geq
		4n_2-2a+at \geq 4(\delta(G-B)-t-1)+a(t-2).
\end{multline*}		
		Therefore, 
		\begin{multline*}
			k>2\delta(G-B)-4t-4+12-12+a(t-2)>2\delta(G-B)-4(t-2)+a(t-2)-12,
		\end{multline*}
			and hence 
			\[
			(4-a)(t-2)>2\delta(G-B)-k-12.
\]	

	
In particular, $(4-a)(t-2)>0$.
	If $t=1$, then $a>2\delta(G-B)-k-8$.
	But $|V(C)|\ge 2|S|+at\ge 2a+at=a(t+2)\ge 3a$, so $3a<2\delta(G-B)+k$.
	It follows that $3(2\delta(G-B)-k-8)<2\delta(G-B)+k$, or $4\delta(G-B)<4k+24$, which contradicts the assumptions of the lemma. 
	
	Thus $t\neq 1$.
	Since $t-2\neq 0$, we obtain that $t>2$, and, consequently, $a<4$.
	Then $3(t-2)\ge (4-a)(t-2)>2\delta(G-B)-k-12$, or $3t>2\delta(G-B)-k-6$.
	It yields that $t\geq\frac{1}{2}\delta(G-B)$.
	
	Assume now that there is a matching in $G$ between $V(H)$ and $V(C)$ of size three.
	Let $c_1, c_2, c_3$ be the endpoints of this matching in $V(C)$, and $v_1, v_2, v_3$ be the corresponding endpoints in $V(H)$.
	Without loss of generality, we assume that $v_1=h_2$, as if $h_2 \notin \{v_1,v_2,v_3\}$ we can always change the matching to include the vertex $h_2$.
	Denote by $T$ the set of all neighbors of $v_1, v_2$ and $v_3$ in  $V(C)$, i.e.,\ $T=N_G(\{v_1,v_2,v_3\})\cap V(C)$.
	Note that $|T|\ge |N_G(v_1)\cap V(C)|=n_2\ge \delta(G-B)-t-1$.
	Unless $C$ can be made longer, the vertices of $T$ split $C$ into $|T|$ arcs of length at least two.
	Additionally, at least three arcs (the arcs that are incident to $c_1,c_2,c_3\in T$) should be of length at least $t+2$, as there is a path of length at least $t$ between $v_i$ and $v_j$ in $H$ for any $i\neq j$.
	We obtain that $|V(C)|\ge 2|T|+3t\ge 2\delta(G-B)+t-2>2\delta(G-B)+k$ unless $C$ can be made longer.
		\end{claimproof}
	\begin{claim}\label{claim:dirac_cycle_matching_degree}
	Either between any pair of vertices in $H$	there is a path in $H$ of length at least $\delta(G-B)-2$, or $C$ can be made longer in polynomial time.
	\end{claim}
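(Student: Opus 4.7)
The plan is to mirror the argument of Claim~\ref{claim:eg_path_bic_degree} from the proof of Lemma~\ref{lemma:st_path_or_tunnel} almost verbatim, since the structural situation is identical: $H$ is a $2$-connected component of $G-V(C)$, and by Claim~\ref{claim:dirac_cycle_matching} (which we already established for this case) we may assume the maximum matching between $V(H)$ and $V(C)$ in $G$ has size exactly two. Fix such a matching with edges $h_1v_1, h_2v_2$, where $h_1,h_2\in V(H)$ and $v_1,v_2\in V(C)$. The key structural observation, which follows from the maximality of the matching, is that no vertex of $V(H)\setminus\{h_1,h_2\}$ can be adjacent in $G$ to any vertex of $V(C)\setminus\{v_1,v_2\}$: such an edge, combined with $h_1v_1$ and $h_2v_2$, would yield a matching of size three, a contradiction.

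From here I would split on which of $h_1,h_2$ have additional neighbours in $V(C)$. If neither does, then $N_G(V(H))\cap V(C)\subseteq\{v_1,v_2\}$, so $\delta(H-B)\ge \delta(G-B)-2$, and I can apply Corollary~\ref{thm:relaxed_st_path} directly to $H$ with the set $B\cap V(H)$ to produce a path of length at least $\delta(G-B)-2$ between any pair of vertices of $H$. If, say, $h_1$ has a neighbour $v_3\in V(C)\setminus\{v_1,v_2\}$, then by the same matching argument no vertex of $V(H)\setminus\{h_1,h_2\}$ may be adjacent to $v_1$; if additionally $h_2$ has such an extra neighbour, then analogously no vertex of $V(H)\setminus\{h_1,h_2\}$ is adjacent to $v_2$, so $\delta(H-(B\cup\{h_1,h_2\}))\ge\delta(G-B)-2$ and Corollary~\ref{thm:relaxed_st_path} applied to $H$ with vertex set $(B\cap V(H))\cup\{h_1,h_2\}$ yields the desired path between any two vertices of $V(H)\setminus\{h_1,h_2\}$. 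The remaining subcases, when $h_2$'s only neighbours in $V(C)$ lie in $\{v_1,v_2\}$, are handled exactly as in Claim~\ref{claim:eg_path_bic_degree}: one checks whether $h_2v_1\in E(G)$ (which forbids any other vertex from being adjacent to $v_2$ by the matching bound), yielding $\delta(H-(B\cup\{h_1,h_2\}))\ge\delta(G-B)-2$; otherwise only $h_1$ can still hit $v_1$ and only $h_2$ can hit $v_2$, so $\delta(H-(B\cup\{h_1\}))\ge\delta(G-B)-2$.

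To pass from a degree bound to a long path between an arbitrary pair $x,y\in V(H)$, I would use that $H$ is $2$-connected (our standing assumption in Case~2) and apply Corollary~\ref{thm:relaxed_st_path} to $H$ with the small ``bad'' vertex set $B'\subseteq(B\cap V(H))\cup\{h_1,h_2\}$ obtained above; this yields an $(x,y)$-path in $H$ of length at least $\delta(H-B')\ge\delta(G-B)-2$, provided $x,y\notin B'$. If one of $x,y$ happens to lie in $\{h_1,h_2\}$, a standard one-step rerouting argument via Menger in $H$ (possible because $H$ is $2$-connected) extends such a path from a neighbour of $h_i$ in $V(H)\setminus B'$ to $h_i$ itself without shortening it by more than one edge, which is absorbed in the slack.

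The only genuine obstacle is verifying that every case above really delivers the bound $\delta(H-B')\ge\delta(G-B)-2$ for some $B'\subseteq V(H)$ of size at most two: this is a careful but purely local counting based on how many of $h_1,h_2$ retain outside neighbours and whether the ``crossed'' edge $h_2v_1$ is present. Once this is done, invoking Corollary~\ref{thm:relaxed_st_path} finishes the claim; otherwise, the corresponding failure of the matching bound hands us a matching of size three between $V(H)$ and $V(C)$, which by Claim~\ref{claim:dirac_cycle_matching} enlarges $C$ in polynomial time. This is precisely the ``either long path in $H$, or $C$ can be enlarged'' dichotomy the claim asserts.
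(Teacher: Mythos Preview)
Your proposal is correct and matches the paper's approach exactly: the paper's proof of this claim is the single line ``The proof is identical to the proof of Claim~\ref{claim:eg_path_bic_degree},'' and you have faithfully reproduced that argument with $V(C)$ in place of $V(P)$. One small point: your final paragraph's Menger rerouting for the case $x$ or $y\in\{h_1,h_2\}$ is unnecessary (and the ``absorbed in the slack'' remark is not quite right), since Corollary~\ref{thm:relaxed_st_path} already guarantees an $(x,y)$-path of length at least $\delta(H-B')$ for \emph{every} pair of distinct $x,y\in V(H)$, including those in $B'$.
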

	\begin{claimproof}
		The proof is identical to the proof of \Cref{claim:eg_path_bic_degree}.
	\end{claimproof}
	
	\medskip\noindent\textbf{Case 3:} \emph{$|V(H)|\ge 3$ and $H$ contains a cut-vertex.}
	
	Since $H$ contains a cut-vertex, it contains at least two leaf-blocks.
	Denote the leaf-blocks of $H$ by $L_1, L_2, \ldots, L_p$ and their respective cut-vertices by $c_1, c_2, \ldots, c_p$, where $p\ge 2$.

	Since $L_i$ is $2$-connected or $|V(L_i)|=2$, we can proceed similarly to Case~2 with $L_i$ and $B\cup \{c_i\}$ instead of $H$ and $B$, and make $C$ longer or conclude that the maximum matching size between $V(L_i)$ and $V(C)$ in $G$ is at most two.
	
	We now assume that for each $i\in[p]$ the maximum matching size between $V(L_i)$ and $V(C)$ is at most two.
	Then for any $i\in[p]$, accordingly to \Cref{claim:dirac_cycle_matching_degree} applied to $L_i$ and $B\cup\{c_i\}$ instead of $H$ and $B$, we obtain that there is a path of length at least $\delta(G-(B\cup\{c_i\})-2\ge\delta(G-B)-3$ between any pair of vertices in $L_i$, if $|V(L_i)|> 2$. 
	
	\begin{claim}\label{claim:dirac_cycle_leaf_block_single_neighbor}
		$\left|\bigcup_{i=1}^p N_G(V(L_i-\{c_i\}))\right|=1$, or $C$ can be made longer in polynomial time.
	\end{claim}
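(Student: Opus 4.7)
I will prove the contrapositive: if the union $\bigcup_{i=1}^p N_G(V(L_i - \{c_i\}))$ contains two distinct vertices $u_1 \neq u_2$ of $V(C)$, then in polynomial time one can find a cycle longer than $C$. Fix inner leaf-block vertices $x_j \in V(L_{i_j} - \{c_{i_j}\})$ with $x_j u_j \in E(G)$ for $j \in \{1,2\}$. The overall strategy is to produce a $(u_1, u_2)$-path in $G$ that is internally disjoint from $V(C)$, that is, a chord of $C$, whose length strictly exceeds the shorter $(u_1, u_2)$-arc of $C$; this shorter arc has length at most $\delta(G-B) + (k-1)/2$ by the hypothesis $|V(C)| < 2\delta(G-B) + k$.

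The core case is $i_1 \neq i_2$ with both leaf-blocks nontrivial ($|V(L_{i_j})| \geq 3$). Each $L_{i_j}$ is then 2-connected, and by the preceding analysis it has matching size at most $2$ with $V(C)$, so by \Cref{claim:dirac_cycle_matching_degree} applied inside $L_{i_j}$ there is an $(x_j, c_{i_j})$-path of length at least $\delta(G-B) - 3$. Concatenating these two paths with an arbitrary $(c_{i_1}, c_{i_2})$-path inside $H$ (which exists since $H$ is connected) and the edges $u_1 x_1$, $u_2 x_2$ produces a $(u_1, u_2)$-chord of length at least $2\delta(G-B) - 4$, easily beating the shorter arc under the parameter bound $k \leq \delta(G-B)/24$. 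The subcase where some $L_{i_j}$ is $K_2$ is handled by exploiting that, thanks to the preliminary $B$-refinement, the unique inner vertex of such a leaf-block is not in $B$, and since its only neighbor inside $H$ is the cut-vertex $c_{i_j}$, it has at least $\delta(G-B) - 1$ neighbors in $V(C)$, and so behaves as an almost-isolated vertex: if both $L_{i_j}$ are $K_2$, \Cref{claim:dirac_cycle_two_connected_isolated} applied to $x_1, x_2$ (connected in $H$ through $c_{i_1}, \ldots, c_{i_2}$) yields an enlargement directly; if only one is $K_2$, we combine it with a long chord built in the other (nontrivial) leaf-block and invoke \Cref{claim:dirac_cycle_isolated_and_chord}.

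The delicate case is $i_1 = i_2 = i$. Here $|V(L_i)| \geq 3$, and K\"onig's theorem applied to the bipartite $V(L_i)$-to-$V(C)$ edges yields a vertex cover of size at most two, which combined with $\deg_G(v) \geq \delta(G-B)$ for $v \in V(L_i) \setminus B$ forces $|V(L_i)| \geq \delta(G-B) - 1$ and a minimum degree inside $L_i - (B \cup \{c_i\})$ close to $\delta(G-B)$. To reduce to the easier case, I first check whether some leaf-block $L_j$ with $j \neq i$ contributes an inner vertex adjacent to $u_1$ or to $u_2$; if so, we relabel and fall into the previous case with $x_1, x_2$ in different leaf-blocks. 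Otherwise, all inner leaf-block vertices outside $L_i$ have no $V(C)$-neighbor at all, which forces every such leaf-block to have size at least $\delta(G-B) + 1$. The main obstacle is this remaining subcase: a chord confined to $L_i$ attains length only roughly $\delta(G-B) - 1$, which is insufficient to beat the arc. I bypass this by augmenting $L_i$ with the two $V(C)$-vertices $u_1, u_2$: the matching-cover analysis and the degree bound force each of $u_1, u_2$ to have at least two neighbors in $V(L_i)$, which makes the augmented graph $L_i \cup \{u_1, u_2\}$ 2-connected; applying \Cref{thm:relaxed_st_path} (or, more precisely, \Cref{lemma:path_by_large_degree}) to this augmented graph produces a $(u_1, u_2)$-path of length at least $\delta(G-B) + \Omega(k)$ using the hypothesis $k \leq \delta(G-B)/24$, which is the desired chord. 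This last step is the main technical difficulty, since the naive chord built inside $L_i$ is just slightly too short and we must exploit both the K\"onig cover structure and the $B$-refinement to push the chord length past the arc length.
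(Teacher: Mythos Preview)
Your strategy of producing a chord of $C$ through $H$ that beats the shorter arc is the same as the paper's, and the case $i_1\neq i_2$ with both leaf-blocks nontrivial is handled exactly as in the paper. Two of your subcases, however, contain genuine gaps.

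\textbf{The case $i_1=i_2=i$.} Your ``otherwise'' is a non-sequitur: the negation of ``some $L_j$ with $j\ne i$ has an inner vertex adjacent to $u_1$ or $u_2$'' is \emph{not} ``all inner leaf-block vertices outside $L_i$ have no $V(C)$-neighbour at all''; such vertices may be adjacent to some $v_3\in V(C)\setminus\{u_1,u_2\}$. In fact, $2$-connectivity of $G$ forces every leaf-block $L_j$ to have an inner vertex with a neighbour in $V(C)$ (otherwise $c_j$ would be a cut-vertex of $G$). So pick any $j\ne i$ (this exists since $p\ge 2$), take an inner vertex of $L_j$ with $V(C)$-neighbour $v_3$, and note that $v_3$ differs from at least one of $u_1,u_2$; you are immediately back in the different-leaf-block case. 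This is exactly the paper's one-line resolution, and the ``main technical difficulty'' you identify is an artefact of the faulty case split. Your fallback argument is also broken as stated: nothing forces $u_1,u_2$ to have two neighbours each in $V(L_i)$ (a size-$2$ K\"onig cover can lie on the $L_i$ side, say $\{x_1,x_2\}$, leaving each $u_j$ with the single neighbour $x_j$), so $G[V(L_i)\cup\{u_1,u_2\}]$ need not be $2$-connected; and even when it is, \Cref{thm:relaxed_st_path} or \Cref{lemma:path_by_large_degree} only deliver a $(u_1,u_2)$-path of length about $\delta(G-B)$, not $\delta(G-B)+\Omega(k)$, which does not beat an arc of length up to $\delta(G-B)+\tfrac{k-1}{2}$.

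\textbf{The one-$K_2$ subcase.} Invoking \Cref{claim:dirac_cycle_isolated_and_chord} with $h=x_1$ requires a chord of length $\ge 16k$ avoiding $x_1$, but such a chord need not exist inside $H$: if $p=2$ and the only $V(C)$-neighbours of $V(H)$ are $u_2$ together with the neighbours of $x_1$, every long chord through $H$ is forced to pass through $x_1$. The paper sidesteps this by using \Cref{claim:dirac_cycle_isolated_close_neighbors} directly: since $x_1$ has $\ge\delta(G-B)-1$ neighbours on $C$, there is a neighbour $v$ of $x_1$ within distance $8k$ of $v':=u_2$ on $C$, and the $(v,v')$-chord through both leaf-blocks has length $\ge\delta(G-B)>8k$, so it replaces the short arc.
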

	\begin{claimproof}
		We first show that if there exists $i \in [p]$ with $|V(L_i)|=2$, then $C$ can be made longer in polynomial time.
		
		Assume that there exists $L_i$ with $|V(L_i)|=2$.
		Then $V(L_i)=\{u, c_i\}$ for some vertex $u\neq c_i$.
		As $\gbref{B}{H}=G$, it is true that $u\notin B$.
		Hence, $u$ has at least $\delta(G-B)-1$ neighbors in $V(C)$.
		If $u$ has two consecutive vertices of $C$ as neighbors, then $C$ can be made longer with inserting $u$.
		
		Now take $j \in [p]\setminus\{i\}$ and consider the leaf-block $L_j$.
		If $|V(L_j)|=2$, then $V(L_j)=\{u',c_j\}$, where $u'$ has at least $\delta(G-B)-1$ neighbors in $V(C)$.
		Note that $u$ and $u'$ are connected by a path in $G-V(C)$.
		By \Cref{claim:dirac_cycle_two_connected_isolated}, $C$ can be made longer in polynomial time in this case.
		
		If $|V(L_j)|>2$, then $L_j$ is $2$-connected, so there is a path of length at least $\delta(G-B)-3$ between any pair of vertices in $L_j$.
		Hence, each inner vertex of $L_j$ is connected with $u$ by a path of length at least $\delta(G-B)-2$.
		Take a vertex $u'\in V(L_j-\{c_j\})$ that has a neighbor $v' \in V(C)$.
		By \Cref{claim:dirac_cycle_isolated_close_neighbors}, there is a vertex $v \in V(C)$ that is a neighbor of $u$ and is on a distance at least one and at most $8k$ from $v'$ on $C$.
		We obtain a $(v,v')$-chord of $C$ that is of length at least $\delta(G-B)$ but the distance between $v$ and $v'$ on $C$ is at most $8k<\delta(G-B)$.
		Hence, $C$ can be made longer in polynomial time.
		
		We now assume that $|V(L_i)|\ge 3$ for each $i \in [p]$.
		Then there is a path of length at least $\delta(G-B)-3$ for any pair of vertices in any $L_i$.
		Assume that $\bigcup_{i=1}^p N_G(V(L_i-\{c_i\})) \supseteq \{v_1, v_2\}$, where $v_1, v_2 \in V(C)$ and $v_1 \neq v_2$.
		Then either there exist $i\neq j$ such that $L_i-\{c_i\}$ contains a neighbor of $v_1$ and $L_j-\{c_j\}$ contain a neighbor of $v_2$, or there only exists $i$ such that $L_i-\{c_i\}$ contains both a neighbor of $v_1$ and a neighbor of $v_2$.
		In the latter case, we can pick $j\neq i$ and $v_3\in V(C)$ with $v_3\neq v_1$ or $v_3 \neq v_2$ such that $L_j-\{c_j\}$ contains a neighbor of $v_3$.
		Thus, without loss of generality we assume that $L_1-\{c_1\}$ contains a neighbor $u_1$ of $v_1 \in V(C)$ and $L_2-\{c_2\}$ contains a neighbor $u_2$ of $v_2 \in V(C)$ and $v_1 \neq v_2$.
		
		Observe that there exists a $(u_1,u_2)$-path in $H$ of length at least $2\delta(G-B)-6$.
		Hence, this path can be prolonged to a $(v_1, v_2)$-chord of $C$ of length at least $2\delta(G-B)-4$.
		Note that at least one of $(v_1, v_2)$-arcs of $C$ is of length at most $\delta(G-B)-\frac{k-1}{2}<2\delta(G-B)-4$, so $C$ can be made longer in polynomial time.
	\end{claimproof}

	The following claim shows that $H$ yields at least one long chord of $C$.
	
	\begin{claim}\label{claim:dirac_cycle_separable_long_path}
		Either for any $i \in [p]$ and any $u \in V(L_i - c_i)$, $v \in V(H)\setminus u$, there is a $(u,v)$-path of length at least $\delta(G-B)-2$ in $H$, or $C$ can be made longer in polynomial time.
	\end{claim}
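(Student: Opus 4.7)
\emph{Plan of proof.} The strategy is to use the structural information already accumulated in Case~3 to invoke Theorem~\ref{thm:relaxed_st_path} inside a single leaf-block of $H$, and then to stitch across the cut-vertex $c_i$ when necessary. I may assume throughout that $C$ cannot be enlarged in polynomial time, for otherwise the claim already holds. Under this assumption, the earlier analysis in Case~3, in particular inside the proof of Claim~\ref{claim:dirac_cycle_leaf_block_single_neighbor}, yields two structural facts I will exploit: every leaf-block $L_i$ of $H$ has at least three vertices and is therefore $2$-connected, and there is a single vertex $w \in V(C)$ such that every edge of $G$ from an inner vertex of a leaf-block of $H$ to $V(C)$ lands at $w$.

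Fix $i \in [p]$, $u \in V(L_i) \setminus \{c_i\}$, and $v \in V(H) \setminus \{u\}$. The first case is $v \in V(L_i)$. The plan is to apply Theorem~\ref{thm:relaxed_st_path} to the $2$-connected graph $L_i$ with $B' := (B \cap V(L_i)) \cup \{c_i\}$ and the pair $(u,v)$. The key estimate is that for every $x \in V(L_i) \setminus B'$, since $x \notin B$ and $x \ne c_i$, all $G$-neighbours of $x$ lie in $V(L_i) \cup \{w\}$ (edges leaving $L_i$ inside $H$ must go through $c_i$, and the only allowed outside-$H$ neighbour is $w$); removing from $N_G(x) \setminus B$ the two candidates $w$ and $c_i$ gives
\[
\deg_{L_i - B'}(x) \;\ge\; \deg_{G-B}(x) - 2 \;\ge\; \delta(G-B) - 2.
\]
Hence $\delta(L_i - B') \ge \delta(G-B) - 2$, and Theorem~\ref{thm:relaxed_st_path} delivers the desired $(u,v)$-path inside $L_i \subseteq H$ of length at least $\delta(G-B) - 2$.

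In the second case, $v \in V(H) \setminus V(L_i)$, I will first apply the previous case to the pair $(u, c_i)$ (both in $V(L_i)$) to get a $(u, c_i)$-path $Q_1$ inside $L_i$ of length at least $\delta(G-B) - 2$. Since $c_i$ is a cut-vertex of $H$ separating $V(L_i) \setminus \{c_i\}$ from $v$, there exists a $(c_i, v)$-path $Q_2$ contained in $H - (V(L_i) \setminus \{c_i\})$, of length at least one. The paths $Q_1$ and $Q_2$ intersect only at $c_i$, so their concatenation is a $(u,v)$-path in $H$ of length at least $\delta(G-B) - 1$. The only steps that could fail are the two structural preconditions (some $L_i$ with only two vertices, or two distinct $V(C)$-attachment points from inner leaf-block vertices); in each such situation the earlier claims already furnish a polynomial-time enlargement of $C$, so no genuine obstacle remains and the argument is essentially routine bookkeeping once Claim~\ref{claim:dirac_cycle_leaf_block_single_neighbor} is available.
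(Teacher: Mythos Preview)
Your proposal is correct and follows essentially the same route as the paper: use Claim~\ref{claim:dirac_cycle_leaf_block_single_neighbor} to conclude that each $L_i$ is $2$-connected and that inner vertices of $L_i$ lose at most two neighbours (namely $w$ and $c_i$) when passing from $G-B$ to $L_i-(B\cup\{c_i\})$, then apply Corollary~\ref{thm:relaxed_st_path} inside $L_i$, and for $v\notin V(L_i)$ concatenate through the cut-vertex $c_i$. The paper's proof is identical in structure but stated more tersely; your version makes the degree count and the $2$-connectedness of $L_i$ explicit, which is a harmless (and arguably clearer) elaboration.
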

		
	\begin{claimproof}
		Take $i \in [p]$.
		From \Cref{claim:dirac_cycle_leaf_block_single_neighbor} follows that $\delta(L_i-(B\cup\{c_i\}))\ge \delta(G-B)-2$, as each vertex in $V(L_i-c_i)$ has at most one neighbour outside $L_i$.
		By \Cref{thm:relaxed_st_path}, there is a path of length at least $\delta(G-B)-2$ between any pair of vertices inside $L_i$.
		
		Take $u \in V(L_i-c_i)$ and $v \in V(H)\setminus u$.
		If $v \in V(L_i)$, then we are done.
		If $v$ is outside $L_i$, then a path between $u$ and $v$ should go through $c_i$.
		Since $u\neq c_i$, there is a $(u,c_i)$-path of length at least $\delta(G-B)-2$ inside $L_i$.
		Combine this path with any $(c_i,v)$-path outside $L_i$ in $H$ to obtain the required $(u,v)$-path.
	\end{claimproof}
		
	\medskip\noindent\textbf{Case 4:} \emph{At least one component $H$   of $G-V(C)$ consists of one vertex.} In this case we show that either we can enlarge $C$ in polynomial time, or construct a vertex cover of $G-B$ of size at most $\delta(G-B)+2k$.

Let $V(H)=\{h\}$ for some vertex $h\in V(G-B)$.
	All neighbors of $h$ are from $V(C)$, so $h$ has at least $\delta(G-B)$ neighbors in $V(C)$.
	We first claim that if $G-V(C)$ contains both an isolated vertex and some non-isolated connected component, then we can make $C$ longer.
	
	\begin{claim}\label{claim:dirac_cycle_isolated_and_non_isolated}
		Let $H_1$ and $H_2$ be two connected components in $G-V(C)$ with $V(H_i)\not\subseteq B$.
		If $|V(H_1)|=1$ and $|V(H_{2})|\neq 1$, then $C$ can be made longer in polynomial time.
	\end{claim}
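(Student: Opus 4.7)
The strategy is to apply \Cref{claim:dirac_cycle_isolated_and_chord} with the isolated vertex $h$ (where $V(H_1)=\{h\}$, so $h\notin B$ and $|N_G(h)\cap V(C)|\geq \delta(G-B)$) and a long chord of $C$ produced from the component $H_2$. Since by assumption $k\leq \delta(G-B)/24$, any chord of length at least $\delta(G-B)$ is automatically of length at least $24k>16k$, so it will be enough to build a chord of length $\geq \delta(G-B)$ through $H_2$ whose vertex set is disjoint from $\{h\}$. The latter is free since $h\notin V(H_2)\cup V(C)$ and the chord will be built entirely from $V(H_2)\cup V(C)$.

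First, I would dispose of the case $|V(H_2)|=2$ by directly invoking Case~1 (equivalently, \Cref{claim:dirac_cycle_two_connected_isolated} applied to the two vertices of $H_2$), which enlarges $C$ in polynomial time. So from here on assume $|V(H_2)|\geq 3$, and then $H_2$ is handled either by Case~2 (if $H_2$ is $2$-connected) or by Case~3 (if $H_2$ has a cut-vertex); in either case we may assume that the polynomial-time enlargement procedures of those cases do not apply, and hence the structural conclusions of the relevant claims hold for $H_2$.

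If $H_2$ is $2$-connected (Case~2), \Cref{claim:dirac_cycle_matching} furnishes a matching $\{v_1u_1,v_2u_2\}$ of size two between $V(H_2)$ and $V(C)$, with $v_1,v_2\in V(C)$ and $u_1,u_2\in V(H_2)$; by \Cref{claim:dirac_cycle_matching_degree}, there is a $(u_1,u_2)$-path in $H_2$ of length at least $\delta(G-B)-2$. Prepending $v_1u_1$ and appending $u_2v_2$ yields a $(v_1,v_2)$-chord of $C$ of length at least $\delta(G-B)$, contained in $V(H_2)\cup\{v_1,v_2\}$ and hence avoiding $h$.

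If $H_2$ has a cut-vertex (Case~3), let $L_1,\ldots,L_p$ be its leaf-blocks with cut-vertices $c_1,\ldots,c_p$. \Cref{claim:dirac_cycle_leaf_block_single_neighbor} tells us that the union $\bigcup_i N_G(V(L_i-c_i))\cap V(C)$ consists of a single vertex, call it $v^*$. Since $G$ is $2$-connected and $H_2$ is a component of $G-V(C)$, $|N_G(V(H_2))\cap V(C)|\geq 2$, so there exists $v'\in N_G(V(H_2))\cap V(C)$ with $v'\neq v^*$. Any vertex $u'\in V(H_2)$ adjacent to $v'$ cannot be an inner leaf-block vertex (those are adjacent on $C$ only to $v^*$). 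Choosing any inner leaf-block vertex $u^*\in V(L_1-c_1)$ (adjacent to $v^*$), we have $u^*\neq u'$, and \Cref{claim:dirac_cycle_separable_long_path} gives a $(u^*,u')$-path in $H_2$ of length at least $\delta(G-B)-2$. Extending by $v^*u^*$ and $u'v'$ yields a $(v^*,v')$-chord of $C$ of length at least $\delta(G-B)$, again avoiding $h$. In both situations \Cref{claim:dirac_cycle_isolated_and_chord} now enlarges $C$ in polynomial time.

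The only mildly delicate point is ensuring in Case~3 that the two endpoints of the chord on $C$ are \emph{distinct} vertices; this is what forces the use of $2$-connectivity of $G$ together with the "single outside neighbor" conclusion of \Cref{claim:dirac_cycle_leaf_block_single_neighbor}. Every other step reduces to invoking a structural claim already established earlier in the section or the parameter bound $\delta(G-B)\geq 24k$ supplied by the hypotheses.
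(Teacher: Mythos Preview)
Your proposal is correct and follows essentially the same approach as the paper's proof: dispose of $|V(H_2)|=2$ via Case~1, then for $|V(H_2)|\ge 3$ build a chord of $C$ through $H_2$ of length at least $\delta(G-B)-2>16k$ using \Cref{claim:dirac_cycle_matching_degree} (in the $2$-connected case) or \Cref{claim:dirac_cycle_separable_long_path} (in the separable case), and feed it to \Cref{claim:dirac_cycle_isolated_and_chord}. The paper's proof is a two-line sketch; you have simply filled in the details of how the two chord endpoints on $C$ arise and why they are distinct, and verified explicitly that the chord avoids $h$.
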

	\begin{claimproof}
		We can assume that $|V(H_2)|\ge 3$, so $V(H_2)$ is either $2$-connected or contains a cut-vertex.
		In both of the cases, by \Cref{claim:dirac_cycle_matching_degree} and \Cref{claim:dirac_cycle_separable_long_path},  we can find a chord of $C$ of length at least $\delta(G-B)-2> 16k$ that passes through $H_2$.
		By \Cref{claim:dirac_cycle_isolated_and_chord} , the single vertex of $H_1$ and the chord passing through $H_2$ help making $C$ longer in polynomial time.	
	\end{claimproof}

By \Cref{claim:dirac_cycle_isolated_and_non_isolated}, we can assume that if there is one connected component of $G-V(C)$ which is an isolated vertex, then all other components are also isolated vertices. 

	Our next step is to  to show that if an isolated vertex exists, then we can find a large independent set in $C$ that has no neighbors outside $C$.
	For an isolated vertex $h$ in $G-V(C)$, we define the set of its \emph{$101$-neighbors}.	
	A vertex $v \in V(C)$ is a $101$-neighbor of $h$, if it is not a  neighbor of $h$, i.e., $v \notin N_G(h)$, but both  neighbors of $v$ in $C$ are also the neighbors of $h$.
	In other words, the set of all $101$-neighbors of $h$ is the set of all isolated vertices in $C-N_G(h)$.
	We now claim that if $C$ cannot be enlarged, then $101$-neighbors of a vertex $h$  form an independent set in $C$ and do not have neighbors in $V(G)\setminus V(C)$.
	
	\begin{claim}\label{claim:vczeroone}
		Let $h\notin B$ be an isolated vertex in $G-V(C)$.
		If at least one $101$-neighbor of $h$ on $C$ is not in $B$ and has at least one neighbor in $V(G-V(C)-B)$ or two $101$-neighbors of $h$ are connected by an edge, then $C$ can be made longer in polynomial time.
	\end{claim}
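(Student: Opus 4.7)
The plan is to handle the two disjuncts in the hypothesis separately by direct cycle surgery, reusing Claim~\ref{claim:dirac_cycle_isolated_and_non_isolated} (to assume every component of $G-V(C)$ is an isolated vertex) and Claim~\ref{claim:dirac_cycle_two_connected_isolated} at the end. In both cases I will produce a concrete cycle strictly longer than $C$ in polynomial time.

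For the disjunct that two $101$-neighbors $v_1,v_2$ of $h$ are joined by an edge, I would build the enlargement directly. Write the $C$-neighbors of $v_i$ as $u_i^L,u_i^R$, so that $u_i^Lh, h u_i^R\in E(G)$ by the $101$-property. Let $A_1$ be the arc of $C$ from $u_1^R$ to $u_2^L$ that avoids $v_1,v_2$, and $A_2$ the complementary arc from $u_2^R$ to $u_1^L$. I would form
\[
C^\ast = u_1^L,\, v_1,\, v_2,\, u_2^L,\, A_1^{-1},\, u_1^R,\, h,\, u_2^R,\, A_2,
\]
using the $C$-edges $u_1^Lv_1$ and $v_2u_2^L$, the hypothesized edge $v_1v_2$, the edges $u_1^Rh$ and $hu_2^R$ supplied by the $101$-property, and the edges of $A_1,A_2$. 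A routine bookkeeping (noting that $v_2\notin\{u_1^L,u_1^R\}$ because $v_2\notin N(h)$, and symmetrically) confirms that every vertex of $V(C)\cup\{h\}$ is visited exactly once, so $|C^\ast|=|C|+1$.

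For the disjunct that some $101$-neighbor $v\notin B$ of $h$ has a neighbor $w\in V(G)\setminus V(C)\setminus B$, let $u_L,u_R$ be the $C$-neighbors of $v$; by hypothesis $u_Lh,hu_R\in E(G)$, $vh\notin E(G)$, $vw\in E(G)$, and by Claim~\ref{claim:dirac_cycle_isolated_and_non_isolated} the vertex $w$ is isolated in $G-V(C)$, giving $|N_G(w)\cap V(C)|\geq \delta(G-B)$. I would first dispose of the easy sub-case: if $u_Lw\in E(G)$, then $u_L,v,w$ is a triangle and replacing the $C$-edge $u_Lv$ by the path $u_L,w,v$ yields a cycle of length $|C|+1$; symmetrically for $u_Rw\in E(G)$. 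Otherwise I would form $C'$ from $C$ by swapping $v$ for $h$, i.e.\ replacing the subpath $u_L,v,u_R$ by $u_L,h,u_R$. Then $|C'|=|C|$ and both $v$ and $w$ lie off $C'$ joined by the edge $vw\in E(G-V(C'))$.

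The remaining obstacle is verifying the degree hypothesis of Claim~\ref{claim:dirac_cycle_two_connected_isolated} applied to $C'$ and $\{v,w\}$. Since $v$ absorbs $w$'s only lost $C$-neighbor, $|N_G(w)\cap V(C')|\geq \delta(G-B)-1$; and since $vh\notin E(G)$, $|N_G(v)\cap V(C')|=|N_G(v)\cap V(C)|\geq \delta(G-B)-r$, where $r=|N_G(v)\setminus V(C)|$. If $r\leq 1$, both thresholds are met and Claim~\ref{claim:dirac_cycle_two_connected_isolated} immediately enlarges $C'$, hence $C$. The hard sub-case is $r\geq 2$: then $v$ has another off-$C$ neighbor $w'$, and by Claim~\ref{claim:dirac_cycle_isolated_and_non_isolated} each such $w'$ is isolated, so the component of $v$ in $G-V(C')$ is separable with $v$ as cut-vertex and leaves containing vertices like $w$ with at least $\delta(G-B)-1$ neighbors in $V(C')$. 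I plan to rerun the Case~3 reasoning of Lemma~\ref{lemma:main_cycle_lemma} on $C'$: the analogue of Claim~\ref{claim:dirac_cycle_leaf_block_single_neighbor} applied to this separable component shows that the multiple leaf-block attachments to $V(C')$ either yield a $C'$-chord of length at least $16k$ (enlarging $C'$ via Claim~\ref{claim:dirac_cycle_isolated_and_chord} applied to $h$, which is now the isolated vertex playing the role of $h$ in that claim for $C'$) or produce the desired $C'$-enlargement directly. Every surgery uses a bounded number of edges or a polynomial-time subroutine, so the overall procedure is polynomial-time.
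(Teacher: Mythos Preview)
Your treatment of the first disjunct (two adjacent $101$-neighbors) is correct and is exactly the paper's construction. Likewise, in the second disjunct, the swap $v\leftrightarrow h$ and the sub-case $r\le 1$ (apply Claim~\ref{claim:dirac_cycle_two_connected_isolated} to $v$ and $w$ on $C'$) match the paper. The extra ``easy sub-case'' $u_Lw\in E(G)$ is a valid shortcut but unnecessary.

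The sub-case $r\ge 2$, however, contains a real error. After you swap $v$ out and $h$ in, the vertex $h$ lies \emph{on} $C'$; it is not an isolated vertex of $G-V(C')$, so you cannot invoke Claim~\ref{claim:dirac_cycle_isolated_and_chord} ``applied to $h$'' for $C'$. The proposed fallback---rerunning the Case~3 machinery of Lemma~\ref{lemma:main_cycle_lemma} on $C'$---is also not justified as written, since the standing normalisations (the $B$-refinement assumption and the degree bounds used inside Claim~\ref{claim:dirac_cycle_leaf_block_single_neighbor}) were established for components of $G-V(C)$, not of $G-V(C')$.

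The paper avoids all of this with a one-line observation you missed: when $v$ has a second off-$C$ neighbor $w'$, that vertex (like $w$) was an isolated vertex of $G-V(C)$ and hence has at least $\delta(G-B)$ neighbors in $V(C)$, so at least $\delta(G-B)-1$ in $V(C')$ (it loses only $v$). After the swap, $w$ and $w'$ are connected in $G-V(C')$ by the path $w,v,w'$, so Claim~\ref{claim:dirac_cycle_two_connected_isolated} applies directly to the pair $(w,w')$. The point is to feed $w$ and $w'$ into that claim rather than $v$ and $w$; then the $C'$-degree of $v$ is irrelevant and no Case~3 detour is needed.
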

	\begin{claimproof}
		Suppose first that two $101$-neighbors of $h$, say $v_1, v_2 \in V(C)$, are connected by an edge in $G$.
		Let the neighbors of $v_i$ on $C$ be $u_i$ and $w_i$ for $i \in \{1,2\}$.
		Without loss of generality, we assume that the six vertices appear in the order $u_1, v_1, w_1, u_2, v_2, w_2$ when following $C$, and possibly $w_1=u_2$ or $w_2=u_1$.
		Then construct a new cycle as following: $u_1 \rightarrow v_1 \rightarrow v_2 \rightarrow u_2 \rightsquigarrow w_1 \rightarrow h \rightarrow w_2 \rightsquigarrow u_1$, where $\rightarrow$ corresponds to following a single edge in $G$, while $\rightsquigarrow$ corresponds to following an arc of $C$. See \Cref{fig:vc_101}.
		Note that the vertex set of the new cycle is $V(C)\cup\{h\}$, so $C$ is enlarged in this case.
		
\begin{figure}
\begin{center}
\includegraphics[scale=.25]{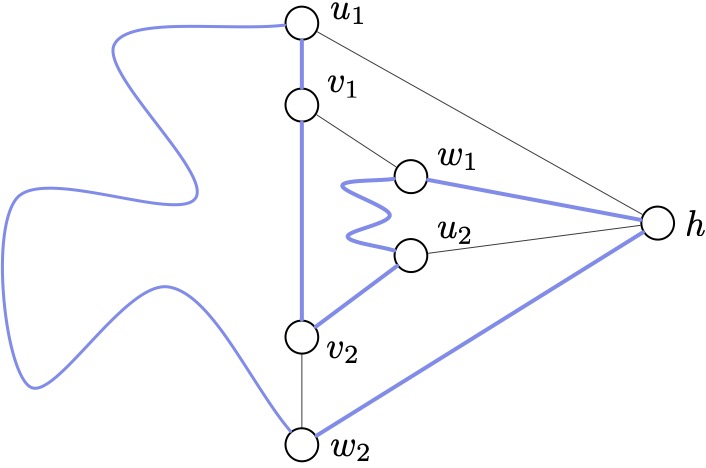}
\caption{Rerouting through adjacent  101-neighbors.}\label{fig:vc_101}
\end{center}
\end{figure}		
		
		Now suppose that a $101$-neighbor of $h$, say $v \in V(C)$,  has a neighbor outside $V(C)$, say $h' \in V(G-V(C)-B)$.
		By \Cref{claim:dirac_cycle_isolated_and_non_isolated}, we can assume that all vertices in $G-V(C)$ are isolated.
		Assume that $h'$ is the only neighbor of $v$ in $V(G-V(C))$.
		Then replace $v$ with $h$ in $C$, so $v$ becomes a vertex outside $C$.
		Then $v$ and $h'$ form a connected component of size two in $G-V(C)$.
		Since $v \notin B$, $v$ has at least $\delta(G-B)-1$ neighbors in $V(C)$.
		By \Cref{claim:dirac_cycle_two_connected_isolated}, $C$ can be made longer in polynomial time.

		If $h'$ is not the only neighbor of $v$, then after the replacement $v$ connects two vertices with at least $\delta(G-B)-1$ neighbors in $V(C)$.
		We can again apply \Cref{claim:dirac_cycle_two_connected_isolated} and make $C$ longer.
	\end{claimproof}

	\medskip\noindent\textbf{Constructing vertex cover.}
	The construction of vertex cover of $G-B$ of size at most $\delta(G-B)+2k$ is possible when there is at least one isolated vertex in $G-V(C)$.
	Take an isolated vertex $h$ in $G-V(C)-B$.
	Denote by $a$ the number of its $101$-neighbors.
	The neighbors of $h$ on $C$ split $C$ into arcs.
	Since each $101$-neighbor corresponds to an arc of length two, and all other arcs are of length at least three, we obtain that $2a+3(\delta(G-B)-a)\ge |V(C)|$, so $a\ge 3\delta(G-B)-|V(C)|>\delta(G-B)-k$.
	Now denote by $S$ the set of all $101$-neighbors of an isolated vertex in $G-V(C)$, so $|S|=a$.
By 	\Cref{claim:vczeroone}, $(V(G)\setminus V(C)) \cup (S\setminus B)$ is an independent set in $G$, so $V(C)\setminus (S \cup B)$ is a vertex cover of $G-B$.
	The size of this vertex cover is at most $(2\delta(G-B)+k-1)-(\delta(G-B)-k+1)<\delta(G-B)+2k$. Finally, the 
	  the desired vertex cover of $G-B$ can be trivially found in polynomial time by taking an isolated component $h$ and constructing the set of its $101$-neighbors.
	
	\medskip\noindent\textbf{Constructing \cyclebananadec.}
	When no isolated vertex is presented in $G-V(C)-B$, then $G-V(C)$ consists of non-empty connected components, apart from components that are completely contained in $B$.
	We show how to construct a \cyclebananadec in this case.
	Before proceeding with claims, it is convenient to define the following notion agreeing with the definition of \cyclebananadec{s}.
	
	\begin{definition}[Dirac layouts]
		We say that a vertex set $X \subseteq V(C)$ is in \emph{Dirac layout} on $C$, if the vertices of $X$ split $C$ into arcs such that two of these arcs are of length at least $\delta(G-B)$.
	\end{definition}

	In what follows, we show that neighbors of $V(G-V(C))$ on $C$ are in Dirac layout, unless $C$ can be made longer.
	We start showing this first for every connected component in $G-V(C)$.
		
	\begin{claim}\label{claim:dirac_cycle_component_dirac_layout}
		Let $H$ be a connected component in $G-V(C)$ with $|V(H)|\ge 3$ and $V(H)\not\subseteq B$.
		If $N_G(V(H))$ is not in Dirac layout on $C$, then $C$ can be made longer in polynomial time.
	\end{claim}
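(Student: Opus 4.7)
The plan is to argue by contrapositive: assuming that $C$ cannot be enlarged in polynomial time, I will show that $X:=N_G(V(H))\cap V(C)$ must be in Dirac layout on $C$. Since $G$ is $2$-connected and $V(H)$ is a connected component of $G-V(C)$ with $|V(H)|\ge 3$, Menger's theorem applied to $V(H)$ and $V(G)\setminus V(H)$ forces $|X|\ge 2$ and also yields two vertex-disjoint paths from $V(H)$ to $V(C)$, giving distinct $u_1,u_2\in V(H)$ with $u_iv_i\in E(G)$ for distinct $v_1,v_2\in X$.

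The core engine of the proof is the construction, for suitable pairs $v_1,v_2\in X$, of a chord of $C$ through $H$ of length at least $\delta(G-B)$: pick distinct $u_1,u_2\in V(H)$ with $u_iv_i\in E(G)$, take a $(u_1,u_2)$-path inside $H$ of length at least $\delta(G-B)-2$, and prepend/append the edges $v_1u_1$ and $u_2v_2$. The internal path is provided by Claim~\ref{claim:dirac_cycle_matching_degree} when $H$ is $2$-connected, and by Claim~\ref{claim:dirac_cycle_separable_long_path} when $H$ is separable, provided at least one of $u_1,u_2$ is an inner vertex of a leaf-block (and not its cut-vertex). By Claim~\ref{claim:dirac_cycle_leaf_block_single_neighbor}, all inner vertices of all leaf-blocks of $H$ share a single common neighbour $v^*\in V(C)$; when $v^*\in\{v_1,v_2\}$, the corresponding $u_i$ can be taken to be the neighbouring inner vertex, allowing Claim~\ref{claim:dirac_cycle_separable_long_path} to apply directly.

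Assuming now that $X$ is not in Dirac layout, at most one of the $|X|$ arcs of $C$ determined by $X$ has length at least $\delta(G-B)$. Since $|X|\ge 2$, some cyclically consecutive pair $v_1,v_2\in X$ bounds an arc of $C$ of length at most $\delta(G-B)-1$. Constructing the chord of length at least $\delta(G-B)$ through $H$ between $v_1$ and $v_2$ as above, and replacing the short arc by this chord, produces a simple cycle strictly longer than $C$, contradicting the assumption. All ingredients (Menger, the long internal path, the edge lookups) are polynomial-time, so the enlarged cycle is computed in polynomial time.

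The main obstacle is the separable subcase when $v^*\notin\{v_1,v_2\}$: then no inner leaf-block vertex of $H$ is adjacent to $v_1$ or $v_2$, so Claim~\ref{claim:dirac_cycle_separable_long_path} cannot be invoked directly on the natural choice of $u_1,u_2$ (both lying in non-leaf-block parts or at cut-vertices of leaf-blocks). I plan to resolve this by exploiting $\delta(L_i-(B\cup\{c_i\}))\ge\delta(G-B)-2$, which was established during Case~3 together with Claim~\ref{claim:dirac_cycle_leaf_block_single_neighbor}, and applying Corollary~\ref{thm:relaxed_st_path} inside each leaf-block to find a long $(u_1,u_2)$-path in $H$ that enters some leaf-block through its cut-vertex, traverses a $(c_i,c_i)$-detour of length at least $\delta(G-B)-2$ inside it through an inner vertex, and returns; the resulting concatenated chord still has length at least $\delta(G-B)$ and remains vertex-disjoint from $V(C)\setminus\{v_1,v_2\}$ since all of $H$ lies outside $V(C)$.
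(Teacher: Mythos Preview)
Your contrapositive framing is fine, but the argument has a real gap: you select an \emph{arbitrary} short arc of $C$ between consecutive vertices $v_1,v_2\in X$ and then assert that a chord of length at least $\delta(G-B)$ through $H$ can be built between $v_1$ and $v_2$. This is not guaranteed. In the $2$-connected case, nothing prevents $N_G(v_1)\cap V(H)=N_G(v_2)\cap V(H)=\{u\}$ for a single $u$, so no distinct $u_1,u_2$ exist and Claim~\ref{claim:dirac_cycle_matching_degree} gives you nothing. In the separable case with $v^*\notin\{v_1,v_2\}$, both $u_1,u_2$ lie outside the inner vertices of every leaf-block, and then there need not be any long simple $(u_1,u_2)$-path in $H$: all the length sits inside the leaf-blocks, and your proposed ``$(c_i,c_i)$-detour'' is not a simple path --- you would have to enter and leave $L_i$ through the same cut-vertex $c_i$, revisiting it. So your resolution of the obstacle fails.

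The paper avoids this by arguing the other way round. It defines an assignment $\sigma$ on the vertices of $X$ (neighbour sets in $H$ in the $2$-connected case; ``inner leaf-block neighbour or not'' in the separable case) and observes that, going cyclically around $C$, there are always at least \emph{two} consecutive pairs $(v_j,v_{j+1})$ where distinct suitable $h_j,h_{j+1}$ can be chosen --- either because some $|\sigma(v_i)|\ge 2$, or because the values are not all equal and hence change at least twice on the cycle. For each such pair the chord construction works, forcing that arc to have length at least $\delta(G-B)$; otherwise $C$ enlarges. Two such arcs being long is exactly the Dirac layout. Your argument can be repaired along the same lines: instead of picking any short arc, show that at least two arcs are ``chordable'' and hence must both be long.
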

	\begin{claimproof}
		Denote $S=N_G(V(H))$.
		Note that $S\subseteq V(C)$.
		We know that $S$ splits $C$ into $|S|$ arcs.
		Denote $S=\{v_1,v_2,\ldots, v_t\}$, where $v_1,v_2,\ldots,v_t$ are the vertices of $S$ on $C$ in the order when following $C$ in some direction.
		We also assume that $v_{t+1}=v_1$.
		
		Assume first that $H$ is $2$-connected.
		Then assign to each vertex $v_i \in S$ a set of its neighbors in $H$.
		That is, make an assignment $\sigma: S\to 2^{V(H)}$ with $\sigma(v_i)=N_G(v_i)\cap V(H)$.
		As $G$ is $2$-connected, $|\bigcup_{i=1}^t \sigma(v_i)|\ge 2$.
		If for at least one $i\in [t]$ holds $|\sigma(v_i)|=2$, then there exist at least two $j \in [t]$ with $\max\{|\sigma(v_j)|,|\sigma(v_{j+1})|\}\ge 2$.
		For each such $j$, we can pick $h_j \in \sigma(v_j)$ and $h_{j+1} \in \sigma(v_{j+1})$ with $h_j\neq h_{j+1}$.
		Since there is a path of length at least $\delta(G-B)-2$ in $H$, the length of the arc between $s_j$ and $s_{j+1}$ should be at least $\delta(G-B)$.
		Otherwise we can make $C$ longer.
		
		Now consider that $|\sigma(v_i)|=1$ for each $i \in [t]$.
		But not all values of $\sigma(v_i)$ are equal, since their union is of size at least two.
		Then there exist at least two $j \in [t]$ with $\sigma(v_j)\neq \sigma(v_{j+1})$.
		Hence, we can again assign distinct $h_j$ and $h_{j+1}$ and obtain that the $(v_j,v_{j+1})$-arc should be of length at least $\delta(G-B)$.
		
		It is left to consider the case when $H$ is not $2$-connected.
		We again make an assignment $\sigma$, but now this assignment is slightly different and is denoted $\sigma:S\to 2^{\{0,1\}}$.
		If a vertex $v_i$ has a neighbor in $H$ that is an inner vertex of a leaf-block of $H$, then $1 \in \sigma(v_i)$.
		If $v_i$ has a neighbor in $H$ that is not an inner vertex of a leaf-block, put $0 \in \sigma(v_i)$.
		Thus, $\sigma(v_i)$ denotes the set of types of neighbors that $v_i$ has in $V(H)$.
		Note that $\bigcup_{i=1}^t\sigma(v_i)=\{0,1\}$ by \Cref{claim:dirac_cycle_leaf_block_single_neighbor} and $2$-connectivity of $G$.
		Analogously to the $2$-connected case, there are two $j \in [t]$ with $0\in\sigma(v_j)$ and $1\in\sigma(v_{j+1})$ or vice versa.
		Since there is a path of length at least $\delta(G-B)-2$ between any inner leaf-block vertex and any other vertex, we obtain that the arcs between $v_j$ and $v_{j+1}$ should be of length at least $\delta(G-B)$.
	\end{claimproof}

	\begin{claim}
		Assume that $G-V(C)$ contains no isolated vertex.
		Let $X$ be the union of vertex sets of all connected components $H$ in $G-V(C)$ with $V(H)\not\subseteq B$.
		If $N_G(X)$ is not in Dirac layout on $C$, then $C$ can be made longer in polynomial time.
	\end{claim}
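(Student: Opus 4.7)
The plan is to extend the argument of the previous claim to the union $N_G(X)$. Suppose for contradiction that $N_G(X)$ is not in Dirac layout on $C$ but $C$ cannot be enlarged. By Case~1 and the assumption that $G-V(C)$ has no isolated vertex, every connected component $H$ of $G-V(C)$ with $V(H)\not\subseteq B$ satisfies $|V(H)|\geq 3$, and by the previous claim each $N_G(V(H))$ is individually in Dirac layout, with two long arcs of length at least $\delta(G-B)$. Since $N_G(X)$ has at most one long arc, for any fixed component $H_1$ at least one of its long arcs $L_1^a,L_1^b$ must be broken by a vertex of $N_G(X)\setminus N_G(V(H_1))$; say $v\in N_G(V(H_j))\cap\operatorname{int}(L_1^a)$ for some $j\neq 1$.

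First I would handle the ``crossing'' case: $N_G(V(H_j))\setminus L_1^a\neq\emptyset$. Then I take $v'\in N_G(V(H_j))\setminus L_1^a$ and construct, via \Cref{claim:dirac_cycle_matching_degree} or \Cref{claim:dirac_cycle_separable_long_path}, a long chord $P_j$ through $H_j$ from $v$ to $v'$ of length at least $\delta(G-B)$; simultaneously I take the long chord $P_1$ through $H_1$ between the endpoints $u_1,u_2$ of $L_1^a$, also of length at least $\delta(G-B)$. Because $v$ lies strictly between $u_1$ and $u_2$ on $L_1^a$ while $v'$ lies outside $L_1^a$, the two chords cross graphically in the sense of the proof of \Cref{thm:cyclebanana}. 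Combining them with the pair of opposite arcs of $C$ (selected to have total length at least $|V(C)|/2\geq\delta(G-B)$ by the pigeonhole principle) yields a cycle of length at least $3\delta(G-B)>2\delta(G-B)+k>|V(C)|$, contradicting the assumption that $C$ cannot be enlarged.

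The main obstacle is the ``nested'' case $N_G(V(H_j))\subseteq L_1^a$. Here $|L_1^a|\le |V(C)|-|L_1^b|<\delta(G-B)+k$, so the intervals in which $N_G(V(H_j))$ can live are narrow. One of $H_j$'s long arcs must wrap around $C\setminus L_1^a$, so the other long arc $L_j^b$ sits entirely inside $L_1^a$. When $|N_G(V(H_j))|\geq 3$ there are multiple arcs of $N_G(V(H_j))$ inside $L_1^a$; all of them except $L_j^b$ fit inside $L_1^a\setminus L_j^b$, whose length is at most $|L_1^a|-\delta(G-B)<k$. Thus some pair of consecutive vertices of $N_G(V(H_j))$ in $L_1^a$ are at distance less than $k<\delta(G-B)$, and the corresponding short $C$-arc can be replaced by the long chord through $H_j$ to enlarge $C$.

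The delicate subcase is $|N_G(V(H_j))|=2$, say $N_G(V(H_j))=\{w_1,w_2\}\subseteq L_1^a$, with the inner arc between $w_1,w_2$ on $L_1^a$ being $L_j^b$ of length at least $\delta(G-B)$. I would combine both long chords: writing $L_1^a$ as $v_1,w_1,w_2,v_2$ (up to relabelling) with side-lengths $a,b,c$ where $b=|L_j^b|$, I form the cycle $P_1\cdot(v_2\!\to\!w_2)\cdot P_j\cdot(w_1\!\to\!v_1)$ using the two short segments of $L_1^a$ of total length $a+c=|L_1^a|-b<k$. This cycle has length $|P_1|+|P_j|+|L_1^a|-b$, which must be compared against $|V(C)|<2\delta(G-B)+k$; to strictly enlarge $C$, I expect to need that at least one of $|P_1|,|P_j|$ exceeds $\delta(G-B)$ by the appropriate slack, extracted from the fact that one of $H_1$ or $H_j$ must contain a cycle of length strictly larger than $2\delta(G-B)-4$ because such a component has at least $\delta(G-B)-1$ vertices disjoint from $C$. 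If this direct rerouting fails, I would instead fall back to an alternative arrangement using an outer-arc rerouting of $w_1,w_2$ through $P_j$ combined with a $B$-leaf-block/crossing argument; I expect this subcase to be the main technical difficulty, analogous to the delicate ``nested chord'' analysis in the proof of \Cref{lemma:dirac_cycle_edge_of_banana}.
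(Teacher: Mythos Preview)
Your crossing case is fine and essentially matches the paper. The problem is the nested case, where you try to enlarge $C$ by an explicit construction; this part has genuine gaps.

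First, in the subcase $|N_G(V(H_j))|\ge 3$, you say that some consecutive pair $v_i,v_{i+1}\in N_G(V(H_j))$ lies at distance $<k$ on $C$ and then ``replace the short arc by the long chord through $H_j$''. But Claims~\ref{claim:dirac_cycle_matching_degree} and~\ref{claim:dirac_cycle_separable_long_path} only give a path of length $\ge\delta(G-B)-2$ between two \emph{distinct} vertices of $H_j$; nothing guarantees that $v_i$ and $v_{i+1}$ have distinct neighbours in $H_j$. With matching size two it is perfectly possible that $v_i,v_{i+1}$ are both adjacent only to the same $h\in V(H_j)$, so the detour $v_ihv_{i+1}$ has length $2$, not $\delta(G-B)$. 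Second, in the subcase $|N_G(V(H_j))|=2$, your cycle $P_1\cdot(v_2\!\to\!w_2)\cdot P_j\cdot(w_1\!\to\!v_1)$ has length at least $2\delta(G-B)+(|L_1^a|-b)$, but $|L_1^a|-b$ can be as small as $1$ while $|V(C)|$ can be $2\delta(G-B)+k-1$; so for $k\ge 3$ this need not exceed $|V(C)|$. You acknowledge this and appeal to extra slack in $|P_1|$ or $|P_j|$, but no such slack is available in general.

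The paper sidesteps the nested case entirely by a global rather than pairwise argument. For every component $H$ it records only the four extreme boundary vertices $u_1,u_2,v_1,v_2$ of the two short arcs covering $N_G(V(H))$ and one chord of length $\ge\delta(G-B)$ joining an endpoint of one short arc to an endpoint of the other. Let $S$ be the union of these quadruples and consider the whole family of chords. If any two cross graphically they come from different components and your crossing argument enlarges $C$. If none cross, the chords form a laminar family; since each chord also has both its $C$-arcs of length $\ge\delta(G-B)$ (otherwise replace the short arc and enlarge $C$), a one-line induction on the laminar structure shows that $S$ has two gaps of length $\ge\delta(G-B)$, i.e.\ $S$---and hence $N_G(X)$---is already in Dirac layout. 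So in the non-crossing case you do not enlarge $C$ at all; you directly contradict the hypothesis that $N_G(X)$ is not in Dirac layout.
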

	\begin{claimproof}
		Take a connected component $H$ in $G-V(C)$ with $V(H)\not\subseteq B$.
		By \Cref{claim:dirac_cycle_component_dirac_layout}, we assume that $N_G(V(H))$ is in Dirac layout on $C$.
		Hence, the vertices in $N_G(V(H))$ can be covered by two arcs of $C$ of total length at most $|V(C)|-2\delta(G-B)$ and the distance between these arcs on $C$ is at least $\delta(G-B)$.
		Let $u_1,u_2$ and $v_2,v_1$ be the endpoints of these arcs.
		Among all possible ways to choose the arcs we choose the way when the total length of the $(u_1,u_2)$-arc and $(v_2,v_1)$-arc is the minimum possible.
		Hence, $u_1, u_2, v_1, v_2 \in N_G(V(H))$ and the $(u_1,u_2)$-arc and the $(v_2,v_1)$-arc together contain all neighbors of $N_G(V(H))$ on $C$.
		Note that these arcs can be of zero length.
		For example, if $|N_G(V(H))|=2$, then $u_1=u_2$ and $v_1=v_2$, so $N_G(V(H))=\{u_1,v_1\}$.

		We also assume that the order of the vertices on $C$ is $u_1, u_2, v_2, v_1$ when following $C$ in some direction.
		Thus, the chords between $u_1$ and $v_1$ and between $u_2$ and $v_2$ do not intersect graphically but can only coincide in one or two endpoints.
		From the proof of \Cref{claim:dirac_cycle_component_dirac_layout} follows that $H$ yields a $(u_1,v_1)$-chord or a $(u_2,v_2)$-chord of $C$ of length at least $\delta(G-B)$.
		
		Let $S$ be the union of the sets $\{u_1,u_2,v_1,v_2\}$ among all connected components of $G-V(C)$.
		It is easy to see that $S$ is in Dirac layout on $C$ if and only if $S$ is on Dirac layout on $C$.
		It is left to show that $S$ is in Dirac layout on $C$ or $C$ can be made longer in polynomial time.
		
		Consider the vertices in $S$ on $C$.
		They are connected by chords of length at least $\delta(G-B)$ yielded by their connected components.
		If there is a pair of these chords that intersect graphically, then the chords in this pair correspond to distinct connected components of $G-V(C)$.
		Hence, if such a pair exists, we can enlarge $C$ as we did in the proof of \Cref{lemma:dirac_cycle_edge_of_banana}.
		We can now assume that no two chords of $S$ intersect graphically.
		But we also know that no chord splits $C$ into two arcs such that one of them is shorter than $\delta(G-B)$.
		Hence, there are two arcs of length at least $\delta(G-B)$ that do not contain any vertex in $S$ as inner vertex.
		Then $S$ is in Dirac layout on $C$ by definition.
	\end{claimproof}

	The claim shows that $N_G(X)$ can be covered by two arcs of $C$ of total length at most $k-1$ at a distance at least $\delta(G-B)$ between them.
	Let $P_1$ and $P_2$ be these two arcs chosen in the unique way that minimizes their total length.
	It is left for us to show that $P_1$ and $P_2$ induce a \cyclebanana for $C$ and $B$ in $G$.
	
	The first property from the defition of \cyclebanana is satisfied by the way $P_1$ and $P_2$ are constructed.
	It is easy to verify the second property for each connected component in $G-V(C)$: $2$-connected components form \ref{enum:cycle_tunnel_path_bic}-type components and components containing cut vertices form \ref{enum:cycle_tunnel_path_cut_left} and \ref{enum:cycle_tunnel_path_cut_right}-type components of the \cyclebananadec.
	If the matching size conditions are not satisfied for one of these components, then $C$ can be trivially made longer in polynomial time using a long chord yielded by the component.
	
	It is important to verify that the second property holds for all connected components in $G-V(P_1\cup P_2)$.
	Note that a connected component $H$ in $G-V(C)$ with $V(H)\not\subseteq B$ is a connected component in $G-V(P_1\cup P_2)$ as well.
	Connected components that appear in $G-V(C)$ but do not appear in $G-V(P_1\cup P_2)$ are connected components that contain vertices in $V(C)\setminus V(P_1\cup P_2)$.
	
	Note that there is either one or two such connected components, because the vertex set $V(C)\setminus V(P_1\cup P_2)$ is a union of vertex sets of two arcs of $C$.
	If there is just one such connected component $H$, then $V(C)\setminus V(P_1\cup P_2)\subseteq V(H)$.
	We claim that if such $H$ exists in $G-V(P_1\cup P_2)$, then $C$ can be made longer in polynomial time (except in some very specific cases).
	
	\begin{figure}
		\begin{center}
			\ifdefined\STOC
\begin{tikzpicture}[scale=0.85]
\else
\begin{tikzpicture}
\fi
	\tikzstyle{vertex}=[circle,draw,fill,inner sep=1pt]
	
\node [vertex] (v1) at (-0.7,0.6) {};
\node [vertex] (v2) at (-0.7,-0.6) {};
\draw[very thick]  plot[smooth, tension=.7] coordinates {(v1) (-1,0) (v2)};
\node [vertex] (v3) at (6.3,0.5) {};
\node [vertex] at (6.3,-0.5) {};
\draw[very thick]  plot[smooth, tension=.7] coordinates {(v3) (6.5,0) (6.3,-0.5)};
\draw  plot[smooth, tension=.7] coordinates {(-0.7,0.6) (1,2) (4.5,2) (6.3,0.5)};

\draw  plot[smooth, tension=.7] coordinates {(6.3,-0.5) (4.5,-2) (1,-2) (v2)};
\node [vertex] (v4) at (1.7,2.1) {};
\node [vertex] (v5) at (4.3,-2) {};
\draw  plot[smooth, tension=.7] coordinates {(v4) (2,0.7) (2.8,0.6) (3.1,0.1) (3.4,-0.7) (4,-0.9) (v5)};
\node at (-1.4,0.1) {$P_1$};
\node at (6.9,0) {$P_2$};
\node at (2.6,2.4) {$P'$};
\node at (3.1,-2.5) {$P''$};
\node at (1.7,2.3) {$u$};
\node at (4.3,-2.2) {$v$};
\node at (0.4,1.6) {$a'$};
\node at (4.5,1.7) {$b'$};
\node at (1,-1.7) {$a''$};
\node at (5.4,-1.3) {$b''$};

\draw[blue]  plot[smooth, tension=.7] coordinates {(v2) (-0.1,-0.3) (0.7,-0.5) (1.2,-0.2) (2,-0.4) (2.5,-0.2) (3.4,-0.2) (3.9,-0.3) (4.6,-0.1) (5.1,-0.3) (5.7,-0.5) (6.3,-0.5)};
\node at (-0.8,0.8) {$s'$};
\node at (-0.9,-0.8) {$s''$};
\node at (6.4,0.8) {$t'$};
\node at (6.5,-0.7) {$t''$};
\end{tikzpicture}
		\end{center}
		\caption{A schematic picture of an existence of a chord between $P'$ and $P''$ passing through $B$.
		A blue chord represents a chord of $C$ passing through a component of $G-V(C)$.}\label{fig:dirac_cycle_p'_p''}
	\end{figure}
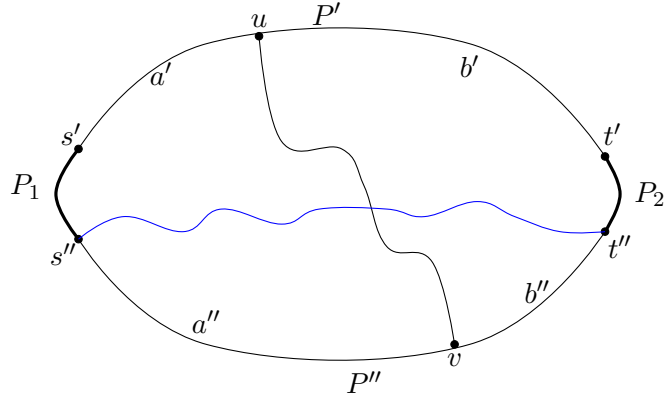
	
	Assume that such $H$ exists.
	Then two arcs of $C$ of length at least $\delta(G-B)$ (denoted by $P'$ and $P''$ in the definition of \cyclebananadec and here) are connected by a chord that can pass internally only through vertices in $B$.
	Note that the length $\delta(G-B)$ does not match the lower bound in the definition of \cyclebananadec{s}.
	This is intentional.
	In one of the cases below, we have to expand the paths $P_1$ and $P_2$ and reduce the length of $P'$ and $P''$ by one or two.
	
	Denote by $s'$ and $t'$ and by $s''$ and $t''$ the endpoints of the arcs $P'$ and $P''$ respectively.
	Note that $V(P'-\{s',t'\})\cup V(P''-\{s'',t''\})\subseteq V(H)$, but $s',t',s'',t'' \notin V(H)$.
	Since $P_1$ and $P_2$ are an $(s',s'')$-arc and an $(t',t'')$-arc of $C$ respectively.
	Hence, the chord connecting $P'$ and $P''$ has endpoints in inner vertices of $P'$ and $P''$.
	For clarity of presentation, we formulate the following intermediate claim.
	
	\begin{claim}
		If there exists a connected component $H$ in $G-V(P_1\cup P_2)$ with $V(C)\setminus V(P_1\cup P_2)\subseteq V(H)$, then either the only chords connecting $P'-\{s',t'\}$ and $P''-\{s'',t''\}$ are between their respective endpoints or $C$ can be made longer in polynomial time. 
	\end{claim}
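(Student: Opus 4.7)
The plan is to show that if $u\in V(P'-\{s',t'\})$ is the $P'$-endpoint of the hypothesized chord but is not one of the two vertices of $P'-\{s',t'\}$ adjacent to $s'$ or $t'$ on $P'$ (or symmetrically for $v$ on $P''$), then we can enlarge $C$ in polynomial time.

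The first step is a structural observation about the chord $Q$ itself. Because $P_1\cup P_2$ was chosen to cover $N_G(X)$, where $X$ is the union of vertex sets of the components of $G-V(C)$ not entirely contained in $B$, every internal vertex of $P'$ or of $P''$ has no neighbor in $X$. Consequently any path from an internal vertex of $P'$ to an internal vertex of $P''$ whose interior avoids $V(C)$ must pass only through vertices of $B$ that belong to $B$-only components of $G-V(C)$. This confines $Q$ to the union of one such $B$-only component together with its two $C$-endpoints, so its length $\ell$ is at most $|B|+1\le k+1$.

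Next I would consider the two $(u,v)$-arcs $A$ and $B$ of $C$ and observe that $Q\cup A$ and $Q\cup B$ are immediate candidates for an enlargement. Writing $d_1=d_{P'}(s',u)$, $d_2=d_{P'}(u,t')$, $e_1=d_{P''}(s'',v)$, $e_2=d_{P''}(v,t'')$, one of these new cycles exceeds $|C|$ exactly when $\ell$ beats $\min\{d_2+|P_2|+e_1,\; d_1+|P_1|+e_2\}$. When $u$ lies adjacent to $\{s',t'\}$ and $v$ to $\{s'',t''\}$ in the matching way (i.e.\ $u$ adjacent to $s'$ and $v$ to $t''$, or $u$ adjacent to $t'$ and $v$ to $s''$), this direct argument can fail and we are exactly in the "between respective endpoints" conclusion of the claim; in that case $P_1$ or $P_2$ can later be extended by one or two vertices to absorb the connection, which is precisely what the surrounding text prepares for.

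Assume then that we are not in that benign case, so WLOG $u$ is at distance at least two from both $s'$ and $t'$ on $P'$, and in particular it has two $P'$-neighbors $u^{-}, u^{+}$ that are both internal to $P'$. Here the third and hardest step is to enlarge $C$ by combining the chord $Q$ with a suitable auxiliary chord edge at $u$. Since $u\notin B$ and, as noted above, every neighbor of $u$ in $V(G)\setminus B$ must lie in $V(C)\setminus B$, vertex $u$ has at least $\delta(G-B)$ neighbors on $C$ while $|V(C)|\le 2\delta(G-B)+k-1$. A pigeonhole argument in the spirit of \Cref{claim:dirac_cycle_isolated_close_neighbors} and \Cref{claim:dirac_cycle_isolated_and_chord} will yield an edge $uw$ with $w\in V(C)$ at a useful position (for instance, two consecutive neighbors of $u$ on the cycle, or a short $C$-distance pair), such that the combination of $Q$ with $uw$ and a local reroute around $u$ produces a new cycle visiting all of one long arc of $C$ and the whole interior of $Q$, for a net gain of at least one vertex. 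The main obstacle is to carry out this local rearrangement uniformly across every configuration of the four distances $d_1,d_2,e_1,e_2$, every chord length $\ell\in\{2,\dots,k+1\}$, and the possibility that $v$ lies close to $\{s'',t''\}$; the non-adjacency of $u$ to $\{s',t'\}$ provides the two genuinely internal neighbors $u^{-},u^{+}$ needed for the detour, and the ample slack of $\delta(G-B)$ many $C$-neighbors of $u$ against the $O(k)$ total length of $P_1\cup P_2$ guarantees enough room to always find an applicable auxiliary edge $uw$.
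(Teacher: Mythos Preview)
Your approach has a genuine gap: the local reroute you propose in the third step cannot produce a longer cycle. The chord $Q$ is short (at most $|B|+1$ edges), and combining it with a single auxiliary edge $uw$ and ``one long arc of $C$'' loses the \emph{other} long arc, which has length at least $\delta(G-B)-2$. No pigeonhole on the $C$-neighbors of $u$ can recover that loss; Claims of the type of \Cref{claim:dirac_cycle_isolated_close_neighbors} and \Cref{claim:dirac_cycle_isolated_and_chord} work because there an isolated vertex with $\delta(G-B)$ neighbours interacts with a \emph{long} chord, whereas here your chord is short and there is no isolated vertex available. (Also, $|B|\le k$ is not among the hypotheses of the lemma, so $\ell\le k+1$ is unjustified, though this is a minor point.)

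The paper's proof uses two ingredients you do not invoke. First, by the construction of $P_1,P_2$ there is always a chord $L$ of length at least $\delta(G-B)$ through some component of $G-V(C)$, with one endpoint in $V(P_1)$ and one in $V(P_2)$; since $u\in P'$ and $v\in P''$, the two chords $Q$ and $L$ intersect graphically, and combining them with two opposite arcs of $C$ gives a cycle of length at least $\max\{a'+b'',\,a''+b'\}+\delta(G-B)+1$. This already wins unless every $(u,v)$-chord is ``balanced'', i.e.\ $|a'+b''-(a''+b')|\le k$. Second, in the balanced case the paper deduces that each internal vertex of $P'$ has at most $2k$ neighbours in $P''$, so $H':=G[V(P'-\{s',t'\})]-B$ satisfies $\delta(H')\ge |V(H')|-4k$ and $\delta(H')\ge 20k$; then \Cref{lemma:many_paths} yields a Hamiltonian $(u',u)$-path in $H'$ (with $u'$ the $P'$-neighbour of $s'$), and similarly in $P''$. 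Concatenating these two Hamiltonian paths with $Q$, part of $P_1$, and the long chord $L$ produces a cycle of length at least $3\delta(G-B)-2|B|-1\ge 2\delta(G-B)+k$. This density argument plus \Cref{lemma:many_paths} is precisely what your plan is missing; without it the balanced case cannot be handled.
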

	\begin{claimproof}
	Let $u \in V(P'-\{s',t'\})$ and $v \in V(P''-\{s'',t''\})$ be the endpoints of this chord.
	Denote by $a'$ and $b'$ the length of the paths that $u$ splits $P'$ into.
	Analogously, by $a''$ and $b''$ denote the length of the paths that $v$ splits $P''$ into, as shown in \Cref{fig:dirac_cycle_p'_p''}.
	If $\max\{a'+b'',a''+b'\}+\delta(G-B)\ge |V(C)|$, then we can find a cycle longer than $C$ in polynomial time using a chord passing though some connected component in $G-V(C)$ and the $(u,v)$-chord of $C$.
	
	Note that if $u$ is the neighbor of $s'$ in $P'$ and $v$ is the neighbor of $t''$ in $P_2$, then $b'\ge \delta(G-B)-1$ and $a''\ge \delta(G-B)-1$ so $a''+b'+\delta(G-B)>|V(C)|$ and $C$ can be made longer.
	The situation when $u$ is the neighbor of $t'$ and $v$ is the neighbor of $s''$ is symmetrical.
	
	We now assume that $a'+b''< \delta(G-B)+k$ and $a''+b'<\delta(G-B)+k$ (and, consequently, $a'+b''\ge \delta(G-B)$ and $a''+b'\ge \delta(G-B)$) for each choice of $u$ and $v$.
	That is, each such $(u,v)$-chord should split $C$ in a way that the difference between $a'+b''$ and $a''+b'$ is at most $k$.	
	
	Consider a fixed $u \in V(P')$.
	Without loss of generality, we assume that $u$ is not the neighbor of $s'$ in $P'$.
	Note that distinct choices of $v \in V(P'')$ provides distinct values of $a''$ and $b''$ with fixed sum.
	Hence, if there are at least $2k+1$ choices of a pair $(u,v)$ for a fixed $u$, there are $2k+1$ different values of $a'+b''$.
	Since the sum of $a', b', a'', b''$ is also fixed, in at least one of these choices the difference between $a'+b''$ and $a''+b'$ is at least $k+1$.
	It follows that if $u$ has at least $2k+1$ neighbors in $V(P'')$, then $C$ can be made longer in polynomial time.
	Note that the same arguments apply to a fixed choice of $v \in V(P'')$.
	
	We have that for each $u \in V(P')$, $|N_G(u)\cap V(P'')|\le 2k$.
	As soon as vertices in $P'-\{s',t'\}$ can have neighbors outside only in $V(P'')$, $V(P_1\cup P_2)$ and $B$, we have that, $\delta(G[V(P'-\{s',t'\})]-B)\ge \delta(G-B)-|V(P_1\cup P_2)|-2k$.
	Since the total length of $P_1$ and $P_2$ is at most $k-1$ and $|V(P_1\cup P_2)|\le k+2$, we have that $\delta(G[V(P'-\{s',t'\})]-B)\ge \delta(G-B)-3k-1$.
	Denote by $H'$ the graph $G[V(P'-\{s',t'\})]-B$.
	As the length of $P'$ is less than $\delta(G-B)+k$, we have that $|V(H')|\le \delta(G-B)+k-2$.
	Hence, $\delta(H')> |V(H')|-4k$.
	On the other hand, $\delta(H')\ge \delta(G-B)-4k \ge 20k$.
	
	We can now apply \Cref{lemma:many_paths} to $H'$ with $p=4k, r=1$ and $\{s_1,t_1\}=\{u',u\}$, where $u'$ is the neighbor of $s'$ in $P'$.
	Note that $u' \neq u$ by our assumption.
	By \Cref{lemma:many_paths}, there is a Hamiltonian $(u',u)$-path in $H'$.
	This path is of length at least $\delta(G-B)-2-|B|$ that is found in polynomial time.
	Hence, we obtain a $(s',u)$-path of length at least $\delta(G-B)-1-|B|$ that contains only vertices of $P'-t'$.
	
	The arguments of constructing a $(s',u)$-path for $P'$ are applicable for constructing a $(t'',v)$-path for $P''$, if $v$ is not the neighbor of $t''$ in $P''$.
	Then we are able to construct a $(t'',v)$-path of length at least $\delta(G-B)-1-|B|$.
	Combine the $(s',u)$-path with $P_1$ and the $(t'',v)$-path and two chords: the $(u,v)$-chord and a $(s'',t'')$-chord of length at least $\delta(G-B)$ (it is depicted in \Cref{fig:dirac_cycle_p'_p''}) to obtain a cycle of length at least $3\delta(G-B)-2|B|-1\ge 2\delta(G-B)+k$.
	The last chord always exists by the construction of $P_1$ and $P_2$.
	
	Note that we only required in the above construction that if $v$ is not the neighbor of $t''$ in $P''$.
	If $v$ is the neighbor of $t''$, then we can consider constructing a $(t'',u)$-path instead of $(s',u)$-path, but only if $u$ is not the neighbor of $t'$ in $P'$.
	The long path between $s''$ and $v$ required for construction is then given by $P''$, and is of length $a''\ge\delta(G-B)-1$.
		\end{claimproof}
	
	We are left with the cases when $u$ and $v$ are simultaneously the neighbors of $t'$ and $t''$ in $P'$ and $P''$ respectively, or the neighbors of $s'$ and $s''$ in $P'$ and $P''$ respectively.
	That is, the cases when $b'=b''=1$ or $a'=a''=1$.
	In these cases, we cannot construct a pair of long paths and combine them with two chords, because we cannot apply \Cref{lemma:many_paths} to both $u$ (e.g.\ to $\{u',u\}$, as $u'=u$) and $v$.
	In other cases, we can make $C$ longer in polynomial time.
	
	We now assume that the only two $(u,v)$-chords between $P'$ and $P''$ can be only a chord between the neighbor of $s'$ in $P'$ and the neighbor of $s''$ in $P''$ and a chord between the neighbor of $t'$ in $P'$ and the neighbor of $t''$ in $P''$.
	In this case, we need to expand $P_1$ or $P_2$ to contain two more vertices.
	If there is a chord between the neighbors of $s'$ and $s''$, expand $P_1$ with two edges so it starts containing these neighbors.
	Analogously, expand $P_2$ if there is a chord between the neighbors of $t'$ and $t''$.
	
	Observe that such expansion of $P_1$ or $P_2$ with two edges does not influence the properties for connected components in $G-V(C)$.
	We now have that $V(P'-(V(P_1)\cup V(P_2)))$ and $V(P''-(V(P_1)\cup V(P_2)))$ belong to distinct connected components in $G-(V(P_1)\cup V(P_2))$.
	The length of $P'$ and $P''$ is now at least $\delta(G-B)-2$ and the total length of $P_1$ and $P_2$ is at most $k+4$.
	The first and the last properties of a \cyclebananadec are satisfied by $P_1$ and $P_2$.
	
	Denote the two connected components of $G-(V(P_1)\cup V(P_2))$ that contain inner vertices of $P'$ and $P''$ by $H'$ and $H''$ respectively.
	We know that$|V(H')|,|V(H'')|\le \delta(G-B)+|B|$ while $\delta(H'-B)\ge \delta(G-B)-|V(P_1)\cup V(P_2)|\ge \delta(G-B)-k-6\ge \frac{1}{2}\delta(G-B)+|B|$.
	Consider the $B$-refinements of $H'$ and $H''$.
	If one of them is not $2$-connected, then it should contain two leaf-blocks each consisting of at least $\frac{1}{2}\delta(G-B)+|B|+1$ vertices.
	Then, the total number of vertices in this component would be $2(\frac{1}{2}\delta(G-B)+|B|+1)-1>\delta(G-B)+|B|$, which is not possible.
	Hence, the $B$-refinements of $H'$ and $H''$ are $2$-connected.
	It is left to prove that they satisfy the properties of \ref{enum:cycle_tunnel_path_bic}-type components of \cyclebananadec{s}.
	
	That is, we have to prove that the maximum matching size between $V(H')$ or $V(H'')$ and $V(P_1)$ or $V(P_2)$ is exactly one after the $B$-refinements.
	Consider that the matching size between $V(H')$ and $V(P_1)$ equals two.
	If the path $P_1$ was not expanded, then there is a long chord of $C$ passing though a component in $G-V(C)$ and connecting $s'$ with a vertex in $P_2$.
	Hence, we can take a cycle of length at least $2\delta(G-B)-2$ combined of this chord, $P''$, $P_1$ and a part of $P_2$.
	Then \Cref{thm:relaxed_st_path} and the maximum matching between $V(H')$ and $V(P_1)$ yields a chord of this cycle with endpoints in $V(P_1)$ of length at least $\delta(H'-B)+2\ge \delta(G-B)-k-4$.
	Since the length of $P_1$ is at most $k$, we can enlarge this cycle and obtain a cycle of length at least $(2\delta(G-B)-2)+(\delta(G-B)-k-4)-k\ge 3\delta(G-B)-2k-6> 2\delta(G-B)+k$.
	
	If $P_1$ was expanded, then there is no long chord of $C$ connecting the common endpoint of $P_1$ and $P'$ with a vertex in $P_2$.
	However, then there exists a short chord of $C$ connecting the endpoints of $P_1$ and passing only through $B$ without visiting $H'$ or $H''$ or any component in $G-V(C)$.
	Also, there is still a chord connecting $s'$ with some vertex in $V(P_2)$, though $s'$ now is not an endpoint of $P_1$ but the neighbor of the common endpoint of $P_1$ and $P'$ in $P_1$.
	If the endpoints of the matching in $V(P_1)$ do not include either $s'$ or the endpoint of $P_1$, we can proceed in the same way as when $P_1$ was not expanded.
	As $V(H')$, the matching and the edge between $s'$ and the endpoint of $P_1$ produce the required long chord of the new cycle.
	
	The case that requires explanation is when the endpoints of the matching are $s'$ and the endpoint of $P_1$.
	Then $V(H')$ only yields an $(s',s')$-chord, which is not appropriate.
	In this case, we have to use the chord between the endpoints of $P_1$ instead of the edge between $s'$ and the endpoint of $P_1$.
	It is easy to see that $V(H')$ together with the matching and this chord provide a long chord between $s'$ and the other endpoint of $P_1$ (the one closer to $t'$).
	Note that this endpoint is different from $s'$, as the length of $P_1$ is at least two.
	
	We have shown that if the matching size between $V(H')$ and $V(P_1)$ is at least two, then we can find a longer cycle in polynomial time.
	The other cases are symmetrical.
	Hence, $H'$ and $H''$ satisfy the properties of \ref{enum:cycle_tunnel_path_bic}-type components.
	This concludes the proof of the lemma.
\end{proof}

 \section{Conclusion}\label{sec:conclusion}
 In this paper, we developed an algorithmic extension of the classical theorem of Dirac. Our main result,  Theorem~\ref{theorem:main}, is   
that \probDC is solvable in $2^{\Oh (k+|B|)} \cdot n^{\Oh (1)}$ time on 2-connected graphs.
An important step in the proof of Theorem~\ref{theorem:main} is  Theorem~\ref{thmEG}:    \probstP is solvable in $2^{\mathcal{O}(k+|B|)}\cdot n^{\mathcal{O}(1)}$  time on $2$-connected graphs. 
In this section we provide lower bounds complementing  Theorems~\ref{theorem:main}
and~\ref{thmEG}, and then conclude with open questions for further research.

\subsection{Tightness of results}
We have already observed that the dependency on $k$ in the running times of Theorems~\ref{theorem:main} and \ref{thmEG} is tight  up to ETH.
Here we show that the dependency on $|B|$ is similarly tight. Additionally, we show that for any $\varepsilon > 0$, it is $\classNP$-hard to find a cycle of length at least $(1 + \varepsilon)2\delta(G)$, meaning that our starting bound of $2\delta(G)$ is tight. We start with the first hardness result.

\begin{theorem}
    Unless ETH fails, there is no algorithm solving \probDC or \probDP in time $2^{o(|B|)} \cdot |V(G)|^{\Oh(1)}$, even when $k = 1$.
    \label{thm:hard_on_B}
\end{theorem}
\begin{proof}
    \begin{figure}[h]
        \centering
\begin{tikzpicture}[scale=0.8]
\tikzstyle{vertex}=[draw, fill, circle, black, minimum size=2,inner sep=0pt]
\tikzstyle{subgraph}=[draw, ellipse, black, minimum height=1cm, minimum width=3cm,inner sep=10pt]
\tikzstyle{connect}=[draw]

\node[vertex] (s) at (-0.5, 0) {};
\node[vertex] (t) at (7.5, 0) {};
\node[subgraph, name path=Hborder] (H) at (3.5, 3) {$H$};
\node[subgraph, name path=K1border] (K1) at (3.5, 0) {$K_{n - 1}$};
\node[subgraph, name path=K2border] (K2) at (3.5, -3) {$K_{n - 1}$};
\path[name path=sHedge1] (s) -- ($ (H.west)!0.35!(H) $);
\draw[connect,name intersections={of=Hborder and sHedge1}]  (s) edge (intersection-1);
\path[name path=sHedge2] (s) -- ($ (H.west)!0.8!(H) $);
\draw[connect,name intersections={of=Hborder and sHedge2}]  (s) edge (intersection-1);
\draw[connect] (s) -- (H.west);
\path[name path=sK2edge1] (s) -- ($ (K2.west)!0.35!(K2) $);
\draw[connect,name intersections={of=K2border and sK2edge1}]  (s) edge (intersection-1);
\path[name path=sK2edge2] (s) -- ($ (K2.west)!0.8!(K2) $);
\draw[connect,name intersections={of=K2border and sK2edge2}]  (s) edge (intersection-1);
\draw[connect] (s) -- (K2.west);
\path[name path=sK1edge1] (s) -- ($ (K1.west)!0.7!(K1.south) $);
\draw[connect,name intersections={of=K1border and sK1edge1}]  (s) edge (intersection-1);
\path[name path=sK1edge2] (s) -- ($ (K1.west)!0.7!(K1.north) $);
\draw[connect,name intersections={of=K1border and sK1edge2}]  (s) edge (intersection-1);
\draw[connect] (s) -- (K1.west);

\path[name path=tHedge1] (t) -- ($ (H.east)!0.35!(H) $);
\draw[connect,name intersections={of=Hborder and tHedge1}]  (t) edge (intersection-1);
\path[name path=tHedge2] (t) -- ($ (H.east)!0.8!(H) $);
\draw[connect,name intersections={of=Hborder and tHedge2}]  (t) edge (intersection-1);
\draw[connect] (t) -- (H.east);
\path[name path=tK2edge1] (t) -- ($ (K2.east)!0.35!(K2) $);
\draw[connect,name intersections={of=K2border and tK2edge1}]  (t) edge (intersection-1);
\path[name path=tK2edge2] (t) -- ($ (K2.east)!0.8!(K2) $);
\draw[connect,name intersections={of=K2border and tK2edge2}]  (t) edge (intersection-1);
\draw[connect] (t) -- (K2.east);
\path[name path=tK1edge1] (t) -- ($ (K1.east)!0.7!(K1.south) $);
\draw[connect,name intersections={of=K1border and tK1edge1}]  (t) edge (intersection-1);
\path[name path=tK1edge2] (t) -- ($ (K1.east)!0.7!(K1.north) $);
\draw[connect,name intersections={of=K1border and tK1edge2}]  (t) edge (intersection-1);
\draw[connect] (t) -- (K1.east);

\node at (-1, 0) {$s$};
\node at (8, 0) {$t$};
\end{tikzpicture}%
        \caption{An illustration to the hardness reduction in Theorem~\ref{thm:hard_on_B}, from \textsc{Hamiltonian Path} to \probDC. The graph $H$ is the starting \textsc{Hamiltonian Path} instance. The reduction to \probDP looks similarly, only without the vertex $t$.}
        \label{fig:hardnessB}
    \end{figure}
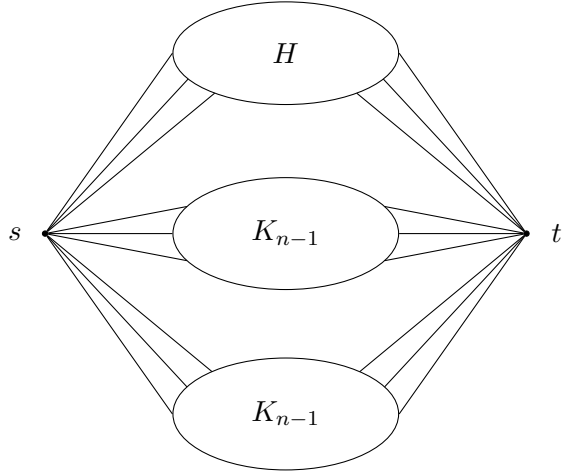
    First, we show a reduction from \textsc{Hamiltonian Path} to \probDC. Consider an instance $H$ of \textsc{Hamiltonian Path}, let $n = |V(H)|$. Take a disjoint union of $H$ and two disjoint copies of $K_{n - 1}$, the clique on $(n - 1)$ vertices. Add two additional vertices $s$ and $t$ that are adjacent to all previously listed vertices (but not to each other). This finishes the description of the graph $G$ that our reduction constructs from $H$, see Figure~\ref{fig:hardnessB} for the illustration. Finally, set $B$ to $V(H) \subset V(G)$, and $k$ to one. Observe that $2\delta(G - B)$ and $|V(G - B)|$ are both equal to $2n$. Our aim is now to show that $H$ has a Hamiltonian path if and only if $G$ has a cycle of length at least $2n + 1 = \min\{2\delta(G-B), |V(G)|-|B|\}+k$.

    In the forward direction, if there is a Hamiltonian path $P$ in $H$, consider a Hamiltonian path $P'$ in one of the $K_{n - 1}$ components. Connect $P$ and $P'$ in a cycle by going through the vertices $s$ and $t$. This results in a cycle of length $|V(G)| + |V(K_{n - 1})| + 2 = 2n + 1$.

    In the other direction, let $C$ be a cycle of length at least $2n + 1$ in $G$. Since $|V(G - B)| = 2n$, $C$ necessarily intersects $B$, and since $|B| = n$, $C$ also intersects $V(G - B)$. Since in $G - \{s, t\}$ the set $B$ is disconnected from the rest of the graph, the cycle $C$ necessarily enters $B$ from $s$ and exits via $t$. The two $K_{n - 1}$ copies are also disconnected in $G - \{s, t\}$, thus $C$ intersects exactly one of the cliques. Thus, $C$ has at most $n + 1$ vertices outside of $B$. Since $|C| = 2n + 1$ and $|B| = n$, $C$ must traverse all vertices of $B$. Since $C$ induces a path on $B$, this path is also a Hamiltonian path in $H$. This finishes the proof of correctness of the reduction.

    Finally, following the reduction above, a $2^{o(|B|)} \cdot |V(G)|^{\Oh(1)}$-time algorithm for \probDC would immeidately imply a 
    $2^{o(n)}$-time algorithm for \textsc{Hamiltonian Path} since $|B| = n$ and $|V(G)| = \Oh(n)$, and the existence of the latter would contradict ETH.

    For \probDP, the reduction follows a similar idea. From an instance $H$ of \textsc{Hamiltonian Path}, construct a graph $G$ as follows.
Take a disjoint union of $H$ and two disjoint copies of $K_{n - 1}$, and add an additional apex vertex $s$. Set $B$ to be $V(H) \subset V(G)$, and  set $k$ to one. Clearly, $2 \delta(G - B)  = |V(G)| - |B| - 1 = 2n - 2$. If there is a Hamiltonian path in $H$, it extends to a path of length $2n - 1$ in $G$ by continuing through $s$ into one of the cliques, as it is always possible to traverse through all vertices of the clique. On the other hand, if there is a path $P$ of length at least $2n - 1$ in $G$, it necessarily goes from $B$ to $V(G) \setminus B$, since $|B| = n$ and $|V(G) \setminus B| = 2n - 1$. Such a path can only go through $s$ to one of the cliques while completely avoiding the other, since $s$ is an articulation point. Thus, outside of $B$ the path $P$ visits at most $n$ vertices, and since a path of length at least $2n - 1$ has to visit at least $2n$ distinct vertices, $P$ necessarily traverses through all vertices of $V(H)$, yielding a Hamiltonian path in $H$. This finishes the proof for \probDP.
\end{proof}

Next, we show that the bound $2\delta(G)$ cannot be improved unless $\classP=\classNP$ by proving the following theorem.

 \begin{theorem}\label{thm:tightness}
 For every positive $\varepsilon<1$, it is \classNP-complete to decide whether 
 \begin{itemize}
 \item[(a)] a $2$-connected graph $G$ with two given vertices $s$ and $t$ has an $(s,t)$-path of length at least $(1+\varepsilon)\delta(G)$;
 \item[(b)]  a $2$-connected graph $G$ has a cycle of length at least $(2+\varepsilon)\delta(G)$. 
 \end{itemize}
 \end{theorem}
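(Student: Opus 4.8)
\textbf{Proof plan for Theorem~\ref{thm:tightness}.}
The plan is to reduce from the \classNP-complete \probKPath/\probKCycle-type problems on graphs of high minimum degree, or more directly from the Hamiltonian path/cycle problem, by a padding construction that controls $\delta(G)$ while only mildly changing the length of the longest $(s,t)$-path (resp. cycle). For part~(a), start with an arbitrary graph $H$ on $m$ vertices with two designated vertices $s,t$; deciding whether $H$ has a Hamiltonian $(s,t)$-path is \classNP-complete. We then attach to $H$ a gadget that raises the minimum degree to a prescribed value $\delta$ while only contributing a bounded, \emph{a priori known} amount to the length of any long $(s,t)$-path. The natural choice is to pick a clique (or a nearly-regular dense graph) $K$ of the right size, make every vertex of $H$ adjacent to a fixed small subset of $K$ (or to all of $K$), and add a couple of universal-type vertices to restore $2$-connectivity; the parameter $\delta$ and the clique size are chosen so that $(1+\varepsilon)\delta$ falls strictly between ``$H$ has no Hamiltonian $(s,t)$-path'' and ``$H$ has one.'' Concretely, if $H$ has a Hamiltonian $(s,t)$-path then the whole graph has an $(s,t)$-path of length $m-1$ plus whatever the gadget contributes, while if $H$ has no Hamiltonian $(s,t)$-path the best we can do is bounded away from that by at least one, and the arithmetic is arranged so the threshold $(1+\varepsilon)\delta$ separates the two cases.

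The key steps, in order: (i) fix $\varepsilon\in(0,1)$ and choose integers so that the gadget has minimum degree exactly $\delta$ for a controllable $\delta\approx c\cdot m$ with $c$ depending on $\varepsilon$; here one must be careful that the vertices of $H$ also end up with degree at least $\delta$, which forces us to add enough edges from $H$ into the gadget — but these extra edges must not create shortcuts that let a long $(s,t)$-path ``cheat'' by skipping vertices of $H$, so the gadget vertices adjacent to $H$ should be few, or the gadget should be traversable only in a rigid way. (ii) Prove the ``yes'' direction: a Hamiltonian $(s,t)$-path of $H$ extends to an $(s,t)$-path of length at least $(1+\varepsilon)\delta$ by routing through the dense gadget (using, e.g., \Cref{thm:relaxed_st_path} or a direct argument inside the clique). (iii) Prove the ``no'' direction: if there is an $(s,t)$-path of length at least $(1+\varepsilon)\delta$, show by a counting/structural argument that it must visit almost all of $H$, in fact all of $H$ — this is where the threshold must be tuned — hence $H$ has a Hamiltonian $(s,t)$-path. (iv) Verify $2$-connectedness of the constructed graph and polynomial size of the reduction. (v) For part~(b), reduce from the $(s,t)$-path version just established, or independently from Hamiltonian cycle: add the same style of gadget, plus the standard trick of turning an $(s,t)$-path instance into a cycle instance by adding an edge $st$ or a short $s$--$t$ connector through the gadget; the constant $2$ in $2\delta(G)$ appears because a cycle meeting the dense gadget can use a long arc of length $\approx\delta$ there in addition to the part inside $H$.

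The main obstacle I anticipate is step~(iii): making the threshold $(1+\varepsilon)\delta$ genuinely distinguish ``Hamiltonian $(s,t)$-path'' from ``no Hamiltonian $(s,t)$-path'' while simultaneously keeping $\delta(G)$ at the intended value. The tension is that to force a long $(s,t)$-path to traverse \emph{all} of $H$, we want $H$'s contribution to be essential, i.e.\ the gadget should not offer a detour of comparable length; but to make $\delta(G)$ large we are tempted to densely connect $H$ to the gadget, which risks exactly such detours. The resolution is likely a careful accounting: bound the number of gadget vertices a path can use between two consecutive visits to $H$, and choose the clique size and the number of $H$-to-gadget edges so that any $(s,t)$-path missing even one vertex of $H$ is shorter than $(1+\varepsilon)\delta$. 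A clean way to do this is to keep $H$ itself low-degree and instead blow up each vertex of $H$ into a small clique, or attach private pendant cliques to each vertex of $H$, so that every vertex of $H$ already has degree $\ge\delta$ ``locally'' and the only global gadget needed is a modest one for $2$-connectivity; then any long path is forced through the blown-up copies of essentially all vertices of $H$. Getting the constants to work out uniformly for every $\varepsilon<1$ (as opposed to $\varepsilon$ small) will require slightly different parameter choices in different ranges of $\varepsilon$, which is routine but needs to be written out.
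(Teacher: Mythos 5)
Your plan converges on what is essentially the paper's construction: a reduction from \textsc{Hamiltonian Path} in which every vertex $v$ of the source graph receives a private clique $Q_v$ of size $p$ (raising the minimum degree to $p+1$), with a single hub vertex $s$ adjacent to all clique vertices supplying $2$-connectivity. So the overall route is right, and you correctly identify the threshold calibration as the crux. However, the resolution you offer for your step~(iii) --- that ``any long path is forced through the blown-up copies of essentially all vertices of $H$'' --- describes the wrong mechanism, and executing the plan with that picture in mind would make the arithmetic fail to close. The correct mechanism is the opposite: because each $Q_v$ is reachable only via its owner $v$ and via the hub $s$, and a simple $(s,t)$-path visits $s$ exactly once, the path can traverse \emph{at most one} clique $Q_v$ (once it exits into the source graph it can never productively enter another clique, since that clique would be a dead end). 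A cycle through $s$ traverses exactly two cliques, which is precisely where the factor $2$ in part~(b) comes from --- not from ``one long arc of length $\approx\delta$ in the gadget,'' as you write. Hence the gadget contributes at most $\delta=p+1$ to an $(s,t)$-path (resp.\ $2(p+1)$ to a cycle), and \emph{all} of the excess over $\delta$ (resp.\ $2\delta$) must be realized inside the source graph. Choosing $p$ with $\lceil\varepsilon(p+1)\rceil=n$ (resp.\ $=n-1$), where $n$ is the number of source vertices, makes that excess exactly the length of a Hamiltonian path, giving both directions of the equivalence; a single choice works uniformly for all $\varepsilon\in(0,1)$, so the case distinction you anticipate is unnecessary.

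Two further corrections to specific options you leave open. First, the ``blow up each vertex of $H$ into a small clique'' variant does not calibrate: there a long path really would have to traverse all the cliques, the yes/no gap would be only about $p$ out of a total length of roughly $np$, and the degree of a blown-up vertex depends on $\deg_H(v)$, so $\delta$ of the constructed graph is not controlled; this branch of your plan should be dropped. Second, attaching $Q_v$ \emph{only} to $v$ (a truly pendant clique) makes $Q_v$ a dead end for every simple path and destroys $2$-connectivity; the adjacency of every clique to the hub $s$ is load-bearing for the correctness of the ``no'' direction (it is exactly what limits the path to one clique), not merely a connectivity patch to be added at the end. Likewise, for part~(a) the target $t$ must be handled explicitly (in the paper, $t$ is made adjacent to all of $V(G)$ and gets its own clique $Q_t$ so that $\delta$ is not dragged down by $t$); your plan does not say how $s$ and $t$ acquire degree at least $\delta$.
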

 
 \begin{proof}
 Both claims are shown by reduction from the classical \textsc{Hamiltonian Path} problem that is well-known to be \classNP-complete~\cite{GareyJ79}. Both reductions exploit the same idea. We first show the claim for an $(s,t)$-path and then explain how to modify the reduction for the second claim. 
 
 \begin{figure}[ht]
\centering
\scalebox{0.6}{
\input{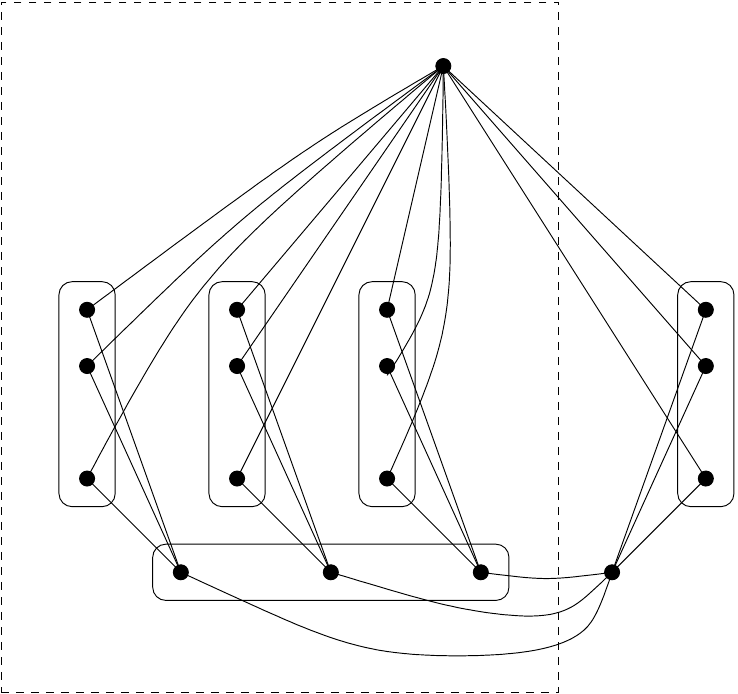_t}}
\caption{Construction of $H$ and $H'$.}\label{fig:hard}
\end{figure} 
 
Let $0<\varepsilon<1$ and let $G$ be an $n$-vertex  graph with $n\geq 2$. 
We select a positive integer $p$ such that $\lceil\varepsilon(p+1)\rceil=n$. Clearly, such an integer exists, because $\varepsilon<1$ and $n\geq 2$. Then we construct the following graph $H$ (see Figure~\ref{fig:hard}).
\begin{itemize}
\item Construct a copy of $G$.
\item Construct a vertex $t$ and make it adjacent to every vertex of $G$.
\item Construct a vertex $s$.
\item For every vertex $v\in V(G)\cup \{t\}$, construct a clique $Q_v$ with $p$ vertices and make the vertices of $Q_v$ adjacent to $v$ and $s$.
\end{itemize}
Notice that $H$ is 2-connected and $\delta(H)=p+1$. We claim that $G$ is has a Hamiltonian path if and only if $H$ has an $(s,t)$-path of length at least $(1+\varepsilon)\delta(H)$.

In one direction, let $P$ be a Hamiltonian path in $G$ and denote by $x$ and $y$ its endpoints.  Because $Q_x$ is a clique, $H$ has an $(s,x)$-path $R$ with $V(R)=Q_x\cup\{s,x\}$. That is, $R$ is a Hamiltonian path in $H[Q_x\cup\{s,x\}]$. Consider path $P'$ obtained by   concatenating  $R$, $P$, and $yt$. Then  $P'$ is an $(s,t)$-path in $H$. Observe that the length of $P'$ is 
\begin{multline*}
(p+1)+(n-1)+1=p+n+1=p+1+\lceil\varepsilon(p+1)\rceil\geq (1+\varepsilon)(p+1)=(1+\varepsilon)\delta(H)
\end{multline*}
as required. 

For the opposite direction, assume that $P'$ is an $(s,t)$-path in $H$ of length at least $(1+\varepsilon)\delta(H)$. Then the length of $P'$ is at least 
\begin{equation*}
\lceil(1+\varepsilon)\delta(H)\rceil=\delta(H)+\lceil\varepsilon \delta(G)\rceil=(p+1)+\lceil\varepsilon(p+1)\rceil=p+1+n.
\end{equation*}
By the construction of $H$, $P'$ is the  concatenation of paths $R$ and $S$ such that $R$ is an $(s,v)$-path for some $v\in V(G)\cup\{t\}$ where $V(R)\subseteq V(Q_v)\cup\{s,v\}$ and $S$ is a $(v,t)$-path with $V(S)\subseteq V(G)\cup\{t\}$. The length of $R$ is at most $p+1=\delta(H)$. Therefore,
the length of $S$ is at least $n$. Consider the path $P$ obtained from $S$ by  deleting $t$. We have that $V(P)\subseteq V(G)$ and the length of $P$ is at least $n-1$. We obtain that $P$ is a Hamiltonian path in $G$. This concludes the proof of the first claim.

The proof of (b)  is similar.  Let $0<\varepsilon<1$ and let $G$ be an $n$-vertex connected graph with $n\geq 3$. Now we select a positive integer $p$ such that $\lceil\varepsilon(p+1)\rceil=n-1$. 
 We construct  graph $H'$ that is, in fact, the graph obtained from $H$ constructed above by deleting $t$ and the vertices of $Q_t$ (see Figure~\ref{fig:hard}). Formally, $H'$ is constructed as follows.
 \begin{itemize}
\item Construct a copy of $G$.
\item Construct a vertex $s$.
\item For every vertex $v\in V(G)$, construct a clique $Q_v$ with $p$ vertices and make the vertices of $Q_v$ adjacent to $v$ and $s$.
\end{itemize}
Because $G$ is a connected graph with at least three vertices, $\delta(H')=p+1$. Because $G$ is connected, $H'$ is 2-connected. 
 We claim that $G$ has a Hamiltonian path if and only if $H'$ has a cycle of length at least  $(2+\varepsilon)\delta(H)$. 
 
 Suppose that $P$ is a Hamiltonian path in $G$ and let $x$ and $y$ be its endpoints. Note that since $G$ has at least two vertices, $x\neq y$. Because $Q_x$ and $Q_y$ are cliques, $H$ has an $(s,x)$-path $R_x$ with $V(R_x)=Q_x\cup\{s,x\}$ and a $(y,s)$-path $R_y$ with 
 $V(R_y)=Q_y\cup\{s,y\}$. Observe that the concatenation of $R_x$, $P$, and $R_y$ is a cycle. Denote this cycle by $C$. The length of $C$ is
 \begin{multline*}
(p+1)+(n-1)+(p+1)=2(p+1)+n-1=2(p+1)+\lceil\varepsilon(p+1)\rceil\geq (2+\varepsilon)(p+1)=(1+\varepsilon)\delta(H).
\end{multline*} 
 
 Finally, let $C$ be a cycle of $G$ of length at least  $(2+\varepsilon)\delta(H)$. Then the length of $C$ is at least 
\begin{equation*}
\lceil(2+\varepsilon)\delta(H)\rceil=2\delta(H)+\lceil\varepsilon \delta(G)\rceil=2(p+1)+\lceil\varepsilon(p+1)\rceil=2(p+1)+n-1.
\end{equation*}
 Suppose that $s\notin V(C)$. Then, by the construction of $H'$, either $C$ is a cycle in $H'[Q_v\cup\{v\}]$ for some $v\in V(G)$ or $C$ is a cycle of $G$. In the first case the length of $C$ is at most $|Q_v|+1=p+1$, and in the second case the length of $C$ is at most $n$. In both cases, we have that the length of $C$ is strictly less that $2(p+1)+n-1$. This implies that $s\in V(C)$. If $|V(C)\cap V(G)|\leq 1$, then $V(C)\subseteq Q_v\cup\{s,v\}$ for some $v\in V(G)$. However, $|V(C)|\leq p+2<2(p+1)+n-1$ in this case. Hence,   $|V(C)\cap V(G)|\geq 2$. Then the construction of $H'$ implies that $|V(C)\cap V(G)|=2$. Let $\{x,y\}=V(C)\cap V(G)$. It is easy to verify that $C$ can be seen as the concatenation of three paths $R_x$, $P$, and $R_y$, where $R_x$ is an $(s,x)$-path with $V(R_x)\subseteq Q_x\cup\{s,x\}$, $P$ is an $(x,y)$-path in $G$, and $R_y$ is a $(y,s)$-path with $V(R_y)\subseteq Q_y\cup \{s,y\}$. The length of  $R_x$ and the length of $R_y$ is at most $p+1$. This means, that the length of $P$ is at least $n-1$. Therefore, $P$ is a Hamiltonian path in $G$. This concludes the proof.
 \end{proof}
 
For simplicity,  we proved Theorem~\ref{thm:tightness} for the case when $\varepsilon<1$  but let us remark that the claim also holds for $\varepsilon\geq 1$. Moreover, it can be assumed that $\varepsilon$ not a constant but an appropriate function of $\delta(G)$ like $\varepsilon(\delta)=\delta^c$ for some constant $c>-1$.  

\subsection{Open questions}
Dirac's theorem is the first fundamental result in Extremal Hamiltonian Graph Theory. The area contains many deep and interesting theorems but it  remains largely unexplored from the algorithmic perspective. 
Here we present several open questions hoping that these questions would trigger further research in this fascinating area. 

%

%
%
%
%
Our first open question concerns the problem of finding a cycle containing a specified set of vertices.
 The study of this problem can be traced back to another fundamental theorem of Dirac from 1960s about the existence of a cycle in $h$-connected graph passing through a given set of $h$ vertices \cite{Dirac1960}.  According to Kawarabayashi \cite{Kawarabayashi08}  \emph{``...cycles through a vertex set or an edge set are one of central topics in all of graph theory."}   Such type of problems  have been a popular and important topic in algorithms as well. See, e.g.,   Bj{\"{o}}rklund, Husfeldt and Taslaman~\cite{BjorklundHT12}  and Wahlstr{\"{o}}m~\cite{Wahlstrom13}, and Kawarabayashi   \cite{Kawarabayashi08}.

%

In Extremal Hamiltonian Graph Theory, the following theorem of 
  Egawa, Glas, and Locke~\cite{Egawa1991} is well-known.
  
\begin{theorem}[\cite{Egawa1991}]
Let $G$ be an $h$-connected graph, $h\geq 2$, with minimum degree $d$, and at least $2d-1$ vertices. Let $X$ be a set of $h$ vertices of $G$. Then $G$ has a cycle $C$ of length at least $2d$ such that every vertex of $X$ is on $C$.
\end{theorem}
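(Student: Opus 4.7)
\medskip\noindent\textbf{Proof proposal.} The plan is to combine two classical ingredients: Dirac's 1960 theorem guaranteeing a cycle through any $h$ prescribed vertices in an $h$-connected graph, and the length-extending rerouting technique underlying the proof of \Cref{thm:circum}. First one invokes the $h$-connectivity theorem to obtain \emph{some} cycle through $X$, and then one selects $C^*$ of maximum length among all cycles containing $X$. The goal reduces to showing that $|V(C^*)| \geq 2d$; if $n = 2d-1$ the condition $\delta(G) \geq d$ forces $\delta(G) \geq (n+1)/2$, so by \Cref{thm:diracs} $G$ is Hamiltonian, and the Hamiltonian cycle (which clearly contains $X$) is as long as possible, so we may and will assume $n \geq 2d$.

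Suppose for contradiction that $|V(C^*)| \leq 2d-1$. Since $n \geq 2d$, the set $V(G)\setminus V(C^*)$ is nonempty; pick a vertex $v$ outside $C^*$. Because $G$ is $h$-connected and $h \geq 2$, Menger's theorem applied between $\{v\}\cup X'$ (for any $X'$ of size $h-1$ disjoint from $\{v\}$) and $V(C^*)$ yields rich connectivity from $V(G)\setminus V(C^*)$ into $C^*$. Concretely, the plan is to extract two internally disjoint paths $Q_1, Q_2$ from the component $H$ of $G - V(C^*)$ containing $v$ into $C^*$, landing at distinct vertices $u_1, u_2 \in V(C^*)$, and with $V(Q_1)\cup V(Q_2)$ including $v$. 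One of the two arcs of $C^*$ between $u_1$ and $u_2$ can be replaced by $Q_1 \cdot (v\text{-traversal in }H) \cdot Q_2$, yielding a strictly longer cycle, provided the replaced arc does not contain any vertex of $X$.

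The key step is thus to guarantee that some admissible choice of $(u_1,u_2)$ leaves an $X$-free arc. For this one exploits the minimum degree condition: if all paths from $H$ into $C^*$ land at the $h$ vertices of $X$, then $v$ together with $H \cup X$ would form a heavily constrained configuration. Indeed, because $|V(C^*)| < 2d$, the vertex $v$ together with any vertex $w \in V(H)$ or $w\in V(C^*)\setminus X$ already has, by $\delta(G)\geq d$, enough neighbors that a Pósa-style rotation produces a second landing vertex outside $X$. Equivalently, one shows that the set of possible landing vertices of $H$-to-$C^*$ paths has size strictly greater than $h$, so among the at least $h+1$ such vertices at least one pair $(u_1,u_2)$ bounds an arc of $C^*$ free of $X$ (since $|X|=h$ partitions $C^*$ into exactly $h$ arcs, and any $h+1$ points must place two inside a common arc).

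The main obstacle, and the technical heart of the argument, is this last pigeonhole together with the degree-based enlargement of the set of landing vertices. This is where $h$-connectivity and $\delta(G)\geq d$ interact most delicately: $h$-connectivity guarantees enough disjoint paths from $H$ to $V(C^*)$, while the bound $|V(C^*)|<2d$ combined with $\delta(G)\geq d$ forces any vertex of $H$ to have neighbors outside $X$ on $C^*$. Once an $X$-free arc is rerouted as above, the resulting cycle is longer than $C^*$ and still contains $X$, contradicting the maximality of $C^*$. Hence $|V(C^*)|\geq 2d$, completing the proof.
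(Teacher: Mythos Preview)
The paper does not prove this statement; it merely quotes the Egawa--Glas--Locke theorem from~\cite{Egawa1991} in the concluding section to motivate Open Question~\ref{prob:through}. So there is no ``paper's own proof'' to compare against, and your proposal must be judged on its own merits.

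Your outline has a genuine gap at the enlargement step. Suppose you have succeeded in finding two landing vertices $u_1,u_2\in V(C^*)$ for paths from a component $H$ of $G-V(C^*)$, chosen so that one $(u_1,u_2)$-arc $A$ of $C^*$ is $X$-free. Replacing $A$ by a $u_1$--$H$--$u_2$ detour keeps $X$ on the cycle, but it only produces a \emph{longer} cycle if the detour is strictly longer than $A$. Nothing in your sketch establishes this. The pigeonhole observation ``$h+1$ landing points fall into $h$ arcs, so two share an arc'' tells you the arc between them is $X$-free, not that it is short; and your detour through $H$ could be as short as length~$2$ (the path $u_1\,v\,u_2$). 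Without a quantitative lower bound on the number of landing points---enough to force two of them to be \emph{consecutive} on $C^*$, or at least closer than the length of the available detour---the maximality contradiction does not fire.

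Relatedly, the claim that the landing set has size greater than $h$ is asserted but not argued. A vertex $v\in H$ has degree at least $d$, but if $H$ is large most of those neighbours may lie inside $H$, so $v$ need not see many vertices of $C^*$; and $h$-connectivity alone only gives you $h$ internally disjoint $v$--$C^*$ paths, i.e.\ $h$ landing points, not $h+1$. The actual proof in~\cite{Egawa1991} proceeds quite differently: it builds long paths between prescribed endpoints via their Lemma~5 (stated in the present paper as Lemma~\ref{lemma:path_by_large_degree}) and assembles the cycle from such pieces, rather than starting from an arbitrary $X$-cycle and trying to lengthen it locally.
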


This brings us to the following algorithmic problem. 

\begin{problem}[\textbf{Cycle above Egawa, Glas, and Locke condition}]\label{prob:through}
Given an $h$-connected graph $G$, a set of vertices $X\subseteq V(G)$ of size $h$, and a nonnegative integer $k$, how difficult is to decide whether $G$ has a cycle  of length at least $2\delta(G)+k$ containing every vertex of $X$?
 \end{problem}
 
This question is open even for   $k=1$.


\medskip
The further questions are form the area of directed graphs; we refer to the book of Bang-Jensen and Gutin~\cite{BangJensenG09}
  and the survey of Bermond and Thomassen~\cite{BermondT81} for extremal theorems for directed graphs. In particular, the classical result of Ghouila-Houri~\cite{GhouilaHouri1960} from 1960, generalizes  Theorem~\ref{thm:diracs}. Recall that a digraph $D$ is \emph{strong} if for every two vertices $u$ and $v$, $D$ has directed $(u,v)$ and  $(v,u)$-path, and the degree $\deg_D(v)$ of a vertex $v$ is the sum of its \emph{in-degree} $\deg_D^-(v)$ and \emph{out-degree} $\deg_D^+(v)$. 
  
\begin{theorem}[\cite{GhouilaHouri1960}]\label{thm:GH}
If for every vertex $v$ of a 
 strong digraph $D$ with $n$ vertices   $\deg_D(v)\geq n$, then $D$  has a Hamiltonian cycle.
\end{theorem}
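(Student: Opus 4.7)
The plan is to prove the theorem by contradiction using an extremal argument on the longest cycle in $D$, mirroring the classical approach to Dirac's theorem but with one crucial added ingredient, the strongness hypothesis, to create a ``detour'' from the cycle into the unused vertices. Suppose for contradiction that $D$ has a longest cycle $C=v_0v_1\cdots v_{k-1}v_0$ with $k<n$, and let $R=V(D)\setminus V(C)$, which is nonempty.

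First I would invoke strongness of $D$ to produce a shortest path $P=v_iu_1u_2\cdots u_{s-1}v_j$ with $s\geq 2$, $u_1,\ldots,u_{s-1}\in R$, and $v_i,v_j\in V(C)$ the only vertices of $P$ on $C$ (such a path exists because, by strongness, some vertex of $R$ is reachable from $V(C)$ and in turn reaches $V(C)$). Minimality of $s$ is the key structural handle: it forces each internal $u_\ell$ to have no in-arcs from $V(C)\setminus\{v_i\}$ to $\{u_\ell,\ldots,u_{s-1}\}$, no out-arcs from $\{u_1,\ldots,u_\ell\}$ to $V(C)\setminus\{v_j\}$, and no ``chord'' among the $u$'s that would shortcut $P$.

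Second I would use the degree condition to count arcs. Each $u_\ell$ satisfies $\deg^-(u_\ell)+\deg^+(u_\ell)\geq n$, but can have at most $n-k-1$ in-neighbors and $n-k-1$ out-neighbors inside $R$, so it must have at least $n-2(n-k-1)=2k-n+2$ arcs to $V(C)$ in total. Combined with the minimality constraint that restricts where these arcs may point, this forces the detour $P$ to have length exactly $s=2$: for $s\geq 3$, an internal vertex $u_\ell$ ($1\leq \ell\leq s-2$) would have an in-neighbor $v_m\neq v_i$ on $C$, and then $v_mu_\ell u_{\ell+1}\cdots u_{s-1}v_j$ would be a strictly shorter detour, contradicting minimality.

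Once reduced to $P=v_iuv_j$ with $u\in R$, the contradiction comes from a pigeonhole on $C$: let $A=\{m:(v_m,u)\in E(D)\}$ and $B=\{m:(u,v_m)\in E(D)\}$; then $|A|+|B|\geq 2k-n+2$, which exceeds $k-|R|=2k-n$ when $|A|\cup|B|$ is large enough, forcing some position $m$ with $v_m\in A$ and $v_{m+1}\in B$. Replacing the arc $v_mv_{m+1}$ of $C$ by the two-arc detour $v_muv_{m+1}$ yields a cycle of length $k+1$, contradicting the maximality of $C$. The main obstacle throughout is handling the case $s>2$ cleanly; the workaround sketched above (reducing to $s=2$ via minimality plus the degree count) is standard but delicate, and one must be careful that the ``chord'' exclusion in step one is stated strongly enough to run the reduction on every internal vertex $u_\ell$ rather than just $u_1$.
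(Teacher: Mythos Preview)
The paper does not provide a proof of this theorem---it is merely cited in the concluding section to motivate Open Question~\ref{prob:GH}. So there is no paper proof to compare against; I evaluate your argument on its own.

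Your proof has two genuine gaps. First, the reduction to $s=2$ does not go through. Minimality of $P$ says that $u_\ell$ has no in-arc from $V(C)$ for $\ell\geq 2$ and no out-arc to $V(C)$ for $\ell\leq s-2$. So the only internal vertex that could have an in-neighbour $v_m\neq v_i$ on $C$ is $u_1$, and the path $v_mu_1\cdots u_{s-1}v_j$ then has the \emph{same} length $s$ as $P$, contradicting nothing. (What one can extract is that any $u_\ell$ with $2\leq\ell\leq s-2$ has zero arcs to or from $V(C)$, forcing $2k-n+2\leq 0$ whenever $s\geq 4$; but you have not shown $k>n/2-1$, so this does not even rule out $s\geq 4$, let alone $s=3$.)

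Second, even granting $s=2$, the final pigeonhole fails. From ``$C$ longest'' you get that $A$ and the cyclic shift $B-1$ are disjoint in $\{0,\ldots,k-1\}$, hence $|A|+|B|\leq k$. Your lower bound is $|A|+|B|\geq 2k-n+2$. These clash only when $2k-n+2>k$, i.e.\ $k\geq n-1$; for any smaller $k$ there is no contradiction, and the comparison with $k-|R|=2k-n$ that you wrote is not the relevant threshold. What is missing is a mechanism to strengthen one side: the classical proof does not reduce to $s=2$ but instead uses that for \emph{every} $1\leq t\leq s-1$ one has $A\cap(B-t)=\emptyset$ (inserting the whole detour $u_1\cdots u_{s-1}$ in place of the arc $v_mv_{m+1}\cdots v_{m+t}$ of $C$ lengthens $C$ whenever $t<s$), and couples this multi-shift constraint with sharper lower bounds on $|A|$ and $|B|$ that exploit the forbidden forward chords among $u_1,\ldots,u_{s-1}$.
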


The following question is the variant of the question  discussed by Jansen,  Kozma and Nederlof in~\cite{DBLP:conf/wg/Jansen0N19} for undirected graphs.

\begin{problem}[\textbf{Cycle above Ghouila-Houri condition}]\label{prob:GH}
Given an $n$-vertex strong digraph $D$ and a nonnegative integer $k$ such that at least $n-k$ vertices have degree at least $n$, how difficult is to decide whether $D$ is Hamiltonian? 
\end{problem}
Again, the simplest variant --- whether there is a polynomial time algorithm for  for $k=1$ --- is open. We also do not know the complexity of the problem when 
 every vertex has degree at least $n-k$.

\bibliographystyle{alpha}
\bibliography{ref}

\end{document}